\newtheorem{theorem}{Theorem}
\newtheorem{lemma}{Lemma}[chapter]
\newtheorem{corollary}{Corollary}
\newtheorem{example}{Example}
\newcommand{\A}{\mathcal{A}}
\newcommand{\B}{\mathcal B}
\newcommand{\T}{\mathcal{T}}
\newcommand{\prt}[2]{{#1}/{#2}} 
\newcommand{\tr}{\delta} 
\newcommand{\labset}{\Sigma}
\newcommand{\ltsq}{Q} 
\newcommand{\xtr}{\xrightarrow} 
\newcommand{\pre}[1]{\mathit{pre_{#1}}}
\newcommand{\wordsim}{\preccurlyeq}
\newcommand{\wordsimequiv}{\cong}
\newcommand{\init}{I}
\newcommand{\Rel}{\mathit{Rel}} 
\newcommand{\Split}{\mathit{Split}}
\newcommand{\remove}{\mathit{Remove}}
\newcommand{\ass}{\leftarrow}
\newcommand{\prev}{\mathsf{prev}}
\newcommand{\parent}{\mathsf{parent}}
\newcommand{\card}[1]{|#1|}
\newcommand{\trset}{\delta} 
\newcommand{\nltr}[1]{\stackrel{#1}{\nrightarrow}}
\newcommand{\ind}[1]{\mathrel{R_{#1}}} 
\newcommand{\ltr}{\xtr}
\newcommand{\relc}{\mathit{Count}}
\newcommand{\ine}{\mathit{in}}
\newcommand{\ltrset}[1]{\delta_{#1}}
\newcommand{\Sim}{\mathit{sim}}
\renewcommand{\O}{\mathcal{O}}
\newcommand{\fin}{\mathit{fin}}
\newcommand{\Start}{\mathit{Start}}
\newcommand{\maxrank}[1]{\hat r}
\renewcommand{\part}[2]{{#1}/{#2}}
\newcommand{\eq}[1]{\equiv_{#1}}
\newcommand{\rel}[1]{#1}
\newcommand{\nat}{\mathbb{N}}
\newcommand{\vect}[2]{(#1_1,\dotsc,#1_{#2})}
\newcommand{\trans}[4]{\vect{#1}{#2}\xtr{#3}{#4}}
\newcommand{\hole}{\square}
\newcommand{\ds}{\preceq}
\newcommand{\db}{\simeq}
\newcommand{\ubdb}{\stackrel{\bullet}{\simeq}}
\newcommand{\ubds}{\mathrel{\stackrel{\bullet}{\raisebox{0pt}[0.99ex][0pt]{\ensuremath{\ds}}}}}
\newcommand{\ubid}{\stackrel{\bullet}{=}}
\newcommand{\usds}{\mathrel{\stackrel{\circ}{\raisebox{0pt}[0.99ex][0pt]{\ensuremath{\ds}}}}}
\newcommand{\usdb}{\stackrel{\circ}{\simeq}}
\newcommand{\usid}{\stackrel{\circ}{=}}
\newcommand{\env}{\mathit{Env}}
\newcommand{\node}{p}
\newcommand{\hide}[1]{}
\newcommand{\K}{\mathcal K}
\newcommand{\eqn}{P_\mathit{rel}}
\newcommand{\eqm}{\delta_\mathit{rel}}
\newcommand{\Processed}[0]{\mathit{Processed}}
\newcommand{\Next}[0]{\mathit{Next}}
\newcommand{\lang}[1]{L(#1)}
\newcommand{\dist}{\mathit{Dist}}
\newcommand{\minimize}{\mathit{Minimize}}
\newcommand{\univ}{\mathit{Univ}}
\newcommand{\initialize}{\mathit{Initialize}}
\newcommand{\incl}{\mathit{Incl}}
\newcommand{\ps}[2]{(#1,#2)}
\def\centerframe#1#2#3{%
\framebox{%
\parbox{0pt}{\rule{0pt}{#2}}\parbox{#1}{\makebox[#1]{#3}}}}
\renewcommand{\node}{v}
\renewcommand{\L}{\mathcal L}
\newcommand{\tup}[2]{(#1_1,\ldots,#1_{#2})}
\newcommand{\post}{\mathit{Post}}
\newcommand{\mstates}{\mathit{MStates}}
\newcommand{\pstates}{\mathit{PStates}}
\renewcommand{\rel}[1]{#1^\subseteq}
\newcommand{\false}{\mathsf{FALSE}}
\newcommand{\true}{\mathsf{TRUE}}
\begin{document}
\title{Simulations and Antichains for Efficient Handling of Finite Automata}
\author{Luk\'a\v s Hol\'ik}
\maketitle


\pagenumbering{roman}
\chapter*{Abstract} 
This thesis is focused on techniques for finite automata and their
use in practice, with the main emphasis on nondeterministic tree automata. This
concerns namely techniques for size reduction and language inclusion testing,
which are two problems that are crucial for many applications of tree automata.
For size reduction of tree automata, we adapt the simulation quotient technique that is well
established for finite word automata. We give efficient algorithms for
computing tree automata simulations and we also introduce a new type of
relation that arises from a combination of tree automata downward and upward
simulation and that is very well suited for quotienting. The combination
principle is relevant also for word automata. We then generalise the so called
antichain universality and language inclusion checking technique developed
originally for finite word automata for tree automata.  Subsequently, we
improve the antichain technique for both word and tree automata by combining it
with the simulation-based inclusion checking techniques, significantly
improving efficiency of the antichain method. We then show how the developed
reduction and inclusion checking methods improve the method of abstract regular
tree model checking, the method that was the original motivation for starting
the work on tree automata. Both the reduction and the language inclusion
methods are based on relatively simple and general principles that can be
further extended for other types of automata and related formalisms. An example
is our adaptation of the reduction methods for alternating B\"uchi automata,
which results in an efficient alternating automata size reduction technique.

\noindent
\begin{minipage}{\textwidth}
\chapter*{Keywords} 
Finite automata, finite tree automata, alternating
B\"uchi automata, nondeterminism, simulation, bisimulation, universality, language inclusion, antichain, quotienting,
regular tree model checking.
\end{minipage}

\chapter*{Abstrakt} 
Cílem této práce je vývoj technik umožňujících praktické využití
nedetermi\-ni\-stických konečných automatů, zejména nedeterministických stromových
auto\-matů. Jde zvláště o techniky pro redukci velikosti a testování jazykové
inkluze, jež hrají zásadní roli v mnoha oblastech aplikace konečných automatů.
V oblasti redukce velikosti vycházíme z dobře známých metod pro slovní automaty
které jsou založeny na relacích simulace.  Navrhli jsme efektivní algoritmy pro
výpočet stromových variant simulačních relací a identifikovali jsme nový typ
relace založený na kombinaci takzvaných horních a dolních simulací nad
stromovými automaty. Tyto kombinované relace jsou zvláště vhodné pro redukci
velikosti automatů slučováním stavů. Navržený princip kombinace relací simulace
je re\-le\-vantní i pro slovní automaty.  Náš přínos v oblasti testování jazykové
inkluze je dvojí. Nejprve jsme zobecnili na stromové automaty takzvané
protiřetězcové algoritmy, které byly původně navrženy pro slovními automaty.
Dále se nám podařilo použitím simulačních relací výrazně zefektivnit
protiřetězcové algoritmy pro testování jazykové inkluze jak pro slovní, tak pro
stromové automaty. Re\-le\-vanci našich technik pro praxi jsme demonstrovali jejich
nasazením v rámci regulárního stromového model checkingu, což je verifikační
metoda založená na stromových automatech. Použití našich algoritmů zde vedlo k
výraznému zry\-chle\-ní a zvětšení škálovatelnosti celé metody. Základní myšlenky
našich algoritmů pro redukci velikosti automatů a testování jazykové inkluze
jsou apli\-kovatelné i na jiné typy automatů. Příkladem jsou naše redukční
techniky pro alternující B\"uchiho automaty prezentované v poslední části
práce.

\noindent
\begin{minipage}{\textwidth}
\chapter*{Klíčová slova} 
Konečný automat, konečný stromový automat, alternující Büchiho automat, nedeterminismus, univerzalita,
jazyková inkluze, protiřetězec, simulace, bisimulace, redukce velikosti, regulární stromový model checking.
\end{minipage}
\vfill

\noindent
\begin{minipage}{\textwidth}
\chapter*{Citace} 
Lukáš Holík, Simulations and Antichains for Efficient Handling of Finite Automata, disertační práce, Brno, FIT VUT v Brně, 2010
\end{minipage}

\chapter*{Simulations and Antichains for
Efficient Handling of Finite
Automata} 

\section*{Prohlášení}
Prohlašuji, že jsem tuto disertační práci vypracoval samostatně pod vedením
doc. Tomáše Vojnara. Uvedl jsem všechny literární prameny a publikace, ze
kterých jsem čerpal.

\vspace{1.5cm}
\hfill
\begin{minipage}{2.5cm}
\begin{center}
\dotfill\\
Lukáš Holík\\
26.\,října 2010
\end{center}
\end{minipage}

\vfill
\noindent
\copyright Lukáš Holík, 2010.\\
{\it 
Tato práce vznikla jako školní dílo na Vysokém učení technickém v Brně, Fakultě
informačních technologií. Práce je chráněna autorským zákonem a její užití bez
udělení oprávnění autorem je nezákonné, s~vý\-jim\-kou zákonem definovaných
případů.
}

\chapter*{Acknowledgements}
\label{acknoledgements}
I am most grateful to my advisor Tom\'a\v s Vojnar for his thoughtful approach and
the enormous effort he spent when teaching me what it means to do 
research in computer science.
I appreciate his trust that this investment would eventually pay off, which was
a great source of motivation for me.
I~must also thank him for the opportunity to meet great people from our field,
especially prof. Bouajjani, prof. Abdulla, doc.  Habermehl, doc. Mayr. and also
younger colleagues Dr. Kaati (the queen of tree automata), Dr. Chan., and Dr. Rogalewicz.
I~was continuously learning from them during our discussions, especially about
the importance of talking and carefully listening to others.
They deserve my thanks for always patiently listening to me (it was not always easy). 
I wish to express my gratitude to prof. \v Ce\v ska for his support and for his
contribution towards creating an environment where a work such as mine is possible. 
I~also thank my family for standing by me and for believing that the
things I do make sense. 
I thank Marie for her love and patience.

\vfill
{\it
The work presented in this thesis was supported by
the Czech Science Foundation (projects 102/07/0322, 102/09/H042, 103/10/0306), 
the Czech institutional project MSM 0021630528,
the Barrande projects MEB 020840 and 2-06-27, 
the Czech COST project OC10009 associated with the ESF COST action IC0901,
the internal BUT FIT grant FIT-S-10-1,
and the ESF project Games for Design and Verification.
}

\tableofcontents
\pagenumbering{arabic}
\chapter{Introduction}
\label{chapter:introduction}

Finite automata on finite words (FA) are one of the basic concepts of computer
science.  Besides classical applications of FA such as compiler construction or
text searching, FA are widely used in modelling and verification, which are the
application domains of our interest.  Tree automata (TA) are a natural
generalisation of FA that accepts ordered trees/terms.  TA share most of the
good properties of FA, from closure to decidability and complexity (even though
complexities of many tree automata problems are higher, they are still
comparable with the complexities of the corresponding FA ones). This makes tree
automata a convenient tool for modelling and reasoning about various kinds of
structured objects such as syntactical trees, structured documents,
configurations of complex systems, algebraic term representations of data or
computations, etc.  (see, e.g., \cite{tata97}). One of the main motivations for
this work is in particular the use of tree automata in verification, mainly in the method of
regular tree model checking
\cite{Shahar:01:ToolsTechVerParmSys:01,bouajjani:extrapolating,abdulla:simulation,bouajjani:abstractTree},
an infinite-state system verification method where tree automata are used for
representing sets of reachable states of a system.

In the above context, checking language equivalence/inclusion and reducing size
of automata while preserving the language are fundamental issues, and
performing these operations efficiently is crucial in practice. The language inclusion problem and the
minimisation problem for (nondeterministic) automata are PSPACE-complete for FA
and even EXPTIME-complete for TA.  A classical approach to cope with these
problems is determinisation. Both FA as well as TA can be determinised and
minimised in a canonical way.  Testing language inclusion of deterministic
minimal automata is then easy. However, since even the canonical minimal
deterministic automaton can still be exponentially larger than the original
nondeterministic one, its computation easily becomes a major bottleneck of any
automata-based method.

A reasonable and pragmatic approach to the size reduction and language
inclusion problem is to consider some relation on states of an automaton that
respects language inclusion on states, but which can be checked efficiently,
using a~polynomial algorithm. Such a relation can then be used for
approximating language inclusion between two automata by checking whether each initial state of one automaton is related to an initial state of other automaton. This method is sound but incomplete in the case when the relation is a proper subset the language inclusion on states. Such a relation can be also used for reducing the
size of an automaton by collapsing equivalent states.  Here, a natural
trade-off between the strength of the considered relation and the cost of
its computation arises.  In the case of word automata, a relation which is
widely considered as a good trade-off in this sense is simulation preorder.  It can be checked in polynomial time, and efficient
algorithms have been designed for this purpose (see, e.g.,
\cite{gentiliny:fromBisimulation,henzinger:computing,ranzato:new,crafa:saving}).
These algorithms make the computation of simulation preorder quite
affordable even in comparison with the one of bisimulation equivalence, which is cheaper
\cite{hopcroft:nlogn,paige:three,valmari:bisimilarity}, but which is also
stronger, and therefore leads to less significant reductions of automata and
also its capability of approximating language inclusion is limited.

As for what concerns language inclusion and universality problem, apart from
the classical determinisation-based methods and simulation-based approximation
technique, there has recently been proposed the so called antichain
universality and inclusion testing method for FA \cite{wulf:antichains}. It is
essentially an optimisation of the classical method based on subset
construction (i.e., on determinisation), it is still of an exponential worst
case complexity, but it behaves very well in practice. 

In the case of tree automata, the only methods for size reduction that were
previously studied (apart from deterministic minimisation) are  based on
bisimulation relations \cite{abdulla:bisimulation, hogberg:backward} and
concerning language inclusion testing, the only methods formerly available are
the classical ones based on explicit determinisation. However, these methods
are not efficient enough. The former ones are rather weak since bisimulation
relations are usually relatively sparse and the latter ones suffer from the
problem of state space explosion too often. 

\section{Goals of the Thesis} 
The lack of efficient methods for reducing size and testing language inclusion
of nondeterministic tree automata described above has significantly limited
their practical usability. Therefore, this thesis is aimed at adapting
techniques that work well for word automata to tree automata, which in
particular concerns the size reduction methods based on simulations and the
language inclusion testing algorithms based on the antichain principle. Then,
apart from generalising existing methods from word automata to tree automata,
we also focus on improving the existing methods themselves. This concerns
introduction of new types of relations suitable for reducing the size of word
as well as tree automata and interconnecting the antichain principle with the
simulation techniques into new language inclusion testing algorithms.
Additionally, we show that the proposed methods are applicable to other kinds
of automata too by designing a simulation-based reduction method for
alternating B\"uchi automata that is similar to the one we proposed for tree
automata.

\section{An Overview of Achieved Results} 
Here we summarise the contributions that we have achieved within the particular
areas marked out by the goals of this work. 

\paragraph{Tree Automata Reduction Methods.}
%
Our tree automata reduction methods are build on the notions of downward and
upward tree automata simulations (proposed first in \cite{abdulla:simulation})
that are the tree automata counterparts the forward and backward FA
simulations. 

We design efficient algorithms for computing tree automata simulations. A
deep examination of the structure of the TA simulations reveals that both
upward and downward TA simulations can be computed by the same algorithmic
pattern.  More specifically, the problems of computing a TA simulation can be
reduced to a problem of computing a common FA simulation (a tree automaton is
translated into an FA and then a common FA simulation algorithm is used).
Moreover, tree automata bisimulations can also be computed efficiently this way
using the same translations (instead of a simulation algorithm, an FA
bisimulation algorithm is run on the FA obtained by translating the input TA). The resulting tree automata bisimulation algorithms are simple
and competitive with the previously known algorithms from
\cite{hogberg:backward}. This results in a uniform and elegant framework for
computing tree automata simulations and bisimulations that can utilise the
best FA simulation and bisimulation algorithms.

We have identified a principle of combining upward and downward TA simulations
and forward and backward FA simulations that yields an equivalence, called
mediated equivalence, suitable for reducing automata by collapsing their states
while preserving the language. Mediated equivalence is coarser than downward
resp. forward simulation equivalence and thus gives a better reduction. The
principle of mediated minimisation of FA generalises the principle of forward
simulation minimisation. Two forward simulation equivalent states can be safely
collapsed since they have the same forward languages (symmetrically for
backward simulation).  In contrary, the property that allow collapsing two
mediated equivalent states $p$ and $q$ is the following.  Whenever there is a
computation under a word $u$ starting in an initial state that ends in a state
$p$, and another computation under a word $v$ starting in a state $q$ and
ending in a final state, then there is a computation under $uv$ from an initial
to a final state.  Therefore, collapsing the two states $p,q$ does not
introduce any new behaviour since every word accepted via the new state was
accepted also before collapsing.  The case of TA mediated equivalence can be
explained analogically. It may be seen from the above that unlike simulations,
mediated equivalences approximate neither forward nor backward language
equivalence on states, and similarly the tree automata mediated equivalence is
not compatible with any notion of language of a state of a tree automaton.  The
combination principle allows to build a mediated equivalence from any
downward/backward relation (simulation, bisimulation or identity relation) and
any upward/forward relation (simulation, bisimulation, identity). This yields a
scale of mediated equivalences offering a fine choice between the computation
cost and reduction power, as confirmed by our experimental results.

\paragraph{Language Inclusion Checking for TA and FA.}
Our  universality and language inclusion algorithms for tree and word automata
build on the antichain based method for FA proposed first in
\cite{wulf:antichains}. It is a complete method that optimises the classical
subset construction based algorithms. We first briefly review its main idea.  

Consider a nondeterministic FA $\A$. In the simpler case of universality
checking, the method is based on a search for a nonaccepting state of the
determinised version $\A'$ of $\A$ reachable from an initial state of $\A'$.
Such a state is a counterexample to universality of $\A$.  When a
counterexample is reached, the algorithm may terminate even before all states
of $\A'$ are constructed.  The states of $\A'$, called macro-states, have the
form of subsets of the set of states of $\A$.  The key idea is that some
macro-states have a better chance of finding a counterexample than other ones since
they have provably smaller languages (in our terminology, we say that they
subsume the states with larger languages). Therefore, one can safely continue
searching only from the generated macro-states that have minimal languages, and
simply discard any generated macro-state that is subsumed by another one. In
\cite{wulf:antichains}, the subsumption relation is just set inclusion, and
already this simple solution gives a fundamental speedup. 

We first adapt the FA antichain technique for tree automata. The adaptation is
quite straightforward, and similarly as in the case of FA, it has a major
impact on efficiency of the TA language inclusion and universality tests. We
then improve the antichain technique for both FA and TA by interconnecting it
with the simulation approximation technique. Simply speaking, we improve accuracy
of the subsumption relation on macro-states by employing simulations on states
of the original automaton. In the case of universality checking, a macro-state
$p$ subsumes a macro state $q$ if all states in $p$ are simulated by some state
in $q$. Moreover, even the internal structure of macro-states can be
simplified by keeping only simulation maximal states of $\A$ inside the macro-states.
In the case of testing inclusion between two automata $\A$ and $\B$,
macro-states have a more complicated structure, and it is possible to utilise
simulation on states of $\A$, on states of $\B$, and also use simulation between
states of $\A$ and $\B$.  It can be said that this method combines advantages
of both simulation approximation of language inclusion and the original antichain technique. It
also behaves very well on our experimental data. 

\paragraph{Simulations and Antichains in Abstract Regular Tree Model Checking.}
We have shown practical applicability of our tree automata reduction and
inclusion testing methods in the framework of abstract regular tree model
checking (ARTMC), an infinite state verification method where the two problems
play a crucial role.
In regular model checking (RMC), we start with an FA $\A_I$ representing a set
of initial configurations $I$ of a system and iteratively apply transition
relation $\tau$ (symbolically, on the structure of the automaton) until a
fixpoint is reached, thus computing an FA representing the set $\tau^*(\A_I)$
of all configurations reachable from the initial configurations. Then, it is checked whether this set satisfies the verified properties. 
In abstract regular model
checking \cite{bouajjani:abstract}, abstraction (together with a counterexample
guided refinement) is used to accelerate the computation.  Checking the
fixpoint condition means to decide whether
$\tau^i(A_I)\subseteq\tau^{i+1}(\A_I)$, which requires an efficient language
inclusion algorithm.  During the computation, the intermediate automata
typically grow quickly, therefore it is needed to reduce their size.  Tree
automata are used instead of FA when configurations of the system being
verified are better represented by trees than by words, e.g., certain
parametrised communication protocols, pointer programs manipulating tree-like
data structures etc. In that case, we speak about abstract regular tree model
checking (ARTMC)
\cite{bouajjani:extrapolating,abdulla:regular,bouajjani:abstractTree,bouajjani:abstractComplex}.
This method was originally based on deterministic tree automata, involving
implicit determinisation after each step.  Our reduction and inclusion testing
methods allowed us to redesign the method on top of nondeterministic tree
automata, which led to a major increase of scalability and efficiency.

\paragraph{Simulations and Antichains for Other Types of Automata.}
The principles of our simulation-based reduction methods are relatively simple
and general which allows extensions of the methods also for other types of
automata. We have done this for alternating B\"uchi automata (ABA), for which
we have designed simulation-based reduction method analogical to the one
proposed for tree automata.  ABA are acceptors of infinite words with the same
expressive power as B\"uchi automata, but may be exponentially more succinct.
Their applications can be found for instance in automata-based LTL model checking
within a B\"uchi automata complementation procedure (e.g.,
\cite{kupferman:weak}).  Alternating B\"uchi automata are similar to tree
automata in the sense that runs of both types of automata have a form of trees
(ordered trees for TA and unordered trees for ABA).  Therefore, the
definitions of simulations look similar for the two types of automata.  Forward
simulation over alternating B\"uchi automata have been already studied (see
\cite{fritz:state,fritz:simulation}).  It may bee seen as an analogy of the
tree automata downward simulation.  We have introduced the notion of ABA
backward simulation, which is an analogy of TA upward simulation.  We also show
that it is possible to combine the ABA simulations in the same way as the TA
simulation into a mediated equivalence suitable for collapsing states while
preserving language.  This equivalence gives better reductions than sole
forward simulation, which we confirm also by experiments.

Generalisations of our universality and language inclusion algorithms are also
possible. We are currently exploring ways of applying these techniques at
deciding B\"uchi automata universality and language inclusion. Our first result
has been published as \cite{abdulla:simulationsubsumption} where we use the
simulation subsumption technique to improve the so called Ramsey-based B\"uchi
universality and inclusion test (see, e.g.,
\cite{sistla:complementation,fogarty:buchi}). However, this work is already beyond the scope of this thesis.


\section{Plan of the Thesis} 
Chapter~\ref{chapter:preliminaries} contains preliminaries on automata, simulations, and regular tree model checking.
Chapter~\ref{chapter:LTS_simulation} presents an algorithm for computing
simulations over labelled transition systems used within most of the
algorithms presented further. In Chapter~\ref{chapter:ta_reduction}, we
describe our simulation and bisimulation-based framework for reducing tree
automata and the algorithms for computing the TA simulations and
bisimulations.  Chapter~\ref{chapter:fa_ta_inclusion} deals with the language
inclusion and universality problems for FA and TA.  Alternating B\"uchi
automata simulation-based reduction methods are discussed in
Chapter~\ref{chapter:aba_reduction} and Chapter~\ref{chapter:conclusions}
concludes the thesis.

\include{preliminaries}
\chapter{Computing Simulations over Labelled Transition Systems}
\label{chapter:LTS_simulation}

This chapter is devoted to an algorithm for computing simulations on labelled transition systems.  As discussed in the previous
chapter, simulation is a good candidate for reducing transition
systems by collapsing equivalent states and also for approximating
language/trace inclusion.  It strongly preserves logics like  $ACTL^*$,
$ECTL^*$, and $LTL$ \cite{dams:generation,grumberg:model,henzinger:computing},
and with respect to its reduction power and computation cost, it offers a
desirable compromise among the other common candidates, such as bisimulation
equivalence \cite{paige:three,sawa:behavioural} and language equivalence.  Our
main motivation for presenting the algorithm here is that computing simulation
over an LTS is a crucial step of almost all algorithms presented later in this
thesis, namely algorithms for computing simulations over tree automata,
alternating B\"uchi automata, and for checking language inclusion and
universality of finite word and tree automata.

Our LTS simulation algorithm is a relatively straightforward modification of
the algorithm by Ranzato and Tapparo from \cite{ranzato:new} (referred to as RT
in the following) for computing simulations over Kripke structures (a Kripke
structure associate labels with states while an LTS attaches labels to
transitions). Given a Kripke structure $\K$ with a set of states $Q$ and a
transition relation $\delta$ such that $P_\Sim$ is the partition of $Q$
according to simulation equivalence, RT runs in time $\O(|P_\Sim||\delta|)$
and space $\O(|P_\Sim||Q|)$. 
This algorithm refines the algorithm \cite{henzinger:computing} by Henzinger,
Henzinger, and Kopke (referred to as HHK) with running time $\O(|Q||\delta|)$
and space $\O(|Q|^2)$. The main difference between HHK and RT is that instead
of manipulating individual states, RT works on the level of iteratively refined
equivalence classes of a relation that finally converges to simulation
equivalence. We have chosen RT since it is the fastest known simulation
algorithm. However, there are other algorithms that are slower but more space
efficient. The algorithm with the lowest space complexity among all known
simulation algorithms is the one by Gentiliny, Piazza, and Policriti
\cite{gentiliny:fromBisimulation}. It runs in time $\O(|P_\Sim|^2|\delta|)$ and
space $\O(|P_\Sim|^2+|Q|\log|P_\Sim|)$.  Then, there is a recent algorithm
\cite{crafa:saving} by  Crafa, Ranzato, and Tapparo, which improves on space
complexity of RT, reducing it to $\O(|P_\Sim||\eqn|)$, which is very close to
the space complexity of the algorithm by Gentiliny et al., however, the price
of this is a worse time complexity $\O(|P_\Sim||\delta| +
|P_\Sim|^2|\eqm|)$.  Here, $\eqn$ is a certain partition of the set of states of
$\K$ such that $|P_\Sim| \leq |\eqn| \leq |Q|$ and  $\eqm$ is a partition of
the set of transitions where $|\eqm|\leq |\delta|$.

In fact, any algorithm computing simulation over Kripke
structures can be used for computing simulations on labelled transition systems.  Every LTS $\T$ with $n$ states and $m$ transitions can be easily
translated into a Kripke structure $\K_\T$ with $m+n$ states and $2m$
transitions (we turn every transition $q\ltr a r$ of $\T$ into the two
transitions $q\rightarrow (q,a,r) \rightarrow r$ where $(q,a,r)$ is a new
state with label $a$) such that the simulation on states of $\K_\T$ directly
gives simulation on $\T$. However, observe that this increase in the number of
states significantly affects complexity of the overall procedure. In the case
of RT, the time and space complexity of computing simulation on $\T$ this way
(running RT on $\K_\T$) would be almost the square of $m$, which is much worse
than for Kripke structures.

We design our version of RT that runs
directly on an LTS to eliminate this increase of complexity. This basically requires augmenting most of the data structures
of RT by alphabet symbols and iterating certain subprocedures for all
incoming/outgoing symbols of a state or a set of states.  We obtain an
algorithm that runs in time $\O(|P_\Sim||Q| + |\Sigma||P_\Sim||\delta|)$ and space $\O(|\Sigma|
|P_\Sim||Q|)$ where $\Sigma$ is the alphabet. The modifications of RT are
rather easy, nevertheless, notice that the dominating factor $|P_\Sim||\delta|$ of the
time complexity formula is not multiplied by the size of the alphabet, which
requires a sensitive approach when manipulating certain data structures. Apart
from that,  we provide a more straightforward (and abstract
interpretation free) proof of correctness of the algorithm than the one in \cite{ranzato:new}.

We also note that in \cite{holik:optimizing}, we present an improved version of
our LTS simulation algorithm where we to a large degree eliminate the
multiplicative effect of the size of the alphabet in the complexity formulas.
This algorithm can even turn nonuniformity of input and output symbols of
states into an advantage.  However, since the improvements described in
\cite{holik:optimizing} are not essential for the rest of this work and are
rather technical, we present only the original simpler version of the algorithm
here.

\section{Preliminaries}


We first introduce some additional notation used within
the chapter and the notion of partition-relation pair. 

Given an LTS {$\T=(\labset,\ltsq,\tr)$}, we
define the set of \emph{$a$-predecessors} of a state $r$ as $\pre a(r) = \{q\in
\ltsq\mid q\xtr a r\}$. Given $X,Y\subseteq \ltsq$, we use $\pre a(X)$ to denote the
set $\bigcup_{q\in X}\pre a(q)$, we write $q\xtr a X$ iff $q\in\pre a(X)$, and
$Y\xtr a X$ iff $Y\cap\pre a(X)\neq\emptyset$.

\paragraph{Partition-Relation Pairs.}  
A \emph{partition-relation pair} over a set $X$ is a pair $\langle P,Rel
\rangle$ where (1)~$P \subseteq 2^X$ is a partition of $X$ (i.e., $X =
\bigcup_{B \in P} B$, and for all $B,C \in P$, if $B\neq C$, then $B \cap C =
\emptyset$), and (2) $Rel \subseteq P \times P$. We say that a
partition-relation pair $\langle P,Rel \rangle$ over $X$ \emph{induces} (or
defines) the relation $\ind{\langle P,Rel \rangle} = {\bigcup_{(B,C)\in Rel} B
\times C}$. 

A partition-relation pair $\langle P,\Rel\rangle$ over $X$ inducing a relation
$R$ is the \emph{coarsest} iff there is no other partition-relation pair inducing
$R$ with the partition coarser than $P$. This means that $P = \{\{y\in X\mid
R(x) = R(y) \wedge R^{-1}(x) = R^{-1}(y)\}\mid x\in X\}$---two elements of $X$
are in the same block of $P$ iff they are related by $R$ with 
elements of $X$ in the same way. Notice that in the case when $R$ is a
preorder, $P$ is the set of equivalence classes of $R\cap R^{-1}$ and $\Rel$ is a partial order.

\section{The LTS Simulation Algorithm}
\label{lts:section} 
We now describe an algorithm to compute simulation over LTS.
For the rest of this chapter, we assume that we are given an LTS
$\T=(\Sigma,Q,\tr)$ and the coarsest partition-relation pair $\langle
P_\init,\Rel_\init \rangle$ inducing an initial preorder $I \subseteq Q \times
Q$.  Our algorithm takes $\T$ and $\langle P_\init,\Rel_\init \rangle$ as the
input and outputs the coarsest partition-relation pair $\langle
P_\Sim,\Rel_\Sim\rangle$ inducing the simulation preorder $\wordsim^I$ on $\T$
included in $\init$. Algorithm~\ref{algorithm:LRT} describes the algorithm in
pseudocode. 
Before we discuss it in detail and analyse its correctness and
complexity, we give a brief outline.


The algorithm propagates the
negative information about which pair of states are not related by simulation.
It iteratively refines a partition-relation pair $\langle P,\Rel \rangle$
(strengthening the induced relation) initialised as $\langle
P_{\init},\Rel_{\init} \rangle$.
The induced relation is always superset of the target simulation, the states
belonging to a block $B\in P$ are those which are currently assumed as being
possibly simulated by states from $\bigcup \Rel(B)$.
%
%
When the algorithm terminates, $\langle P,\Rel \rangle$ equals $\langle
P_{\Sim},\Rel_{\Sim} \rangle$.

The pair $\langle P,\Rel
\rangle$ is refined by splitting the blocks of the partition
in $P$ and pruning the relation $\Rel$.
For this purpose, the
algorithm maintains a set $\remove_a(B)$ for each $a \in \labset$ and $B\in P$. 
$\remove_a(B)$ contains states that was recently identified as not having an $a$-transition leading into $\bigcup\Rel(B)$.
Clearly, a state in $\remove_a(B)$ cannot
simulate states that have an $a$-transition going into $B$. 
Therefore, for a set $\remove_a(B) \neq \emptyset$ chosen at the beginning of an iteration, the algorithm splits each block $C \in P$ to 
$C \cap \remove_a(B)$ and $C \setminus \remove_a(B)$ (states not capable and states possibly capable of simulating states from $\pre a(B)$).
This is done using the function $\Split$ on~line~6.\ \ \ 

After performing  the $\Split$ operation, we update the relation $\Rel$ and the
$\remove$ sets.
This is carried out in two steps.
First, the algorithm refines the values of $\Rel$ and $\remove$ to be consistent with the new value of the partition $P$ refined by the $\Split$.
All $\Rel$ relations between the original ``parent'' blocks of states are
inherited to their ``children'' blocks into which the parents were split (line
8)---the notation $\parent_{P_\prev}(C)$ refers to the parent block of which $C$ was a part before the $\Split$. 
On line 10, the $\remove$ sets are inherited from parent blocks to their
children.
In the second step, the algorithm performs the actual refinement of the
relation induced by $\langle P,\Rel\rangle$.
On line 14, $\Rel$ is being pruned to reflect that states that
have an $a$-transition going into $B$ cannot be simulated by
states which do not have an $a$-transition going into $\bigcup\Rel(B)$. This is done by removing the relation between blocks included in $\remove_a(B)$ and blocks with states leading to $B$ via $a$. 
Refinement of $\Rel$ is then propagated further to $\remove$ sets.  Removing a
pair of blocks $(C,D)$ from $\Rel$ may cause that a state that has a
$b$-transition into $D$ (therefore, it had a $b$-transition into
$\bigcup\Rel(C)$ before removing $(C,D)$ from $\Rel$) now does not have any $b$-transition into
$\bigcup\Rel(C)$. Such a state is freshly identified as not being
capable of simulating states from $\pre a(C)$. We add it into $\remove_b(C)$
on line 17, which ensures propagation of the negative information.
\begin{algorithm}[t]

\caption{Computing simulation on an LTS}
\label{algorithm:LRT}

\KwIn{An LTS $\T=(Q,\labset,\tr)$, the coarsest partition-relation pair
$\langle P_{\init},\Rel_{\init} \rangle$ on $Q$ inducing a preorder $I\subseteq
Q\times Q$.}

\KwData{A partition-relation pair $\langle P,\Rel \rangle$ on $Q$, and for each
$B\in P$ and $a\in\labset$, a set $\remove_a(B)\subseteq Q$.}

\KwOut{The coarsest partition-relation pair $\langle P_\Sim,\Rel_\Sim \rangle$
inducing $\wordsim^I$.}

\BlankLine
\tcc{initialisation}

$\langle P,\Rel \rangle \ass \langle P_{\init},\Rel_{\init} \rangle$\;

\lForAll{$a\in\labset,B\in P$}{
        $\remove_a(B)\ass Q\setminus\pre a( \bigcup\Rel(B))$}\;

\BlankLine
\tcc{computation}
\While{$\exists a\in\labset.\ \exists B\in P.\  \remove_a(B)\neq\emptyset$}{
        $\remove\ass\remove_a(B);\remove_a(B)\ass\emptyset$\;
        $P_{\prev}\ass P; B_\prev \ass B
        ;\Rel_\prev\ass\Rel
        $\;
        $P\ass\Split(P,\remove)$\;
        \ForAll{$C\in P$}{
                $\Rel(C)\ass\{D\in P\mid D\subseteq\bigcup\Rel_\prev(\parent_{P_\prev}(C))\}$\;
                        \ForAll{$b\in\labset$}{
                                $\remove_b(C)\ass \remove_b(\parent_{P_\prev}(C))$
                        }
        }
        \ForAll{$C\in P.\  C\xtr a B_\prev$}{
                \ForAll{$D\in P.\ D\subseteq \remove$}{
                \If{$(C,D)\in\Rel$}{
                        $\Rel\ass\Rel\setminus\{(C,D)\}$\;
                        \ForAll{$b\in\labset$}{
                                \ForAll{$r\in\pre b(D)$ \KwSty{such that} $r\not \in \pre b( \bigcup\Rel(C))$}{
                                        $\remove_b(C)\ass\remove_b(C)\cup\{r\}$
                                }
                        }
                }
        }
        }
}
\Return $\langle P,\Rel \rangle$\;
\end{algorithm}

\subsection{Correctness of the Algorithm}

The correctness of the algorithm is formalised in Theorem
\ref{theorem:LRT:correctness}.  A similar correctness result is proved in
\cite{ranzato:new} for the algorithm on Kripke structures, using notions from
the theory of abstract interpretation. We provide here an alternative, more
direct proof.  

We will prove termination and partial correctness, this is, that (1)
the final partition-relation pair that we denote $\langle P_\fin,\Rel_\fin
\rangle$ induces $\wordsim^I$;
and (2) that $\langle P_{\fin},\Rel_{\fin}\rangle$ is also the
coarsest. The two points together give $\langle P_{\fin},\Rel_{\fin}\rangle=\langle P_\Sim,\Rel_\Sim \rangle$.

\begin{theorem} 
\label{theorem:LRT:correctness} 
Algorithm~\ref{algorithm:LRT} terminates and returns the partition-relation
pair $\langle P_{\Sim},\Rel_{\Sim} \rangle$.  
\end{theorem}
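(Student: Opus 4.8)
The plan is to split the argument into the three components announced in the text: termination, soundness (the final pair induces a relation contained in and equal to $\wordsim^I$), and maximality/coarsestness of the returned partition-relation pair. Throughout, the key invariant I would maintain is that the relation $\ind{\langle P,\Rel\rangle}$ induced by the current partition-relation pair is always a superset of $\wordsim^I$, and that any pair $(q,r)$ removed from the induced relation is genuinely not in $\wordsim^I$ (i.e., there is a witness transition of $q$ whose target cannot be matched from $r$ within the states still deemed related). Establishing and preserving this invariant across the $\Split$ on line~6 and the pruning on line~14 is the technical heart of the proof.

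For \emph{termination}, I would observe that each iteration of the \texttt{while} loop either strictly refines the partition $P$ (a $\Split$ that actually separates some block) or strictly shrinks $\Rel$ (removing at least the pair handled on line~14), or at the very least empties a nonempty $\remove_a(B)$ without ever re-adding the same state-into-the-same-set information twice. The cleanest way is a lexicographic measure on $(\,\text{number of blocks bounded by }|Q|,\ |\Rel|,\ \sum_{a,B}\text{``unprocessed'' }\remove_a(B)\,)$; I would argue that line~17 only adds a state $r$ to $\remove_b(C)$ when $r$ has just lost its last $b$-transition into $\bigcup\Rel(C)$, an event that can happen at most once per triple $(r,b,C)$ over the whole run because $\Rel(C)$ is monotonically decreasing (blocks only split and relations only get pruned), so the total work is bounded.

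For \emph{soundness}, I would show by induction on iterations that (i) $\wordsim^I \subseteq \ind{\langle P,\Rel\rangle}$ and (ii) $\remove_a(B)$ always contains only states $q$ with $q\not\xtr a \bigcup\Rel(B)$. For (i): initially $\ind{\langle P_\init,\Rel_\init\rangle}=I\supseteq\wordsim^I$ by definition; the $\Split$ on line~6 does not change the induced relation when combined with the inheritance on line~8, and the only genuine removal is line~14, where we drop $(C,D)$ with $C\xtr a B_\prev$ and $D\subseteq\remove$, i.e.\ all states of $D$ lack an $a$-transition into $\bigcup\Rel(C)$; since $B_\prev\subseteq\bigcup\Rel(C)$ held before (this itself needs a small sub-invariant relating $B_\prev$ to $\Rel(C)$), no state of $D$ can simulate a state of $\pre a(C)\ni$ (the $a$-predecessors inside $C$), so by the definition of simulation no such pair is in $\wordsim^I$. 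For (ii), line~17 re-establishes the $\remove$ invariant exactly when a removal from $\Rel$ invalidates a previously present transition. Then at termination all $\remove_a(B)=\emptyset$ means: for every $B$ and $a$, every state in every $C$ with $C\xtr a B$ still has an $a$-transition into $\bigcup\Rel(C)$ — which is precisely the statement that $\ind{\langle P_\fin,\Rel_\fin\rangle}$ is a simulation contained in $I$, hence contained in $\wordsim^I$. Combined with (i), $\ind{\langle P_\fin,\Rel_\fin\rangle}=\wordsim^I$.

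For \emph{coarsestness}, I would argue that the partition $P$ is only ever refined by $\Split$ along sets of the form $\remove_a(B)$, and two states end up separated only when they differ with respect to the induced relation (one can reach some $\bigcup\Rel(B)$ via $a$ and the other cannot, or symmetrically on the $\Rel^{-1}$ side through the propagation); since the induced relation converges to $\wordsim^I$, the final partition distinguishes $q$ from $r$ only if $\wordsim^I(q)\ne\wordsim^I(r)$ or $\wordsim^{I,-1}(q)\ne\wordsim^{I,-1}(r)$, which is exactly the defining property of the coarsest partition-relation pair inducing $\wordsim^I$ recalled in the preliminaries. Hence $\langle P_\fin,\Rel_\fin\rangle=\langle P_\Sim,\Rel_\Sim\rangle$.

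I expect the main obstacle to be the bookkeeping around the $\Split$/inherit steps (lines~6--10): one must verify that splitting a block and redistributing $\Rel$ via $\parent_{P_\prev}$ and the $\remove$ sets via line~10 leaves the induced relation \emph{unchanged} and keeps all invariants intact, so that the only semantically meaningful change in each iteration is the pruning on line~14 and its propagation on line~17. Getting the sub-invariant ``$B_\prev\subseteq\bigcup\Rel(C)$ whenever $C\xtr a B_\prev$ is processed'' stated correctly and stable under the inheritance is the delicate point; everything else is then a relatively mechanical induction.
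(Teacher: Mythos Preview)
Your overall decomposition (termination, soundness, coarsestness) matches the paper's, and your termination argument is essentially Lemma~\ref{disjunct}. However, the soundness argument for invariant~(i) has a genuine gap.

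First, a notational slip: when $(C,D)$ is removed on line~14 with $C\xtr a B_\prev$ and $D\subseteq\remove$, the states of $D$ lack an $a$-transition into $\bigcup\Rel(B_\prev)$, not into $\bigcup\Rel(C)$; these sets are unrelated. Your proposed sub-invariant ``$B_\prev\subseteq\bigcup\Rel(C)$'' is also not the right object (there is no reason it should hold). More substantially, $C\xtr a B_\prev$ only says \emph{some} state of $C$ has an $a$-transition to $B_\prev$; the pair you are removing covers \emph{all} $(c,d)\in C\times D$, including those $c$ with no such transition. To repair this you need (via reflexivity of $\Rel$ and a stronger invariant, the paper's Invariant~\eqref{sim}) that \emph{every} $c\in C$ has an $a$-transition into $\bigcup\Rel_\prev(B_\prev)$, say $c\xtr a q'\in E\in\Rel_\prev(B_\prev)$. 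Then from $c\preceq^I d$ you get $d\xtr a r'$ with $q'\preceq^I r'$, so $r'$ lies in some block $F$ with $(E,F)\in\Rel_\prev$ by the inductive hypothesis. To derive the contradiction with $d\in\remove_a(B_\prev)$ you need $(B_\prev,F)\in\Rel_\prev$---but $\Rel$ is \emph{not transitive} during the computation, so $(B_\prev,E)\in\Rel_\prev$ and $(E,F)\in\Rel_\prev$ do not give this directly.

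The missing ingredient is the paper's Lemma~\ref{transitivitystub}: a weak transitivity stating that if $q'\preceq^I r'$, $q'\in E$, $r'\in F$, and $(B,E)\in\Rel$, then $(B,F)\in\Rel$. This lemma has its own nontrivial inductive proof (it is where the real work hides), and it is used again, in the same way, to prove that $\preceq^I$-equivalent states are never separated by $\Split$ (your coarsestness argument, which as written is too optimistic: being separated by $\remove_a(B)$ is about reaching $\bigcup\Rel(B)$, not directly about the induced relation). So what you flag as ``the delicate point'' is considerably more delicate than a bookkeeping invariant about $\Split$/inherit: it is the absence of transitivity of $\Rel$ and the need for Lemma~\ref{transitivitystub} as a surrogate.
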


Let us first introduce some notation that will be needed within the proof of
the theorem. By an \emph{iteration}, we will mean a single iteration of the
while loop of the algorithm.  For an iteration, the block $B$ chosen on line 3
(also referred to as $B_\prev$) will be denoted as the \emph{pivot} of the
iteration. An \emph{ancestor} of a block $C$ is any block $D$ which appears
during the computation and for which $C\subseteq D$, and on the contrary, $C$
is a \emph{descendant} of $D$. Moreover, if $D$ is the immediate ancestor of
$C$ such that $C$ was created while splitting $D$, then $D$ is the
\emph{parent} of $C$ and $C$ is a \emph{child} of $D$.  We will denote by
$q\nltr a r$ the fact that $\neg (q\xtr a r)$. Moreover, for any $B,C\subseteq
Q$, $q\nltr a C$ and $B\xtr a C$ are defined analogously, i.e. provided that
$q\not\in\pre a(C)$ and $B\cap\pre a(C)=\emptyset$.
We will use $\ind{\langle P,\Rel \rangle}$ to denote the relation induced by the partition-relation   pair $\langle P,\Rel \rangle$ in a particular state of a run of the algorithm.

\begin{lemma}\label{shrinking}
On line 3 of Algorithm~\ref{algorithm:LRT}, the pair $\langle P,\Rel \rangle$
is always a partition-relation pair. The partition $P$ can only be refined
during the computation. Moreover, the relation 
$\ind{\langle P,\Rel \rangle}$ is monotonically getting smaller during the
computation.\end{lemma}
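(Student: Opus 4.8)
The plan is to proceed by induction on the number of iterations of the while loop, establishing all three assertions simultaneously since they interact: that $\langle P,\Rel\rangle$ is a genuine partition-relation pair on line 3, that $P$ only gets refined, and that $\ind{\langle P,\Rel\rangle}$ only shrinks. The base case is the state of affairs right after initialisation, where $\langle P,\Rel\rangle = \langle P_\init,\Rel_\init\rangle$ is a partition-relation pair by hypothesis (it is the coarsest one inducing the preorder $I$), there is nothing yet to refine, and nothing has shrunk. For the inductive step I would fix one iteration and track the state of $\langle P,\Rel\rangle$ through the three modification points: the $\Split$ on line 6, the re-assignment of $\Rel(C)$ on line 8, and the pruning on line 14.

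For the "always a partition-relation pair" claim: after line 6, $P$ is still a partition of $Q$ because $\Split(P,\remove)$ replaces each block $C$ by the (possibly trivial) two-block decomposition $C\cap\remove$ and $C\setminus\remove$, which keeps the union equal to $Q$ and keeps blocks disjoint. After line 8, $\Rel \subseteq P\times P$ holds because $\Rel(C)$ is explicitly defined as a set of blocks of the new partition $P$. Line 14 only deletes pairs from $\Rel$, so the inclusion $\Rel\subseteq P\times P$ is preserved, and $P$ is untouched there. Hence on line 3 of the next iteration we again have a partition-relation pair. For the "$P$ only refined" claim: the sole place $P$ changes is line 6, and $\Split$ by construction yields a refinement (every new block is a subset of an old one — this is exactly the parent/child relationship the paragraph before the lemma sets up).

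The main content is the monotonicity of $\ind{\langle P,\Rel\rangle}$, and it is the step I expect to require the most care, because the partition is being refined \emph{and} $\Rel$ is being recomputed, so one cannot just say "we removed pairs". The key observation is that the $\Rel(C)$ assignment on line 8 is designed precisely to preserve the induced relation across the $\Split$: after line 8, for a child $C$ with parent $C' = \parent_{P_\prev}(C)$, we have $\bigcup\Rel(C) = \bigcup\{D \in P \mid D \subseteq \bigcup\Rel_\prev(C')\}$, and since $P$ refines $P_\prev$, every element of $\bigcup\Rel_\prev(C')$ lies in some block $D$ of $P$ with $D\subseteq\bigcup\Rel_\prev(C')$; thus $\bigcup\Rel(C) = \bigcup\Rel_\prev(C')$. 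Therefore for any state $q\in C$, its image under the induced relation is unchanged by the combined effect of lines 6 and 8: $\ind{\langle P,\Rel\rangle}(q) = \bigcup\Rel_\prev(C') = \ind{\langle P_\prev,\Rel_\prev\rangle}(q)$. So lines 6 and 8 together leave $\ind{\langle P,\Rel\rangle}$ exactly equal to its value at the start of the iteration. Finally, the pruning loop on lines 11--14 only ever executes $\Rel \ass \Rel\setminus\{(C,D)\}$, which can only delete pairs $B\times C$ from the induced relation; the $\remove$-set updates on lines 15--17 do not touch $\langle P,\Rel\rangle$ at all. Hence the induced relation at the end of the iteration is a subset of its value at the start, completing the induction.

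One small point to be careful about: the while-loop guard and the assignments on lines 4--5 (emptying $\remove_a(B)$, saving $P_\prev$, $B_\prev$, $\Rel_\prev$) do not modify $\langle P,\Rel\rangle$, so they are irrelevant to the invariant and can be dispatched in a sentence. With that, the three assertions hold on line 3 of every iteration, which is exactly the statement of the lemma.
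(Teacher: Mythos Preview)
Your proposal is correct and follows essentially the same approach as the paper's proof: track the state of $\langle P,\Rel\rangle$ through $\Split$, the inheritance on line~8, and the pruning on line~14. Your treatment of the monotonicity of $\ind{\langle P,\Rel\rangle}$ is in fact more careful than the paper's (which simply asserts the algorithm ``can only remove elements from $\Rel$''); your explicit verification that $\bigcup\Rel(C)=\bigcup\Rel_\prev(\parent_{P_\prev}(C))$ after line~8 is the right way to make that step rigorous.
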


\begin{proof} 
The initial value of $\langle P,\Rel\rangle$ is clearly a partition-relation
pair.  After $\Split$ on line 6, $\langle P,\Rel\rangle$ is temporarily not a
partition-relation pair as $\Rel$ is a relation on $P_\prev$, not on $P$.
However, after inheriting all $\Rel$ links of parent blocks by their children
on lines 7--10, $\langle P,\Rel\rangle$ is a partition-relation pair again. The
other two claims of the lemma are also immediate as the algorithm can only
split the classes of $P$ (but never unites them), and can only remove
elements from $\Rel$.\end{proof}


\begin{lemma}\label{invariants}
The following claims are invariants of the while loop (of line 3) of Algorithm~\ref{algorithm:LRT}:
\begin{gather}
\label{remove}\forall B\in P.\ \forall a\in\labset.\
\remove_a(B)\nltr a\bigcup\Rel(B)\\
\label{refl}\forall B\in P.\ B\in\Rel(B)
\end{gather}
\begin{multline}\label{sim}
\forall B,C\in P.\ (B,C)\in\Rel \implies \\
\left(\forall a\in\labset.\ \forall D\in P.\ B\xtr a D \implies
C\subseteq\pre a(\bigcup\Rel(D))\cup\remove_a(D)\right)
\end{multline}
\end{lemma}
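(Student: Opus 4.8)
The plan is to prove the three claims simultaneously by induction on the number of completed iterations of the while loop, since they clearly reinforce each other: the \eqref{remove} invariant is exactly what is needed to justify the pruning of $\Rel$ on line~14, and that pruning is in turn what maintains \eqref{sim}. First I would check the base case, right after initialisation: $\langle P,\Rel\rangle=\langle P_\init,\Rel_\init\rangle$ with $\remove_a(B)=Q\setminus\pre a(\bigcup\Rel(B))$, so \eqref{remove} holds by construction; \eqref{refl} holds because $I$ is a preorder, hence reflexive, and $\langle P_\init,\Rel_\init\rangle$ is the coarsest pair inducing it; and \eqref{sim} holds because if $B\xtr a D$ then $B\subseteq\pre a(D)$ (blocks are simulation-equivalence-like classes of the coarsest pair, so a block either lies entirely in $\pre a(D)$ or is disjoint from it — actually one only needs: if $\langle P,\Rel\rangle$ induces $I$ and $(B,C)\in\Rel$, reflexivity and the coarsest-pair structure give the containment), and more directly $C\subseteq\pre a(\bigcup\Rel(D))$ already holds because $D\in\Rel(D)$ so $\pre a(D)\subseteq\pre a(\bigcup\Rel(D))$, and $C$ is related to $B$ which can reach $D$. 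I would spell this out carefully but it is routine.

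For the inductive step, I would fix an iteration with pivot $B_\prev$, chosen symbol $a$, and the set $\remove=\remove_a(B_\prev)$ (value at line~3), assume all three invariants hold at line~3, and trace through lines 4--18 to show they hold again at the next visit to line~3. The argument splits into the two phases the text already describes. Phase one (lines 6--10): after $\Split$ and the inheritance of $\Rel$-links and $\remove$-sets from parents to children, every block is a subset of a block from $P_\prev$, and $\bigcup\Rel(C)=\bigcup\Rel_\prev(\parent_{P_\prev}(C))$ as sets of states, and $\remove_b(C)=\remove_b(\parent_{P_\prev}(C))$ as sets of states; hence all three invariants are inherited verbatim from the parent-level statements, using the key observation that $\pre b$, $\Rel(\cdot)$-as-a-state-set, and the $\remove$-sets are all stable under refining the partition without changing which states sit where. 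I would state this stability as a small sublemma. Phase two (lines 11--18): here I must show that removing pairs $(C,D)$ with $C\xtr a B_\prev$ and $D\subseteq\remove$ from $\Rel$, together with the additions to $\remove_b(C)$ on line~17, re-establishes \eqref{remove} and \eqref{sim} (and does not touch \eqref{refl}, which needs the side remark that we never remove $(C,C)$: since $C\xtr a B_\prev$ and $D\subseteq\remove=\remove_a(B_\prev)$ which by \eqref{remove} cannot $a$-reach $\bigcup\Rel(B_\prev)\supseteq B_\prev$, we have $C\neq D$, so $(C,C)$ survives).

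The core of phase two, and the step I expect to be the main obstacle, is re-establishing \eqref{sim}. After line~14 removes some pairs from $\Rel$, two things can break the implication "$(B,C)\in\Rel \wedge B\xtr a' D \implies C\subseteq\pre{a'}(\bigcup\Rel(D))\cup\remove_{a'}(D)$": either the hypothesis newly becomes relevant (it was already satisfied before) — no issue — or for a surviving pair $(B,C)\in\Rel$ the set $\bigcup\Rel(D)$ shrank (because some $(D,D')$ was deleted), so the conclusion $C\subseteq\pre{a'}(\bigcup\Rel(D))\cup\remove_{a'}(D)$ could fail for states $r\in C$ that previously sat in $\pre{a'}(D')$ but nowhere else in $\pre{a'}(\bigcup\Rel(D))$. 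This is exactly the situation line~17 is designed for: when $(D,D')$ is deleted (so $D'\subseteq\remove$ and $D\xtr a B_\prev$), every such $r$ with $r\xtr{a'}D'$ but $r\nltr{a'}\bigcup\Rel(D)$ is added to $\remove_{a'}(D)$, restoring the conclusion. I would make this precise by arguing that the loop on lines 15--17 ranges over exactly the deleted pairs and exactly the states whose membership in $\pre\bullet(\bigcup\Rel(\cdot))$ was lost, and that no further losses occur because removal of pairs is the only operation affecting $\bigcup\Rel$. Re-establishing \eqref{remove} is comparatively easy: the only way $\remove_b(C)\nltr b\bigcup\Rel(C)$ could fail is again a shrink of $\bigcup\Rel(C)$, but then line~14 deleted some $(C,D)$, and the newly-added states to $\remove_b(C)$ are by the line~17 guard precisely those with $r\nltr b\bigcup\Rel(C)$ (evaluated at the up-to-date $\Rel$), while the inherited old members still satisfy it since $\bigcup\Rel(C)$ only got smaller; a short monotonicity remark (Lemma~\ref{shrinking}) closes it. Throughout I would lean on the bookkeeping fact that $C\xtr a B_\prev$ is tested against $B_\prev=P_\prev$-block, matching the pivot, so the pairs deleted are coherent with the $\remove$ set that triggered the iteration.
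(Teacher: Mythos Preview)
Your overall structure (induction on iterations, phase one = split and inheritance, phase two = pruning and propagation) matches the paper's, and your treatment of \eqref{refl} and of the cascading repair on lines 15--17 is correct. But there is a real gap in your phase-one claim that ``all three invariants are inherited verbatim from the parent-level statements''. You have overlooked that line~4 empties $\remove_a(B_\prev)$ \emph{before} the inheritance on line~10, so every child $B'$ of the pivot inherits $\remove_a(B')=\emptyset$, not the value $\remove_a(B_\prev)$ that held at line~3. This breaks \eqref{sim}: take any $(C,D)\in\Rel$ (after inheritance) with $C\xtr a B'$. The parent-level invariant only gave $\parent_{P_\prev}(D)\subseteq\pre a(\bigcup\Rel_\prev(B_\prev))\cup\remove$, and after $\Split$ the child $D$ is either disjoint from $\remove$ or contained in it. In the latter case, since by \eqref{remove} $\remove\nltr a\bigcup\Rel_\prev(B_\prev)=\bigcup\Rel(B')$ and $\remove_a(B')=\emptyset$, the conclusion $D\subseteq\pre a(\bigcup\Rel(B'))\cup\remove_a(B')$ fails. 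So at the end of phase one, \eqref{sim} is violated precisely by the pairs $(C,D)$ with $C\xtr a B_\prev$ and $D\subseteq\remove$---and these are \emph{exactly} the pairs removed on line~14. Line~14 thus has two roles: it directly restores \eqref{sim} for the pivot (which you miss), and it triggers the cascading repair via line~17 (which you do handle). Your phase-two analysis only covers the second role; the paper treats the first as a separate ``line~4'' case in its argument for \eqref{sim}.

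Two smaller points. For the base case of \eqref{sim}, your sketch is more involved than needed: after line~2 one has $\pre a(\bigcup\Rel(D))\cup\remove_a(D)=Q$ by construction, so the conclusion holds trivially for every $C$. For \eqref{remove}, your sentence ``the only way it could fail is again a shrink of $\bigcup\Rel(C)$'' is backwards---shrinking $\bigcup\Rel(C)$ makes the condition \emph{easier}; the only threat is adding states to $\remove_b(C)$ on line~17, and as you then correctly observe, the guard on line~16 ensures those additions respect the invariant.
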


\begin{proof} After the initialisation, all the invariants hold. It is
immediate for Invariants~\ref{remove} and \ref{refl}. It is also fairly obvious
for Invariant~\ref{sim}, as after the algorithm passes line 2, for all $q\in
Q,a\in\Sigma,D\in P$, it holds that either $q$ has an $a$ transition leading to
$\bigcup\Rel(D)$ or $q$ is in $\remove_a(D)$.   

\begin{itemize} 
\item Invariant \eqref{remove} can never be broken. After the initialisation it
holds. From there on, it holds because only such a state $r$ can be moved into
the $\remove_b(C)$ which is not in $\pre b(\bigcup\Rel(C))$  (the test on line
16). Moreover, if $r$ is once not in $\pre b(\bigcup\Rel(C))$, then it will
never be there from that moment on (by Lemma~\ref{shrinking}).

\item Invariant \eqref{refl} can never be broken as violating reflexivity of
$\Rel$ requires choosing a pair $(C,D)$ on line 14 such that $C=D$. The $(C,D)$ pair can be chosen on line 14 only if $C\xtr a B$ and $D\subseteq\remove_a(B)$ where
$B$ is the pivot block. Thanks to Invariant \eqref{remove}, this is not
possible for $C=D$. 

\item Invariant \eqref{sim} can be temporarily broken on three places of the
algorithm:  \begin{description}

  \item[lines 6--10:] Let $C$  be a block of $P$ on line 7 and let $C'\in
P_\prev$ be its parent. Then it is easy to see that after finishing the for
loop on line 7, it holds that $\bigcup\Rel(C) = \bigcup\Rel_\prev(C')$ and for
all $a\in\labset$, $\remove_a(C)=\remove_a(C')$. Thus, after finishing the for
loop on line 7, Invariant \eqref{sim} can be broken only for those $(B,C)$
pairs such that it was broken even for their parents on line 6. Therefore, if the invariant holds on line 3, then it also holds after returning from the for loop on line 7.

  \item[line 4:] Assume the invariant holds at the beginning of some iteration and is then violated by emptying the $\remove_a(B)$ set on line 4. Then, there are $C,D\in
P$ which break the invariant and for which it holds that $(C,D)\in \Rel$, $C\xtr a B$,
$D\subseteq\pre a (\bigcup\Rel(B))\cup\remove_a(B)$, and $D\nsubseteq\pre
a(\bigcup\Rel(B))$. The $\Split$ operation on line 6 divides $D$ into
$D_1\subseteq\pre a(\bigcup\Rel(B))$ and $D_2\subseteq\remove$. After that, $\Rel$ and
the $\remove$ sets are inherited on lines 7--10. Now only those $(C',D_2)$ pairs
break the invariant where $C'$ is a child of $C$ such that it leads via $a$ into a
child of $B$. But exactly these pairs will be chosen on line 13 within this iteration for pruning $\Rel$. Hence, after finishing the iteration, the invariant will not be violated from the reason of emptying $\remove_a(B)$.

  \item[line 16:] Pruning $\Rel$ on line 14 lead to breaking the invariant as
there may states $r$ such that $r\xtr b D$ and thus before the update of
$\Rel$, $r\xtr b\bigcup\Rel(C)$, but after the removal of $D$ from $\Rel(C)$,
it can happen that $r\nltr b\bigcup\Rel(C)$. However, exactly these $r$ states
are moved into $\remove_b(C)$, and so Invariant~\eqref{sim} is restored after
finishing the for loop on line 13.  \end{description} \end{itemize} 
\vspace{-5mm}
\end{proof}


\begin{lemma}\label{issimulation} If all the $\remove$ sets are empty when evaluating the condition on line 3, then
$\ind{\langle P,\Rel \rangle}$ is a simulation on $\T$ included in $\init$.
\end{lemma}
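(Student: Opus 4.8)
The plan is to read the simulation property off directly from Invariant~\eqref{sim} of Lemma~\ref{invariants}, specialised to the case where all $\remove$ sets are empty, and to get inclusion in $\init$ from Lemma~\ref{shrinking}.

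Write $R := \ind{\langle P,\Rel \rangle}$ for the relation induced by the current partition-relation pair. First, $R\subseteq \init$: by Lemma~\ref{shrinking} the induced relation only shrinks during the run, and at initialisation $\langle P,\Rel\rangle = \langle P_\init,\Rel_\init\rangle$ induces $\init$; hence at every moment of the computation, and in particular now, $R\subseteq \init$.

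It remains to verify the transfer property. Suppose $(q,r)\in R$ and $q\xtr a q'$ for some $a\in\labset$. By definition of the induced relation there are blocks $B,C\in P$ with $q\in B$, $r\in C$, and $(B,C)\in\Rel$. Let $D\in P$ be the block containing $q'$. Since $q\xtr a q'$ and $q'\in D$, we have $q\in\pre a(D)$, hence $B\cap\pre a(D)\neq\emptyset$, i.e.\ $B\xtr a D$. Because all $\remove$ sets are empty, Invariant~\eqref{sim} instantiated at $(B,C)$, $a$, and $D$ yields $C\subseteq\pre a(\bigcup\Rel(D))$ (the $\remove_a(D)$ summand has vanished — this is precisely where the hypothesis of the lemma is used). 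As $r\in C$, there is some $r'\in\bigcup\Rel(D)$ with $r\xtr a r'$. Finally, $r'\in\bigcup\Rel(D)$ means $r'\in E$ for some $E\in P$ with $(D,E)\in\Rel$; together with $q'\in D$ this gives $(q',r')\in R$. Thus $R$ transfers $a$-moves for every $a\in\labset$, so $\ind{\langle P,\Rel \rangle}$ is a simulation on $\T$ included in $\init$. (If one's definition of simulation also requires reflexivity, this is immediate from Invariant~\eqref{refl}: every state lies in some $B\in P$ with $(B,B)\in\Rel$.)

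There is no real obstacle here: the substance has been front-loaded into Lemma~\ref{invariants}, and the only care needed is in matching the set-level notation $B\xtr a D$ and $\pre a(\bigcup\Rel(D))$ against the state-level transfer condition, together with the observation that emptiness of the $\remove$ sets is exactly what removes the slack term $\remove_a(D)$ from Invariant~\eqref{sim}.
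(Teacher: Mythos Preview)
Your proof is correct and follows essentially the same approach as the paper: both use Lemma~\ref{shrinking} for the inclusion in $\init$ and then verify the transfer condition by instantiating Invariant~\eqref{sim} with empty $\remove$ sets. The parenthetical remark on reflexivity is unnecessary here since the paper's definition of simulation does not require it, but it is harmless.
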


\begin{proof} By Lemma~\ref{shrinking}, it is clear that $\ind{\langle
P,\Rel \rangle}$ is always a subset of $I$.
We have to show that $\ind{\langle P,\Rel
\rangle}$ is also a simulation on $\T$. Let $q\ind{\langle P,\Rel \rangle} r$ for some $q\in B,r\in C$ where $B,C\in P$.  From the definition of $\ind{\langle
P,\Rel \rangle}$, $(B,C)\in\Rel$. Let $q\xtr a s$ for some $s\in D,D\in P$. Then $B\xtr a D$.
Therefore, by Invariant \eqref{sim} and since all the $\remove$
sets are empty, we get $C\subseteq\pre a(\bigcup\Rel(D))$. This means that
there is $u\in \bigcup\Rel(D)$ such that $r\xtr a u$.
By the definition of $\ind{\langle P,\Rel \rangle}$, we have $s\ind{\langle P,\Rel \rangle} u$. Therefore, $\ind{\langle P,\Rel \rangle}$ is a simulation on $\T$ and the lemma holds. \end{proof}

During the computation, the relation $\Rel$ is not necessarily always
transitive. We can prove only the following property of $\Rel$ that roughly
resembles transitivity, and which is crucial for our correctness proof.

\begin{lemma}\label{transitivitystub}  
Under the assumption that ${\wordsim^I}\subseteq{\ind{\langle P,\Rel
\rangle}}$, the following invariant always holds on line 3 of
Algorithm~\ref{algorithm:LRT}: For any  $q,r\in Q$ with $q\wordsim^I r$ and
$B,C,D\in P$ such that $q\in C$, $r\in D$ and $(B,C)\in\Rel$, it holds that also
$(B,D)\in\Rel$.
\end{lemma}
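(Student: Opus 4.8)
The plan is to show the invariant is preserved at every point where $\Rel$ or $P$ changes, proceeding by an induction along the run of the algorithm, exactly as in the proof of Lemma~\ref{invariants}. The base case is the initialisation: on line~1 we have $\langle P,\Rel\rangle=\langle P_\init,\Rel_\init\rangle$, the coarsest partition-relation pair inducing the preorder $I$; since $\wordsim^I\subseteq I$, if $q\wordsim^I r$ and $(B,C)\in\Rel_\init$ with $q\in C$, then every element of $B$ is $I$-below $q$, hence $I$-below $r$ (because $q\,I\,r$ follows from $q\wordsim^I r$), and coarseness of $\langle P_\init,\Rel_\init\rangle$ forces $(B,D)\in\Rel_\init$ for the block $D\ni r$. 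For the inductive step I would identify the three places where the invariant could be disturbed, mirroring Lemma~\ref{invariants}: (i) the $\Split$ on line~6 together with the inheritance of $\Rel$ on lines~7--10, and (ii) the pruning of $\Rel$ on line~14. (Line~4 and line~16 only touch the $\remove$ sets, not $\Rel$ or $P$, so they are irrelevant here.)

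For case~(i): after $\Split$, a block of $P_\prev$ is broken into children, and by line~8 a child $C$ inherits exactly the $\Rel$-successors that are subblocks of $\bigcup\Rel_\prev(\parent_{P_\prev}(C))$. So if $q\wordsim^I r$, $q\in C$, $r\in D$, and $(B,C)\in\Rel$ after the inheritance, then $(\parent(B),\parent(C))\in\Rel_\prev$; the induction hypothesis applied before the split gives $(\parent(B),\parent(D))\in\Rel_\prev$, i.e.\ $D\subseteq\bigcup\Rel_\prev(\parent(B))$... — wait, more precisely it gives that the old block containing $r$ was $\Rel_\prev$-related to $\parent(B)$; since $B$ is a child of $\parent(B)$, one checks from line~8 that $\bigcup\Rel(B)$ spans exactly $\bigcup\Rel_\prev(\parent(B))$, so the block $D\ni r$ satisfies $D\subseteq\bigcup\Rel(B)$, i.e.\ $(B,D)\in\Rel$. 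This is the routine bookkeeping already done for Invariant~\eqref{sim}.

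The heart of the matter — and the step I expect to be the main obstacle — is case~(ii), the removal of a pair $(C',D')$ from $\Rel$ on line~14. Suppose $q\wordsim^I r$, $q\in C$, $r\in D$, $(B,C)\in\Rel$, and, for contradiction, $(B,D)$ has just been removed. Then $B\xtr a B_\prev$ and $D\subseteq\remove_a(B_\prev)$, where $B_\prev$ is the pivot and $\remove_a(B_\prev)$ is (by Invariant~\eqref{remove}) a set of states with no $a$-transition into $\bigcup\Rel(B_\prev)$. So pick $q'\in B$ with $q'\xtr a B_\prev$; we need to derive a contradiction with $q\wordsim^I r$. The idea is: from $(B,C)\in\Rel$ and $q\in C$ and $q'\in B$ we get (via $\wordsim^I\subseteq\ind{\langle P,\Rel\rangle}$ and Invariant~\eqref{sim}) that $q$ — hence $r$, since $q\wordsim^I r$ — must, along its $a$-move matching $q'\xtr a B_\prev$, land in $\pre a(\bigcup\Rel(B_\prev))\cup\remove_a(B_\prev)$; one then argues that $r\in D\subseteq\remove_a(B_\prev)$ already excludes $r$ from being $\wordsim^I$-above $q'$, or rather that it forces $q\wordsim^I r$ to fail because $q$ can make the $a$-step into $C\subseteq\pre a(\bigcup\Rel(B_\prev))$ — the subtlety being that we must use the still-present pair $(B,C)$ on the ``left'' to transport the obligation across $\wordsim^I$, and carefully track which of the two coordinates of $\Rel$ is being pruned. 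I would write this out by contraposition: assuming $(B,C)\in\Rel$, $q\in C$, $r\in D$, $D\subseteq\remove_a(B_\prev)$, $B\xtr a B_\prev$, show $q\not\wordsim^I r$, using that $\wordsim^I$ is a simulation and that $q$ genuinely has an $a$-successor in $\bigcup\Rel(B_\prev)$ (because $B$, not $C$, is $\Rel$-related into $B_\prev$ and we may use Invariant~\eqref{sim} with the pair $(B,C)$ to push $C$, the block of $q$, into $\pre a(\bigcup\Rel(B_\prev))$ modulo the pivot's $\remove$ set, which is being emptied this very iteration). The care needed is to not confuse the roles of the simulating and simulated states and to use the hypothesis $\wordsim^I\subseteq\ind{\langle P,\Rel\rangle}$ at exactly the right moment.
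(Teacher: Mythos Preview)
Your base case and the handling of the \emph{Split}/inheritance step (case~(i)) are fine and match the paper. The genuine gap is in case~(ii), the pruning step.

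You aim to derive $q\not\wordsim^I r$ directly from the local data ($(B,C)\in\Rel$ retained, $(B,D)$ just removed, $B\xtr a B_\prev$, $D\subseteq\remove$). This cannot be done. What the local data give you is exactly this: via Invariant~\eqref{sim} applied to $(\parent(B),\parent(C))\in\Rel_\prev$ and $\parent(B)\xtr a B_\prev$, together with $C\cap\remove=\emptyset$, you get $q\xtr a q'$ for some $q'\in F$ with $(B_\prev,F)\in\Rel_\prev$. Then $q\wordsim^I r$ gives $r\xtr a r'$ with $q'\wordsim^I r'$; and since $r\in\remove$, Invariant~\eqref{remove} forces $(B_\prev,G)\notin\Rel_\prev$ for the block $G\ni r'$. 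That is all. Nothing here contradicts $q\wordsim^I r$: the simulation only demands \emph{some} $a$-successor $r'$ above $q'$, and nothing says $r'$ must live in $\bigcup\Rel_\prev(B_\prev)$. Your sentence ``$r\in D\subseteq\remove_a(B_\prev)$ already excludes $r$ from being $\wordsim^I$-above\ldots'' and ``$q$ can make the $a$-step into $C\subseteq\pre a(\bigcup\Rel(B_\prev))$'' conflate the block of $q$ with the target of its $a$-transition and try to read a simulation failure out of a $\Rel$-failure, which is exactly what the lemma is there to \emph{justify}, not to assume.

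What you have actually produced is a \emph{new} instance of the invariant's hypothesis at the \emph{previous} iteration: $q'\wordsim^I r'$, $q'\in F$, $r'\in G$, $(B_\prev,F)\in\Rel_\prev$, $(B_\prev,G)\notin\Rel_\prev$. The missing step is to now invoke the induction hypothesis at $\Start_{i-1}$ on $(B_\prev,F,G,q',r')$: it would force $(B_\prev,G)\in\Rel_\prev$, a contradiction. Equivalently (and this is how the paper phrases it), argue by ``first violation'': assume $\Start_i$ is the earliest moment the invariant fails, and observe that the above construction exhibits a violation already at $\Start_{i-1}$. Either way, the argument closes only by feeding the shifted instance $(q',r',B_\prev,F,G)$ back into the invariant at an earlier stage; it does not close by refuting $q\wordsim^I r$ on the spot. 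Rewrite case~(ii) accordingly.
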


\begin{proof} 
Let us recall the relationship between a partition-relation pair
$\langle P,\Rel \rangle$ and its induced relation $\ind{\langle P,\Rel\rangle}$
which is: For any $B,C\in P$ and $q\in B,r\in C$, it holds that $q \ind{\langle
P,\Rel\rangle} r$ iff $(B,C)\in\Rel$. Therefore, if ${\ind{\langle
P,\Rel\rangle}}\subseteq{\wordsim^I}$, then $q\wordsim^I r$ implies $(B,C)\in
\Rel$.
We prove the lemma by induction on the number of iterations of the while loop.

The base case: After the initialisation, the claim holds as $\Rel_{\init}$ is
transitive (the relation ${\init}$ is a preorder). We prove the induction step
by contradiction. 

Let the lemma be broken for the first time at the beginning of the $i$-th
iteration of the while loop. We use $\Start_{i}$ to denote the state of the
algorithm at this moment.
At $\Start_{i}$, we have that ${\wordsim^I}\subseteq{\ind{\langle
P,\Rel\rangle}}$ and there are some $B,C,D\in P$, $q\in C$, and $r\in D$ such that
$q\wordsim^I r$,
$(B,C)\in\Rel$, and
$(B,D)\not\in\Rel$. From $q\wordsim^I r$ and
${\wordsim^I}\subseteq{\ind{\langle P,\Rel\rangle}}$, we have $(C,D)\in\Rel$.
Because the induced relation is shrinking only (Lemma~\ref{shrinking}), we have
that at each moment of the computation preceding $\Start_i$, the relation
$\wordsim^I$ was a subset of the relation induced by the current partition-relation pair, the ancestor $C'$ of $C$ was
above the ancestor $B'$ of $B$ wrt. the current $\Rel$, and also the ancestor
of $D$ was above the ancestor of $C$. Because of this
and as the lemma is broken for
the first time at $\Start_i$, we know that at the beginning of any iteration prior to the $i$-th one, the ancestor of $D$ was above the ancestor of $B$ wrt. the current state of $\Rel$. 

Let us analyse the moment before $\Start_i$ when $(B,D)$ is going to be removed
from relation this is, we are within the $i-1$-th, just before entering the for
loop on line 11). 
Let $\langle P,\Rel'\rangle$ be the current
partition-relation pair (the current partition $P$ at that moment is
the same as at $\Start_i$, since no splitting will be done until $\Start_i$). 
The situation is such that $(B,C)\in\Rel'$,
$(C,D)\in\Rel'$, $(B,D)\in\Rel'$, and we are going to remove $(B,D)$ from
$\Rel'$ on line 14. However, we keep $(B,C)$ and $(C,D)$ in $\Rel'$ during this
iteration as these two pairs will be in $\Rel$ at $\Start_i$.
Removing $(B,D)$ from $\Rel'$ is caused by processing the $\remove_a(E)$
set where $E\in P_\prev$ is the pivot of the $i-1$-th iteration. Thus, we have that $B\xtr a E, D\subseteq\remove_a(E)$ and $C\cap\remove_a(E)=\emptyset$. 

Let us examine the state of the algorithm at the beginning of the $i-1$-th
iteration, the moment referred to as $\Start_{i-1}$. The current partition relation pair at $\Start_{i-1}$ is $\langle P_\prev,\Rel'_\prev\rangle$.
It holds that ${\wordsim^I}\subseteq{\ind{\langle P_\prev,\Rel'_\prev\rangle}}$.
Let $B',C',D'\in P_\prev$ be the ancestors of $B,C,D$
(therefore $B\subseteq B', C\subseteq C', D\subseteq D'$). We have that $q\in
C\subseteq C'$, $C\cap\remove_a(E)=\emptyset$, $B'\xtr a E$, and $(B',D')\in
\Rel'_\prev$, and therefore, from Invariant \eqref{sim}, we have that
$C'\subseteq \pre a(\bigcup\Rel(E))\cup\remove_a(E)$. This implies that
$C\subseteq \pre a(\bigcup\Rel(E))$.  Thus, there is $F\in \Rel'_\prev(E)$ and $q'\in F$ with $q\xtr a q'$. 
Since $q\wordsim^I r$, there is $r'\in Q$ with $r\xtr a r'$ and $q'\wordsim^I r'$.
Because ${\wordsim^I}\subseteq{\ind{\langle P_\prev,\Rel_\prev\rangle}}$,
the block $G\in P_\prev$ containing $r'$ must be in $\Rel'_\prev(F)$. Finally, because $r\in D\subseteq\remove_b(E)$,
from Invariant \eqref{remove}, we get $(E,G)\not\in\Rel'_\prev$.  

To conclude the proof, observe that the
states $q',r'$, the blocks $E,F,G\in P_\prev$, and the partition-relation pair
$\langle P_\prev,\Rel'_\prev \rangle$ form a situation
that violates the lemma at $\Start_{i-1}$ (to recap, we have that
${\wordsim^I}\subseteq{\ind{\langle P_\prev,\Rel'_\prev\rangle}}$,  $q'\in
F,r'\in G,q'\wordsim^I r'$, and $(E,F)\in \Rel'_\prev$, but $(E,G)\not\in\Rel'_\prev$). This is a contradiction since $\Start_i$
was
supposed to be the first such a  
moment.\end{proof}

\begin{lemma}\label{max} At any point of a run of Algorithm~\ref{algorithm:LRT}, ${\wordsim^I}\subseteq{\ind{\langle P,\Rel \rangle}}$. \end{lemma}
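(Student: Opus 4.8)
The statement to prove is Lemma~\ref{max}: at every point of a run of Algorithm~\ref{algorithm:LRT}, ${\wordsim^I}\subseteq{\ind{\langle P,\Rel \rangle}}$. The plan is to proceed by induction on the number of iterations of the while loop, showing that the inclusion is preserved. The base case is immediate: initially $\langle P,\Rel\rangle = \langle P_\init,\Rel_\init\rangle$ induces $\init$, and the target simulation $\wordsim^I$ is by definition included in $\init$, so ${\wordsim^I}\subseteq{\ind{\langle P_\init,\Rel_\init\rangle}} = \init$.

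For the inductive step, I assume ${\wordsim^I}\subseteq{\ind{\langle P,\Rel\rangle}}$ holds at the start of an iteration and show it still holds at the start of the next one. The only way the induced relation shrinks during an iteration (Lemma~\ref{shrinking}) is the pruning on line 14: some pair $(C,D)$ with $C\xtr a B_\prev$ and $D\subseteq\remove$ (where $\remove = \remove_a(B_\prev)$ as saved on line 4) is removed from $\Rel$. So it suffices to show that no such removed pair $(C,D)$ contains a pair of states $q\in C$, $r\in D$ with $q\wordsim^I r$. Suppose for contradiction $q\wordsim^I r$ with $q\in C$, $r\in D$, and $(C,D)$ is about to be removed. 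Since $C\xtr a B_\prev$, pick $q''\in C$ with $q''\xtr a s$ for some $s\in B_\prev$; here I would rather argue with $q$ directly, so I need that $q$ itself has an $a$-transition into (an ancestor of) $B_\prev$. This is where Lemma~\ref{transitivitystub} is the key tool: its hypothesis ${\wordsim^I}\subseteq{\ind{\langle P,\Rel\rangle}}$ is exactly the induction hypothesis, so I may use it freely within this iteration (the partition does not change after line 10, and $\Rel$ only shrinks, but the lemma is stated to hold on line 3 — I should note it continues to hold, or re-derive what I need).

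The core argument: from $q\in C$ and $(C,D)$ being removed we know $C\xtr a B_\prev$, i.e. some state of $C$ reaches $B_\prev$ via $a$. By Lemma~\ref{transitivitystub} applied with the $\wordsim^I$-related states in $C$, together with Invariant~\eqref{sim} (which says $(B,C)\in\Rel$ forces $C$ into $\pre a(\bigcup\Rel(D'))\cup\remove_a(D')$ for $B\xtr a D'$), I can promote this to: $q$ has an $a$-transition into $\bigcup\Rel_\prev(B_\prev)$ — more precisely, using reflexivity (Invariant~\eqref{refl}) that $B_\prev\in\Rel(B_\prev)$ and the transitivity-stub to move the witness along $\wordsim^I$, $q$ reaches some block $F\in\Rel(B_\prev)$. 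Since $q\wordsim^I r$, there is $r'$ with $r\xtr a r'$ and $q'\wordsim^I r'$ where $q'$ is $q$'s successor in $F$; by the induction hypothesis the block $G$ of $r'$ lies in $\Rel(F)$, hence — if $\Rel$ were transitive — in $\Rel(B_\prev)$, so $r$ has an $a$-transition into $\bigcup\Rel(B_\prev)$, contradicting $r\in D\subseteq\remove = \remove_a(B_\prev)$ via Invariant~\eqref{remove}. The place transitivity genuinely fails is patched precisely by Lemma~\ref{transitivitystub}: $(B_\prev,F)\in\Rel$ and $(F,G)\in\Rel$ together with the $\wordsim^I$-relatedness witnessing $G$ give $(B_\prev,G)\in\Rel$.

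**Main obstacle.** The delicate point is the non-transitivity of $\Rel$ during the run, which is why the naive chain "$C$ reaches $\Rel(B_\prev)$, the simulating successor sits in $\Rel$ of that, so it is in $\Rel(B_\prev)$" does not close directly. The whole role of Lemma~\ref{transitivitystub} is to supply exactly the one instance of transitivity needed here, and the subtlety is checking that its hypothesis is met at the moment it is invoked — which it is, since that hypothesis is the very induction hypothesis of the present lemma. I also need to be careful about ancestors versus current blocks: the pivot $B_\prev$ is a block of $P_\prev$, and after the $\Split$/inherit on lines 6–10 the relevant relations and $\remove$ sets are inherited faithfully (as established in the proof of Lemma~\ref{invariants}), so reasoning at the level of $\langle P,\Rel\rangle$ just before line 14 is legitimate. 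Aside from that bookkeeping, the argument is a direct application of the invariants~\eqref{remove}, \eqref{refl}, \eqref{sim} and the transitivity stub.
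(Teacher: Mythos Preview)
Your approach is the paper's approach: induct on iterations, and at a would-be first violation (a pair $(C,D)$ removed on line~14 with $q\in C$, $r\in D$, $q\wordsim^I r$), use Invariant~\eqref{sim} and reflexivity to find $q\xtr a q'$ into some block of $\Rel_\prev$ of the pivot, lift via $q\wordsim^I r$ to $r\xtr a r'$, and then close with Lemma~\ref{transitivitystub} against Invariant~\eqref{remove}. The paper phrases the last step as ``this configuration contradicts Lemma~\ref{transitivitystub}'' rather than ``apply the conclusion of Lemma~\ref{transitivitystub} and contradict Invariant~\eqref{remove}'', but that is the same argument.

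Two slips to clean up. First, to obtain that \emph{the specific state} $q$ has an $a$-transition into $\bigcup\Rel_\prev(B_\prev)$, you do not need Lemma~\ref{transitivitystub}; you need Invariant~\eqref{sim} applied with reflexivity on the \emph{parent} $C'\in P_\prev$ of $C$ (not on $B_\prev$): from $(C',C')\in\Rel_\prev$ and $C'\xtr a B_\prev$ you get $C'\subseteq\pre a(\bigcup\Rel_\prev(B_\prev))\cup\remove_a(B_\prev)$. Second, you must argue that $q\notin\remove$, equivalently $C\cap\remove=\emptyset$; this is where reflexivity on $B_\prev$ is actually used: if $C\subseteq\remove$ then Invariant~\eqref{remove} forbids $C\xtr a B_\prev$ because $B_\prev\in\Rel_\prev(B_\prev)$. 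With these two points in place your ``core argument'' goes through exactly as in the paper, and your final invocation of Lemma~\ref{transitivitystub} (from $(B_\prev,F)\in\Rel_\prev$, $q'\wordsim^I r'$, $q'\in F$, $r'\in G$ conclude $(B_\prev,G)\in\Rel_\prev$) is precisely the intended one.
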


\begin{proof} 
The lemma apparently holds after initialisation.  We will prove that it always
holds by contradiction---we will show that violating this lemma in a run of
Algorithm~\ref{algorithm:LRT} has to be preceded by breaking Lemma
\ref{transitivitystub}. 

Let us choose the moment just before the lemma is violated for the first time.
This is, some $(B,C)$ is going to be removed from $\Rel$
on line 14 such that there are $q\in B$ and $r\in C$ with $q\wordsim^I r$. 
This update of $\Rel$ is caused by processing the set $\remove_a(D)$ where $D\in P_\prev$ is the pivot of the
current iteration of the while loop, $B\xtr a D$, $B\cap\remove=\emptyset$ ($\remove$ is the recorder value of $\remove_a(D)$ which was emptied on line 4 in this iteration), and $C\subseteq\remove$. Let $B',C'\in P_\prev$ be the ancestors
of $B,C$. From Invariant \eqref{refl}, we have that $(B',B')\in\Rel_\prev$. 

Let us examine the state at the beginning of this iteration.
We have that $B'\xtr a D$ because of $B\xtr a D$, which by Invariant
\eqref{sim} gives $B'\subseteq\pre a(\bigcup\Rel_\prev(D))\cup\remove_a(D)$.
Since $q\in B$, $q\not\in\remove_a(D)$, and therefore there are $E\in \Rel_\prev(D)$ and $q'\in E$ with
$q\xtr a q'$. From $q\wordsim^I r$ and from the fact that ${\wordsim^I}$ is a subset of the current induced relation
(the lemma is going to be broken for the first time, it holds so far), we have
that there are $F\in \Rel_\prev(E)$ and $r'\in F$ with $r\xtr a r'$.
However, as $r\in\remove_a(D)$ and because of Invariant \eqref{remove}, we have
$(D,F)\not\in\Rel_\prev$. Hence the states $q',r'$ and the blocks $D,E,F$
violates Lemma~\ref{transitivitystub} at the beginning of this iteration. 
\end{proof}

\begin{lemma}\label{partition} 
At any point of a run of Algorithm~\ref{algorithm:LRT}, any two states $q,r\in Q$ with $q\wordsimequiv^I r$ are in the
same block of~$P$.\end{lemma}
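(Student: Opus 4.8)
The plan is to prove this by induction on the number of iterations of the while loop, showing that the property ``any two $\wordsimequiv^I$-equivalent states lie in the same block of $P$'' is preserved. The base case is immediate: after initialisation $P = P_\init$ is the coarsest partition-relation pair inducing the preorder $\init$, so two states related by $\wordsim^I$ in both directions (hence in particular $\init$-equivalent) lie in the same block of $P_\init$, and $\wordsimequiv^I \subseteq {\init \cap \init^{-1}}$. For the induction step, the only operation that refines $P$ is the $\Split$ on line~6, which divides each block $C$ according to membership in the set $\remove = \remove_a(B)$ for the pivot $B$. So the heart of the argument is to show: if $q \wordsimequiv^I r$ and $q \in \remove$, then also $r \in \remove$ — i.e.\ $\remove$ never separates two $\wordsim^I$-equivalent states (in either direction).

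First I would fix the inductive hypothesis that at the start of the current iteration $q, r$ are in the same block, and that by Lemma~\ref{max} we have ${\wordsim^I} \subseteq {\ind{\langle P,\Rel \rangle}}$ throughout the run. Suppose for contradiction that $q \in \remove_a(B)$ but $r \notin \remove_a(B)$ for some pivot $B$ chosen at line~3, where $q \wordsim^I r$ (by symmetry it suffices to treat one direction, then apply it to the pair $(r,q)$ as well, using $r \wordsim^I q$). By Invariant~\eqref{remove}, $q \in \remove_a(B)$ gives $q \nltr a \bigcup\Rel(B)$. On the other hand, $r \notin \remove_a(B)$ together with the fact that $\remove$ sets only ever shrink between being emptied and refilled — more precisely, using that the set emptied on line~4 was the current $\remove_a(B)$ and that by Lemma~\ref{shrinking} and Invariant~\eqref{remove} states outside a $\remove$ set have the corresponding transition into $\bigcup\Rel(B)$ — yields $r \xtr a \bigcup\Rel(B)$, say $r \xtr a u$ with $u \in E$, $E \in \Rel(B)$. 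Since $q \wordsim^I r$ and $\wordsim^I$ is a simulation included in $\init$ (this is where I would invoke the characterisation of $\wordsim^I$ as the greatest such simulation, which was fixed as the target of the algorithm), there must be $v$ with $q \xtr a v$ and $v \wordsim^I u$. Then ${\wordsim^I} \subseteq {\ind{\langle P,\Rel\rangle}}$ places $v$ in some block $F$ with $(E',F) \in \Rel$ where $E'$ is the block containing $u$; combined with $(B, E') \in \Rel$ and the near-transitivity of Lemma~\ref{transitivitystub} (applicable since ${\wordsim^I} \subseteq {\ind{\langle P,\Rel\rangle}}$ holds), we get $(B, F) \in \Rel$, hence $q \xtr a \bigcup\Rel(B)$, contradicting $q \nltr a \bigcup\Rel(B)$.

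The step I expect to be the main obstacle is pinning down exactly what ``$r \notin \remove_a(B)$ implies $r \xtr a \bigcup\Rel(B)$'' means at the precise moment line~6 is executed: the $\remove$ sets are not in general the full complement of $\pre a(\bigcup\Rel(B))$ at every instant (they are only reset to that on line~2, and thereafter accumulate states incrementally on line~17), so one must argue that membership of $r$ in the block but outside $\remove_a(B)$ at line~3 does force $r \xtr a \bigcup\Rel(B)$ at that point. I would handle this by tracking, along the lines of Invariant~\eqref{remove} and the line~16 test, that whenever a state is genuinely missing the $a$-transition into $\bigcup\Rel(B)$ it has already been inserted into $\remove_a(B)$; equivalently, the contrapositive of the invariant together with monotonicity gives the needed implication. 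Once that bookkeeping is in place, the simulation/near-transitivity argument above closes the induction, and since $\wordsimequiv^I$-equivalent states are never split, they remain co-located in $P$ at every point of the run.
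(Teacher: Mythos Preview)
Your identification of the obstacle is correct, but your proposed resolution does not work. The implication you need---that at line~3, $r\notin\remove_a(B)$ forces $r\xtr a\bigcup\Rel(B)$---is \emph{not} an invariant of the algorithm, and the stronger claim ``whenever a state is missing its $a$-transition into $\bigcup\Rel(B)$ it has already been inserted into $\remove_a(B)$'' is false. The reason is line~4: when a block $B'$ is chosen as pivot w.r.t.\ $a$, $\remove_a(B')$ is emptied, and the states that were in it (which all satisfy $s\nltr a\bigcup\Rel(B')$ by Invariant~\eqref{remove}) are \emph{not} reinserted into $\remove_a$ for the children of $B'$. Lines~16--17 only add states that \emph{newly} lose their transition when $\Rel$ is pruned, not states that already lacked one. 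So once an ancestor of $B$ has served as pivot w.r.t.\ $a$, there can be states that are neither in $\pre a(\bigcup\Rel(B))$ nor in $\remove_a(B)$; your ``contrapositive of the invariant together with monotonicity'' does not deliver what you need.

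The paper closes the gap differently: instead of arguing at the current iteration, it goes back to the iteration in which $r$ was \emph{added} to $\remove_a(B')$ for an ancestor $B'$ of $B$, and shows that at the end of \emph{that} iteration a configuration violating Lemma~\ref{transitivitystub} already arises (using both directions $q\wordsim^I r$ and $r\wordsim^I q$). Your approach can in fact be repaired, but not via a global invariant: you would need to combine your inductive hypothesis (that $q,r$ have stayed in the same block, hence on the same side of every earlier $\remove$ split) with Lemma~\ref{disjunct} (which, from $q\in\remove_a(B)$ at the current pivot, excludes $q$---and therefore $r$---from the $\remove$ set of every earlier ancestor pivot w.r.t.\ $a$). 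From this one argues that $r$ was never placed in $\remove_a$ along the ancestor chain of $B$, hence never lost its $a$-transition, hence $r\xtr a\bigcup\Rel(B)$. That is sound, but it is a different argument from the one you sketch. (A minor further point: your use of ``$q\wordsim^I r$'' to pass from $r\xtr a u$ to $q\xtr a v$ with $v\wordsim^I u$ reverses the simulation direction---you need $r\wordsim^I q$, which yields $u\wordsim^I v$; harmless here since $q\wordsimequiv^I r$ supplies both directions, and $u\wordsim^I v$ is what Lemma~\ref{transitivitystub} actually requires, but the text as written is inconsistent.)
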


\begin{proof} By contradiction. We will show that breaking this lemma in a run of Algorithm~\ref{algorithm:LRT} has to
be preceded by breaking Lemma~\ref{transitivitystub}.

After the initialisation the lemma holds. Let us choose the first moment when
it is broken. At that moment, states $q,r$ with $q\wordsimequiv^I r$ are separated
from each other by the
$\Split$ operation during processing of some pivot block $B$. Without loss of
generality, we assume that at the beginning of this iteration
$r\in\remove_a(B)$ and $q\not\in\remove_a(B)$. 

Consider now the moment within some of the preceding iterations, just
before entering the for loop on line 11 during which $r$ will be added into
$\remove_a(B')$ where $B'$ is an ancestor of $B$. Let the current
partition-relation pair be $\langle P,\Rel \rangle$, and let $q,r\in C,C\in P$.
There is some block $D\in \Rel(B')$ with $r\xtr a D$ such that
$(B',D)$ will be removed from $\Rel$ and $r$ will be added to $\remove_a(B')$ because of that within this iteration. 

Since sine $r\xtr
a D$, there is $r'\in D$ with $r\xtr a r'$.
From $r\wordsim^I q$, there is $q'\in Q$ with $q\xtr a q'$ and $r'\wordsim^I q'$, and since ${\wordsim^I}\subseteq{\langle P,\Rel\rangle}$
(Lemma~\ref{max}), there is $E\in\Rel(D)$ with $q' \in E$.
Moreover, from Lemma~\ref{transitivitystub} (whose claim holds also
just before entering the for loop on line 11 because lines 4--10 do not influence the
induced relation), $E\in\Rel(B')$. 

We have shown that when entering the for loop on line 11, $q\xtr a \bigcup
\Rel(B')$. Recall that $q$ will not be added into $\remove_a(B')$ during this
iteration. Therefore, it has to hold that $q\xtr a\bigcup \Rel(B')$ also after
finishing the for loop on line 11 (otherwise $q$ would be added into
$\remove_a(B')$).  This is, after finishing the for loop on line 11, there is
still some $E'\in \Rel(B')$ and $q''\in E'$ with $q\xtr a q''$.  Because
$q\wordsim^I r$, there is some $r''\in Q$ with $q''\wordsim^I r''$ and $r\xtr a
r''$. Since ${\wordsim^I}\subseteq{\langle P,\Rel\rangle}$, there is some $F\in
\Rel(E')$ with $r''\in F$.  But at the end of the for loop on line 11 (i.e. the
beginning of the next iteration of the while loop), $(B',F)\not\in\Rel$ as $r$
was be added into $\remove_a(B')$ within the for loop (Invariant
\eqref{remove}). To conclude the proof, observe now that at the beginning of
the next iteration of the while loop,  states $q'',r''$ and blocks $B',E',F$
form a situation contradicting Lemma~\ref{transitivitystub}.  \end{proof}


\begin{lemma}\label{disjunct} Let $B,B'$ be two blocks appearing during a run of Algorithm~\ref{algorithm:LRT} such that
$B'$ is an ancestor of $B$. Let $\remove_a(B)$ and $\remove_a(B')$ be two $\remove$ sets at the (different) moments when
$B$, resp. $B'$, is chosen as the pivot. Then, $\remove_a(B)\cap \remove_a(B')=\emptyset$. \end{lemma}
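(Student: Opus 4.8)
The plan is to argue by contradiction and to trace, after the iteration in which $B'$ is the pivot, what can happen to the set $S$ that was recorded there as $\remove_a(B')$. First I would fix the order of the two pivot moments. Since $B'$ is an ancestor of $B$, and a block can be chosen as pivot on line~3 only while it is still a block of $P$ (hence before it is split), whereas a strict descendant of $B'$ only comes into existence after $B'$ has been split, the moment $t'$ at which $B'$ is the pivot strictly precedes the moment $t$ at which $B$ is the pivot; if $B=B'$ the argument below applies verbatim once the two moments are ordered, so there is no loss in assuming $t'<t$. Write $S$ for the value of $\remove_a(B')$ at $t'$ (the set recorded as $\remove$ on line~4 of that iteration) and $R$ for the value of $\bigcup\Rel(B')$ at the start of that iteration, i.e.\ $\bigcup\Rel_\prev(B')$. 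By Invariant~\eqref{remove} we have $S\nltr a R$, i.e.\ $S\cap\pre a(R)=\emptyset$; this disjointness is the only property of $S$ and $R$ I will use.

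The heart of the argument is to show that, from $t'$ onward, for every descendant $C$ of $B'$ (this includes $B'$ itself and, in particular, $B$) the following two invariants hold: (A) $\bigcup\Rel(C)\subseteq R$, and (B) $\remove_a(C)\cap S=\emptyset$. For (A) I would invoke Lemma~\ref{shrinking}: whenever $B'$, or later one of its descendants, is split, each child $C$ receives on line~8 the value $\bigcup\Rel(C)=\bigcup\Rel_\prev(\parent_{P_\prev}(C))$, which for a child of $B'$ equals the then-current value of $\bigcup\Rel(B')$ — a subset of $R$ because $\Rel$ only shrinks — and for a child of a deeper descendant equals a value already $\subseteq R$ by the induction hypothesis; between splits $\bigcup\Rel(C)$ can only shrink, so (A) is maintained.

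Invariant (B) I would prove by induction over the operations executed after $t'$. The base case holds because line~4 of the $t'$-iteration empties $\remove_a(B')$ and line~10 then propagates this empty set to the children of $B'$ (and line~4 is executed before the split on line~6, so the children genuinely inherit an emptied $a$-$\remove$ set). The only operations that can enlarge an $a$-indexed $\remove$ set are line~10, which merely copies a parent's set — disjoint from $S$ by the induction hypothesis — and line~17. So suppose line~17 adds a state $r$ to $\remove_a(C)$ for a descendant $C$ of $B'$: this happens while the pair $(C,D)$ is being removed from $\Rel$, with $r\in\pre a(D)$ (taking $b=a$ on line~15). Immediately before the removal we had $D\in\Rel(C)$, hence $D\subseteq\bigcup\Rel(C)\subseteq R$ by~(A); therefore $r\in\pre a(R)$, and so $r\notin S$ since $S\cap\pre a(R)=\emptyset$. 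Thus the addition keeps $\remove_a(C)$ disjoint from $S$, completing the induction. Instantiating~(B) at $C=B$ and time $t$ gives $\remove_a(B)\cap\remove_a(B')=\remove_a(B)\cap S=\emptyset$, which is the claim.

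I expect the main obstacle to be the bookkeeping around~(A): one must check that $\bigcup\Rel(C)\subseteq R$ really survives every split of $B'$ and of its descendants on line~8, including a split of $B'$ occurring strictly later than $t'$, and one must be careful that the induction for~(B) is carried over exactly the events that can touch an $a$-indexed $\remove$ set. Once the monotonicity in~(A) is in place, line~17 — the only mechanism by which a state can re-enter an $a$-indexed $\remove$ set — is forced to re-admit only states that have an $a$-transition into $R$, which is precisely what $S$ by construction does not contain.
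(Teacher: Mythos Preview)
Your proof is correct and follows essentially the same approach as the paper's. The paper argues more tersely---for each state $q$ in $\remove_a(B')$ it observes $q\nltr a \bigcup\Rel(B')$ (your $S\cap\pre a(R)=\emptyset$), then notes via Lemma~\ref{shrinking} that this persists for all descendants (your invariant~(A)), so the precondition on line~16/17 that $q\in\pre b(D)$ with $D\in\Rel(C)$ can never be met (your inductive step for~(B))---but your explicit invariants~(A) and~(B) encode exactly the same reasoning.
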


\begin{proof} If a state $q$ is in $\remove_a(B)$ after the initialisation,
then $q\nltr a \bigcup\Rel_{\init}(B)$. If $q$ is added into $\remove_a(B)$
later on line 17, then it means that the test on line 13 passed, so $q\xtr a
\bigcup\Rel(B)$ was true at that moment\footnote{Note that at that time, $B$ is
referred to via $C$ in the algorithm.}.  Subsequently, after the update of
$\Rel$ on line 14, $q\nltr a \bigcup\Rel(B)$.  From Lemma~\ref{shrinking}, if
once $q\nltr a\bigcup\Rel(B)$, then from that moment on it can never happen
that $q\xtr a\bigcup\Rel(B')$ where $B'$ is a descendant of $B$. It means
that $q$ will never be added to any $\remove_a(B')$ where $B'$ is a descendant
of $B$.
To summarise: when a pivot $B$ with nonempty $\remove_a(B)$ is chosen to be
processed on line 3, $\remove_a(B)$ is always emptied and none of the states
from $\remove_a(B)$ can be added to any $\remove_a(B')$ where $B'$ is a
descendant of $B$ again. Thus whenever later some descendant $B'$ of $B$ with $\remove_a(B')$ is
being processed, $\remove_a(B)\cap\remove_a(B') =\emptyset$.
\end{proof}

We are now ready to prove Theorem~\ref{theorem:LRT:correctness}. 
\begin{proof}[Proof of Theorem~\ref{theorem:LRT:correctness}] Due to Lemma~\ref{disjunct}, for any block $B$ which can arise during the computation, $B$ can be chosen as
a pivot only finitely many times as for any $a\in\labset$, all the $\remove_a(B)$ sets encountered on line 3 are disjoint.
There are finitely many possible blocks and hence the algorithm terminates.

Lemma~\ref{issimulation} implies that the relation $\ind{\langle P_\fin,\Rel_\fin\rangle}$ induced by the final partition-relation pair $\langle P_{\fin},
Rel_{\fin}\rangle$ is a simulation included in $\init$. Lemma~\ref{max} implies that this simulation is the maximal one. Finally,
Lemma~\ref{partition} implies that the resulting partition 
$P_{\fin}$ equals $\prt{Q}{\wordsimequiv^I}$ and thus $\langle P_\fin,\Rel_\fin\rangle = \langle P_\Sim,\Rel_\Sim\rangle$. \end{proof}

\subsection{Implementation and Complexity of the Algorithm}

The complexity of the algorithm is equal to that of the original algorithm from
\cite{ranzato:new}, up to the new factor $\labset$ that is not present in
\cite{ranzato:new} (or, equivalently, $|\labset| = 1$ in \cite{ranzato:new}).
The complexity analysis is based on the similar reasoning as the one in
\cite{ranzato:new}. Time complexity strongly depends on use of certain data
structures and on some particular implementation techniques that we describe
below along the analysis within the proof of Theorem~\ref{LTSSimCompl}.

\begin{theorem} \label{LTSSimCompl} 
Algorithm~\ref{algorithm:LRT} runs in time
$\O(\card{\labset}  \card{P_{\Sim}}  \card{Q}+\card{P_{\Sim}}
\card{\tr})$ and space $\O(\card{\labset}  \card{P_{\Sim}}
\card{Q})$.  
\end{theorem}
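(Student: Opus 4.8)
The plan is to re-run, symbol by symbol, the amortised analysis of the algorithm RT of \cite{ranzato:new}, paying attention to exactly where the alphabet size enters: the goal is to confirm that $|\Sigma|$ multiplies the $|P_\Sim||Q|$ summand but not the dominating $|P_\Sim||\delta|$ summand. First I would fix the data structures. The partition $P$ is stored as an array sending each state to a handle of its current block, each block being a doubly linked list of its states, together with the refinement forest on blocks (parent/child links); $\Rel$ is a boolean matrix indexed by block handles; every set $\remove_a(B)$ is a list with a companion membership bit-array; and, as in HHK \cite{henzinger:computing}, for every state $q$, symbol $a$ and block $B$ we maintain a counter $\relc_a(q,B)=|\{\,q'\in\bigcup\Rel(B)\mid q\xtr a q'\,\}|$, so that the test ``$r\notin\pre b(\bigcup\Rel(C))$'' on line~16 becomes ``$\relc_b(r,C)=0$'' and is evaluated in $\O(1)$ after one decrement per incident transition; finally, for each $B$ and $a$ we keep the list of blocks $C$ with $C\xtr a B$ needed on line~11. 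Two structural facts will be used repeatedly. By Lemma~\ref{shrinking} together with Theorem~\ref{theorem:LRT:correctness} the partition only refines and terminates as $P_\Sim$, so at every moment $P$ has at most $|P_\Sim|$ blocks, only $\O(|P_\Sim|)$ blocks are ever created, and the chain of ancestors of any block has length $\O(|P_\Sim|)$. By Lemma~\ref{disjunct}, for each fixed symbol $a$ every state lies in at most one of the sets $\remove_a(B)$ that are ever selected as a pivot on line~3, whence $\sum_{\text{iterations}}|\remove|\le|\Sigma||Q|$; in particular the while loop runs $\O(|\Sigma||Q|)$ times and the loop on line~11 is entered $\O(|\Sigma||Q|)$ times.

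For the space bound I would just add up the sizes of these structures: $\O(|Q|)$ for the partition and forest, $\O(|P_\Sim|^2)$ for $\Rel$, $\O(|\Sigma||P_\Sim||Q|)$ for all the $\remove$ sets, and $\O(|\Sigma||P_\Sim||Q|)$ for the counter table $\relc$; since $|P_\Sim|\le|Q|$ this totals $\O(|\Sigma||P_\Sim||Q|)$ (the input transition relation being kept separately in $\O(|\delta|)$).

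For the time bound I would split the work into three blocks. (i)~Initialisation: zeroing and filling $\relc$ and reading off the initial $\remove$ sets costs $\O(|\Sigma||P_\Sim||Q|+|P_\Sim||\delta|)$. (ii)~The $\Split$ calls and the inheritance of lines~6--10: the $\Split$ calls together cost $\O(\sum_{\text{iter}}|\remove|)=\O(|\Sigma||Q|)$; the inheritance must be implemented so as to touch only the $\O(|P_\Sim|)$ genuinely new blocks, each costing $\O(|\Sigma||Q|)$ to copy its $\remove$ lists and its row of $\relc$ and $\O(|P_\Sim|)$ to install its row of $\Rel$, while the relabelling of $\Rel$ and of the $C\xtr a B$ lists caused by a split is charged to the $\O(|P_\Sim|)$ split events and to the $\O(|P_\Sim||\delta|)$ cost of re-scanning the in- and out-transitions incident to a block when it is cut in two; altogether $\O(|\Sigma||P_\Sim||Q|+|P_\Sim||\delta|)$. (iii)~The pruning of $\Rel$ and its propagation to the $\remove$ sets on lines~14--17: removing a pair $(C,D)$ from $\Rel$ triggers exactly one pass over all transitions entering $D$, which over all labels is $|\delta_{\to D}|$ work (one constant-time counter decrement and zero-test, and possibly one insertion into $\remove_b(C)$, per such transition).

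The main obstacle is to bound block (iii), i.e.\ $\sum_{\text{removals }(C,D)}|\delta_{\to D}|$, by $\O(|P_\Sim||\delta|)$ without picking up a spurious $|\Sigma|$. The key observation is that for a fixed block $C^\ast$ of $P_\Sim$ and a fixed state $s$ there is at most one removal event $(C,D)$ in the whole run with $C$ an ancestor of $C^\ast$ and $s\in D$: by monotonicity of the induced relation (Lemma~\ref{shrinking}), once an ancestor of $C^\ast$ stops being $\Rel$-related to the current block of $s$, no descendant of it will ever again be $\Rel$-related to any later (necessarily smaller) block of $s$, so no subsequent removal can pair an ancestor of $C^\ast$ with a block containing $s$. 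Consequently each state $s$ contributes the number of transitions entering it to at most $|P_\Sim|$ removal events, giving $\sum_{\text{removals}}|\delta_{\to D}|=\O(|P_\Sim||\delta|)$; the reason no $|\Sigma|$ appears is precisely that this loop walks over genuine transitions, so the summation over labels is already absorbed into $|\delta|$ instead of surfacing as a separate factor. Selecting a pivot on line~3 is handled by a worklist of non-empty $\remove$ sets, maintained within the work already accounted for. Summing (i)--(iii) yields the claimed $\O(|\Sigma||P_\Sim||Q|+|P_\Sim||\delta|)$ time, together with the space bound established above.
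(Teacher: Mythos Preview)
Your approach and data structures mirror the paper's, and your ``once per $(C^\ast,s)$'' argument for lines 15--17 is exactly the paper's Observation~2. But two points are off. First, you overread Lemma~\ref{disjunct}: it gives disjointness of the selected $\remove_a$ sets only along an ancestor chain, not across unrelated pivot blocks, so a state can sit in $\remove_a(B_1)$ and $\remove_a(B_2)$ for incomparable $B_1,B_2$ that are both chosen as pivots. The correct consequence is $\sum_{\text{iter}}|\remove|=\O(|\Sigma||P_\Sim||Q|)$, not $\O(|\Sigma||Q|)$; your claims about the number of while-iterations and the cost of $\Split$ are therefore unjustified as stated, though the final bound survives.

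Second, and more substantially, your decomposition (i)--(iii) never accounts for the double loop on lines~11--13: enumerating all pairs $(C,D)$ with $C\xtr a B_\prev$ and $D\subseteq\remove$ and testing whether $(C,D)\in\Rel$. Block~(iii) charges only the work done \emph{after} a pair is removed from $\Rel$, but pairs with $(C,D)\notin\Rel$ are still enumerated at $\O(1)$ each, and your removal argument does not bound their number. The paper handles this by a separate argument: implementing line~11 as a scan over $\ine_a(B_\prev)$ (not via a precomputed block list), it bounds the total work of lines~11--14 by $\O\bigl(\sum_{a}\sum_{B\in P_\Sim}|RL_a(B)|\cdot|\ine_a(B)|\bigr)$, where $RL_a(B)$ collects all the $D$-blocks ever seen in the $\remove$ set when an ancestor of $B$ is the $a$-pivot; these $D$'s are pairwise disjoint (Lemma~\ref{disjunct} now applied along the ancestor chain of $B$), each therefore contains a distinct block of $P_\Sim$, so $|RL_a(B)|\le|P_\Sim|$ and the sum collapses to $\O(|P_\Sim||\delta|)$. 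This piece of the analysis is distinct from, and not implied by, the argument you give for removals.
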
 

\begin{proof} \ \vspace{-3mm}
\paragraph{Basic Data Structures.} 
We use resizable arrays (and matrices) which double (or quadruple) their size
whenever needed. The insertion operation over these structures takes amortised
constant (linear) time.

The input LTS is represented as a list of records about its states---we call
this representation as the \emph{state-list} representation of the LTS. The
record about each state $q \in Q$ contains a list of nonempty $\pre a(q)$
sets, each of them encoded as a list of its members (we use a list rather than an array having an entry for each $a
\in \labset$ in order to avoid a need to iterate over alphabet symbols for
which there is no transition).
The partition $P$ is encoded as a
doubly-linked list (DLL) of blocks. Each block is represented as a DLL of
(pointers to) states of the block.  The relation $\Rel$ is encoded as a Boolean matrix $P \times P$. 

Each block $B$ contains for  each $a\in\labset$ a list of (pointers on) states
from $\remove_a(B)$. Each time when any set $\remove_a(B)$ becomes nonempty,
block $B$ is moved to the beginning of the list of blocks. Choosing the pivot
block on line 3 then means just scanning the head of the list of blocks.

For each $a\in\labset$, a state $q\in Q$ and a block $B\in P$, we maintain a counter
$\relc_a(q,B)$. Its value within a run of the algorithm records cardinality of the set $\{r\in Q\mid {r\in\bigcup\Rel_a(B)}\wedge {q\xtr a r}\}$.
This counters allow us to test whether $r$ is in $\pre b(\bigcup \Rel(C))$ on
line 16 in constant time---we just ask whether $\relc_b(r,C) = 0$. The
counters are stored as an $P\times Q$ integer matrix per each $a\in\Sigma$.
The way of updating the counters during a computation will be described later.

We attach to each $q\in Q$ an array indexed by
symbols of $\labset$. A cell of the array indexed by $a\in\Sigma$ contains a reference the $\pre a(q)$ list. Using the arrays, we can access the $\pre a(q)$ list for given $a$ and $q$ in constant time (it would be
$\O(|\labset|)$ time without the arrays). 

\paragraph{Space Complexity.}
\noindent The arrays of pointers on the $\pre a$ lists take $\O(|\labset||Q|)$ space, the matrix encoding of $\Rel$ takes
$\O(\card{P_{\Sim}}^2)$ space, and the $\remove$ sets as well as the counters take $\O(|\labset||P_{\Sim}| |Q|)$ space. Thus
the overall asymptotic space complexity is $\O(|\labset||P_{\Sim}| |Q|)$.  

\paragraph{Time Complexity.}
We first introduce some auxiliary notation.  
For $B\subseteq Q$ and $a\in\labset$, we denote by
$\ine_a(B)$ the set $\{{(r,a,q)}\in{\tr}\mid q\in B\}$, and by $\ine(B)$ the
set $\bigcup_{a\in\labset}\ine_a(B)$. Note that $\card{\pre
a(B)}\leq\card{\ine_a(B)}$. We also denote by $\ltrset a$ the set of all
$a$-edges of $\tr$.  We use $Anc(B)$ to denote the set of all ancestors of $B$, including also $B$ itself.

We first analyse the initialisation phase of the algorithm preceding the main while
loop.  The initialisation of the arrays of pointers to the $\pre a$ lists takes
$\O(|\labset||Q|)$ time. The $\relc$ counters are initialised by (1) setting
all $\relc$ to $0$, and then (2) for all $B\in P$, for all $q\in B$, for all
$r\in\pre a(q)$, and for all $C$ such that $(C,B)\in\Rel$, incrementing
$\relc_a(r,C)$. This takes  $\O(|P_{\init}||\trset|)$ time. The $\remove$ sets
are initialised by iterating through all $a\in\labset,q\in Q,B\in P$ and checking whether $\relc_a(q,B)=0$. If so, then we add (append) $q$ to $\remove_a(B)$. This takes
$\O(|\labset| |P_{\init}||Q|)$ time. Overall, the initialisation can be
done in time $\O(|P_{\init}||\trset|+|\labset| |P_{\init}||Q|)$. 

The time complexity analysis of the while loop builds on Lemma~\ref{disjunct} and Lemma~\ref{shrinking} proved within the proof of
correctness of Algorithm~\ref{algorithm:LRT}. The two lemmas allow us to make the following observations:
\begin{description}
\item[Observation 1.] For any $a\in\labset$ and $B\in P_{\Sim}$, the sum of the
cardinalities of the $\remove_a(B')$ sets for all $B'\in Anc(B)$ that are chosen as the pivot is below $\card{Q}$. 
\item[Observation 2.]
If a pair $(C,D)$ once
appears on line 15, then no pair $(C',D')$ such that $C\in Anc(C')$ and $D\in
Anc(D')$ can appear on line 15 again. 
\end{description}

The $\Split(P,\remove)$ operation can be implemented in the following way: Iterate
through all $q\in\remove$. If $q\in B\in P$, add $q$ into a block $B_{child}$
(if $B_{child}$ does not exist yet, create it and add it into $P$) and remove
$q$ from $B$. If $B$ becomes empty, discard~it. This can be done in time
$\O(|\remove|)$. 
From Observation 1, we have that for a fixed block $B\in
P_{\Sim}$ and $a\in\labset$, the sum of cardinalities of all $\remove_a(B')$
sets with $B'\in Anc(B)$ according to which $\Split$ is being
done is below $|Q|$. Therefore, summed over all symbols of $\labset$ and all blocks
of $P_{\Sim}$, the overall time complexity of all $\Split$ operations is
$\O(|\labset||P_{\Sim}||Q|)$.  

The time complexity analysis of lines 7--10 is based on the fact that it can
happen at most $\card{P_{\init}}-\card{P_{\Sim}}$ times that any block $B$ is
split. Moreover, the presented code can be optimised by not having the lines
7--10 as a separate loop (this was chosen just for clarity of the
presentation), but the inheritance of $\Rel$, $\remove$, and the counters can
be done within the $\Split$ function, and only for those blocks that were
really split (not for all the blocks every time).
Whenever a new blocks is generated by $\Split$, we have to do the following:
(1)~For each $a\in\labset$, copy the $\remove_a$ set of the parent block and
attach the copy to the child block. As for all $a\in\labset,B\in P$,
$\remove_a(B)\subseteq Q$, and a new block will be generated at most
$\card{P_{\init}}-\card{P_{\Sim}}$ times, the overall time of this copying is
in $\O(|\labset||P_{\Sim}||Q|)$. (2)~Add a row and a column to the $\Rel$
matrix and copy the entries from those of the parent. This operation takes
$\O(|P_{\Sim}|)$ time for one added block as the size of the rows and columns
of the $\Rel$-matrix is bounded by $|P_{\Sim}|$. Thus. for all newly generated
blocks, we achieve the overall time complexity of $\O(\card{P_{\Sim}}^2)$.
(3)~Add and copy the $\relc$ counters. For one newly generated block, this
operation takes an $\O(|\labset||Q|)$ time and thus for all generated blocks,
it gives time $\O(|\labset||P_{\Sim}||Q|)$.

Lines 13 and 14 are $\O(1)$-time ($\Rel$ is a Boolean matrix). Before we enter
the for loop on line 11 with $B$ being the pivot, we compute a list
$RemoveList_a(B) = \{D\in P\mid D\subseteq \remove\}$. This is an
$\O(\card{\remove})$ operation and by almost the same argument as in the case
of the overall time complexity of $\Split$, we get also exactly the same
overall time complexity for computing all the $RemoveList_a(B)$ lists. On line
$11$, for each $q\in B$, we find the $\pre a(q)$ list (in $\O(1)$ time using
the array of pointers to the $\pre a(q)$ lists), and we iterate through all
elements of $\pre a(q)$ and choose every $C,C\xtr a \{q\}$.  This takes
$\O(|\ine_a(B)|)$ time. For any $B\in P_{\Sim}$, let $RL_a(B)$ be the set of
blocks $\bigcup_{B'\in Anc(B)}RemoveList_a(B')$. Then the overall time
complexity of lines 11--14 is at most $\O(\sum_{a\in\labset}\sum_{B\in
P_{\Sim}}|RL_a(B)||\ine_a(B)|)$. From the initial observations, we can see that
$\card{RL_a(B)}\leq\card{P_{\Sim}}$, and thus we have the overall time
complexity of
lines 11--14 in $\O(\sum_{a\in\labset}\sum_{B\in P_{\Sim}}|P_{\Sim}| |\ine_a(B)|)
= \O(\sum_{a\in\labset}|P_{\Sim}||\ltrset a|) = \O(|P_{\Sim}||\trset|)$.

Lines 15--17 are
implemented as follows. For a single pair $(C,D)$ appearing on line 14, we iterate through all $q\in D$ and through all
nonempty lists $\pre a(q)$, and for each $r\in \pre a(q)$, we decrement
$\relc_a(r,C)$. If $\relc_a(r,C)=0$ after the decrement, we append $r$ to the
$\remove_a(C)$ list. It follows from the initial observations that if any pair
of blocks $(C,D)$ once appears on line 14, then there will never appear any
pair of their descendants on line 14. Thus, if we fix a block $C\in P_{\Sim}$
and a state $q$, then it can happen at most once that a pair
$(C',D)$ such that $q\in D$ and $C'\in Anc(C)$ is being removed from $\Rel$.
on line 14. Thus, the contribution of the pair $C,q$
to the time complexity of lines 15--17 is $\O(\sum_{a\in\Sigma}|\pre a(q)|)$.
Therefore, the contribution of the $C,r$ pairs for all $r\in Q$ is
$\O(|\trset|)$, and hence the overall time complexity of lines 15--17 is
$\O(|P_{\Sim}||\trset|)$.

From the above analysis, it follows that the overall time complexity of the algorithm is $\O(|P_{\Sim}||\trset|+|\labset|
|P_{\Sim}||Q|)$. 
\end{proof}

\section{Conclusions and Future Work} 
We have presented a modification of the currently fastest algorithm RT
\cite{ranzato:new} for computing simulations over Kripke structures, which was
at the time of its publication the fastest algorithm for computing simulations
over LTS (the currently fastest algorithm is its optimised version from
\cite{holik:optimizing}). The algorithm has the time complexity
$\O(\card{\labset} \card{P_{\Sim}} \card{Q}+\card{P_{\Sim}} \card{\tr})$ and
the space complexity $\O(\card{\labset} \card{P_{\Sim}} \card{Q})$, which is
slightly worse than $\O(\card{P_{\Sim}} \card{\tr})$ time and
$\O(\card{P_{\Sim}} \card{Q})$ space of RT.  However, this complexity increase
can be to a large degree lowered as we show in \cite{holik:optimizing}. We have
also presented a proof of correctness that is more straightforward than the one
presented in \cite{ranzato:new}.

We plan to continue the research by the authors of \cite{ranzato:new} and
\cite{crafa:saving}. We have noticed that the algorithm from
\cite{crafa:saving} that refines RT goes in some sense against the spirit of
the original algorithm from \cite{henzinger:computing}, which is the main
reason of its worse time complexity. We believe that this problem can be circumvented and that it is possible to design an algorithm that matches both the time complexity of the fastest simulation algorithm \cite{ranzato:new} and space complexity of the most space efficient algorithm \cite{crafa:saving}.

\include{ta_reduction}
\chapter{Language Inclusion and Universality of Finite (Tree) Automata}
\label{chapter:fa_ta_inclusion}

The language inclusion problem for regular languages is important in many
application domains, e.g., formal verification. Many verification problems can
be formulated as a~language inclusion problem. For example, one may describe
the actual behaviours of an implementation in an automaton $\mathcal{A}$ and
all of the behaviours permitted by the specification in another automaton
$\mathcal{B}$.  Then, the problem of whether the implementation meets the
specification is equivalent to the problem $\lang{\mathcal{A}}\subseteq
\lang{\mathcal{B}}$.  Other applications include checking whether a fixpoint of
a symbolic automata-based incremental reachability computation was reached or
checking implication in automata-based decision procedures.  The universality
problem is a simpler variant of the language inclusion problem.  Even though
it is less useful in practice, it is important from the theoretical point of
view. A good solution for the universality problem often leads to a good
solution for language inclusion problem while the simpler setting of the former
problem makes the principles of the method easier to master. 

Methods for proving language inclusion can be categorised into two types: those
based on \emph{simulation} (e.g., \cite{dill:checking}) and those based on the
\emph{subset construction} (e.g.,
\cite{brzozowski:canonical,hopcroft:nlogn,meyer:equivalence,moller:dkbrics}).
Simulation-based approaches first compute a simulation relation on the states
of two automata $\mathcal{A}$ and $\mathcal{B}$ and then check whether all initial
states of $\mathcal{A}$ can be simulated by some initial state of
$\mathcal{B}$.  Since simulation can be computed in polynomial time,
simulation-based methods are usually very efficient. Their main drawback is
that they are incomplete since simulation implies language inclusion, but
not vice-versa.

On the other hand, methods based on the subset construction are complete but
inefficient because in many cases they will cause an exponential blow up in the
number of states. Recently, De Wulf et al. in~\cite{wulf:antichains} proposed the
\emph{antichain-based} approach for nondeterministic finite word automata.
To the best of our knowledge, it was the most efficient one among all of the
methods based on the subset construction.  Although the antichain-based method
significantly outperforms the classical subset construction, in many cases, it
(unavoidably) still sometimes suffers from the exponential blow up problem.

This chapter presents result that were published in two works, \cite{bouajjani:antichain} and \cite{abdulla:when}.
In \cite{bouajjani:antichain}, we generalise the
results on FA from \cite{wulf:antichains} also for tree automata and we show how a combination of the antichain-based
tree automata inclusion checking with the reduction techniques from
Chapter~\ref{chapter:ta_reduction} allows to greatly improve efficiency of
abstract regular tree model checking method.  In \cite{abdulla:when}, we present a new approach for both word
and tree automata universality and inclusion checking that nicely combines the
simulation-based and the antichain-based approaches. A computed simulation
relation is used for pruning out unnecessary search paths of the
antichain-based method and also to efficiently encode the stored state-space.
To distinguish the approaches from \cite{wulf:antichains,bouajjani:antichain}
from the one of \cite{abdulla:when}, we will refer to the former ones as to the
\emph{pure antichain approach} and to the latter ones as to  the \emph{simulation-enhanced antichain approach}.  
In this chapter, we describe mostly the results from  \cite{abdulla:when}, this is, the simulation enhanced antichain algorithms for FA and TA since the pure antichain TA algorithms that we present in \cite{bouajjani:antichain} can be seen as they simpler instances. As for experimental results, we present both the results from \cite{bouajjani:antichain} and from \cite{abdulla:when}. 

To simplify the presentation, we first consider the
problem of checking universality for a word automaton
 $\mathcal{A}$.
In a similar manner to the classical
subset construction, we start from the set of initial states
and search for sets of states
(here referred to as \emph{macro-states})
which are not accepting (i.e., we search for a counterexample of universality).
The key idea is to define an ``easy-to-check'' ordering $\preceq$  on the
states of $\mathcal{A}$ which implies language inclusion
(i.e.,  $p\preceq q$ implies that the language of the state $p$ is included in
the language of the state $q$).
From $\preceq$, we derive an ordering on macro-states
which we use in two ways
to optimise the subset construction:
(1) searching from a macro-state needs not continue in case
a smaller macro-state has already been analysed; and
(2) a given macro-state is represented by (the subset of) its maximal elements.
In this work, we take the ordering $\preceq$ to be the 
simulation preorder
on the automaton $\mathcal{A}$.
In fact, the antichain algorithms
of \cite{wulf:antichains} coincide  with the special case where
the ordering $\preceq$ is the identity
relation.
Subsequently, we describe how to generalise the above approach to the case of
checking language inclusion between two automata  $\mathcal{A}$ and
$\mathcal{B}$, by extending the ordering to pairs consisting of a state of
$\mathcal{A}$ and a macro-state of $\mathcal{B}$.

We then generalise our algorithms to the case of
tree automata.
We first formally define a notion of a language accepted from
tuples of states of the tree automaton as a set of contexts.
We identify here a new application of the upward simulation relation from
Chapter~\ref{chapter:ta_reduction}. We show that it implies (context)
language inclusion, and we describe how we can use it to optimise existing
algorithms for checking the universality and language inclusion properties.

We have implemented our algorithms and carried out an extensive
experimentation. Particularly, in \cite{bouajjani:antichain}, we compare
performance of the classical tree automata subset construction based algorithms
with the pure antichain-based algorithms (so far not using simulation
optimisations) developed in the spirit of \cite{wulf:antichains}.  We have
tested the algorithms on tree automata generated with a scale of different
settings of a random automata generator designed according to framework by
Tabakov and Vardi \cite{tabakov:experimental}.  We have also considered
tree-automata derived from intermediate steps of abstract regular tree model
checking.  The obtained results are consistent with the ones from
\cite{wulf:antichains} on FA and lead to a definite conclusion that the
antichain tree automata algorithms vastly outperform the classical ones.  Our
inclusion checking algorithms together with the reduction techniques from
Chapter~\ref{chapter:ta_reduction} also greatly improve the overall performance
of the abstract regular tree model checking method.

In \cite{abdulla:when}, we have carried out experiments comparing the pure antichain-based algorithms for both FA and TA with their simulation-improved
variants. In the case of FA, we obtained our experimental data from several
different sources. 
The experiments show that simulation enhanced antichain approach significantly outperforms the
pure antichain-based approach in almost all of the considered cases.

We note that simultaneously with \cite{abdulla:when},
Doyen and Raskin published their recent work \cite{doyen:antichain} where they
present basically the same main idea as is the one of \cite{abdulla:when} (this
is, using simulation to improve the antichain algorithms). However, even though
the two works have significant overlaps, both of them contain original unique
contributions. We will briefly compare the two
works in the following two paragraphs. 

Doyen and Raskin in \cite{doyen:antichain} study more systematically theoretical
aspects of simulation optimisations of antichain algorithms. They present a
framework where they consider also the backward algorithms for FA that were
presented in \cite{wulf:antichains} and show how they can be optimised with backward simulation. These backward algorithms are dual to the forward ones
and they utilise backward simulation instead of forward simulation. They also consider a
conceptually different approach where one utilises forward simulation within
backward algorithms and backward simulation within forward algorithms.  
Apart from that, Doyen and Raskin also show other applications of their
framework to problems such as emptiness of alternating automata. 

On the other hand, our paper \cite{abdulla:when} comes with the following.
Contrary to \cite{doyen:antichain}, we provide extensive experimental results
showing practical applicability of the algorithms. We also design algorithms
that are carefully optimised not to explore unnecessary search paths which also
notably improves their efficiency.  Then, except using simulations to prune
unpromising macro-states, we use them also to reduce the internal
representations of reached macro-states.  We study in detail both universality
and language inclusion problem (while Raskin and Doyen concentrate mostly only
on universality) where not all the optimisations that we propose are covered by
the framework from \cite{doyen:antichain} (in~particular, in~the~case of
inclusion checking, we utilise also simulation between states of the two
automata). Finally, we also present an extension of the technique to tree automata.

\paragraph{Outline.}
The remainder of the chapter is organised as follows.
We begin Section~\ref{section:nfa-ui} by applying our idea to solve the
universality problem for FA. The problem is simpler than the language
inclusion problem and thus we believe that presenting our universality checking
algorithm first makes it easier for the reader to grasp the idea. 
We continue the section by discussing our language inclusion checking algorithm
for FA. In Section~\ref{sec:TA-main}, we present the algorithms for checking
universality and language inclusion for tree automata that are extensions of the FA algorithms from Section~\ref{section:nfa-ui}. 
Section~\ref{SecUniExprm} describes experimental results
on comparing pure antichain-based algorithms for TA with the classical subset construction-based algorithms, and also experiments on testing impact of applying our algorithms in abstract regular tree model checking. In Section~\ref{sec:experiments}, we present experiments on comparing pure antichain-based algorithms for both FA and TA with their versions improved with simulations. 
Finally, in Section~\ref{sec:conclusion}, we conclude the chapter
and discuss further research directions.

\section{Universality and Language Inclusion of FA} \label{section:nfa-ui}
In this section, we describe our simulation improvements of the  antichain
algorithms for testing language inclusion and universality of FA from
\cite{wulf:antichains}. Basically, we will show how to utilise simulation on
states of an automaton (that is computed in advance) within a language
inclusion/universality checking algorithm.

Let $\A = (\Sigma, Q, \delta, I, F,)$ be a finite automaton.
For convenience, we call a set of states in $\mathcal{A}$ a \emph{macro-state}, i.e., a macro-state is a subset of $Q$. A macro-state is \emph{accepting} if it contains at least one accepting state, otherwise it is \emph{rejecting}. For a macro-state $P$, define $\lang{\mathcal{A}}(P) := \bigcup_{p\in P}\lang{\mathcal{A}}(p)$. We say that a macro-state $P$ is universal if $\lang{\mathcal{A}}(P)=\Sigma^*$. For two macro-states $P$ and $R$, we write $P \preceq^{\forall\exists} R$ as a shorthand for $\forall {p \in P}. \exists {r \in R}: {p \preceq r}$. We define the post-image of a macro-state $\mathit{Post}(P):=\{P'\mid \exists a\in\Sigma:P'=\{p'\mid \exists p\in P : (p,a,p')\in \delta\}\}$. We use $\rel{\mathcal{A}}$ to denote the set of relations over the states of $\mathcal{A}$ that imply language inclusion, i.e., if ${\preceq}\in \rel{\mathcal{A}}$, then we have ${p \preceq r} \implies {\lang{\mathcal{A}}(p) \subseteq \lang{\mathcal{A}}(r)}$.

Let $\mathcal{A} = (\Sigma,
Q_\mathcal{A}, \delta_\mathcal{A}, I_\mathcal{A}, F_\mathcal{A})$
and $\mathcal{B} = (\Sigma, Q_\mathcal{B}, \delta_\mathcal{B},
I_\mathcal{B}, F_\mathcal{B})$ be two FA. Define their union
automaton $\mathcal{A}\cup\mathcal{B}:=(\Sigma, Q_\mathcal{A}\cup
Q_\mathcal{B}, \delta_\mathcal{A} \cup \delta_\mathcal{B},
I_\mathcal{A}\cup I_\mathcal{B}, F_\mathcal{A}\cup F_\mathcal{B})$.

\subsection{Universality of FA}\label{sec:universality}

The \emph{universality problem} for an FA $\mathcal{A} = (\Sigma, Q, \delta,
I, F)$ is to decide whether $\lang{\mathcal{A}}=\Sigma^*$. The problem
is PSPACE-complete. The classical algorithm for the problem first determinises
$\mathcal{A}$ with the subset construction and then checks if every reachable
macro-state is accepting. The algorithm is inefficient since in many cases the
determinisation will cause a very fast growth in the number of states. Note
that for universality checking, we can stop the subset construction immediately
and conclude that $\mathcal{A}$ is not universal whenever a rejecting
macro-state is encountered. An example of a run of this algorithm is given in
Fig.~\ref{figure:subset}. The automaton $\mathcal{A}$ used in
Fig.~\ref{figure:subset} is universal because all reachable macro-states are
accepting.

\begin{figure}[t]
\begin{minipage}[b]{0.35\linewidth}
\begin{minipage}[b]{1\linewidth}
    \centering
    \scalebox{1}{
    \begin{tikzpicture}[>=latex',join=bevel]
    \tikzstyle{every state}=     [draw=blue!50,very thick,fill=blue!20, minimum size=5mm, initial text=, initial distance=3mm]%
    \node[state, initial, accepting] (s1) at (0,0) {\scriptsize$s_1$};
    \node[state, initial, accepting](s2) at (1,-1) {\scriptsize$s_2$};
    \node[state, accepting](s3) at (1.5,0) {\scriptsize$s_3$};
    \node[state](s4) at (2.5,-1) {\scriptsize$s_4$};
    \draw [->] (s1) to[bend right] node[auto] {\scriptsize $a$} (s2);
    \draw [->] (s1) to[loop above] node[auto] {\scriptsize $b$} (s1);
    \draw [->] (s1) to node[left = 4mm, above = 1mm] {\scriptsize $b$} (s4);
    \draw [->] (s2) to[loop below] node[auto] {\scriptsize $b$} (s2);
    \draw [->] (s2) to[bend right] node[below = 0.1mm] {\scriptsize a} (s3);
    \draw [->] (s3) to[loop above] node[auto] {\scriptsize $b$} (s3);
    \draw [->] (s3) to[bend right] node[auto] {\scriptsize $a$} (s1);
    \draw [->] (s4) to[bend right] node[auto] {\scriptsize $b$} (s3);
    \draw [->] (s4) to[bend left] node[auto] {\scriptsize $b$} (s2);
    \end{tikzpicture}}\\
    (a) Source FA $\mathcal{A}$\vspace*{2mm}
\end{minipage}
\begin{minipage}[b]{1\linewidth}
    \centering
    \scalebox{1.2}{
    \begin{tikzpicture}[>=latex',join=bevel,]
    \tikzstyle{mstate}=     [draw=green!75,thick,fill=green!30, shape=rectangle, initial text=]%
    \tikzstyle{fstate}=     [draw=green!50,thick, dashed,fill=green!15, shape=rectangle, initial text=]%

    \node[mstate,initial] (s1) at (0,0) {\scriptsize$s_1$};
    \node[fstate] (s2) at (-1,-1) {\scriptsize$s_2$};
    \node[fstate] (s3) at (1,-1) {\scriptsize$s_1$};

    \draw [->, thick] (s1) to node[left=2mm] {\scriptsize $a$} (s2);
    \draw [->, thick] (s1) to node[right=2mm] {\scriptsize $b$} (s3);
    \end{tikzpicture}
    \label{figure:subsetresult_op2}
    }\\
    (c)~Optimisation~1 and~2
\end{minipage}
\end{minipage}
\begin{minipage}[b]{0.7\linewidth}
\begin{minipage}[b]{0.9\linewidth}
    \centering
    \scalebox{0.8}{
    \begin{tikzpicture}[>=latex',join=bevel,]
    \tikzstyle{mstate}=     [draw=green!75,thick,fill=green!30, shape=rectangle, initial text=]%
    \tikzstyle{fstate}=     [draw=green!50,thick, dashed,fill=green!15, shape=rectangle, initial text=]%

    \draw[fill=blue!20, dashed, draw=blue!50] (-2,1.4) --(2,1.4).. controls (3,1.4) and (4,-4.5) ..(3.5,-4.5)--(-3.5,-4.5).. controls (-4,-4.5) and (-3,1.4) ..(-2,1.4);
    \node at (0,1.2) {\scriptsize Classical};

    \draw[fill=red!20, dashed, draw=red!50] (0,1) .. controls (-4,0.5) and (-2.5,-1) .. (-3,-2) .. controls (-3.1,-2.2) and (-3.1,-2.8) .. (-3,-3) .. controls (-3,-3.5) and (0,-3.5) .. (0,-3) .. controls (0, -1.5) and (0, -1.5) .. (2.2,-1.5).. controls (3,-0.5) and (3,1) ..(0,1);
    \node at (0,0.7) {\scriptsize Antichain};

    \draw[fill=yellow!20, dashed, draw=yellow!50] (-0.6,0.55) -- (0.6,0.55) .. controls (2.5, 0.55) and (2.5, -1.3) .. (2.2,-1.3) .. controls (2.2, -1.5) and (-2.2, -1.5) ..(-2.2,-1.3).. controls (-2.5, -1.3) and (-2.5, 0.55) ..(-0.6,0.55);
    \node at (0,0.35) {\scriptsize Optimisation 1};

    \node[mstate,initial] (s1) at (0,0) {\scriptsize$s_1, s_2$};
    \node[mstate] (s2) at (-1.5,-1) {\scriptsize$s_2, s_3$};
    \node[mstate] (s3) at (1.5,-1) {\scriptsize$s_1,s_2,s_4$};
    \node[mstate] (s4) at (-2.3,-2) {\scriptsize$s_1,s_3$};
    \node[fstate] (s5) at (-0.7,-2) {\scriptsize$s_2,s_3$};
    \node[fstate] (s6) at (0.7,-2) {\scriptsize$s_2,s_3$};
    \node[mstate] (s7) at (2.3,-2) {\scriptsize$s_1,s_2,s_3,s_4$};
    \node[fstate] (s8) at (-2.3,-3) {\scriptsize$s_1,s_2$};
    \node[fstate] (s9) at (-0.7,-3) {\scriptsize$s_1,s_3$};
    \node[mstate] (s10) at (0.7,-3) {\scriptsize$s_1,s_2,s_3$};
    \node[fstate] (s11) at (2.5,-3) {\scriptsize$s_1,s_2,s_3,s_4$};
    \node[fstate] (s12) at (0.7,-4) {\scriptsize$s_1,s_2,s_3$};
    \node[fstate] (s13) at (2.5,-4) {\scriptsize$s_1,s_2,s_3,s_4$};

    \draw [->, thick] (s1) to node[right=2mm] {\scriptsize $a$} (s2);
    \draw [->, thick] (s1) to node[left=2mm] {\scriptsize $b$} (s3);
    \draw [->, thick] (s2) to node[right] {\scriptsize $a$} (s4);
    \draw [->, thick] (s2) to node[left] {\scriptsize $b$} (s5);
    \draw [->, thick] (s3) to node[right] {\scriptsize $a$} (s6);
    \draw [->, thick] (s3) to node[left] {\scriptsize $b$} (s7);
    \draw [->, thick] (s4) to node[right] {\scriptsize $a$} (s8);
    \draw [->, thick] (s4) to node[left] {\scriptsize $b$} (s9);
    \draw [->, thick] (s7) to node[right=1mm] {\scriptsize $a$} (s10);
    \draw [->, thick] (s7) to node[left] {\scriptsize $b$} (s11);
    \draw [->, thick] (s10) to node[right=1mm] {\scriptsize $a$} (s12);
    \draw [->, thick] (s10) to node[left] {\scriptsize $b$} (s13);
    \end{tikzpicture}
    \label{figure:subsetresult}
    }

\end{minipage}
\begin{minipage}[b]{0.9\linewidth}
(b) A run of the algorithms. The areas labelled ``Optimisation~1'', ``Antichain'', ``Classical'' are the macro-states generated by our simulation enhanced antichain approach with the maximal simulation and Optimisation 1, the antichain-based approach, and the classical approach, respectively.
\end{minipage}
\end{minipage}
  \caption{Universality Checking Algorithms}
  \label{figure:subset}
\end{figure}
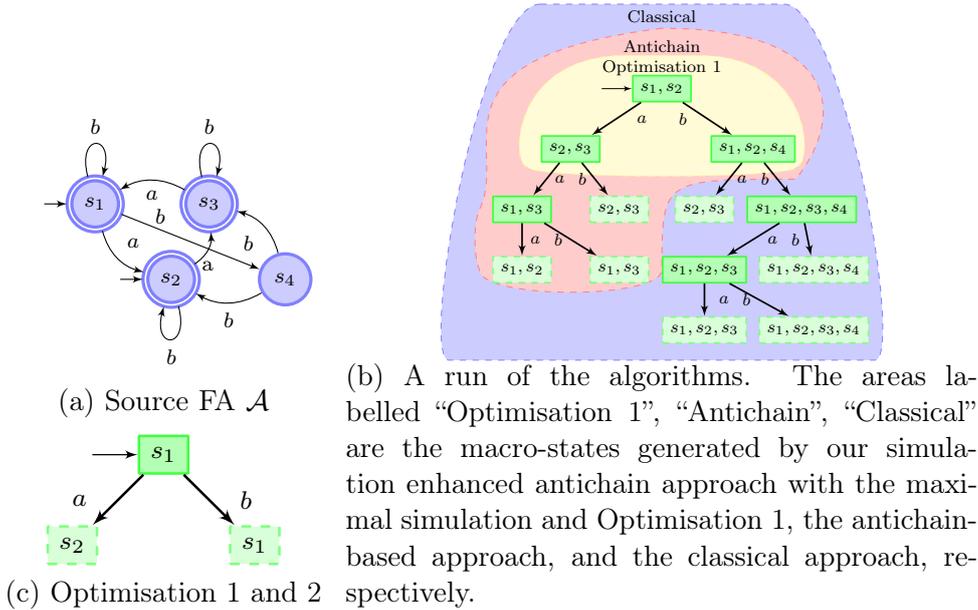

In this section, we propose a more efficient approach to universality checking.
In a~similar manner to the classical algorithm, we run the subset construction
procedure and check if any rejecting macro-state is reachable. However, our
algorithm augments the subset construction with two optimisations, henceforth
referred to as \emph{Optimisation~1} and \emph{Optimisation~2}, respectively.

Optimisation~1 is based on the fact that if the algorithm encounters a
macro-state $R$ whose language is a superset of the language of a visited
macro-state $P$, then there is no need to continue the search from $R$. The
intuition behind this is that if a word is not accepted from $R$, then it is
also not accepted from $P$. For instance, in Fig.~\ref{figure:subset}(b), the
search needs not continue from the macro-state $\{s_2,s_3\}$ since its language
is a superset of the language of the initial macro-state $\{s_1, s_2\}$.
However, in general it is difficult to check if $\lang{\mathcal{A}}(P) \subseteq
\lang{\mathcal{A}}(R)$ before the resulting deterministic FA is completely built. Therefore,
we suggest to use an easy-to-compute alternative based on the following lemma.

\begin{lemma}\label{lemma:language1}
Let $P$, $R$ be two macro-states, $\mathcal{A}$ be an FA, and $\preceq$ be a relation in $\rel{\mathcal{A}}$. Then, $P \preceq^{\forall\exists} R$ implies $\lang{\mathcal{A}}(P) \subseteq \lang{\mathcal{A}}(R)$.
\end{lemma}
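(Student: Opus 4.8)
The plan is to prove the statement directly from the definitions by a word-by-word argument. Recall that $P \preceq^{\forall\exists} R$ means that for every $p \in P$ there is some $r \in R$ with $p \preceq r$, and that $\preceq \in \rel{\mathcal{A}}$ means $p \preceq r \implies \lang{\mathcal{A}}(p) \subseteq \lang{\mathcal{A}}(r)$. Since $\lang{\mathcal{A}}(P) = \bigcup_{p \in P} \lang{\mathcal{A}}(p)$ by definition, it suffices to show that each $\lang{\mathcal{A}}(p)$ with $p \in P$ is contained in $\lang{\mathcal{A}}(R) = \bigcup_{r \in R} \lang{\mathcal{A}}(r)$.

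First I would fix an arbitrary $p \in P$. By the assumption $P \preceq^{\forall\exists} R$, there exists $r \in R$ with $p \preceq r$. Since $\preceq \in \rel{\mathcal{A}}$, this gives $\lang{\mathcal{A}}(p) \subseteq \lang{\mathcal{A}}(r)$. As $r \in R$, we have $\lang{\mathcal{A}}(r) \subseteq \bigcup_{r' \in R}\lang{\mathcal{A}}(r') = \lang{\mathcal{A}}(R)$, and hence $\lang{\mathcal{A}}(p) \subseteq \lang{\mathcal{A}}(R)$. Taking the union over all $p \in P$ yields $\lang{\mathcal{A}}(P) = \bigcup_{p \in P}\lang{\mathcal{A}}(p) \subseteq \lang{\mathcal{A}}(R)$, which is the claim.

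There is no real obstacle here: the lemma is essentially an unfolding of the definitions of $\lang{\mathcal{A}}(P)$, of $\preceq^{\forall\exists}$, and of membership of $\preceq$ in $\rel{\mathcal{A}}$. The only thing to be careful about is the direction of the quantifiers in $\preceq^{\forall\exists}$ (universal over $P$, existential over $R$), which is exactly what is needed to push the union on the left inside and bound each component by something on the right; if the quantifiers were reversed the argument would fail, so I would make that dependence explicit in the write-up.
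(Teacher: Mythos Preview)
Your proof is correct and is exactly the natural unfolding of the definitions. The paper does not give an explicit proof of this lemma, treating it as immediate from the definitions of $\lang{\mathcal{A}}(P)$, $\preceq^{\forall\exists}$, and $\rel{\mathcal{A}}$, which is consistent with your own remark that there is no real obstacle here.
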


Note that in Lemma~\ref{lemma:language1}, $\preceq$ can be any relation on the
states of $\mathcal{A}$ that implies language inclusion. This includes any
simulation relation (Lemma~\ref{lemma:simproperty}). When $\preceq$ is the
maximal simulation or the identity relation, it can be efficiently obtained from
$\mathcal{A}$ before the subset construction algorithm is triggered and used to
prune out unnecessary search paths.

An example of how the described optimisation can help is given in
Fig.~\ref{figure:subset}(b). If $\preceq$ is the identity, the universality
checking algorithm will not continue the search from the macro-state
$\{s_1,s_2,s_4\}$ because it is a superset of the initial macro-state. In fact,
the pure antichain-based approach~\cite{wulf:antichains} can be viewed as a special
case of our simulation enhanced antichain approach when $\preceq$ is the identity. Notice that, in this case,
only 7 macro-states are generated (the classical algorithm generates 13
macro-states). When $\preceq$ is the maximal simulation, we do not need to
continue from the macro-state $\{s_2,s_3\}$ either because $s_1\preceq s_3$ and
hence $\{s_1,s_2\}\preceq^{\forall\exists} \{s_2,s_3\}$. In this case, only 3
macro-states are generated. As we can see from the example, a better reduction
of the number of generated states can be achieved when a weaker relation (e.g.,
the maximal simulation) is used.

Optimisation 2 is based on the observation that $\L(\A)(P)=\L(\A)(P\setminus
\{p_1\})$ if there is some $p_2\in P$ with $p_1\preceq p_2$. This fact is a
simple consequence of Lemma~\ref{lemma:language1} (note that
$P\preceq^{\forall\exists} P\setminus \{p_1\}$). Since the two macro-states $P$
and $P\setminus \{p_1\}$ have the same language, if a word is not accepted from
$P$, it is not accepted from $P\setminus \{p_1\}$ either. On the other hand, if
all words in $\Sigma^*$ can be accepted from $P$, then they can also be accepted
from $P\setminus \{p_1\}$. Therefore, it is safe to replace the macro-state $P$
with $P\setminus \{p_1\}$.

Consider the example in Fig.~\ref{figure:subset}. If $\preceq$ is the maximal
simulation relation, we can remove the state $s_2$ from the initial macro-state
$\{s_1,s_2\}$ without changing its language, because $s_2 \preceq s_1$. This
change will propagate to all the searching paths. With this optimisation, our
approach will only generates 3 macro-states, all of which are singletons. The
result after apply the two optimisations are applied is shown in
Fig.~\ref{figure:subset}(c).

\begin{algorithm}[t]
    \KwIn{An FA $\mathcal{A} = (\Sigma, Q, \delta, I, F)$ and a relation $\preceq\in \rel{\mathcal{A}}$.}
    \KwOut{$\true$ if $\mathcal{A}$ is universal. Otherwise, $\false$.}
  \caption{\textit{Universality Checking}}
  \label{algorithm:universality}
    \lIf{$I$ is rejecting}{\KwRet{$\false$}}\;
    $\Processed$:=$\emptyset$\;
    $\Next$:=$\{\minimize(I)\}$\;
    \While{$\Next\neq \emptyset$}{
        Pick and remove a macro-state $R$ from $\Next$ and move it to $\Processed$\;
        \ForEach{$P \in \{ \minimize(R') \mid R' \in \mathit{Post}(R) \}$}{
            \lIf{$P$ is rejecting}{\KwRet{$\false$}}\;
            \ElseIf{$\neg\exists S \in \Processed \cup \Next$ s.t. $S \preceq^{\forall\exists} P$}{\label{alg:trick2start}
                Remove all $S$ from $\Processed \cup \Next$ s.t. $P \preceq^{\forall\exists} S$\;
                Add $P$ to $\Next$;
            }
        }
    }
    \KwRet{$\true$}
\end{algorithm}

Algorithm~\ref{algorithm:universality} describes our approach in pseudocode. In
this algorithm, the function $\minimize(R)$ implements Optimisation~2.
The function does the following: it chooses a~new state $r_1$ from
$R$, removes $r_1$ from $R$ if there exists a state $r_2$ in $R$ such that $r_1 \preceq r_2$, and
then repeats the procedure until all of the states in $R$ are processed.
Lines~8--10 of the algorithm implement Optimisation~1. Overall, the
algorithm works as follows. Till the set $\Next$ of macro-states waiting to be
processed is non-empty (or a rejecting macro-state is found), the algorithm
chooses one macro-state from $\Next$, and moves it to the $\Processed$ set.
Moreover, it generates all successors of the chosen macro-state, minimises them,
and adds them to $\Next$ unless there is already some
$\preceq^{\forall\exists}$-smaller macro-state in $\Next$ or in $\Processed$. If
a new macro-state is added to $\Next$, the algorithm at the same time removes
all $\preceq^{\forall\exists}$-bigger macro-states from both $\Next$ and
$\Processed$. Note that the pruning of the $\Next$ and $\Processed$ sets
together with checking whether a new macro-state should be added into $\Next$
can be done within a single iteration through $\Next$ and $\Processed$. We
discuss correctness of the algorithm in the next section.

\subsection{Correctness of the Optimised Universality
Checking}\label{sec:correctness}

In this section, we prove correctness of
Algorithm~\ref{algorithm:universality}.  
We first
introduce some definitions and notations that will be used in the proof. For a
macro-state $P$, define $\dist(P)\in \nat \cup \{\infty\}$ as the length of the
shortest word in $\Sigma^*$ that is not in $\lang{\mathcal{A}}(P)$ (if
$\lang{\mathcal{A}}(P)=\Sigma^*$, $\dist(P)=\infty$). For a set of macro-states
$\mathit{MStates}$, the function $\dist(\mathit{MStates})\in \nat \cup
\{\infty\}$ returns the length of the shortest word in $\Sigma^*$ that is not
in the language of some macro-state in $\mathit{MStates}$. More precisely, if
$\mathit{MStates}=\emptyset$, $\dist(\mathit{MStates})=\infty$, otherwise,
$\dist(\mathit{MStates})=\min_{P\in \mathit{MStates}}\dist(P)$. The
predicate $\univ(\mathit{MStates})$ is true if and only if all the macro-states
in $\mathit{MStates}$ are universal, i.e., $\forall P\in \mathit{MStates}:
\lang{\mathcal{A}}(P)=\Sigma^*$.

The lemma bellow follows from the fact that if $\lang{\mathcal{A}}(P) \subseteq \lang{\mathcal{A}}(R)$, then the shortest word rejected by $R$ is also rejected by $P$.
\begin{lemma}\label{lemma:distance} Let $P$ and $R$ be two macro-states such
that $\lang{\mathcal{A}}(P) \subseteq \lang{\mathcal{A}}(R)$. We have
$\dist(P) \leq \dist(R)$.\end{lemma}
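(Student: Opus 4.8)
The plan is to prove Lemma~\ref{lemma:distance} directly from the definition of $\dist$, unpacking what ``the shortest word rejected'' means on each side. First I would recall that $\dist(R)$ is, by definition, the length of a shortest word $w\in\Sigma^*$ with $w\notin\lang{\mathcal{A}}(R)$ (and $\dist(R)=\infty$ when no such word exists, i.e.\ when $R$ is universal). The two cases to handle are whether $\dist(R)=\infty$ or $\dist(R)\in\nat$.

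If $\dist(R)=\infty$, the inequality $\dist(P)\le\dist(R)$ holds trivially since $\infty$ is the maximum element of $\nat\cup\{\infty\}$, so nothing needs to be done. If $\dist(R)=n\in\nat$, then there is a word $w$ with $|w|=n$ and $w\notin\lang{\mathcal{A}}(R)$. Now I would invoke the hypothesis $\lang{\mathcal{A}}(P)\subseteq\lang{\mathcal{A}}(R)$: taking contrapositives, $w\notin\lang{\mathcal{A}}(R)$ implies $w\notin\lang{\mathcal{A}}(P)$. Hence $w$ is a word of length $n$ that is not in $\lang{\mathcal{A}}(P)$, so by definition of $\dist(P)$ as a \emph{shortest} such word, $\dist(P)\le n=\dist(R)$. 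In particular if $\dist(P)=\infty$ this reasoning rules it out, so $\dist(P)$ is finite and at most $n$.

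There is essentially no main obstacle here — the statement is an immediate consequence of the definitions once one observes that language inclusion $\lang{\mathcal{A}}(P)\subseteq\lang{\mathcal{A}}(R)$ contraposes to ``every word rejected by $R$ is rejected by $P$'', so the set of words witnessing $\dist(R)$ is a subset of the set of words rejected by $P$, and minimising length over a larger set can only give a smaller-or-equal value. The only thing to be slightly careful about is the convention that $\dist$ takes values in $\nat\cup\{\infty\}$ and that $\min$ over an empty set (the universal case) is $\infty$, which is precisely why the $\dist(R)=\infty$ case is handled separately (or simply absorbed by the fact that everything is $\le\infty$).
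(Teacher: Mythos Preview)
Your proposal is correct and follows essentially the same approach as the paper, which simply states that the lemma ``follows from the fact that if $\lang{\mathcal{A}}(P) \subseteq \lang{\mathcal{A}}(R)$, then the shortest word rejected by $R$ is also rejected by $P$.'' Your write-up is a careful unpacking of exactly this one-line justification.
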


Lemma~\ref{lemma:invariant} describes the invariants used to prove the partial
correctness of Algorithm~\ref{algorithm:universality}.

\begin{lemma}\label{lemma:invariant} The below two loop
invariants hold in Algorithm~\ref{algorithm:universality}:\begin{enumerate}
  \item \label{lemma:invariant1}$\neg\univ(\Processed \cup \Next)
  \implies \neg\univ(\{I\})$.
  \item \label{lemma:invariant2}$\neg\univ(\{I\}) \implies
  \dist(\Processed) > \dist(\Next)$.

\end{enumerate}
\end{lemma}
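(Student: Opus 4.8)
The plan is to prove both invariants simultaneously by induction on the number of iterations of the while loop, since they interact: invariant~\ref{lemma:invariant1} guarantees that the algorithm does not falsely report non-universality, while invariant~\ref{lemma:invariant2} controls the ``progress'' of the distance measure and will ultimately be what forces termination with the correct answer. For the base case, after the initialisation we have $\Processed=\emptyset$ and $\Next=\{\minimize(I)\}$. By Lemma~\ref{lemma:language1} (applied via $I\preceq^{\forall\exists}\minimize(I)$ and $\minimize(I)\preceq^{\forall\exists} I$), the macro-state $\minimize(I)$ has the same language as $I$, so $\neg\univ(\Processed\cup\Next)\iff\neg\univ(\{I\})$, giving invariant~\ref{lemma:invariant1}; and since $\Processed=\emptyset$ we have $\dist(\Processed)=\infty>\dist(\Next)$ whenever $\neg\univ(\{I\})$, giving invariant~\ref{lemma:invariant2}.

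For the induction step, consider one iteration that picks $R$ from $\Next$, moves it to $\Processed$, and processes its minimised successors. For invariant~\ref{lemma:invariant1}, I would observe that every macro-state ever added to $\Next$ is $\minimize(R')$ for some $R'\in\post(R)$ of a previously reachable macro-state, hence (by Lemma~\ref{lemma:language1}) has language equal to that of a macro-state reachable by subset construction from $I$; therefore if some macro-state in $\Processed\cup\Next$ is non-universal, the corresponding subset-construction macro-state reachable from $I$ is non-universal, so $\neg\univ(\{I\})$. The subtle point here is the pruning on line~9: when we remove an $S$ with $P\preceq^{\forall\exists}S$, by Lemma~\ref{lemma:language1} $\lang{\A}(S)\subseteq\lang{\A}(P)\cup\{$correction$\}$---more precisely $\lang{\A}(P)\subseteq\lang{\A}(S)$, so if $S$ was non-universal then $P$ (which we keep) is also non-universal; thus non-universality of the set is preserved under pruning.

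For invariant~\ref{lemma:invariant2}, assume $\neg\univ(\{I\})$ and that the invariant held at the start of the iteration, so $\dist(\Processed)>\dist(\Next)$. The macro-state $R$ picked from $\Next$ is moved to $\Processed$. I would argue that $R$ does not worsen the distance of $\Processed$ beyond what is needed: either $R$ realises the old minimum $\dist(\Next)$, in which case I must show the new $\Next$ still has strictly smaller distance; or it does not. The key is that for every successor $R'\in\post(R)$ we have $\dist(R')\geq\dist(R)-1$ with equality achievable along a shortest rejected word, and by Lemma~\ref{lemma:distance} the minimised successor $\minimize(R')$ has $\dist(\minimize(R'))=\dist(R')$. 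The successor that continues a shortest word rejected from $R$ either gets added to $\Next$, or is blocked because some $S\in\Processed\cup\Next$ satisfies $S\preceq^{\forall\exists}\minimize(R')$; in the latter case Lemma~\ref{lemma:distance} gives $\dist(S)\leq\dist(\minimize(R'))<\dist(R)=\dist(\Processed$ before moving $R)$, and one checks such $S$ cannot be in the old $\Processed$ (else the old invariant is contradicted via $R$), so $S\in\Next$ and the new $\Next$ still has small enough distance. Assembling these cases to show $\dist(\Processed_{\mathrm{new}})>\dist(\Next_{\mathrm{new}})$ is where the real bookkeeping lies.

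The main obstacle I expect is precisely this case analysis in invariant~\ref{lemma:invariant2}: one must carefully track where the ``witness'' successor realising the shortest rejected word ends up---added to $\Next$, already present, or subsumed by something in $\Next$ versus $\Processed$---and use the pruning rule together with Lemma~\ref{lemma:distance} to rule out the bad configuration where a low-distance macro-state is only in $\Processed$. The interplay between the two optimisations (the $\minimize$ call and the $\preceq^{\forall\exists}$-pruning) means one cannot treat them independently; both must be shown to preserve distances in the right direction. Everything else---the base case, invariant~\ref{lemma:invariant1}, and the reductions to Lemmas~\ref{lemma:language1} and~\ref{lemma:distance}---should be routine once the statements are set up correctly.
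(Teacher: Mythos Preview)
Your proposal is correct and follows essentially the same route as the paper: induction on loop iterations, with Invariant~\ref{lemma:invariant2} handled by a case analysis on $R$ and by tracking the successor of $R$ along a shortest rejected word into $\Next$ (using Lemmas~\ref{lemma:language1} and~\ref{lemma:distance} exactly as you indicate). The paper splits Invariant~\ref{lemma:invariant2} slightly differently---on whether moving $R$ decreases $\dist(\Processed)$ rather than on whether $R$ realises $\dist(\Next)$---and for Invariant~\ref{lemma:invariant1} it argues the contrapositive (if the old set is universal then so is the new one), but these are cosmetic variations of your plan.
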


\begin{proof} It is trivial to see that the invariants hold at the entry of the
loop, taking into account Lemma~\ref{lemma:language1} covering the effect of the
$\minimize$ function. We show that the invariants continue to hold when
the loop body is executed from a configuration of the algorithm in which the
invariants hold. We use $\Processed^{\mathit{old}}$ and $\Next^{\mathit{old}}$
to denote the values of $\Processed$ and $\Next$ when the control is on line 4
before executing the loop body and we use $\Processed^{\mathit{new}}$ and
$\Next^{\mathit{new}}$ to denote their values when the control gets back to line
4 after executing the loop body once. We assume that $\Next^{\mathit{old}} \neq
\emptyset$.

Let us start with Invariant 1. Assume first that
$\univ(\Processed^{\mathit{old}} \cup \Next^{\mathit{old}})$ holds.
Then, the macro-state $R$ picked on line 5 must be universal, which holds also for all of its successors and, due
to Lemma~\ref{lemma:language1}, also for their minimised versions, which may be
added to $\Next$ on line 10. Hence, $\univ(\Processed^{\mathit{new}}
\cup \Next^{\mathit{new}})$ holds after executing the loop body, and thus
Invariant 1 holds too. Now assume that $\neg
\univ(\Processed^{\mathit{old}} \cup \Next^{\mathit{old}})$ holds. Then,
$\neg\univ(\{I\})$ holds, and hence Invariant 1 must hold for
$\Processed^{\mathit{new}}$ and $\Next^{\mathit{new}}$ too.

We proceed to Invariant 2 and we assume that $\neg\univ(\{I\})$ holds
(the other case being trivial). Hence, $\dist(\Processed^{\mathit{old}})
> \dist(\Next^{\mathit{old}})$ holds. We distinguish two
cases:\begin{enumerate}

  \item $\dist(R) = \infty$ or $\exists Q \in \Processed^{\mathit{old}}:~
  \dist(Q) \leq \dist(R)$. In this case, $\dist(\Processed)$ will not decrease
  on line 5. From $\dist(\Processed^{\mathit{old}}) >
  \dist(\Next^{\mathit{old}})$, there exists some macro-state $R'$ in
  $\Next^{\mathit{old}}$ s.t. $\dist(R')= \dist(\Next^{\mathit{old}}) <
  \dist(\Processed^{\mathit{old}})\leq \dist(Q)\leq \dist(R)$. Therefore,
  $\dist(\Next)$ will not change on line 5 either. Moreover, for any macro-state
  $P$, removing $Q$ s.t. $P\preceq^{\forall\exists} Q$ from $\Next$ and
  $\Processed$ on line 9 and then adding $P$ to $\Next$ on line 10 cannot
  invalidate $\dist(\Processed^{\mathit{new}}) > \dist(\Next^{\mathit{new}})$
  since $\dist(P) \leq \dist(Q)$ due to Lemmas~\ref{lemma:language1}
  and~\ref{lemma:distance}. Hence, Invariant 2 must hold for
  $\Processed^{\mathit{new}}$ and $\Next^{\mathit{new}}$ too.

  \item $\dist(R) \neq \infty$ and $\neg\exists Q \in
  \Processed^{\mathit{old}}:~ \dist(Q) \leq \dist(R)$. In this case, the value
  of $\dist(\Processed)$ decreases to $\dist(R)$ on line 5. Clearly, $\dist(R)
  \neq 0$ or else we would have terminated before. Then there must be some
  successor $R'$ of $R$ which is either rejecting (and the loop stops without
  getting back to line 4) or one step closer to rejection, meaning that
  $\dist(R') < \dist(R)$. Moreover, $R'$ either appears in
  $\Next^{\mathit{new}}$ or there already exists some $R'' \in
  \Next^{\mathit{old}}$ such that $R'' \preceq^{\forall\exists} R'$, meaning
  that $\dist(\Processed^{\mathit{new}}) > \dist(\Next^{\mathit{new}})$. It is
  impossible that $\exists R'' \in \Processed^{\mathit{old}}: R''
  \preceq^{\forall\exists} R'$, because $\forall R'' \in
  \Processed^{\mathit{old}}:~ \dist(R'') > \dist(R) > \dist(R')$ and from
  Lemmas~\ref{lemma:language1} and~\ref{lemma:distance}, $R''
  \preceq^{\forall\exists} R'$ implies $\dist(R'') < \dist(R')$. Furthermore, if
  some macro-state is removed from $\Processed$ on line 9, $\dist(\Processed)$
  can only grow, and hence we are done.

\end{enumerate}
\end{proof}

Due to the finite number of macro-states, we can show that
Algorithm~\ref{algorithm:universality} eventually terminates.

\begin{lemma}[Termination] \label{lemma:termination}
Algorithm~\ref{algorithm:universality} eventually terminates.
\end{lemma}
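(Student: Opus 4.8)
The plan is to bound the number of iterations of the \textbf{while} loop of Algorithm~\ref{algorithm:universality}. The key claim is that \emph{no macro-state is ever inserted into $\Next$ more than once during a run}; since each iteration removes one macro-state from $\Next$ on line~5 and there are only $2^{\card{Q}}$ macro-states, this gives at most $2^{\card{Q}}$ iterations, after which $\Next$ is empty and the algorithm stops (if it has not already returned $\false$ earlier).

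Before proving the claim I would record the elementary fact that, $\preceq$ being a preorder (a simulation or the identity relation), the induced relation $\preceq^{\forall\exists}$ on macro-states is reflexive and transitive. The heart of the argument is then the following loop invariant: \emph{every macro-state that has ever been an element of $\Next$ is $\preceq^{\forall\exists}$-subsumed by some macro-state currently in $\Processed\cup\Next$}, i.e. for each such $M$ there is $S\in\Processed\cup\Next$ with $S\preceq^{\forall\exists} M$. This holds initially, since the only seen macro-state, $\minimize(I)$, is in $\Next$ and subsumes itself by reflexivity. For the inductive step I would inspect the only two places the sets change: line~5 moves $R$ from $\Next$ to $\Processed$, leaving $\Processed\cup\Next$ unchanged; and lines~9--10, executed for a freshly generated macro-state $P$, remove exactly those $S\in\Processed\cup\Next$ with $P\preceq^{\forall\exists} S$ and then add $P$. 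The added $P$ subsumes itself, and by transitivity it subsumes everything that any deleted $S$ used to subsume ($P\preceq^{\forall\exists} S$ and $S\preceq^{\forall\exists} M$ give $P\preceq^{\forall\exists} M$); hence the invariant is restored for every previously seen macro-state once line~10 has been executed. (It may be transiently violated between lines~9 and~10, but it holds whenever control is at line~4 or at the test on line~8, which is all that is needed.)

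Given the invariant, the claim follows quickly: if a macro-state $M$ were about to be inserted into $\Next$ a second time, the test on line~8 would have to succeed for $P=M$, i.e. $\neg\exists S\in\Processed\cup\Next.\ S\preceq^{\forall\exists} M$; but $M$ was already an element of $\Next$ earlier, so the invariant provides exactly such an $S$, a contradiction. Thus each macro-state enters $\Next$ at most once, so line~5 executes at most $2^{\card{Q}}$ times, and the algorithm terminates. I expect the only subtle point to be the inductive maintenance of the invariant across line~9: one must observe that whenever a subsuming macro-state is deleted there, it is simultaneously replaced on line~10 by a $\preceq^{\forall\exists}$-smaller one, which, combined with transitivity, preserves subsumption of all previously seen macro-states. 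Everything else is routine bookkeeping.
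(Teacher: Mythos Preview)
Your proof is correct and follows essentially the same approach as the paper: both argue that no macro-state can be added to $\Next$ twice by maintaining the invariant that every macro-state ever placed in $\Next$ is $\preceq^{\forall\exists}$-subsumed by some current element of $\Processed\cup\Next$. Your version is more explicit than the paper's (which is rather terse), in particular in spelling out the role of transitivity of $\preceq^{\forall\exists}$ when a witness is deleted on line~9 and replaced on line~10; the paper's proof relies on this implicitly but does not name it.
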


\begin{proof}For the algorithm not to terminate, it would have to be the case
that some macro-state is repeatedly added into $\Next$. However, once some
macro-state $R$ is added into $\Next$, there will always be some macro-state $Q
\in \Processed \cup \Next$ such that $Q \preceq^{\forall\exists} R$. This holds
since $R$ either stays in $\Next$, moves to $Processed$, or is replaced by some
$Q$ such that $Q \preceq^{\forall\exists} R$ in each iteration of the loop.
Hence, $R$ cannot be added to $\Next$ for the second time since a macro-state is
added to $\Next$ on line 10 only if there is no $Q \in \Processed \cup \Next$
such that $Q \preceq^{\forall\exists} R$.\end{proof}

We can now easily prove the main theorem.

\begin{theorem}\label{theorem:correctness} Algorithm~\ref{algorithm:universality} always terminates, and returns
$\true$ iff the input automaton $\mathcal{A}$ is universal.
\end{theorem}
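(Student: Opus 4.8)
The plan is to assemble Theorem~\ref{theorem:correctness} directly from the facts already in hand, essentially Lemma~\ref{lemma:termination} (termination) and the two loop invariants of Lemma~\ref{lemma:invariant} (which themselves already fold in Lemmas~\ref{lemma:language1} and~\ref{lemma:distance}). First I would record that Algorithm~\ref{algorithm:universality} is total: it can only halt by returning $\false$ on line~1 or line~8, or by returning $\true$ on line~11 after the \textbf{while} loop exits with $\Next = \emptyset$; exactly one of these occurs, and that \emph{some} such event occurs is precisely Lemma~\ref{lemma:termination}. It then suffices to establish the two implications ``returns $\false$ $\Rightarrow$ $\mathcal{A}$ not universal'' and ``returns $\true$ $\Rightarrow$ $\mathcal{A}$ universal'', since together with totality they give the stated equivalence.

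For the $\false$ case I would split on the exit point. If the algorithm stops on line~1, then $I$ is rejecting, so the empty word $\varepsilon \notin \lang{\mathcal{A}}$ and $\mathcal{A}$ is not universal. If it stops on line~8, let $R$ be the macro-state processed in the current iteration (it sat in $\Next$ at the head of the loop body and has since been moved to $\Processed$), and let $P = \minimize(R')$ for some $R' \in \mathit{Post}(R)$ be the rejecting macro-state that was found. Since $P$ is rejecting, $\varepsilon \notin \lang{\mathcal{A}}(P)$; by Lemma~\ref{lemma:language1} applied to the $\minimize$ step, $\lang{\mathcal{A}}(P) = \lang{\mathcal{A}}(R')$, so $R'$ is not universal, and as $R'$ is an $a$-successor of $R$ this forces $\lang{\mathcal{A}}(R) \neq \Sigma^*$ as well. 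Because $R$ belonged to $\Processed \cup \Next$ at the start of this iteration, Invariant~\ref{lemma:invariant1} of Lemma~\ref{lemma:invariant} (which holds at that control point) yields $\neg\univ(\{I\})$, i.e.\ $\lang{\mathcal{A}} = \lang{\mathcal{A}}(I) \neq \Sigma^*$.

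For the $\true$ case I would invoke Invariant~\ref{lemma:invariant2}. If the algorithm returns $\true$, the loop has exited with $\Next = \emptyset$, hence $\dist(\Next) = \infty$. Suppose, for contradiction, that $\mathcal{A}$ is not universal, i.e.\ $\neg\univ(\{I\})$; then Invariant~\ref{lemma:invariant2} gives $\dist(\Processed) > \dist(\Next) = \infty$, which is impossible since $\dist$ ranges over $\nat \cup \{\infty\}$. Hence $\univ(\{I\})$, that is $\lang{\mathcal{A}} = \Sigma^*$. Combining the two directions with Lemma~\ref{lemma:termination} and totality completes the proof.

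I do not anticipate a real obstacle here: all the substance lives in the lemmas, and this step is an orchestration. The only point requiring care is bookkeeping about control points — Invariants~\ref{lemma:invariant1} and~\ref{lemma:invariant2} are stated to hold at line~4 (the loop head), so when I apply them after a $\false$ or $\true$ return I must use the instance that held at the \emph{head} of the relevant iteration, referring to $R$ via its membership in $\Next$ at that moment rather than in $\Processed$ afterwards. A secondary, entirely routine point is the translation ``rejecting macro-state $\Rightarrow$ non-universal'', which is just the observation that a rejecting macro-state rejects $\varepsilon$.
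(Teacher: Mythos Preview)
Your proposal is correct and follows essentially the same approach as the paper's proof: termination via Lemma~\ref{lemma:termination}, soundness of the $\false$ return via Lemma~\ref{lemma:language1} plus Invariant~1, and soundness of the $\true$ return via Invariant~2 with $\dist(\Next)=\infty$ when $\Next=\emptyset$. Your version is, if anything, slightly more explicit about the control-point bookkeeping than the paper's; the only cosmetic issue is that your line-number references (line~8, line~11) may be off by one relative to the actual numbering produced by the algorithm environment.
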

\begin{proof}From Lemma~\ref{lemma:termination}, the algorithm eventually
terminates. It returns $\false$ only if either the set of initial states
is rejecting, or the minimised version $R'$ of some successor $S$ of a
macro-state $R$ chosen from $\Next$ on line 5 is found rejecting. In the
latter case, due to Lemma~\ref{lemma:language1}, $S$ is also rejecting. Then
$R$ is non-universal, and hence $\univ(\Processed \cup \Next)$ is false.
By Lemma~\ref{lemma:invariant} (Invariant 1), we have $\mathcal{A}$ is not
universal. The algorithm returns $\true$ only when $\Next$ becomes empty.
When $\Next$ is empty, $\dist(\Processed) > \dist(\Next)$ is not
true. Therefore, by Lemma~\ref{lemma:invariant} (Invariant 2), $\mathcal{A}$
is universal.\end{proof}

\subsection{The FA Language Inclusion Problem}\label{sec:language}

The technique described in Section~\ref{sec:universality} can be generalised to
solve the \emph{language-inclusion problem}. Let $\mathcal{A}$ and $\mathcal{B}$
be two FA. The \emph{language inclusion problem} for $\mathcal{A}$ and
$\mathcal{B}$ is to decide whether $\lang{\mathcal{A}}\subseteq
\lang{\mathcal{B}}$. This problem is also PSPACE-complete. The classical
algorithm for solving this problem builds on-the-fly the product automaton
$\mathcal{A}\times \overline{\mathcal{B}}$ of $\mathcal{A}$ and the complement
of $\mathcal{B}$ and searches for an accepting state. A state in the product
automaton $\mathcal{A}\times \overline{\mathcal{B}}$ is a pair $\ps{p}{P}$ where
$p$ is a state in $\mathcal{A}$ and $P$ is a macro-state in $\mathcal{B}$. For
convenience, we call such a pair $\ps{p}{P}$ a \emph{product-state}. A
product-state is accepting iff $p$ is an accepting state in $\mathcal{A}$ and
$P$ is a rejecting macro-state in $\mathcal{B}$. We use
$\lang{\mathcal{A},\mathcal{B}}\ps{p}{P}$ to denote the language of the
product-state $\ps{p}{P}$ in $\mathcal{A}\times \overline{\mathcal{B}}$. The
language of $\mathcal{A}$ is not contained in the language of $\mathcal{B}$ iff
there exists some accepting product-state $\ps{p}{P}$ reachable from some
initial product-state. Indeed, $\lang{\mathcal{A},\mathcal{B}}\ps{p}{P}
=\lang{\mathcal{A}}(p) \setminus \lang{\mathcal{B}}(P)$, and the language of
$\mathcal{A}\times \overline{\mathcal{B}}$ consists of words which can be used
as witnesses of the fact that $\lang{\mathcal{A}}\subseteq \lang{\mathcal{B}}$
does not hold. In a similar manner to universality checking, the algorithm can
stop the search immediately and conclude that the language inclusion does not
hold whenever an accepting product-state is encountered. An example of a run of
the classical algorithm is given in Fig.~\ref{figure:lan_inclu}. We find that
$\lang{\mathcal{A}}\subseteq \lang{\mathcal{B}}$ is true and the algorithm
generates 13 product-states (Fig.~\ref{figure:lan_inclu}(c), the area labelled
``Classical'').

\begin{figure}[t]
\begin{minipage}[b]{0.15\linewidth}
\begin{minipage}[b]{1\linewidth}
    \centering
    \scalebox{0.9}{
    \begin{tikzpicture}[>=latex',join=bevel]
    \tikzstyle{every state}=     [draw=blue!50,very thick,fill=blue!20, minimum size=5mm, initial text=, initial distance=3mm]%
    \node[state, accepting] (p1) at (1.5,0) {\scriptsize$p_2$};
    \node[initial, state](p2) at (0,0) {\scriptsize$p_1$};
    \draw [->] (p1) to[loop above] node[auto] {\scriptsize $a$} (p1);
    \draw [->] (p1) to[bend left] node[auto] {\scriptsize $a,b$} (p2);
    \draw [->] (p2) to[loop above] node[auto] {\scriptsize $a$} (p2);
    \draw [->] (p2) to[bend left] node[auto] {\scriptsize $a$} (p1);

    \end{tikzpicture}}\\
    (a) FA $\mathcal{A}$
\end{minipage}
\begin{minipage}[b]{1\linewidth}
    \centering
    \scalebox{0.9}{
    \begin{tikzpicture}[>=latex',join=bevel]
    \tikzstyle{every state}=     [draw=blue!50,very thick,fill=blue!20, minimum size=5mm, initial text=, initial distance=3mm]%
    \node[state, accepting] (q1) at (1.5,-2) {\scriptsize$q_2$};
    \node[initial, state](q2) at (0,-2) {\scriptsize$q_1$};
    \draw [->] (q1) to[loop above] node[auto] {\scriptsize $a$} (q1);
    \draw [->] (q1) to[bend left] node[auto] {\scriptsize $a,b$} (q2);
    \draw [->] (q2) to[bend left] node[auto] {\scriptsize $a$} (q1);
    \end{tikzpicture}}\\
    (b) FA $\mathcal{B}$
\end{minipage}
\end{minipage}
\begin{minipage}[b]{0.85\linewidth}
    \centering
    \scalebox{0.8}{
    \begin{tikzpicture}[>=latex',join=bevel,]
    \tikzstyle{mstate}=     [draw=green!75,thick,fill=green!30, shape=rectangle, initial text=]%
    \tikzstyle{fstate}=     [draw=green!50,thick, dashed,fill=green!15, shape=rectangle, initial text=]%

    \draw[fill=blue!20, dashed, draw=blue!50] (-1.5,0.5) -- (-0.5,0.5) .. controls (4,0.5) and (5.5,-2) .. (6,-4.5) -- (-5.5,-4.5).. controls (-6,-2.5) and (-3,0.5) ..(-1.5,0.5);

    \draw[fill=red!20, dashed, draw=red!50] (-1.5,0.1) -- (-0.5,0.1) .. controls (4,0.1) and (5.5,-2) .. (5,-3.5)-- (-5,-3.5).. controls (-6,-2.5) and (-3,0.1) ..(-1.5,0.1);

    \draw[fill=orange!30, dashed, draw=orange!200] (-1.5,-0.2) -- (-0.5,-0.2) .. controls (3,-0.6) and (5,-2) .. (4.8,-3.3)-- (-1,-3.3)-- (-1,-2.5)-- (-3,-2.5).. controls (-5,-2.5) and (-3,0) ..(-1.5,-0.2);

    \draw[fill=purple!20, dashed, draw=purple!200] (-2.5,-0.5) .. controls (-1.5,-0.3) and (0.5,-0.3) .. (0.5,-0.6) -- (0.5,-1.1) .. controls (0.5,-1.6) and (-2.5,-1.6) .. (-2.5,-1.1)--(-2.5,-0.5) ;

    \node at (-1,-0.6) {\scriptsize Optimisation~1(b)};
        \node at (1.5,-1.5) {\scriptsize Optimisation~1(a)};
\node at (1.1,-0.2) {\scriptsize Antichain};
\node at (-1,0.3) {\scriptsize Classical};

    \node[mstate, initial] (s1) at (-1, -1) {\scriptsize$p_1, \{q_1\}$};

    \node[mstate] (s2) at (-3, -2) {\scriptsize$p_1, \{q_2\}$};
    \node[mstate] (s3) at (1, -2) {\scriptsize$p_2, \{q_2\}$};

    \node[fstate] (s4) at (-4, -3) {\scriptsize$p_1, \{q_1,q_2\}$};
    \node[fstate] (s5) at (-2, -3) {\scriptsize$p_2, \{q_1,q_2\}$};
    \node[mstate] (s6) at (0, -3) {\scriptsize$p_1, \{q_1,q_2\}$};
    \node[mstate] (s7) at (2, -3) {\scriptsize$p_2, \{q_1,q_2\}$};
    \node[fstate] (s8) at (4, -3) {\scriptsize$p_1, \{q_1\}$};

    \node[fstate] (s9) at (-1.7, -4) {\scriptsize$p_1, \{q_1,q_2\}$};
    \node[fstate] (s10) at (0, -4) {\scriptsize$p_2, \{q_1,q_2\}$};
    \node[fstate] (s11) at (1.8,-4) {\scriptsize$p_1, \{q_1,q_2\}$};
    \node[fstate] (s12) at (3.5,-4) {\scriptsize$p_2, \{q_1,q_2\}$};
    \node[fstate] (s13) at (5,-4) {\scriptsize$p_1, \{q_1\}$};

    \draw [->, thick] (s1) to node[right=1mm] {\scriptsize $a$} (s2);
    \draw [->, thick] (s1) to node[left=1mm] {\scriptsize $a$} (s3);

    \draw [->, thick] (s2) to node[right] {\scriptsize $a$} (s4);
    \draw [->, thick] (s2) to node[left] {\scriptsize $a$} (s5);

    \draw [->, thick] (s3) to node[right] {\scriptsize $a$} (s6);
    \draw [->, thick] (s3) to node[left] {\scriptsize $a$} (s7);
    \draw [->, thick] (s3) to node[left] {\scriptsize $b$} (s8);

    \draw [->, thick] (s6) to node[right] {\scriptsize $a$} (s9);
    \draw [->, thick] (s6) to node[left] {\scriptsize $a$} (s10);

    \draw [->, thick] (s7) to node[right] {\scriptsize $a$} (s11);
    \draw [->, thick] (s7) to node[left] {\scriptsize $a$} (s12);
    \draw [->, thick] (s7) to node[left=1mm] {\scriptsize $b$} (s13);
    \end{tikzpicture}}\\
    (c) A run of the algorithms while checking $\lang{\mathcal{A}}\subseteq \lang{\mathcal{B}}$.
\end{minipage}

  \caption{Language Inclusion Checking Algorithms}
  \label{figure:lan_inclu}
\end{figure}
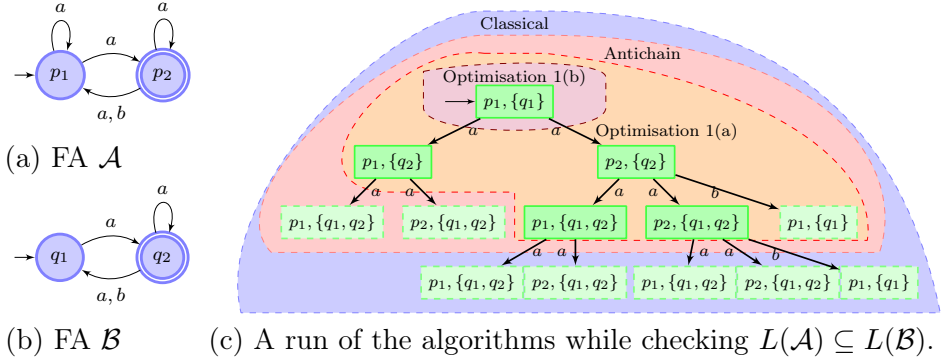

Optimisation~1 that we use for universality checking can be generalised for
language inclusion checking as follows. Let $\mathcal{A} = (\Sigma,
Q_\mathcal{A}, \delta_\mathcal{A}, I_\mathcal{A}, F_\mathcal{A})$ and
$\mathcal{B} = (\Sigma, Q_\mathcal{B},\delta_\mathcal{B}, I_\mathcal{B},
F_\mathcal{B})$ be two FA such that $Q_\mathcal{A} \cap Q_\mathcal{B} =
\emptyset$. We denote by $\mathcal{A}\cup\mathcal{B}$ the FA $(\Sigma,
Q_\mathcal{A} \cup Q_\mathcal{B}, \delta_\mathcal{A} \cup \delta_\mathcal{B},
I_\mathcal{A} \cup I_\mathcal{B}, F_\mathcal{A} \cup F_\mathcal{B})$.  Let
$\preceq$ be a relation in $\rel{(\mathcal{A}\cup\mathcal{B})}$. During the
process of constructing the product automaton and searching for an accepting
product-state, we can stop the search from a product-state $\ps{p}{P}$ if (a)
there exists some visited product-state $\ps{r}{R}$ such that $p\preceq r$ and
$R \preceq^{\forall\exists} P$, or (b) $\exists p'\in P: p \preceq p'$.
Optimisation~1(a) is justified by Lemma~\ref{lemma:language1_p}, which is very
similar to  Lemma~\ref{lemma:language1} for universality checking.

\begin{lemma}\label{lemma:language1_p} Let $\mathcal{A}$, $\mathcal{B}$ be two
FA, $\ps{p}{P}$, $\ps{r}{R}$ be two product-states where $p$, $r$ are states
in $\mathcal{A}$ and $P$, $R$ are macro-states in $\mathcal{B}$, and $\preceq$
be a relation in $\rel{(\mathcal{A}\cup\mathcal{B})}$. Then, $p \preceq r$ and
$R \preceq^{\forall\exists} P$ implies $\lang{\mathcal{A},\mathcal{B}}\ps{p}{P}
\subseteq \lang{\mathcal{A},\mathcal{B}}\ps{r}{R}$. \end{lemma}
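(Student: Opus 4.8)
The plan is to reduce the statement to Lemma~\ref{lemma:language1} together with the defining identity $\lang{\mathcal{A},\mathcal{B}}\ps{p}{P}=\lang{\mathcal{A}}(p)\setminus\lang{\mathcal{B}}(P)$ recalled just above, and then to finish by elementary reasoning about set differences. Concretely, I would first extract the two component-wise language inclusions that the hypotheses provide, and then compose them.

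First, from $p\preceq r$ and ${\preceq}\in\rel{(\mathcal{A}\cup\mathcal{B})}$ we get $\lang{\mathcal{A}\cup\mathcal{B}}(p)\subseteq\lang{\mathcal{A}\cup\mathcal{B}}(r)$. Since $Q_\mathcal{A}\cap Q_\mathcal{B}=\emptyset$, no computation starting in a state of $\mathcal{A}$ can ever use a transition of $\mathcal{B}$ (and symmetrically), so for every state $q$ of $\mathcal{A}$ we have $\lang{\mathcal{A}\cup\mathcal{B}}(q)=\lang{\mathcal{A}}(q)$; hence $\lang{\mathcal{A}}(p)\subseteq\lang{\mathcal{A}}(r)$. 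The same disjointness remark shows that the restriction of $\preceq$ to $Q_\mathcal{B}$ lies in $\rel{\mathcal{B}}$, so Lemma~\ref{lemma:language1} applies to $\mathcal{B}$: from $R\preceq^{\forall\exists}P$ we obtain $\lang{\mathcal{B}}(R)\subseteq\lang{\mathcal{B}}(P)$. It is worth noting the orientation here: the hypothesis is $R\preceq^{\forall\exists}P$, so the \emph{smaller} language is that of $R$, which is exactly the direction needed when these sets are subtracted below.

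It then only remains to assemble the pieces:
\[
\lang{\mathcal{A},\mathcal{B}}\ps{p}{P}=\lang{\mathcal{A}}(p)\setminus\lang{\mathcal{B}}(P)\ \subseteq\ \lang{\mathcal{A}}(r)\setminus\lang{\mathcal{B}}(P)\ \subseteq\ \lang{\mathcal{A}}(r)\setminus\lang{\mathcal{B}}(R)=\lang{\mathcal{A},\mathcal{B}}\ps{r}{R},
\]
where the first inclusion uses $\lang{\mathcal{A}}(p)\subseteq\lang{\mathcal{A}}(r)$ and the second uses $\lang{\mathcal{B}}(R)\subseteq\lang{\mathcal{B}}(P)$ (subtracting a smaller set from a fixed set yields a larger result).

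I do not expect any genuine obstacle: the lemma is bookkeeping on top of Lemma~\ref{lemma:language1}. The only points deserving a line of care are the harmless but necessary observation that the disjointness of the state sets of $\mathcal{A}$ and $\mathcal{B}$ makes the language of a component state the same whether computed in the component automaton or in $\mathcal{A}\cup\mathcal{B}$ (this is what lets us pass between ${\preceq}\in\rel{(\mathcal{A}\cup\mathcal{B})}$ and the component-wise facts), and keeping track of the direction of $\preceq^{\forall\exists}$ so that the two set-difference inclusions compose correctly.
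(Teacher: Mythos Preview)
Your argument is correct and is exactly the natural proof: use the identity $\lang{\mathcal{A},\mathcal{B}}\ps{p}{P}=\lang{\mathcal{A}}(p)\setminus\lang{\mathcal{B}}(P)$, extract $\lang{\mathcal{A}}(p)\subseteq\lang{\mathcal{A}}(r)$ and $\lang{\mathcal{B}}(R)\subseteq\lang{\mathcal{B}}(P)$ from the hypotheses, and combine via monotonicity of set difference. The paper itself does not spell out a proof of this lemma (it is stated without proof, as an evident analogue of Lemma~\ref{lemma:language1}), so your write-up simply fills in the routine details the paper leaves implicit.
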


By the above lemma, if a word takes the product-state $(p, P)$ to an accepting
product-state, it will also take $(r, R)$ to an accepting product-state.
Therefore, we do not need to continue the search from $(p, P)$.

Let us use Fig.~\ref{figure:lan_inclu}(c) to illustrate Optimisation~1(a). As we
mentioned, the pure antichain-based approach can be viewed as a special case of our simulation enhanced antichain
approach when $\preceq$ is the identity. When $\preceq$ is the identity, we do
not need to continue the search from the product-state $(p_2,\{q_1,q_2\})$
because $\{q_2\}\subseteq \{q_1,q_2\}$. In this case, the algorithm generates 8
product-states (Fig.~\ref{figure:lan_inclu}(c), the area labelled ``Antichain'').
In the case that $\preceq$ is the maximal simulation, we do not need to continue
the search from product-states $(p_1,\{q_2\})$, $(p_1,\{q_1,q_2\})$, and
$(p_2,\{q_1,q_2\})$ because $q_1 \preceq q_2$ and the algorithm already visited
the product-states $(p_1,\{q_1\})$ and $(p_2,\{q_2\})$. Hence, the algorithm
generates only 6 product-states (Fig.~\ref{figure:lan_inclu}(c), the area
labelled ``Optimisation~1(a)'').

If the condition of Optimisation~1(b) holds, we have that the language of $p$
(w.r.t. $\mathcal{A}$) is a subset of the language of $P$ (w.r.t.
$\mathcal{B}$). In this case, for any word that takes $p$ to an accepting state
in $\mathcal{A}$, it also takes $P$ to an accepting macro-state in
$\mathcal{B}$. Hence, we do not need to continue the search from the
product-state $(p, P)$ because all of its successor states are rejecting
product-states.
Consider again the example in Fig.~\ref{figure:lan_inclu}(c). With
Optimisation~1(b), if $\preceq$ is the maximal simulation on the states of
$\mathcal{A}\cup\mathcal{B}$, we do not need to continue the search from the
first product-state $(p_1,\{q_1\})$ because $p_1\preceq q_1$. In this case, the
algorithm can conclude that the language inclusion holds immediately after the
first product-state is generated (Fig.~\ref{figure:lan_inclu}(c), the area
labelled ``Optimisation~1(b)'').

Observe that from Lemma~\ref{lemma:language1_p}, it holds that for any
product-state $(p,P)$ such that $p_1\preceq p_2$ for some $p_1,p_2\in P$,
$\lang{\mathcal{A},\mathcal{B}}(p,P)=\lang{\mathcal{A},\mathcal{B}}(p,
P\setminus\{p_1\})$ (as $P\preceq^{\forall\exists} P\setminus \{p_1\}$).
Optimisation~2 that we used for universality checking can therefore be
generalised for language inclusion checking too.

We give the pseudocode of our optimised inclusion checking in
Algorithm~\ref{algorithm:languageinclusion}, which is a straightforward
extension of Algorithm~\ref{algorithm:universality}. In the algorithm, the
definition of the $\minimize(R)$ function is the same as what we have defined in
Section~\ref{sec:universality}. The function $\initialize(\pstates)$
applies Optimisation~1 on the set of product-states $\pstates$ to avoid
unnecessary searching. More precisely, it returns a maximal subset of $\pstates$
such that (1) for any two elements $(p,P)$, $(q,Q)$ in the subset, $p\not\preceq
q \vee Q \not\preceq^{\forall\exists} P$ and (2) for any element $(p,P)$ in the
subset, $\forall p'\in P:~p\not\preceq p'$. We define the post-image of a
product-state $\mathit{Post}((p,P)):=\{(p',P')\mid \exists
a\in\Sigma:(p,a,p')\in \delta, P'=\{p''\mid \exists p\in P : (p,a,p'')\in
\delta\}\}$.

\begin{algorithm}
    \KwIn{FA $\mathcal{A} = (\Sigma, Q_\mathcal{A}, \delta_\mathcal{A}, I_\mathcal{A}, F_\mathcal{A})$, $\mathcal{B} = (\Sigma, Q_\mathcal{B}, \delta_\mathcal{B}, I_\mathcal{B}, F_\mathcal{B})$. A relation ${\preceq} \in \rel{(\mathcal{A}\cup\mathcal{B})}$.}
    \KwOut{$\true$ if $\lang{\mathcal{A}}\subseteq \lang{\mathcal{B}}$. Otherwise, $\false$.}
  \caption{\textit{Language Inclusion Checking}}
  \label{algorithm:languageinclusion}
    \lIf{there is an accepting product-state in $\{(i, I_\mathcal{B})\mid i\in I_\mathcal{A}\}$}{\KwRet{$\false$}}\;
    $\Processed$:=$\emptyset$\;
    $\Next$:= $\initialize(\{(i, \minimize(I_\mathcal{B}))\mid i\in I_\mathcal{A}\})$\;
    \While{$\Next\neq \emptyset$}{
        Pick and remove a product-state $(r,R)$ from $\Next$ and move it to $\Processed$\;
        \ForEach{$(p,P) \in \{ (r',\minimize(R')) \mid (r',R')\in \mathit{Post}((r,R)) \}$}{
            \lIf{$(p,P)$ is an accepting product-state}{\KwRet{$\false$}}\;
            \ElseIf{$\neg\exists p'\in P$ s.t. $p\preceq p'$}{
                \If{$\neg\exists (s,S) \in \Processed \cup \Next$ s.t. $p\preceq s \wedge S \preceq^{\forall\exists} P$}{
                    Remove all $(s,S)$ from $\Processed \cup \Next$ s.t. $s \preceq p \wedge P \preceq^{\forall\exists} S$\;
                    Add $(p,P)$ to $\Next$\;
                }
            }
        }
    }
    \KwRet{$\true$}
\end{algorithm}

\paragraph{Correctness:} Define $\dist(P)\in \nat \cup \{\infty\}$
as the length of the shortest word in the language of the product-state $P$ or
$\infty$ if the language of $P$ is empty. The value $\dist(\mathit{PStates})\in
\nat \cup \{\infty\}$ is the length of the shortest word in the language of some
product-state in $\mathit{PStates}$ or $\infty$ if $\mathit{PStates}$ is empty.
The predicate $\incl(\mathit{PStates})$ is true iff for all product-states
$(p,P)$ in $\mathit{PStates}$, $\lang{\mathcal{A}}(p) \subseteq
\lang{\mathcal{B}}(P)$. The correctness of
Algorithm~\ref{algorithm:languageinclusion} can now be proved in a very similar
way to Algorithm~\ref{algorithm:universality}, using the
invariants below:\begin{enumerate}

  \item $\neg\incl(\Processed \cup \Next) \implies \neg\incl(\{(i,
  I_\mathcal{B})\mid i\in I_\mathcal{A}\})$.

  \item $\neg\incl(\{(i, I_\mathcal{B})\mid i\in I_\mathcal{A}\}) \implies
  \dist(\Processed) > \dist(\Next)$.

\end{enumerate}

\begin{theorem}\label{theorem:Inclusioncorrectness} Algorithm
\ref{algorithm:languageinclusion} terminates, and returns
$\true$ iff $\L(\A)\subseteq \L(\B)$. \end{theorem}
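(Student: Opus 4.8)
The plan is to mirror, step for step, the correctness proof of Algorithm~\ref{algorithm:universality} (Theorem~\ref{theorem:correctness}), lifting everything from macro-states to product-states and replacing inclusion of $\lang{\mathcal{A}}(\cdot)$ by inclusion of $\lang{\mathcal{A},\mathcal{B}}(\cdot)$ throughout. First I would prove termination, exactly as in Lemma~\ref{lemma:termination}: once a product-state $(p,P)$ has been added to $\Next$ on the last line of the loop body, every later configuration contains some $(s,S)\in\Processed\cup\Next$ with $p\preceq s$ and $S\preceq^{\forall\exists}P$ --- initially $(s,S)=(p,P)$, and whenever such a witness is deleted on the ``Remove all $(s,S)$'' line it is deleted precisely because the freshly added $(p',P')$ satisfies $s\preceq p'$ and $P'\preceq^{\forall\exists}S$, so by transitivity of $\preceq$ and of $\preceq^{\forall\exists}$ the pair $(p',P')$ is a new witness. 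Hence the guard $\neg\exists(s,S)\in\Processed\cup\Next.\ p\preceq s\wedge S\preceq^{\forall\exists}P$ can never again be satisfied for $(p,P)$, and since there are only finitely many product-states the while loop halts.

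Next I would show that the two invariants listed just before the theorem are maintained, by induction on the number of executed iterations. They hold right after line~3 because $\initialize$ and $\minimize$ discard only product-states whose $\lang{\mathcal{A},\mathcal{B}}$-language is contained in that of a retained one: for $\minimize$ this is Lemma~\ref{lemma:language1_p} applied through $P\preceq^{\forall\exists}P\setminus\{p_1\}$, and for the clause ``$\exists p'\in P.\ p\preceq p'$'' of $\initialize$ (Optimisation~1(b)) it is because then $\lang{\mathcal{A}}(p)\subseteq\lang{\mathcal{B}}(P)$, so $\lang{\mathcal{A},\mathcal{B}}(p,P)=\emptyset$ and no accepting product-state is reachable from $(p,P)$. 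The inductive step for the first invariant is identical to that for the first invariant of Lemma~\ref{lemma:invariant}: taking post-images and minimising preserves $\incl$ by Lemma~\ref{lemma:language1_p}, and $\neg\incl$ is inherited trivially. For the second invariant I would replay the two-case analysis from the proof of Lemma~\ref{lemma:invariant}, now with $\dist$ read as ``length of the shortest word in $\lang{\mathcal{A},\mathcal{B}}(\cdot)$'', using Lemma~\ref{lemma:language1_p} together with the immediate fact that $\lang{\mathcal{A},\mathcal{B}}(p,P)\subseteq\lang{\mathcal{A},\mathcal{B}}(r,R)$ implies $\dist((r,R))\leq\dist((p,P))$: either moving $(r,R)$ to $\Processed$ on line~5 does not lower $\dist(\Processed)$, or it lowers it to $\dist((r,R))$, in which case $(r,R)$ has a successor that is either an accepting product-state (whereupon the algorithm returns $\false$) or is strictly closer to acceptance and is then either kept in $\Next$ or already subsumed by a member of $\Next^{\mathit{old}}$ --- never of $\Processed^{\mathit{old}}$, by the distance bound --- while the ``Remove all $(s,S)$'' and ``Add $(p,P)$'' steps can only raise $\dist(\Processed)$ and so cannot invalidate the strict inequality.

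The conclusion then reads off exactly as in Theorem~\ref{theorem:correctness}. The algorithm returns $\false$ only when an accepting product-state appears among the minimised successors of some processed $(r,R)$; by Lemma~\ref{lemma:language1_p} the corresponding un-minimised successor is accepting too, so $\neg\incl(\Processed\cup\Next)$ holds, and the first invariant gives $\neg\incl(\{(i,I_{\mathcal{B}})\mid i\in I_{\mathcal{A}}\})$, i.e.\ $\lang{\mathcal{A}}(i)\not\subseteq\lang{\mathcal{B}}(I_{\mathcal{B}})$ for some $i\in I_{\mathcal{A}}$, which is precisely $\L(\A)\not\subseteq\L(\B)$. It returns $\true$ only once $\Next=\emptyset$, whence $\dist(\Processed)>\dist(\Next)$ fails, and the second invariant then forces $\incl(\{(i,I_{\mathcal{B}})\mid i\in I_{\mathcal{A}}\})$, i.e.\ $\lang{\mathcal{A}}(i)\subseteq\lang{\mathcal{B}}(I_{\mathcal{B}})$ for every $i\in I_{\mathcal{A}}$, which is $\L(\A)\subseteq\L(\B)$.

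The one point that is not a purely mechanical translation of the universality proof, and which I therefore expect to need the most care, is the soundness of the extra pruning clause of $\initialize$ (Optimisation~1(b)): one must check that discarding a product-state $(p,P)$ with $p\preceq p'$ for some $p'\in P$ never loses a reachable accepting product-state. This rests on $\lang{\mathcal{A},\mathcal{B}}(p,P)=\emptyset$ --- which follows from $\lang{\mathcal{A}}(p)\subseteq\lang{\mathcal{B}}(p')\subseteq\lang{\mathcal{B}}(P)$ since ${\preceq}\in\rel{(\mathcal{A}\cup\mathcal{B})}$ --- together with the fact that an empty product-state language is inherited by every successor. Everything else is a routine lifting of Lemmas~\ref{lemma:termination} and~\ref{lemma:invariant} from macro-states to product-states.
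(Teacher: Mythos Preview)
Your proposal is correct and follows exactly the approach the paper takes: the paper states the two invariants for product-states and simply remarks that correctness ``can now be proved in a very similar way to Algorithm~\ref{algorithm:universality},'' i.e., by replaying Lemmas~\ref{lemma:termination} and~\ref{lemma:invariant} and then Theorem~\ref{theorem:correctness} with product-states in place of macro-states. Your identification of Optimisation~1(b) as the single genuinely new ingredient to justify (via $\lang{\mathcal{A},\mathcal{B}}(p,P)=\emptyset$ whenever $p\preceq p'$ for some $p'\in P$) is also on the mark; note only that this clause is applied not just in $\initialize$ but again on line~8 of the loop body, where the same argument disposes of it.
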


\section{Universality and Language Inclusion of Tree
Automata}\label{sec:TA-main}

To optimise universality and inclusion checking on word automata, we used
relations that imply language inclusion. For the case of universality and
inclusion checking on tree automata, we now propose to use relations that imply
inclusion of languages of contexts (context is the notion of a tree with ``holes'' instead of (all) leaves defined in Chapter~\ref{chapter:ta_reduction})
that are accepted from tuples of tree automata states. 
As we will see, a
relation that fits here best is upward simulation induced by identity
introduced in Chapter~\ref{chapter:ta_reduction}.  Notice that in contrast to the notion of a language
accepted from a state of a word automaton, which refers to possible ``futures''
of the state, the notion of a language accepted at a state of a TA refers to
possible ``pasts'' of the state. Our notion of languages of contexts accepted
from tuples of tree automata states speaks again about the future of states,
which turns out useful when trying to optimise the (antichain-based) subset
construction for TA. Below, we state formal definitions of the
notions needed within this chapter.

The language of $\A$ accepted from a tuple
$(q_1,\ldots,q_n)$ of states is the set of contexts $\L^\hole(\A)(q_1,\ldots,q_n)=\{t\in T^\hole\mid
t(q_1,\ldots,q_n)\Longrightarrow{q} \text{ for some } q\in F\}$. We
define the language accepted from a tuple of macro-states
$(P_1,\ldots,P_n)\subseteq Q^n$ as the set $\L^\hole(\A)(P_1,\ldots,P_n)=
\bigcup\{\L^\hole(\A)(q_1,\ldots,q_n)\mid (q_1,\ldots,q_n)\in
P_1\times\ldots\times P_n\}$.  We define $\post_a \tup q n := \{q\mid\tup q
n\xtr a q\}$. For a tuple of macro-states, we let $\post_a\tup P n :=
\bigcup\{\post_a\tup q n\mid \tup q n \in P_1\times\cdots\times P_n\}$.

Let us use $t^\hole$ to
denote the context that arises from a tree $t\in T(\Sigma)$ by replacing all
the leaf symbols of $t$ by $\hole$ and let for every leaf symbol $a\in\Sigma$, $I_a =
\{q\mid\ \xtr a q\}$ is the so called $a$-initial macro-state.
Languages accepted \emph{at} final states of $\A$ correspond to the languages accepted \emph{from}
tuples of initial macro-states of $\A$ as stated in Lemma~\ref{lemma:TAlanguages}.

\begin{lemma}\label{lemma:TAlanguages}
Let $t$ be a tree over $\Sigma$ with leaves labelled by $a_1,\ldots,a_n$.
Then $t\in\L(\A)$ if and only if
$t^\hole\in\L^\hole(\A)(I_{a_1},\ldots,I_{a_n})$.
\end{lemma}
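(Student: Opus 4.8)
The plan is to prove both implications by decomposing an accepting computation of $\A$ on $t$ into its behaviour at the leaves and its behaviour on the context $t^\hole$ obtained by cutting the leaves off. First I would record the two structural facts that make the decomposition work: $t^\hole$ has exactly one hole at the position of each leaf of $t$, and, reading leaves (resp.\ holes) from left to right, the $i$-th leaf of $t$ carries the symbol $a_i$ while the $i$-th hole is the $i$-th argument position in the notation $t^\hole(q_1,\ldots,q_n)$. Moreover, the state that a run of $\A$ can place at a leaf labelled $a_i$ is, by the very definition of $I_{a_i}=\{q\mid\ \xtr{a_i}q\}$, exactly an arbitrary member of $I_{a_i}$; and everything sitting strictly above the leaves is precisely a derivation on the context $t^\hole$.

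For the forward direction I would assume $t\in\L(\A)$ and fix an accepting derivation $t\Longrightarrow q$ with $q\in F$. Let $q_i$ be the state that this derivation assigns to the $i$-th leaf of $t$; since that leaf is labelled $a_i$, by definition of $I_{a_i}$ we get $q_i\in I_{a_i}$. Removing the leaf steps and keeping the rest yields a derivation $t^\hole(q_1,\ldots,q_n)\Longrightarrow q$ on the context, hence $t^\hole\in\L^\hole(\A)(q_1,\ldots,q_n)$, and therefore $t^\hole\in\L^\hole(\A)(I_{a_1},\ldots,I_{a_n})$ because $(q_1,\ldots,q_n)\in I_{a_1}\times\cdots\times I_{a_n}$. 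For the converse, from $t^\hole\in\L^\hole(\A)(I_{a_1},\ldots,I_{a_n})$ I obtain a tuple $(q_1,\ldots,q_n)\in I_{a_1}\times\cdots\times I_{a_n}$ with $t^\hole(q_1,\ldots,q_n)\Longrightarrow q$ for some $q\in F$. Since each $q_i\in I_{a_i}$, the leaf labelled $a_i$ may legitimately start a run in state $q_i$; prepending these leaf steps at the corresponding holes turns the context derivation into a derivation $t\Longrightarrow q$, so $t\in\L(\A)$.

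The only real content of both directions is the gluing property: a derivation on $t$ is exactly the composition of the $n$ single leaf steps (each producing some $q_i$) with a derivation that plugs the produced states into the holes of $t^\hole$. I expect this to be the one point that needs care, but it follows directly from the definitions of runs/derivations and of the substitution $t^\hole(q_1,\ldots,q_n)$ given in Chapter~\ref{chapter:ta_reduction}; if a fully formal argument is wanted it can be carried out by a routine structural induction on $t$, the base case being a single leaf $t=a_1$, where $t^\hole=\hole$ and $I_{a_1}$ is by definition the set of states reachable by a leaf step, so both sides reduce to $I_{a_1}\cap F\neq\emptyset$. I do not anticipate any genuine obstacle.
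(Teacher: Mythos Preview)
Your argument is correct and is exactly the natural unfolding of the definitions; the paper itself states this lemma without proof, treating it as an immediate consequence of the definitions of $I_a$, $t^\hole$, and $\L^\hole(\A)(\cdot)$. Your write-up therefore supplies strictly more detail than the paper does, and the decomposition of a run on $t$ into the $n$ leaf steps plus a derivation on $t^\hole$ is precisely the intended reading.
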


\subsection{The Role of Upward Simulation}

We now work towards defining suitable relations on states of TA allowing us to
optimise the universality and inclusion checking. We extend relations
${\preceq}\in Q\times Q$ on states to tuples of states such that $\tup q n
\preceq\tup r n$ iff $q_i\preceq r_i$ for each $1\leq i\leq n$.  We define the
set $\rel \A$ of relations that imply inclusion of languages of tuples of
states such that ${\preceq}\in \rel \A$ iff $\tup q n\preceq\tup r
n$ implies $\L^\hole(\A)\tup q n\subseteq \L^\hole(\A)\tup r n$.

A relation that satisfies the above property is the upward simulation induced by identity defined in Chapter~\ref{chapter:ta_reduction}. For convenience, in this chapter, we will call it simply upward simulation. We note that it can be equivalently  defined in a non-parametric way as follows: 
An \emph{upward simulation} on $\A$ is a
relation ${\preceq}\subseteq Q\times Q$ such that if $q\preceq r$, then (1)
$q\in F\implies r\in F$ and (2) if $\tup q n \xtr a q'$ where $q = q_i$, then
$(q_1,\ldots,q_{i-1},r,q_{i+1},\ldots,q_n)\xtr a r'$ where $q'\preceq r'$.
\footnote{Upward simulations parametrised by a
downward simulation greater than the identity cannot be used in our framework
since they do not generally imply inclusion of languages of tuples of states.}

\begin{lemma}\label{lemma:TAsimproperty}For the maximal upward simulation $\preceq$ on
$\mathcal{A}$, we have ${\preceq}\in\rel \A$.\end{lemma}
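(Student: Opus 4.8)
The plan is to show that every upward simulation $\preceq$ on $\A$ (hence, in particular, the maximal one) belongs to $\rel{\A}$, i.e. that $\tup q n \preceq \tup r n$ implies $\L^\hole(\A)\tup q n \subseteq \L^\hole(\A)\tup r n$. First I would reduce the statement to the case in which the two tuples differ in a single coordinate: since $\subseteq$ is transitive, it suffices to prove that for every index $i$ and all states $q\preceq r$,
\[
\L^\hole(\A)(q_1,\ldots,q_{i-1},q,q_{i+1},\ldots,q_n)\subseteq\L^\hole(\A)(q_1,\ldots,q_{i-1},r,q_{i+1},\ldots,q_n),
\]
and then chain $n$ such inclusions, replacing $q_1$ by $r_1$, then $q_2$ by $r_2$, and so on; the $j$-th step is available because $q_j\preceq r_j$ holds for every $j$ by the assumption $\tup q n \preceq \tup r n$.

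For the single-coordinate inclusion, fix a context $t\in T^\hole$ and an accepting run $\pi$ witnessing $t(q_1,\ldots,q_{i-1},q,q_{i+1},\ldots,q_n)\Longrightarrow q_F$ for some $q_F\in F$; so $\pi$ labels every node of $t$ by a state, assigns $q$ to the leaf $v_0$ carrying the $i$-th hole, assigns $q_j$ to the leaf carrying the $j$-th hole for $j\neq i$, assigns to each ordinary leaf $a$ a state with $\xtr a {\pi(\cdot)}$, and respects $\delta$ at every internal node, with $\pi$ of the root equal to $q_F$. Let $v_0,v_1,\ldots,v_m$ be the path from $v_0$ up to the root $v_m$. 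I would prove by induction on $j$ that there is a state $s_j$ with $\pi(v_j)\preceq s_j$ together with a run on $t$, in which the $i$-th hole leaf is relabelled by $r$ and which coincides with $\pi$ on every subtree hanging off the portion $v_0,\ldots,v_j$ of the path, reaching $s_j$ at $v_j$. The base case $j=0$ uses $s_0=r$ and $q=\pi(v_0)\preceq r$. For the step $v_j\to v_{j+1}$, suppose $v_j$ is the $l$-th child of $v_{j+1}$, whose children receive states $(u_1,\ldots,u_k)$ under $\pi$ with $u_l=\pi(v_j)$ and $(u_1,\ldots,u_k)\xtr b \pi(v_{j+1})$ for the symbol $b$ labelling $v_{j+1}$; applying clause~(2) of the definition of upward simulation to $u_l=\pi(v_j)\preceq s_j$ yields $(u_1,\ldots,u_{l-1},s_j,u_{l+1},\ldots,u_k)\xtr b s_{j+1}$ with $\pi(v_{j+1})\preceq s_{j+1}$. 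Since the $i$-th hole lies below the $l$-th child, the subtrees rooted at the other children of $v_{j+1}$ do not contain it, so their original $\pi$-labelling still supplies the states $u_{l'}$ for $l'\neq l$ in the modified run, and the induction goes through. At $j=m$ we get $s_m$ with $q_F=\pi(v_m)\preceq s_m$, and clause~(1) of the definition of upward simulation gives $s_m\in F$; hence the modified run witnesses $t\in\L^\hole(\A)(q_1,\ldots,q_{i-1},r,q_{i+1},\ldots,q_n)$, as required.

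I expect the main obstacle to be purely organisational rather than mathematical: making the nested induction precise requires carefully delimiting which nodes of the run are altered (exactly those on the path $v_0,\ldots,v_m$) versus which keep their $\pi$-labels (all off-path subtrees, including every other hole), and treating the degenerate situations — the $i$-th hole being the root itself (then $m=0$ and only clause~(1) is used) and a node $v_{j+1}$ whose unique child is $v_j$. None of these cause real difficulty, but they should be addressed explicitly. Once the single-coordinate inclusion is established, the reduction of the first paragraph yields ${\preceq}\in\rel{\A}$, and specialising to the maximal upward simulation proves the lemma.
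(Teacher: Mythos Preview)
Your proof is correct. Both you and the paper use the same core idea of replacing one coordinate at a time, but you organise the argument differently. The paper first establishes a tuple-level one-step property --- if $\tup q n \xtr a q'$ and $\tup q n \preceq \tup r n$ then $\tup r n \xtr a r'$ for some $r'$ with $q'\preceq r'$ --- by swapping $q_1$ for $r_1$, then $q_2$ for $r_2$, etc., chaining $q'\preceq s_1\preceq\cdots\preceq s_n$ via the \emph{transitivity of the maximal upward simulation}; the language inclusion then follows by an (implicit) induction on the context. You instead swap coordinates at the level of full contexts, doing the path-induction for a single coordinate and then chaining the resulting language inclusions via transitivity of $\subseteq$. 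Your route avoids invoking transitivity of $\preceq$ and therefore works for \emph{any} upward simulation, not only the maximal one, which is a mild bonus; the paper's route has the advantage of isolating a clean one-step lemma that can be reused elsewhere.
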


\begin{proof}
We first show that the maximal upward simulation $\preceq$
has the following property: If $\tup q n \xtr a q'$ in $\A$, then for every $\tup r n$
with $\tup q n \preceq\tup r n$, there is $r'\in Q$ such that $q'\preceq r'$ and
$\tup r n \xtr a r'$. From $\tup q n \xtr a q'$ and $q_1 \preceq
r_1$, we have that there is some rule $(r_1,q_2,\ldots,q_n)\xtr a s_1$ such that
$q' \preceq s_1$. From the existence of $(r_1,q_2,\ldots,q_n)\xtr a s_1$ and
from $q_2 \preceq r_2$, we then get that there is some rule
$(r_1,r_2,q_3,\ldots,q_n)\xtr a s_2$ such that $s_1 \preceq s_2$, etc. Since the
maximal upward simulation is transitive \cite{abdulla:computing}, we obtain the
property mentioned above. This in turn implies Lemma~\ref{lemma:TAsimproperty}.
\end{proof}

\subsection{Tree Automata Universality Checking}

We now show how upward simulations can be used for optimised universality
checking on tree automata. Let $\A = (\Sigma,Q,\Delta,F)$ be a tree automaton.
We define $T^\hole_n(\Sigma)$ as the set of all contexts over $\Sigma$ with
$n$ leaves. We say that an $n$-tuple $\tup q n$ of states of $\A$ is universal
if $\L^\hole(\A)\tup q n = T^\hole_n(\Sigma)$, this is, all contexts with $n$
leaves constructable over $\Sigma$ can be accepted from $\tup q n$. A set of
macro-states $\mstates$ is universal if all tuples in $\mstates^*$ are
universal. From Lemma~\ref{lemma:TAlanguages}, we can deduce that $\A$ is
universal (i.e., $\L(\A) = T(\Sigma)$) if and only if $\{I_a\mid a\in\Sigma_0\}$
is universal.

The following Lemma allows us to design a new TA universality checking algorithm
in a similar manner to Algorithm~\ref{algorithm:universality} using
Optimisations 1 and 2 from~\mbox{Section}~\ref{sec:universality}.

\begin{lemma}\label{lemma:TAlanguage1}For a given ${\preceq}\in \rel \A$ and two
tuples of macro-states of $\A$, if $\tup R n \preceq^{\forall\exists} \tup
P n$, then $\L^\hole(\A)\tup R n \subseteq \L^\hole(\A)\tup P n$. \end{lemma}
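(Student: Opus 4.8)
The plan is to reduce the statement about tuples of macro-states to the statement about tuples of individual states that was already established. Recall the definitions: $\L^\hole(\A)\tup P n = \bigcup\{\L^\hole(\A)\tup q n \mid \tup q n \in P_1\times\cdots\times P_n\}$, and the hypothesis $\tup R n \preceq^{\forall\exists}\tup P n$ unpacks, by the $\preceq^{\forall\exists}$ notation lifted componentwise, as: for every index $i$ and every state $q_i\in R_i$ there is a state $r_i\in P_i$ with $q_i\preceq r_i$.

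First I would take an arbitrary context $t\in\L^\hole(\A)\tup R n$. By the definition of the language of a tuple of macro-states, there is a tuple $\tup q n\in R_1\times\cdots\times R_n$ such that $t\in\L^\hole(\A)\tup q n$. Next, applying the hypothesis componentwise, I would pick for each $i$ a state $r_i\in P_i$ with $q_i\preceq r_i$; this yields a tuple $\tup r n\in P_1\times\cdots\times P_n$ with $\tup q n\preceq\tup r n$ (using the componentwise extension of $\preceq$ to tuples). Since ${\preceq}\in\rel\A$, by the very definition of $\rel\A$ we get $\L^\hole(\A)\tup q n\subseteq\L^\hole(\A)\tup r n$, hence $t\in\L^\hole(\A)\tup r n$. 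Finally, because $\tup r n\in P_1\times\cdots\times P_n$, we have $\L^\hole(\A)\tup r n\subseteq\L^\hole(\A)\tup P n$, so $t\in\L^\hole(\A)\tup P n$. As $t$ was arbitrary, $\L^\hole(\A)\tup R n\subseteq\L^\hole(\A)\tup P n$.

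There is essentially no hard part here: the lemma is a routine unfolding of definitions, the only thing to be careful about being that $\preceq^{\forall\exists}$ and $\preceq$ on tuples are interpreted componentwise exactly as defined earlier in the chapter, so that choosing the witnessing states $r_i$ independently for each coordinate is legitimate. (The analogous statement for word automata is Lemma~\ref{lemma:language1}, whose proof follows the same pattern via $\rel\A$, so this can also simply be cited as the tree-automata counterpart.) No simulation-specific reasoning is needed; the whole strength of the argument is packaged into the membership ${\preceq}\in\rel\A$, which for the maximal upward simulation is supplied by Lemma~\ref{lemma:TAsimproperty}.
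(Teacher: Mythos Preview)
Your proof is correct and is exactly the natural unfolding-of-definitions argument one would expect; the paper in fact states Lemma~\ref{lemma:TAlanguage1} without proof, treating it as the obvious tree-automata analogue of Lemma~\ref{lemma:language1}. Your interpretation of $\tup R n \preceq^{\forall\exists}\tup P n$ as the componentwise lifting of $\preceq^{\forall\exists}$ to tuples of macro-states is the intended one, and the key step---choosing the witnessing $r_i\in P_i$ independently for each coordinate and then invoking the defining property of $\rel\A$---is precisely right.
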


Algorithm~\ref{algorithm:TAuniversality} describes our simulation enhanced antichain approach to checking
universality of tree automata in pseudocode. It resembles closely Algorithm
\ref{algorithm:universality}. There are two main differences: (1) The initial
value of the $\Next$ set is the result of applying the function $\initialize$ to
the set $\{\minimize(I_a)\mid a\in\Sigma_0\}$. $\initialize$ returns the set of
all macro-states in $\{\minimize(I_a)\mid a\in\Sigma_0\}$, which are minimal
w.r.t. $\preceq^{\forall\exists}$ (i.e., those macro states with the best chance
of finding a counterexample to universality). (2) The computation of the
$\post$-image of a set of macro-states is a bit more complicated. More
precisely, for each symbol $a\in\Sigma_n,n\in\nat$, we have to compute the post
image of each $n$-tuple of macro-states from the set. We design the algorithm
such that we avoid computing the $\post$-image of a tuple more than once. We
define the $\post$-image  $\post(\mstates)(R)$ of a set of macro-states
$\mstates$ w.r.t. a macro-states $R\in\mstates$. It is the set of all
macro-states $P = \post_a\tup P n$ where $a\in\Sigma_n,n\in\nat$ and $R$ occurs
at least once in the tuple $\tup P n\in \mstates^*$. Formally,
$\post(\mstates)(R) = \bigcup_{a\in\Sigma}\{\post_a\tup P n \mid n=\#(a),
P_1,\ldots,P_n \in \mstates, R\in\{P_1,\ldots,P_n\}\}$.

\begin{algorithm}[t]
    \KwIn{A tree automaton $\mathcal{A} = (\Sigma, Q, \Delta, F)$ and a
relation ${\preceq}\in\rel \A$.}
    \KwOut{$\true$ if $\mathcal{A}$ is universal. Otherwise, $\false$.}
  \caption{\textit{Tree Automata Universality Checking}}
  \label{algorithm:TAuniversality}
    \lIf{$\exists a\in \Sigma_0$ such that  $I_a$ is rejecting}{\KwRet{$\false$}}\;
    $\Processed$:=$\emptyset$\;
    $\Next$:=  $\initialize\{\minimize(I_a) \mid a\in\Sigma_0\}$\;
    \While{$\Next\neq \emptyset$}{
        Pick and remove a macro-state $R$ from $\Next$ and move it to $\Processed$\;
        \ForEach{$P \in \{ \minimize(R') \mid R' \in \mathit{Post}(\Processed)(R) \}$}{
            \lIf{$P$ is a rejecting macro-state}{\KwRet{$\false$}}\;
            \ElseIf{$\neg\exists Q \in \Processed \cup \Next$ s.t. $Q \preceq^{\forall\exists} P$}{\label{alg:TAtrick2start}
                Remove all $Q$ from $\Processed \cup \Next$ s.t. $P \preceq^{\forall\exists} Q$\;
                Add $P$ to $\Next$;
            }

        }
    }
    \KwRet{$\true$}
\end{algorithm}

\subsection{Correctness of the TA Universality Checking}
\label{sec:TAUnivCorrectness}

In this section, we prove correctness of
Algorithm~\ref{algorithm:TAuniversality} in a very similar way to
Algorithm~\ref{algorithm:universality}, using suitably modified notions of
distances and ranks. Let $\A = (Q, \Sigma,\Delta, F)$ be a TA. For $n\geq 0$ and
an $n$-tuple of macro-states $\vect Q n$ where $Q_i \subseteq Q$ for $1 \leq
i \leq n$, we let $\mathbf{Dist}(Q_1,\ldots,Q_n)=0$ iff $Q_i \cap F = \emptyset$
for some $i\in\{1,\ldots,n\}$. We define $\mathbf{Dist}(Q_1,\ldots,Q_n) = k \in
\nat^+ \cup \{ \infty \}$ iff $Q_i \subseteq F$ for all $i\in\{1,\ldots,n\}$ and
$k = \min(\{ |t| \mid t \in T_n^\hole(\Sigma) \wedge t \not\in
\L^\hole(\A)(Q_1,\ldots,Q_n) \})$. Here, $|t|$ is the number of nodes of $t$ and we assume $\min(\emptyset) = \infty$.
For a set $\mathit{MStates}$ of macro-states over $Q$, we define the measure
$\mathbf{Rank}(\mathit{MStates}) = \min(\{ \mathbf{Dist}(Q_1,\ldots,Q_n) \mid n
\geq 1 \wedge \forall 1 \leq i \leq n: Q_i \in \mathit{MStates} \})$ and the predicate
$\mathbf{Univ}(\mathit{MStates}) ~\Longleftrightarrow~
\mathbf{Rank}(\mathit{MStates})=\infty$.

\begin{lemma}\label{theorem:TAUnivInvariant} The below two loop invariants hold
in Algorithm~\ref{algorithm:TAuniversality}:\begin{enumerate}

  \item $\neg\mathbf{Univ}(\Processed \cup \Next) \implies
  \neg\mathbf{Univ}(\{I_a \mid a \in \Sigma_0 \})$.

  \item $\neg\mathbf{Univ}(\{I_a \mid a \in \Sigma_0 \}) \implies
  \mathbf{Rank}(\Processed) > \mathbf{Rank}(\Processed \cup \Next)$.

\end{enumerate}
\end{lemma}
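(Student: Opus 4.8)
The plan is to prove the two loop invariants of Lemma~\ref{theorem:TAUnivInvariant} simultaneously by induction on the number of executions of the \textbf{while} loop, following exactly the scheme of the proof of Lemma~\ref{lemma:invariant} for the word case, with $\mathbf{Dist}$ and $\mathbf{Rank}$ taking over the roles played there by $\dist$. Write $\Processed^{\mathit{old}},\Next^{\mathit{old}}$ for the values of the two sets when control is at the loop test, and $\Processed^{\mathit{new}},\Next^{\mathit{new}}$ for their values one iteration later (assuming $\Next^{\mathit{old}}\neq\emptyset$ and that the body does not return $\false$). Three facts are used throughout. First, Lemma~\ref{lemma:TAlanguage1}, applied componentwise, says $\L^\hole(\A)$ is monotone in each argument with respect to $\preceq^{\forall\exists}$ on macro-states, hence so is $\mathbf{Dist}$: replacing a component of a tuple of macro-states by a $\preceq^{\forall\exists}$-smaller one never increases its $\mathbf{Dist}$. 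Second, $\minimize$ and $\initialize$ only replace macro-states by $\preceq^{\forall\exists}$-equivalent ones and drop $\preceq^{\forall\exists}$-dominated ones, so by Lemma~\ref{lemma:TAlanguage1} they change neither $\mathbf{Univ}$ nor $\mathbf{Rank}$ of a set of macro-states. Third, a ``one step'' property of $\mathbf{Dist}$: if a tuple $\tup Q n$ has no rejecting component then a shortest context outside $\L^\hole(\A)(\tup Q n)$ has more than one node, and peeling off its lowest internal node (an internal node $a$, of some rank $r\ge 1$, all of whose children are holes) yields a strictly smaller context witnessing that the tuple obtained by replacing the corresponding block of $r$ consecutive components by the single macro-state $\post_a$ of that block has strictly smaller $\mathbf{Dist}$; this $\post_a$-macro-state is exactly of the form enumerated on line~6 whenever the peeled block contains a macro-state just moved to $\Processed$.

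Base case and Invariant~1. At the first visit to the loop test $\Processed=\emptyset$ and $\Next=\initialize\{\minimize(I_a)\mid a\in\Sigma_0\}$, so the second fact gives $\mathbf{Univ}(\Next)\Leftrightarrow\mathbf{Univ}(\{I_a\mid a\in\Sigma_0\})$ (Invariant~1) and $\mathbf{Rank}(\emptyset)=\infty>\mathbf{Rank}(\Next)$ under the hypothesis $\neg\mathbf{Univ}(\{I_a\mid a\in\Sigma_0\})$ (Invariant~2), where finiteness of $\mathbf{Rank}(\Next)$ uses Lemma~\ref{lemma:TAlanguages}. For the inductive step of Invariant~1: if $\mathbf{Univ}(\Processed^{\mathit{old}}\cup\Next^{\mathit{old}})$ holds, the macro-state $R$ picked on line~5 is universal, hence so is $\post_a$ of every tuple over $\Processed^{\mathit{old}}\cup\{R\}$, hence so is every element of $\post(\Processed)(R)$, and $\minimize$ and the deletion on line~9 preserve universality, so $\mathbf{Univ}(\Processed^{\mathit{new}}\cup\Next^{\mathit{new}})$ holds and the implication is vacuous; if instead $\neg\mathbf{Univ}(\Processed^{\mathit{old}}\cup\Next^{\mathit{old}})$, then $\neg\mathbf{Univ}(\{I_a\mid a\in\Sigma_0\})$ by the induction hypothesis and the implication holds for the new values a fortiori.

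Inductive step of Invariant~2 (the heart of the proof). Assume $\neg\mathbf{Univ}(\{I_a\mid a\in\Sigma_0\})$, so by induction $\mathbf{Rank}(\Processed^{\mathit{old}})>\mathbf{Rank}(\Processed^{\mathit{old}}\cup\Next^{\mathit{old}})=:k$, which forces $k<\infty$. I first note $\mathbf{Rank}(\Processed^{\mathit{new}}\cup\Next^{\mathit{new}})\le k$: a minimal tuple realizing $k$ over $\Processed^{\mathit{old}}\cup\Next^{\mathit{old}}$ becomes, after replacing each of its components that no longer belongs to $\Processed^{\mathit{new}}\cup\Next^{\mathit{new}}$ by a $\preceq^{\forall\exists}$-dominator of it that does (one exists: a macro-state leaves $\Processed\cup\Next$ only by being deleted on line~9 in favour of a dominator that is added, and $R$ itself moves to $\Processed$), a tuple over $\Processed^{\mathit{new}}\cup\Next^{\mathit{new}}$ of no larger $\mathbf{Dist}$ by the first fact. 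Now split on whether moving $R$ into $\Processed$ lowers the rank of the processed set. If $\mathbf{Rank}(\Processed^{\mathit{old}}\cup\{R\})=\mathbf{Rank}(\Processed^{\mathit{old}})$, then, since $\Processed^{\mathit{new}}\subseteq\Processed^{\mathit{old}}\cup\{R\}$ and deleting macro-states cannot lower $\mathbf{Rank}$, $\mathbf{Rank}(\Processed^{\mathit{new}})\ge\mathbf{Rank}(\Processed^{\mathit{old}})>k\ge\mathbf{Rank}(\Processed^{\mathit{new}}\cup\Next^{\mathit{new}})$ and we are done. Otherwise $\mathbf{Rank}(\Processed^{\mathit{old}}\cup\{R\})=:\ell<\mathbf{Rank}(\Processed^{\mathit{old}})$ with $\ell<\infty$; pick a tuple $\bar Q$ over $\Processed^{\mathit{old}}\cup\{R\}$ realizing $\ell$, which must contain an occurrence of $R$ (else it would realize $\mathbf{Rank}(\Processed^{\mathit{old}})<\ell$). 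Its components are non-rejecting, so by the third fact, peeling the lowest internal node of a shortest context outside $\L^\hole(\A)(\bar Q)$ yields a tuple $\bar Q'$ over the macro-states of $\bar Q$ with one block replaced by $N:=\post_a(\text{block})$, and $\mathbf{Dist}(\bar Q')<\ell$. If the block contains $R$ then $N\in\post(\Processed)(R)$, so $\minimize(N)$ is examined on line~6 this iteration; if not, the block lies entirely in $\Processed^{\mathit{old}}$, and by monotonicity of $\post_a$ in its arguments (for upward simulation, the property established within the proof of Lemma~\ref{lemma:TAsimproperty}) together with the fact that every tuple over $\Processed^{\mathit{old}}$ uses a macro-state that has already been processed, a $\preceq^{\forall\exists}$-dominator of $N$ was enumerated on line~6 of an earlier iteration. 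Either way, unless $\minimize(N)$ is rejecting (whereupon the body returns $\false$, contrary to assumption), a $\preceq^{\forall\exists}$-dominator of $N$ is present in $\Processed^{\mathit{new}}\cup\Next^{\mathit{new}}$ (persistence of dominators through line~9, exactly the reasoning of Lemma~\ref{lemma:termination}). Replacing $N$ in $\bar Q'$ by such a dominator, and every other now-absent component by a dominator, gives a tuple over $\Processed^{\mathit{new}}\cup\Next^{\mathit{new}}$ of $\mathbf{Dist}\le\mathbf{Dist}(\bar Q')<\ell\le\mathbf{Rank}(\Processed^{\mathit{new}})$, so $\mathbf{Rank}(\Processed^{\mathit{new}}\cup\Next^{\mathit{new}})<\mathbf{Rank}(\Processed^{\mathit{new}})$, as required. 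Termination is then verbatim the argument of Lemma~\ref{lemma:termination} using finiteness of the set of macro-states.

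The main obstacle is the third fact and its use in the second case above: one must reason about the shape of a shortest non-accepted \emph{context} of a tuple of macro-states (using that a context's leaves are all holes and its internal nodes have positive rank), peel its lowest internal node, and identify the resulting $\post_a$-macro-state among those that line~6 actually generates --- handling separately the case where the peeled block avoids the freshly moved $R$ by invoking post-images generated in earlier iterations and monotonicity of $\post_a$. Keeping the ``persistence of $\preceq^{\forall\exists}$-dominators across line~9'' bookkeeping straight --- so that replacing deleted components by their dominators really does yield a tuple over the new sets of no larger $\mathbf{Dist}$, while the strict drop $\mathbf{Dist}(\bar Q')<\ell$ is retained --- is the delicate part; everything else is a faithful transcription of the word-automaton argument of Lemma~\ref{lemma:invariant}.
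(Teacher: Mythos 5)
Your proof is correct and follows essentially the same route as the paper's: induction over the loop iterations, Invariant~1 by preservation of universality under the post-image, $\minimize$, and the deletions of line~9, and Invariant~2 by splitting on whether the rank-realising tuple genuinely needs $R$ together with a one-step ``peeling'' argument showing that some successor enumerated on line~6 yields a tuple of strictly smaller $\mathbf{Dist}$. Your write-up is somewhat more explicit than the paper's sketch --- notably in handling the sub-case where the peeled lowest internal node of the shortest rejected context does not touch the occurrence of $R$, so that the relevant post-image was enumerated in an earlier iteration --- but the underlying ideas coincide.
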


\begin{proof} It is trivial to see that the invariants hold at the entry of the
loop, taking into account Lemma~\ref{lemma:TAlanguage1}. We show that the
invariants continue to hold when the loop body is executed from a configuration
of the algorithm in which the invariants hold. We use
$\Processed^{\mathit{old}}$ and $\Next^{\mathit{old}}$ to denote the values of
$\Processed$ and $\Next$ when the control is on line 4 before executing the loop
body and we use $\Processed^{\mathit{new}}$ and $\Next^{\mathit{new}}$ to denote
their values when the control gets back to line 4 after executing the loop body
once. We assume that $\Next^{\mathit{old}} \neq \emptyset$.

Let us start with Invariant 1. Assume first that
$\mathbf{Univ}(\Processed^{\mathit{old}} \cup \Next^{\mathit{old}})$ holds.
Then, the macro-state $R$ can appear within tuples constructed over $\Processed^{\mathit{old}}
\cup \Next^{\mathit{old}}$ which are universal only. In such a case, all
macro-states $Q$ reachable from all tuples $T$ built over
$\Processed^{\mathit{old}} \cup \Next^{\mathit{old}}$ are such that when we add
them to $\Processed^{\mathit{old}} \cup \Next^{\mathit{old}}$, the resulting set
will still allow building universal tuples only. Otherwise, one could take a
non-universal tuple containing some of the newly added macro-states $Q$, replace
$Q$ by the tuple $T$ from which it arose, and obtain a non-universal tuple over
$\Processed^{\mathit{old}} \cup \Next^{\mathit{old}}$, which is impossible.
Hence, the possibility of adding the new macro-states to $\Next$ on line 10
cannot cause non-universality of $\Processed^{\mathit{new}} \cup
\Next^{\mathit{new}}$, which due to Lemma~\ref{lemma:TAlanguage1} holds when
adding the minimised macro-states too. Moreover, removing elements from $\Next$
or $\Processed$ cannot cause non-universality either. Hence, Invariant 1 holds
over $\Processed^{\mathit{new}}$ and $\Next^{\mathit{new}}$ in this case. Next,
let us assume that $\neg \mathbf{Univ}(\Processed^{\mathit{old}} \cup
\Next^{\mathit{old}})$ holds. Then, $\neg\mathbf{Univ}(\{I_a \mid a \in \Sigma_0
\})$ holds, and hence Invariant 1 must hold for $\Processed^{\mathit{new}}$ and
$\Next^{\mathit{new}}$ too.

We proceed to Invariant 2 assuming that $\neg\mathbf{Univ}(\{I_a \mid a \in
\Sigma_0 \})$ holds (the other case is trivial). Hence,
$\mathbf{Rank}(\Processed^{\mathit{old}}) >
\mathbf{Rank}(\Processed^{\mathit{old}} \cup \Next^{\mathit{old}})$ holds. We
distinguish two cases:\begin{enumerate}

  \item In order to build a tuple $T$ over $\Processed^{\mathit{old}}$ and
  $\Next^{\mathit{old}}$ that is of $\mathbf{Dist}$ equal to
  $\mathbf{Rank}(\Processed^{\mathit{old}} \cup \Next^{\mathit{old}})$, one
  needs to use a macro-state $Q$ in $\Next^{\mathit{old}} \setminus \{ R \}$.
  The macro-state $Q$ stays in $\Next^{\mathit{new}}$ or is replaced by a
  $\preceq^{\forall\exists}$-smaller macro-state added to $\Next$ on line 10
  that, due to Lemma~\ref{lemma:TAlanguage1}, can only allow to build tuples of
  the same or even smaller $\mathbf{Dist}$. Likewise, the macro-states
  accompanying $Q$ in $T$ stay in $\Next^{\mathit{new}}$ or
  $\Processed^{\mathit{new}}$ or are replaced by
  $\preceq^{\forall\exists}$-smaller macro-states added to $\Next$ on line 10
  allowing to build tuples of the same or smaller $\mathbf{Dist}$, due to
  Lemma~\ref{lemma:TAlanguage1}. Hence, moving $R$ to $Processed$ on line 5
  cannot cause the invariant to break. Moreover, adding some further
  macro-states to $\Next$ on line 10 can only cause $\mathbf{Rank}(\Processed
  \cup \Next)$ to decrease while removing macro-states from $\Processed$ on line
  9 can only cause $\mathbf{Rank}(\Processed)$ to grow. Finally, replacing a
  macro-state in $\Next$ by a $\preceq^{\forall\exists}$-smaller one as a
  combined effect of lines 9 and 10 can again just decrease
  $\mathbf{Rank}(\Processed \cup \Next)$, due to Lemma~\ref{lemma:TAlanguage1}.
  Hence, in this case, Invariant 2 must hold over $\Processed^{\mathit{new}}$
  and $\Next^{\mathit{new}}$.

  \item One can build some tuple $T$ over $\Processed^{\mathit{old}}$ and
  $\Next^{\mathit{old}}$ that is of $\mathbf{Dist}$ equal to
  $\mathbf{Rank}(\Processed^{\mathit{old}} \cup \Next^{\mathit{old}})$ using
  $\Processed^{\mathit{old}} \cup \{ R \}$ only. In this case, there must be
  tuples constructable over $\Processed^{\mathit{old}} \cup \{ R \}$ and
  containing $R$ that are not universal. We can distinguish the following
  subcases:\begin{enumerate}

    \item From some of the tuples built over $\Processed^{\mathit{old}} \cup \{
    R \}$ and containing $R$, a non-accepting macro-state is reached via a
    single transition of $\A$, and the algorithm stops without getting back to
    line 4.

    \item Otherwise, some macro-states that appear in
    $\mathit{Post}(\Processed,R)$ and that will be added in the minimised form
    to $\Next$ must allow one to construct tuples which are of $\mathbf{Dist}$
    smaller than those based on $R$. This holds since if a macro-state $Q$ is
    reached from some tuple $T$ containing $R$ by a single transition, we can
    replace $T$ in larger tuples leading to non-acceptation by $Q$, and hence
    decrease the size of the context needed to reach non-acceptation. Taking
    into account Lemma~\ref{lemma:TAlanguage1} to cover the effect of the
    minimisation and using a similar reasoning as above for covering the effect
    of lines 9 and 10, it is then clear that Invariant~2 will remain to hold in
    this case.

  \end{enumerate}

\end{enumerate}
\end{proof}

We can now prove Lemma~\ref{lemma:TAUnivTermination} and Theorem~\ref{theorem:TAcorrectness} below in a very similar way as Lemma~\ref{lemma:termination} and Theorem~\ref{theorem:correctness}, respectively. 

\begin{lemma}\label{lemma:TAUnivTermination}
Algorithm~\ref{algorithm:TAuniversality} eventually terminates.\end{lemma}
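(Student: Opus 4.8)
The plan is to follow the proof of Lemma~\ref{lemma:termination} for the word case. Since every macro-state is a subset of $Q$, there are only finitely many macro-states, and the body of the while loop makes a single pass over the finite set $\{\minimize(R')\mid R'\in\post(\Processed)(R)\}$; hence it suffices to show that each iteration removes one macro-state from $\Next$ on line~5 and that no macro-state is ever added to $\Next$ more than once during the whole run. Only the second claim needs work.

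First I would prove the following invariant of the while loop: once a macro-state $P$ has been inserted into $\Next$ on line~10, from that moment on there is always some $Q\in\Processed\cup\Next$ with $Q\preceq^{\forall\exists}P$. This clearly holds right after the insertion (take $Q=P$). It is preserved by the loop body: line~5 only moves a macro-state from $\Next$ to $\Processed$, so it remains in $\Processed\cup\Next$; and a macro-state $Q$ leaves $\Processed\cup\Next$ on line~9 only because some $P'$ with $P'\preceq^{\forall\exists}Q$ is simultaneously added to $\Next$ on line~10. Thus, if a witness $Q$ for $P$ is deleted, it is immediately replaced by a new witness $P'$ satisfying $P'\preceq^{\forall\exists}Q\preceq^{\forall\exists}P$, and by transitivity of $\preceq^{\forall\exists}$ --- which follows from $\preceq$ being a preorder, recalling that the maximal upward simulation is transitive (as already used in the proof of Lemma~\ref{lemma:TAsimproperty}) --- we obtain $P'\preceq^{\forall\exists}P$, so the invariant is restored.

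Given the invariant, a macro-state $P$ cannot be added to $\Next$ twice: if it were added at some earlier point, then at every later configuration there is some $Q\in\Processed\cup\Next$ with $Q\preceq^{\forall\exists}P$, and line~8 forbids adding $P$ again while such a $Q$ is present. Therefore only finitely many insertions into $\Next$ take place over the entire run, counting the initialisation on line~3; since each iteration of the while loop removes exactly one macro-state from $\Next$ on line~5, and the loop continues only while $\Next\neq\emptyset$, the number of iterations is finite and the algorithm terminates.

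The points I would treat most carefully --- these are exactly the features that distinguish the tree-automata setting from the word case of Lemma~\ref{lemma:termination} --- are (i) that $\post(\Processed)(R)$ is finite and computable in finitely many steps, which holds because $\Processed$ is at every moment a finite set of subsets of $Q$ and each symbol $a\in\Sigma$ induces only finitely many $\#(a)$-tuples over $\Processed$; and (ii) the transitivity of $\preceq^{\forall\exists}$ invoked above. Both are routine, but should be stated explicitly rather than left implicit.
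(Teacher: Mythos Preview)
Your proposal is correct and follows essentially the same approach as the paper, which explicitly says the proof is done ``in a very similar way as Lemma~\ref{lemma:termination}''. Your argument---the invariant that once $P$ enters $\Next$ some $Q\preceq^{\forall\exists}P$ persists in $\Processed\cup\Next$, hence no macro-state is added twice---is exactly the content of the paper's proof of Lemma~\ref{lemma:termination}, and you are right that the tree-automata case adds only the routine checks that $\post(\Processed)(R)$ is finite and that $\preceq^{\forall\exists}$ is transitive.
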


\begin{theorem}\label{theorem:TAcorrectness}
Algorithm
\ref{algorithm:TAuniversality} always terminates, and returns $\true$ if and only if
the input tree automaton $\mathcal{A}$ is universal.
\end{theorem}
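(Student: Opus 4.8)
The plan is to follow the proof of Theorem~\ref{theorem:correctness} verbatim, substituting the tree-automata ingredients that are already available: Lemma~\ref{lemma:TAlanguage1} in place of Lemmas~\ref{lemma:language1}/\ref{lemma:distance}, Lemma~\ref{theorem:TAUnivInvariant} in place of Lemma~\ref{lemma:invariant}, Lemma~\ref{lemma:TAUnivTermination} in place of Lemma~\ref{lemma:termination}, and Lemma~\ref{lemma:TAlanguages} to translate universality of $\A$ into universality of the set $\{I_a\mid a\in\Sigma_0\}$. First I would dispatch termination: Lemma~\ref{lemma:TAUnivTermination} gives it (the argument being the one of Lemma~\ref{lemma:termination} --- once a macro-state $R$ enters $\Next$ there is permanently some $Q\in\Processed\cup\Next$ with $Q\preceq^{\forall\exists}R$, since $R$ either stays, moves to $\Processed$, or is replaced by such a $Q$, so the guard on line~10 forever forbids re-insertion of $R$; finitely many macro-states then bound the loop). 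Inspection of the pseudocode shows the only exits are the two $\false$-returns and the final $\true$-return, so it remains to show $\true\Rightarrow$ universal and $\false\Rightarrow$ non-universal.

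For the $\false$ branch: the algorithm answers $\false$ either because some $I_a$, $a\in\Sigma_0$, is rejecting, or because a minimised successor $P=\minimize(R')$ of a popped macro-state $R$ is rejecting. In the first case the trivial context $\hole$ is not accepted from the $1$-tuple $(I_a)$, so $\{I_a\mid a\in\Sigma_0\}$ is not universal and, by Lemma~\ref{lemma:TAlanguages}, $\A$ is not universal. In the second case, note that $\minimize$ only discards $\preceq$-dominated states, so $R'\preceq^{\forall\exists}P$ (and $P\preceq^{\forall\exists}R'$ since $P\subseteq R'$); hence Lemma~\ref{lemma:TAlanguage1} gives $\L^\hole(\A)(R')=\L^\hole(\A)(P)$, and since ``$P$ rejecting'' is equivalent to ``$\hole\notin\L^\hole(\A)(P)$'', $R'$ is rejecting too. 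Therefore $\neg\mathbf{Univ}(\Processed\cup\Next)$, so by Invariant~1 of Lemma~\ref{theorem:TAUnivInvariant} we get $\neg\mathbf{Univ}(\{I_a\mid a\in\Sigma_0\})$, and Lemma~\ref{lemma:TAlanguages} concludes that $\A$ is not universal.

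For the $\true$ branch: the algorithm answers $\true$ only upon re-entering the loop with $\Next=\emptyset$. Then $\mathbf{Rank}(\Processed\cup\Next)=\mathbf{Rank}(\Processed)$, so the inequality $\mathbf{Rank}(\Processed)>\mathbf{Rank}(\Processed\cup\Next)$ fails; by the contrapositive of Invariant~2 of Lemma~\ref{theorem:TAUnivInvariant} we obtain $\mathbf{Univ}(\{I_a\mid a\in\Sigma_0\})$, i.e.\ $\{I_a\mid a\in\Sigma_0\}$ is universal, whence $\A$ is universal by Lemma~\ref{lemma:TAlanguages}. Combining the two branches with the totality observation above yields the claimed equivalence.

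I expect the real difficulty to have been absorbed already into Lemma~\ref{theorem:TAUnivInvariant}, not into this theorem: the delicate points there are (i) that the restricted post-image $\mathit{Post}(\Processed)(R)$ --- tuples in which $R$ occurs while all other components are already in $\Processed$ --- still exposes every macro-state that matters for the $\mathbf{Rank}$ bookkeeping, and (ii) that replacing a macro-state by a $\preceq^{\forall\exists}$-smaller one (via $\minimize$ and via the pruning on lines~9--10) can only decrease, never increase, the $\mathbf{Dist}$ of any constructible tuple, which is exactly Lemma~\ref{lemma:TAlanguage1}. At the level of Theorem~\ref{theorem:TAcorrectness} itself, the only step needing a moment of care is the one flagged above: that $\minimize$ preserves the context language of a macro-state, so that ``rejecting'' is invariant under it and the $\false$-exit on the minimised successor really witnesses non-universality of the original one.
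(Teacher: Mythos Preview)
Your proposal is correct and follows essentially the same approach as the paper, which simply says the theorem is proved ``in a very similar way as Lemma~\ref{lemma:termination} and Theorem~\ref{theorem:correctness}'' using the tree-automata analogues (Lemmas~\ref{lemma:TAUnivTermination} and~\ref{theorem:TAUnivInvariant}). Your write-up is in fact more detailed than the paper's own---in particular your explicit use of Lemma~\ref{lemma:TAlanguages} to bridge universality of $\A$ and universality of $\{I_a\mid a\in\Sigma_0\}$, and your observation that $\minimize$ preserves the context language so rejection transfers from $P$ back to $R'$, are exactly the steps the paper leaves implicit.
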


\subsection{Downward Universality Checking with Antichains}

The \emph{upward universality} introduced above tree automata
automata conceptually corres\-ponds to the \emph{forward} universality
checking of finite word automata of
\cite{wulf:antichains,doyen:antichain} where also a dual \emph{backward}
universality checking is introduced. 
The backward universality algorithm from \cite{wulf:antichains,doyen:antichain} is  based on
computing the \emph{controllable predecessors} of the set of non-final states.
Controllable predecessors are the predecessors that can be forced by an input
symbol to continue into a given set of states. Then, the automaton is
non-universal iff the controllable predecessors of the non-final states cover
the set of initial states.

\emph{Downward universality checking} for tree automata as a dual approach to
upward universality checking is problematic since the controllable predecessors
of a set of states $s \subseteq Q$ of an TA $\mathcal{A}=(Q,\Sigma, F,\Delta)$
do not form a set of states, but a set of \emph{tuples} of states, i.e., for
$a\in\Sigma$,  $\mathit{CPre}_a(s) = \{ (q_1,\dotsc,q_n) \mid n \in \mathbb{N}
\wedge \forall q \in Q: \trans q n a q \in s \}$. Note that if we flatten the
set $\mathit{CPre}_a(s)$ to the set $\mathit{FCPre}_a(s)$ of states that appear
in some of the tuples of $\mathit{CPre}_a(s)$ and check that starting from leaf
rules the computation can be forced into some subset of $\mathit{FCPre_a}(s)$,
then this does not imply that the computation can be forced into some state of
$s$. That is because for any rule $\trans q n a q$, $q \in s$, not all of the
states $q_1, \dotsc, q_n$ may be reached.  Moreover, it is too strong to
require that starting from leaf rules, it must be possible to force the
computation into all states of $\mathit{FCPre}_a(s)$. Clearly, it is enough if
the computation starting from leaf rules can be forced into $s$ via some of the
vectors in $\mathit{CPre}_a(s)$, not necessarily all of them. Also, if we keep
$\mathit{CPre}_a(s)$ for $s \subseteq Q$ as a set of vectors, we also have to
define the notion of controllable predecessors for sets of vectors of states,
which is a set of vectors of vectors of states, etc. Clearly, such an approach
is not practical and does not even terminate. Yet, we feel that some further research on ways possibly circumventing this problems can be interesting as we discuss in Section~\ref{sec:conclusion}.


\subsection{Tree Automata Language Inclusion Checking}

We are interested in testing language inclusion of two tree automata $\A =
(\Sigma,Q_\A,\Delta_\A,F_\A)$ and $\B = (\Sigma,Q_\B,\Delta_\B,F_\B)$. From
Lemma~\ref{lemma:TAlanguages}, we have that $\L(\A)\subseteq\L(\B)$ if and only if for
every tuple  $a_1, \ldots, a_n$ of leaf symbols from $\Sigma_0$,
$\L^\hole(\A)(I_{a_1}^\A,\ldots,I^\A_{a_n}) \subseteq
\L^\hole(\B)(I_{a_1}^\B,\ldots,I^\B_{a_n})$. In other words, for any
$a_1,\ldots, a_n\in\Sigma_0$, every context that can be accepted from a tuple
of states from $I_{a_1}^\A\times\ldots\times I^\A_{a_n}$ can also be accepted
from a tuple of states from $I_{a_1}^\B\times\ldots\times I^\B_{a_n}$. This
justifies a similar use of the notion of product-states as in Section
\ref{sec:language}. We define the language of a tuple of product-states as
$\L^\hole(\A,\B)((q_1,P_1),\ldots,(q_n,P_n)) := \L^\hole(\A)\tup q n \setminus
\L^\hole(\B)\tup P n$. Observe that we obtain that $\L(\A)\subseteq\L(\B)$ iff
the language of every $n$-tuple (for any $n\in\nat$) of product-states from the
set $\{(i,I_a^\B)\mid a\in\Sigma_0,i\in I_a^\A\}$ is empty.

Our algorithm for testing language inclusion of tree automata will check whether
it is possible to reach a product-state of the form $(q,P)$ with $q\in F_\A$ and
$P\cap F_\B = \emptyset$ (that we call accepting) from a tuple of product-states
from $\{(i,I_a^\B)\mid a\in\Sigma_0,i\in I_a^\A\}$. The following lemma allows
us to use Optimisation 1(a) and Optimisation 2 from Section~\ref{sec:language}.

\begin{lemma}\label{lemma:TAlanguage_p3}
Given ${\preceq}\in\rel {(\A\cup\B)}$, two tuples of states and two tuples
of pro\-duct-sta\-tes with $\tup p n\preceq\tup r n$ and
$\tup R n\preceq^{\forall\exists} \tup P n$, it holds that
$\L^\hole(\A,\B)((p_1,P_1),\ldots,(p_n,P_n)) \subseteq
\L^\hole(\A,\B)((r_1,R_1),\ldots,(r_n,R_n))$.
\end{lemma}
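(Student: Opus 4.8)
The plan is to unfold the definition $\L^\hole(\A,\B)((q_1,P_1),\ldots,(q_n,P_n)) = \L^\hole(\A)\tup q n \setminus \L^\hole(\B)\tup P n$ and treat the two halves of the set difference separately. Fix an arbitrary context $t \in \L^\hole(\A,\B)((p_1,P_1),\ldots,(p_n,P_n))$, that is, $t \in \L^\hole(\A)\tup p n$ and $t \notin \L^\hole(\B)\tup P n$. The goal is then to show $t \in \L^\hole(\A)\tup r n$ and $t \notin \L^\hole(\B)\tup R n$, which together say $t \in \L^\hole(\A,\B)((r_1,R_1),\ldots,(r_n,R_n))$.

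For the first conjunct I would first record the elementary observation that, since $\A$ and $\B$ have disjoint state sets and $\A\cup\B$ merely takes the disjoint union of their rule sets, no run of $\A\cup\B$ started from a tuple of states lying entirely in $Q_\A$ can ever use a rule of $\B$; hence the language of contexts accepted from such a tuple in $\A\cup\B$ coincides with the one in $\A$ (and symmetrically for $Q_\B$ and $\B$). Consequently ${\preceq}\in\rel{(\A\cup\B)}$ restricts to a relation in $\rel\A$ on $Q_\A$ and to a relation in $\rel\B$ on $Q_\B$. From $\tup p n \preceq \tup r n$, whose components all lie in $Q_\A$, we get $\L^\hole(\A)\tup p n \subseteq \L^\hole(\A)\tup r n$, so $t \in \L^\hole(\A)\tup r n$.

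For the second conjunct I would apply Lemma~\ref{lemma:TAlanguage1} to the tree automaton $\B$ together with the relation ${\preceq}$ restricted to $Q_\B$ (which is in $\rel\B$ by the observation above): from $\tup R n \preceq^{\forall\exists} \tup P n$ it yields $\L^\hole(\B)\tup R n \subseteq \L^\hole(\B)\tup P n$. Since $t \notin \L^\hole(\B)\tup P n$, we conclude $t \notin \L^\hole(\B)\tup R n$. Combining the two conjuncts finishes the inclusion. The only point requiring any care — the ``main obstacle'', and it is a light one — is precisely this bookkeeping around the union automaton, namely verifying that the language of a tuple of $\A$-states (resp. $\B$-states) is computed the same way in $\A\cup\B$ as in $\A$ (resp. $\B$), so that membership in $\rel{(\A\cup\B)}$ can legitimately be exploited for the $\A$-component and Lemma~\ref{lemma:TAlanguage1} legitimately invoked for the $\B$-component. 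Everything else is the routine unfolding of the set-difference definition of $\L^\hole(\A,\B)$ and a direct appeal to results already proved in the excerpt.
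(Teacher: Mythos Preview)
Your argument is correct. The paper states Lemma~\ref{lemma:TAlanguage_p3} without proof, treating it as an immediate consequence of the definition $\L^\hole(\A,\B)((q_1,P_1),\ldots,(q_n,P_n)) = \L^\hole(\A)\tup q n \setminus \L^\hole(\B)\tup P n$ together with Lemma~\ref{lemma:TAlanguage1}; your unfolding of the set difference and separate handling of the two components is exactly the intended route, and your remark that $\L^\hole(\A\cup\B)$ agrees with $\L^\hole(\A)$ (resp.\ $\L^\hole(\B)$) on tuples of $Q_\A$-states (resp.\ $Q_\B$-states)---so that a relation in $\rel{(\A\cup\B)}$ restricts to one in $\rel\A$ and in $\rel\B$---is precisely the small bookkeeping step the paper leaves implicit.
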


It is also possible to use Optimisation 1(b) where we stop searching from
product-states of the form $(q,P)$ such that $q\preceq r$ for some $r\in P$.
However, note that this optimisation is of limited use for tree automata. Under
the assumption that the automata $\A$ and $\B$ do not contain useless states,
the reason is that for any $q\in Q_\A$ and $r\in Q_\B$, if $q$ appears at
a left-hand side of some rule of arity more than 1, then no reflexive relation
from ${\preceq} \in \rel {(\A\cup\B)}$ allows $q\preceq
r$.\footnote{To see this, assume that a context
tree $t$ is accepted from $\tup q n\in Q_\A^n,q = q_i,1\leq i\leq n$. If
$q\preceq r$, then by the definition of $\preceq$,
$t\in\L^\hole(\A\cup\B)(q_1,\ldots,q_{i-1},r,q_{i+1},\ldots,q_n)$. However, that
cannot happen, as $\A\cup\B$ does not contain any rules with left hand sides
containing both states from $\A$ and states from $\B$. }

Algorithm~\ref{algorithm:TAlanguageinclusion} describes our method for checking
language inclusion of TA in pseudocode. It closely follows Algorithm
\ref{algorithm:languageinclusion}. It differs in two main points. First, the
initial value of the $\Next$ set is the result of applying the function
$\initialize$ on the set $\{(i,\minimize(I_a^\B))\mid a\in\Sigma_0, i\in
I_a^\A\}$ where $\initialize$ is the same function as in Algorithm
\ref{algorithm:languageinclusion}. Second, the computation of the $\post$ image
of a set of product-states means that for each symbol $a\in\Sigma_n,n\in\nat$,
we construct the $\post_a$-image of each $n$-tuple of product-states from the
set. Like in Algorithm~\ref{algorithm:TAuniversality}, we design the algorithm
such that we avoid computing the $\post_a$-image of a tuple more than once. We
define the post image $\post(\pstates)(r,R)$ of a set of product-states
$\pstates$ w.r.t. a product-state $(r,R)\in\pstates$. It is the set of all
product-states $(q,P)$ such that there is some $a\in\Sigma,\#(a) = n$ and some
$n$-tuple $((q_1,P_1),\ldots,(q_n,P_n))$ of product-states from $\pstates$ that
contains at least one occurrence of $(r,R)$ where $q\in\post_a\tup q n$ and
$P=\post_a\tup P n$.

\begin{algorithm}[t]
    \KwIn{TA $\mathcal{A}$ and $\mathcal{B}$ over an alphabet $\Sigma$. A relation ${\preceq} \in \rel{(\A\cup\B)}$.}
    \KwOut{$\true$ if $\lang{\mathcal{A}}\subseteq \lang{\mathcal{B}}$. Otherwise, $\false$.}
  \caption{\textit{Tree Automata Language Inclusion Checking}}
  \label{algorithm:TAlanguageinclusion}
    \lIf{there exists an accepting product-state in $\bigcup_{a\in\Sigma_0}\{(i,I_a^\B)\mid i\in I_a^\A\}$}{\KwRet{$\false$}}\;
    $\Processed$:=$\emptyset$\;
    $\Next$:=$\initialize(\bigcup_{a\in\Sigma_0}\{(i,\minimize(I_a^\B))\mid i\in I_a^\A\})$\;
    \While{$\Next\neq \emptyset$}{
        Pick and remove a product-state $(r,R)$ from $\Next$ and move it to $\Processed$\;
        \ForEach{$(p,P) \in \{ (r',\minimize(R')) \mid (r',R') \in \post(\Processed)(r,R)\}$}{
            \lIf{$(p,P)$ is an accepting product-state}{\KwRet{$\false$}}\;
            \ElseIf{$\neg\exists p'\in P$ s.t. $p\preceq p'$}{
                \If{$\neg\exists (q,Q) \in \Processed \cup \Next$ s.t. $p\preceq q \wedge Q \preceq^{\forall\exists} P$}{
                    Remove all $(q,Q)$ from $\Processed \cup \Next$ s.t. $q \preceq p \wedge P \preceq^{\forall\exists} Q$\;
                    Add $(p,P)$ to $\Next$\;
                }
            }
        }
    }
    \KwRet{$\true$}
\end{algorithm}

\paragraph{Correctness of the TA Language Inclusion Checking.}

We prove correctness of
Algorithm~\ref{algorithm:TAlanguageinclusion} in a very similar way to
Algorithm~\ref{algorithm:languageinclusion}, using suitably modified notions of
distances and ranks.

Let $\A = (\Sigma,Q_\A,\Delta_\A,F_\A)$
and $\B = (\Sigma,Q_\B,\Delta_\B,F_\B)$ be two tree automata.
Given $n\geq 0$ and an $n$-tuple of macro-states $((q_1,P_1),\ldots,(q_n,P_n))$,
we define
$
\mathbf{Dist}((q_1,P_1),\ldots,(q_n,P_n))=0
\ \mbox{iff} \ \epsilon \in \L^\hole(\A,\B)((q_1,P_1),\ldots,(q_n,P_n))
$.
Otherwise we define $\mathbf{Dist}((q_1,P_1),\ldots,(q_n,P_n))=k \in
\nat^+ \cup \{ \infty \}$
iff
$k = \min(\{
|t| \mid t \in T_n^\hole(\Sigma) \wedge
t \in \L^\hole(\A,\B)((q_1,P_1),\ldots,(q_n,P_n))
\})$. Here, we assume $\min(\emptyset) = \infty$.
For a set $\mathit{PStates}$ of product-states, we let
$\mathbf{Rank}(\mathit{PStates}) =
\min(\{ \mathbf{Dist}((q_1,P_1),\ldots,(q_n,P_n)) \mid
n \ge 1 \wedge \forall 1 \le i \le n: (q_i, P_i) \in \mathit{PStates}\})$.
The predicate $\incl(\mathit{PStates})$ is defined to be true iff
$\mathbf{Rank}(\mathit{PStates}) = \infty$.

\begin{lemma}\label{theorem:TAInclusionInvariant} The following two loop invariants hold
in Algorithm~\ref{algorithm:TAlanguageinclusion}:\begin{enumerate}

  \item $\neg\incl(\Processed \cup \Next) \implies
  \neg\incl(\bigcup_{a\in\Sigma_0}\{(i,I_a^\B)\mid i\in I_a^\A\})$.

  \item $\neg\incl(\bigcup_{a\in\Sigma_0}\{(i,I_a^\B)\mid i\in I_a^\A\}) \implies
  \mathbf{Rank}(\Processed) >\mathbf{Rank}( \Next \cup \Processed)$.
\end{enumerate}
\end{lemma}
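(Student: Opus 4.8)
The plan is to replay the proof of Lemma~\ref{theorem:TAUnivInvariant} almost line for line, reading $\mathbf{Univ}$, $\mathbf{Rank}$ and $\mathbf{Dist}$ in their product-state versions, replacing $\L^\hole(\A)$ by $\L^\hole(\A,\B)$, and reading ``universal'' as ``has empty product-language''. The single workhorse is Lemma~\ref{lemma:TAlanguage_p3}: it accounts for the $\minimize$ calls (a minimised macro-state of $\B$ is $\preceq^{\forall\exists}$-equivalent to the original, so $\L^\hole(\A,\B)$ of every tuple is unchanged), for the test on line~8, and for the pruning on lines~9--10 (a product-state is discarded only when a $\preceq$/$\preceq^{\forall\exists}$-smaller one stays, so by the lemma no context realizable over the old sets is lost and $\mathbf{Rank}$ can only stay equal or drop). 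The base case is then immediate: using Lemma~\ref{lemma:TAlanguages} together with Lemma~\ref{lemma:TAlanguage_p3}, the initial $\Next$ is obtained from $\bigcup_{a\in\Sigma_0}\{(i,I_a^\B)\mid i\in I_a^\A\}$ by language-preserving operations and by dropping subsumed product-states, so both invariants hold on entry, Invariant~2 because $\mathbf{Rank}(\Processed)=\mathbf{Rank}(\emptyset)=\infty$ exceeds the (finite, when the premise holds) value $\mathbf{Rank}(\Next)$.

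For the inductive step I would fix an iteration with $\Next^{\mathit{old}}\neq\emptyset$, write $(r,R)$ for the product-state picked on line~5, and argue both invariants for $\Processed^{\mathit{new}},\Next^{\mathit{new}}$. For Invariant~1: if $\incl(\Processed^{\mathit{old}}\cup\Next^{\mathit{old}})$, then every tuple over that set has empty product-language; each $(q,P)\in\post(\Processed)(r,R)$ arises from such a tuple $\bar T$ by a single rule of $\A$ together with the matching macro-step of $\B$, so if a tuple over $\Processed^{\mathit{new}}\cup\Next^{\mathit{new}}$ containing $(q,\minimize(P))$ had non-empty product-language, plugging $\bar T$ into the hole carrying $(q,\minimize(P))$ would produce a tuple over $\Processed^{\mathit{old}}\cup\Next^{\mathit{old}}$ of non-empty product-language (Lemma~\ref{lemma:TAlanguage_p3} absorbing $\minimize$ and the line~9--10 replacements), a contradiction; hence $\incl$ is preserved and Invariant~1 is vacuously true. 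If $\neg\incl(\Processed^{\mathit{old}}\cup\Next^{\mathit{old}})$, Invariant~1 follows at once from the induction hypothesis.

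For Invariant~2 I assume $\neg\incl(\bigcup_{a\in\Sigma_0}\{(i,I_a^\B)\mid i\in I_a^\A\})$, so $\mathbf{Rank}(\Processed^{\mathit{old}})>\mathbf{Rank}(\Processed^{\mathit{old}}\cup\Next^{\mathit{old}})$ by induction, and split on whether some tuple realizing $\mathbf{Rank}(\Processed^{\mathit{old}}\cup\Next^{\mathit{old}})$ can be built over $\Processed^{\mathit{old}}\cup\{(r,R)\}$ alone. If not, every minimal such tuple uses a product-state from $\Next^{\mathit{old}}\setminus\{(r,R)\}$ that survives in $\Next^{\mathit{new}}$ (or is replaced by a $\preceq^{\forall\exists}$-smaller one, Lemma~\ref{lemma:TAlanguage_p3}), giving $\mathbf{Rank}(\Processed^{\mathit{new}}\cup\Next^{\mathit{new}})\le\mathbf{Rank}(\Processed^{\mathit{old}}\cup\Next^{\mathit{old}})$; and since $\Processed^{\mathit{new}}\subseteq\Processed^{\mathit{old}}\cup\{(r,R)\}\subseteq\Processed^{\mathit{old}}\cup\Next^{\mathit{old}}$ while no tuple over $\Processed^{\mathit{old}}\cup\{(r,R)\}$ realizes $\mathbf{Rank}(\Processed^{\mathit{old}}\cup\Next^{\mathit{old}})$, we get $\mathbf{Rank}(\Processed^{\mathit{new}})>\mathbf{Rank}(\Processed^{\mathit{old}}\cup\Next^{\mathit{old}})$ and the invariant persists. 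If yes, a minimal tuple $T$ over $\Processed^{\mathit{old}}\cup\{(r,R)\}$ must contain $(r,R)$ (otherwise it contradicts $\mathbf{Rank}(\Processed^{\mathit{old}})>\mathbf{Rank}(\Processed^{\mathit{old}}\cup\Next^{\mathit{old}})$) and has $\mathbf{Dist}(T)>0$ (if it were $0$, an accepting product-state would already lie in $\post(\Processed)(r,R)$ and line~7 returns $\false$ before the loop reaches line~4); applying one rule of the witness of $T$ exhibits a $(q,P)\in\post(\Processed)(r,R)$ from which a tuple of strictly smaller $\mathbf{Dist}$ is buildable, and $(q,\minimize(P))$ is added to $\Next$ unless it is $\preceq^{\forall\exists}$-subsumed --- then the subsuming product-state is at least as good, Lemma~\ref{lemma:TAlanguage_p3} --- or $q\preceq q'$ for some $q'\in P$ --- then, because $\preceq\in\rel{(\A\cup\B)}$ and $\A\cup\B$ carries no rule mixing states of $\A$ and $\B$, every tuple containing $(q,P)$ has empty product-language, so it cannot influence $\mathbf{Rank}$; in all cases $\mathbf{Rank}(\Processed^{\mathit{new}}\cup\Next^{\mathit{new}})<\mathbf{Rank}(\Processed^{\mathit{old}}\cup\Next^{\mathit{old}})\le\mathbf{Rank}(\Processed^{\mathit{new}})$.

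The main obstacle I anticipate is making the ``apply one rule of the witness'' step precise: unlike the word case, the witness is a context and the object substituted at a hole is an entire tuple, so one must check that $\mathbf{Dist}$ drops by exactly the size of the removed sub-context and --- more delicately --- that the successor thereby obtained really is enumerated by $\post(\Processed)(r,R)$; when the relevant rule of the witness acts on a part of the tuple not containing $(r,R)$, this forces a short argument about the order in which the algorithm discovers product-states. The second delicate point is the soundness of Optimisation~1(b) for tree automata (the final sub-case above), where it is precisely the no-mixed-rules property of $\A\cup\B$, combined with $\preceq\in\rel{(\A\cup\B)}$ being reflexive, that forces the pruned product-states to have empty context-language even when they occur inside longer tuples; everything else is a routine transcription of the universality argument, and from the two invariants, termination, and the analogue of Theorem~\ref{theorem:TAcorrectness}, Theorem~\ref{theorem:TAInclusioncorrectness}-style correctness of Algorithm~\ref{algorithm:TAlanguageinclusion} follows as before.
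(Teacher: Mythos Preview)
Your proposal is correct and follows exactly the paper's approach: the paper's own proof of this lemma is simply ``The proof is similar to that of Lemma~\ref{theorem:TAUnivInvariant}'', and you carry out precisely that transcription, invoking Lemma~\ref{lemma:TAlanguage_p3} in the places where the universality proof used Lemma~\ref{lemma:TAlanguage1}. One small slip worth fixing: in the second subcase of Invariant~2, the case $\mathbf{Dist}(T)=0$ would force $T$ to be the singleton tuple $((r,R))$ with $(r,R)$ itself accepting (not something in $\post(\Processed)(r,R)$), which is already excluded because $(r,R)$ passed the acceptance test on line~1 or line~7 when it entered $\Next$.
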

The proof is similar to that of Lemma~\ref{theorem:TAUnivInvariant}.
With the invariants in hand, we can now prove Lemma~\ref{lemma:TAInclusionTermination} and Theorem~\ref{theorem:TAInclusioncorrectness} below in a very similar way as 
Lemma~\ref{lemma:termination} and Theorem~\ref{theorem:correctness}, respectively.

\begin{lemma}\label{lemma:TAInclusionTermination}
Algorithm~\ref{algorithm:TAlanguageinclusion} eventually terminates.\end{lemma}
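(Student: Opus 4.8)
The plan is to mirror the proof of Lemma~\ref{lemma:termination} for Algorithm~\ref{algorithm:languageinclusion}, adapting it to product-states. First I would note that Algorithm~\ref{algorithm:TAlanguageinclusion} only ever manipulates product-states $(p,P)$ with $p \in Q_\A$ and $P \subseteq Q_\B$, of which there are at most $|Q_\A|\cdot 2^{|Q_\B|}$; moreover, for a finite $\Processed$ and an alphabet $\Sigma$ with finitely many symbols each of a fixed arity, the set $\post(\Processed)(r,R)$ is finite, since it is built from the finitely many tuples of product-states drawn from $\Processed$. Hence every individual iteration of the \textbf{while} loop performs only finitely much work, and the only way the algorithm could diverge is for the \textbf{while} loop to execute infinitely often, which forces some product-state to be inserted into $\Next$ on the line ``Add $(p,P)$ to $\Next$'' infinitely many times.

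The crux is then a loop invariant: \emph{once a product-state $(p,P)$ has been added to $\Next$, from that point on there is always some $(q,Q)\in\Processed\cup\Next$ with $p\preceq q$ and $Q\preceq^{\forall\exists}P$.} The base case is immediate — right after the insertion $(p,P)\in\Next$, and $(p,P)$ witnesses the invariant for itself because $\preceq$ (the identity or the maximal upward simulation, which is a preorder, transitivity being established in \cite{abdulla:computing}) is reflexive, hence so is $\preceq^{\forall\exists}$. For the inductive step, a witness $(q,Q)$ can leave $\Processed\cup\Next$ only via the ``Remove all $(q,Q)$ \dots'' line, and that happens only when a product-state $(p',P')$ with $q\preceq p'$ and $P'\preceq^{\forall\exists}Q$ is simultaneously added to $\Next$; composing $p\preceq q\preceq p'$ and $P'\preceq^{\forall\exists}Q\preceq^{\forall\exists}P$ using transitivity of $\preceq$ and the consequent transitivity of $\preceq^{\forall\exists}$, the freshly added $(p',P')$ is again a valid witness. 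So a witness never vanishes without being replaced, and the invariant persists through every loop iteration.

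Finally I would close the argument: the line that actually inserts $(p,P)$ into $\Next$ is guarded by the test $\neg\exists (q,Q)\in\Processed\cup\Next$ with $p\preceq q\wedge Q\preceq^{\forall\exists}P$, which is precisely the negation of the invariant just established. Therefore no product-state is ever added to $\Next$ twice; combined with the finitely many product-states placed into $\Next$ by $\initialize$ at the start, only finitely many insertions into $\Next$ occur over the whole run. Since each iteration of the \textbf{while} loop removes one element from $\Next$, the loop runs finitely often, and by the first paragraph each iteration halts, so the algorithm terminates. I expect the main obstacle to be the careful bookkeeping in the inductive step of the invariant — in particular verifying that every deletion on the ``Remove'' line is paired with the addition of a suitably $\preceq$-/$\preceq^{\forall\exists}$-related product-state so that the chaining of subsumptions goes through; the finiteness of the state space and of the $\post$-images is routine.
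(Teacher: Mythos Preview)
Your proposal is correct and follows essentially the same approach as the paper, which simply defers to the proof of Lemma~\ref{lemma:termination}: the key invariant that every product-state ever inserted into $\Next$ remains subsumed by something in $\Processed\cup\Next$, combined with the guard on the insertion line, prevents any product-state from being added twice. Your write-up is in fact more explicit than the paper's about the role of transitivity of $\preceq$ (and hence of $\preceq^{\forall\exists}$) in chaining witnesses across removals, which is a genuine requirement that the paper's argument leaves implicit.
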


\begin{theorem}\label{theorem:TAInclusioncorrectness} Algorithm
\ref{algorithm:TAlanguageinclusion} terminates, and returns
$\true$ iff $\L(\A)\subseteq \L(\B)$. \end{theorem}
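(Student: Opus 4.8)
The plan is to follow the same three-step pattern that worked for Algorithm~\ref{algorithm:universality}: first establish the two loop invariants of Lemma~\ref{theorem:TAInclusionInvariant}, then prove termination (Lemma~\ref{lemma:TAInclusionTermination}), and finally combine the two. The bridge between the statement $\L(\A)\subseteq\L(\B)$ and the product-state formulation manipulated by the algorithm is Lemma~\ref{lemma:TAlanguages}: it reduces language inclusion of the two tree automata to emptiness of $\L^\hole(\A,\B)$ on every tuple of product-states drawn from $\bigcup_{a\in\Sigma_0}\{(i,I_a^\B)\mid i\in I_a^\A\}$, i.e. exactly to the predicate $\incl(\bigcup_{a\in\Sigma_0}\{(i,I_a^\B)\mid i\in I_a^\A\})$.

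For the invariants I would mirror the proof of Lemma~\ref{theorem:TAUnivInvariant} almost verbatim, replacing macro-states by product-states, $\mathbf{Univ}$ by $\incl$, and the universality-style $\mathbf{Dist}$/$\mathbf{Rank}$ by the product-state versions defined just above Lemma~\ref{theorem:TAInclusionInvariant}. The only substitution of an ingredient is that Lemma~\ref{lemma:TAlanguage_p3} plays the role that Lemma~\ref{lemma:TAlanguage1} had: it justifies that $\minimize$ does not change a product-state's context language (since $\tup P n\preceq^{\forall\exists}(P_1\setminus\{p_1\},\dots)$ whenever $p_1\preceq p_2$ for some other $p_2\in P_1$) and that replacing a product-state on lines~9--10 by a $\preceq^{\forall\exists}$-smaller one can only decrease $\mathbf{Rank}$. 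One also reuses the universality argument's "one-transition replacement" trick: substituting a tuple $T$ of product-states by a single successor reached from $T$ by one $\A$-transition cannot turn a non-inclusive tuple over $\Processed^{\mathit{old}}\cup\Next^{\mathit{old}}$ into an inclusive one (Invariant~1) and strictly shrinks a witnessing context (Invariant~2); here $\post(\Processed)(r,R)$ is by definition the set of all one-step successors of tuples over $\Processed\cup\{R\}$ containing $R$, so line~6 does consider every relevant successor. Soundness of the Optimisation~1(b) test ($p\preceq p'$ for some $p'\in P$) is covered separately: it implies $\L^\hole(\A)(p)\subseteq\L^\hole(\B)(P)$, so every tuple extending $(p,P)$ is inclusive and contributes nothing to $\mathbf{Rank}$, which is precisely why skipping such $(p,P)$ preserves both invariants.

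Termination (Lemma~\ref{lemma:TAInclusionTermination}) goes exactly as in Lemma~\ref{lemma:termination}: there are finitely many product-states, the loop body is finite, and once a product-state $(p,P)$ enters $\Next$ there is always some $(q,Q)\in\Processed\cup\Next$ with $q\preceq p\wedge Q\preceq^{\forall\exists}P$ ($(p,P)$ either stays, moves to $\Processed$, or is replaced by such a pair), so the test on line~9 forbids re-adding it; hence each product-state is added at most once and the while loop runs finitely often. The theorem then follows as Theorem~\ref{theorem:correctness} does: the algorithm returns $\false$ only when an accepting product-state appears among the initial ones or when $(p,\minimize(R'))$ is accepting for some $\post$-successor $(p,R')$ — by Lemma~\ref{lemma:TAlanguage_p3} the unminimised $(p,R')$ is then accepting too, giving $\neg\incl(\Processed\cup\Next)$, whence Invariant~1 and Lemma~\ref{lemma:TAlanguages} give $\L(\A)\not\subseteq\L(\B)$; and it returns $\true$ only when $\Next=\emptyset$, in which case $\mathbf{Rank}(\Processed)>\mathbf{Rank}(\Next\cup\Processed)$ is false (both sides equal $\mathbf{Rank}(\Processed)$), so the contrapositive of Invariant~2 plus Lemma~\ref{lemma:TAlanguages} yield $\L(\A)\subseteq\L(\B)$. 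I expect no single deep step; the main obstacle is the bookkeeping in the invariant proof — checking that the product-state $\mathbf{Dist}$/$\mathbf{Rank}$ measures behave monotonically under each of the four operations the loop performs (moving $R$ to $\Processed$, $\minimize$, pruning on line~9, adding on line~10), and that the one-transition replacement argument really applies because $\post(\Processed)(r,R)$ enumerates every pertinent successor tuple.
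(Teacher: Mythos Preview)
Your proposal is correct and follows essentially the same approach as the paper: the paper itself gives no detailed proof of Theorem~\ref{theorem:TAInclusioncorrectness}, merely stating that it (together with Lemma~\ref{lemma:TAInclusionTermination}) is proved ``in a very similar way as Lemma~\ref{lemma:termination} and Theorem~\ref{theorem:correctness}'' using the invariants of Lemma~\ref{theorem:TAInclusionInvariant}. You have correctly unfolded this sketch, identifying Lemma~\ref{lemma:TAlanguage_p3} as the substitute for Lemma~\ref{lemma:TAlanguage1} and Lemma~\ref{lemma:TAlanguages} as the bridge back to $\L(\A)\subseteq\L(\B)$.
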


\section{Experiments with Classical versus Pure Antichain Algorithms for Tree Automata}
\label{SecUniExprm}

In this section, we describe the experimental results obtained in
\cite{bouajjani:antichain} where we compare classical subset construction
based algorithms for tree automata with pure antichain based ones.  
The pure antichain algorithms
may be seen as special cases of Algorithms~\ref{algorithm:languageinclusion} and \ref{algorithm:TAlanguageinclusion},
where the role of simulation relation is played by the identity relation. 

We have implemented the above pure antichain approach for testing universality
and inclusion of tree automata in a prototype based on the Timbuk tree automata
library \cite{timbuk}. We give the results of our experiments  run on an Intel
Xeon processor at with 2.7GHz and 16GB of memory in
Fig.~\ref{FigUniExperiments}.  We ran our tests on randomly generated automata
and on automata obtained from abstract regular tree model checking applied in
verification of several pointer-manipulating programs. 

In the random tests, we use an approach for systematic generating random automata
with different parameters inspired by the approach proposed by Tabakov and
Vardi in \cite{tabakov:experimental} (which was also used in
\cite{wulf:antichains}).  The parameters of the generated automata are
\emph{number of states}, \emph{density of their
transitions} (the average number of different right-hand side states for a
given left-hand side of a transition rule, i.e., $|\Delta|/|\{a\vect q n
\mid \in\Sigma,q \in Q: \trans q n a q \}|$) and the \emph{density
of their final states} (i.e., $|F|/|Q|$).

\subsection{Experiments with Antichain-based Universality Checking}

For experiments with the pure antichain tree automata universality algorithm,
we used automata with 20 states and varied transition density and density of final states. 
Fig.~\ref{FigUniProbability} shows the probability of such tree automata
being universal, and Fig.~\ref{FigUniOverview}
the average times needed for checking their universality using our
antichain-based approach. The difficult instances are those where the
probability of being universal is about one half.  In
Fig.~\ref{FigUniSelected}, we show how the running times change for some
selected instances of the problem (in terms of some chosen densities of
transitions and final states, including those for which the problem is the most
difficult) when the number of states of the automata grows. We also show the
time needed when  universality is checked using determinisation, complement,
and emptiness checking. We see that the antichain-based approach behaves in a
significantly better way. The same conclusion can also be drawn from the
results of Fig.~\ref{FigUniARTMC} obtained on automata from experimenting with
abstract regular tree model checking applied for verifying various procedures
manipulating trees presented in Section~\ref{SecARTMCExprm}.

\begin{figure}[t!]
\begin{center}
\setlength{\subfigcapskip}{-1mm}
  \mbox{

    \subfigure[Probability that a tree automaton (TA) with 20 states and some
    density of transitions and final states is
    universal]{\label{FigUniProbability}
    \includegraphics[width=56mm]{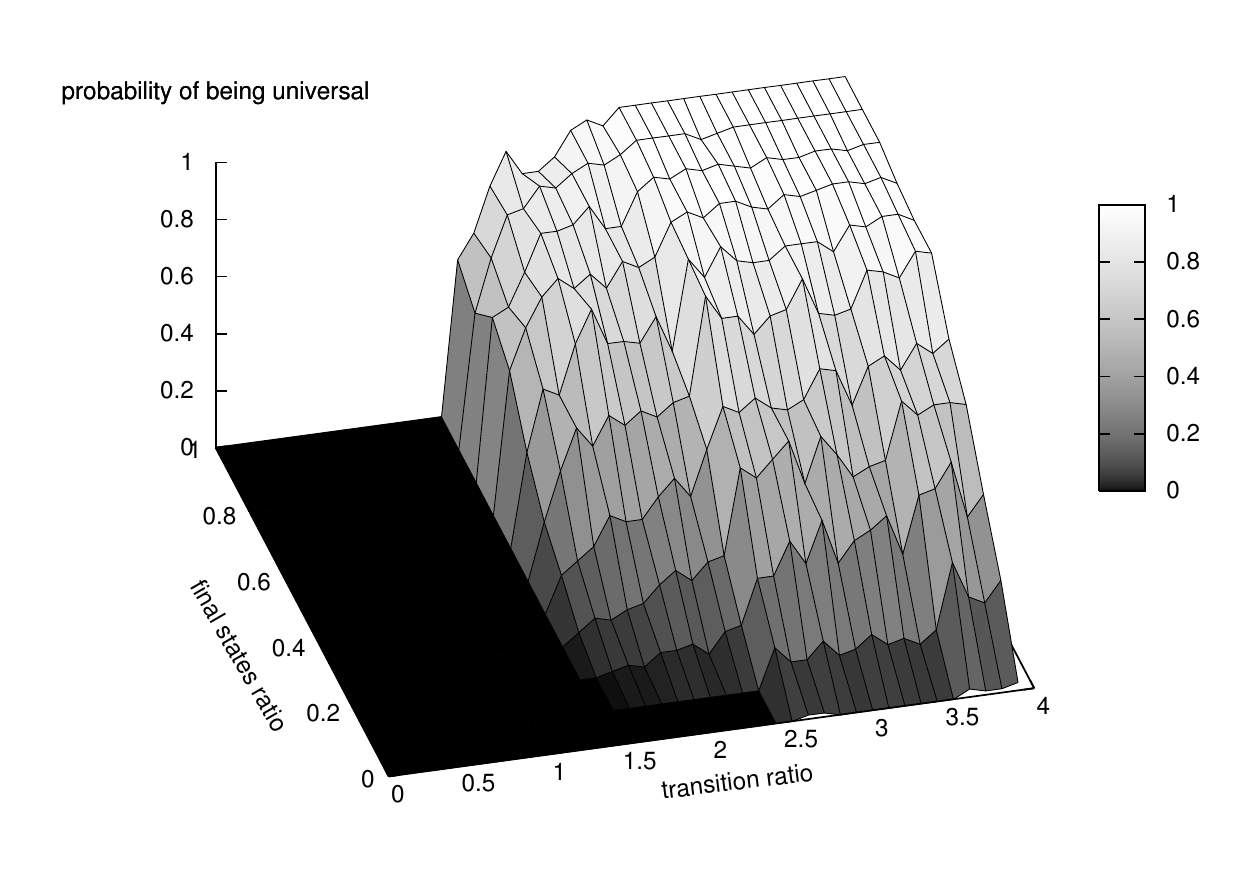}}

    \hspace*{1mm}

    \subfigure[Average times of antichain-based universality checking on TA with
    20 states and some density of transitions and final
    states]{\label{FigUniOverview}
    \includegraphics[width=61mm]{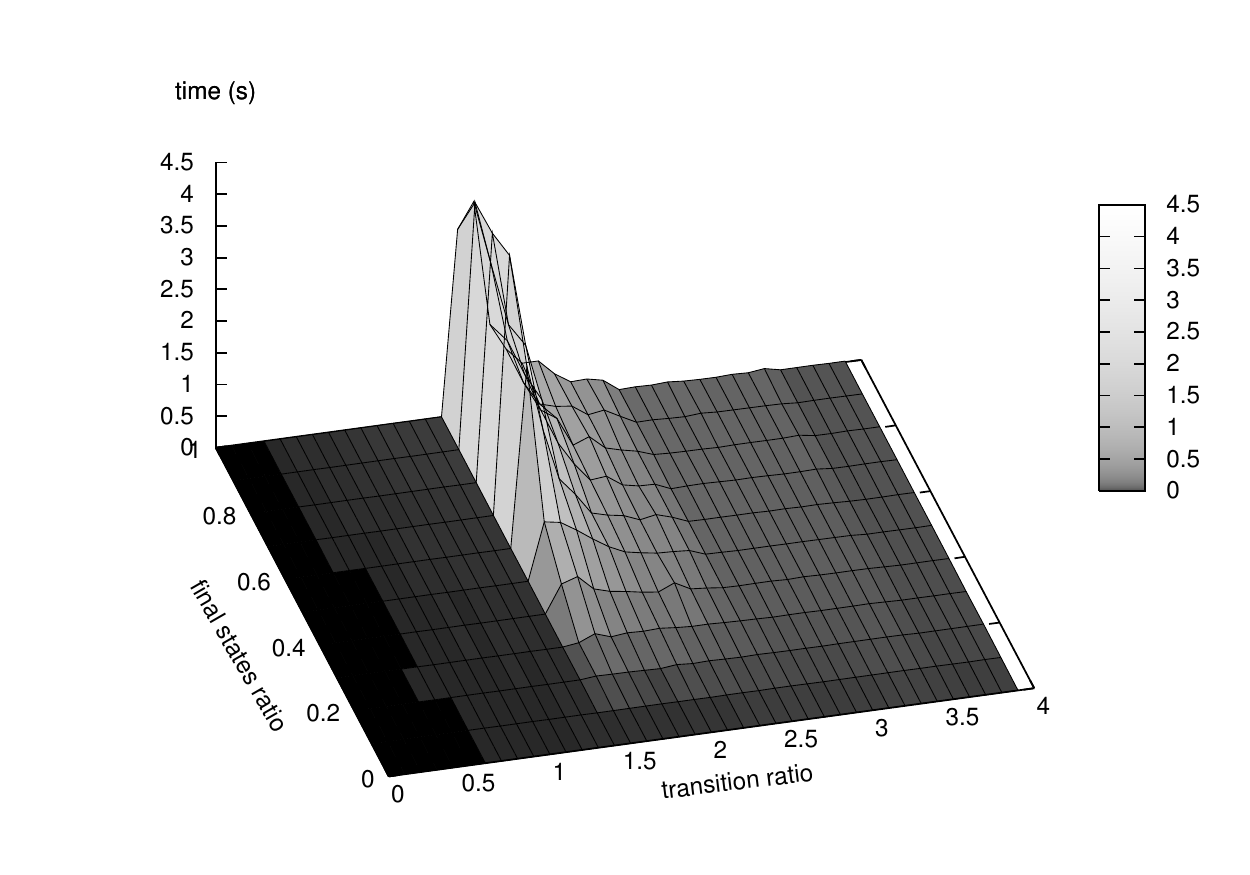}}}

  \mbox{\subfigure[Universality checking via determinisation and antichains on TA
    with selected densities of transitions and final states]
    {\label{FigUniSelected}
    \includegraphics[width=85mm]{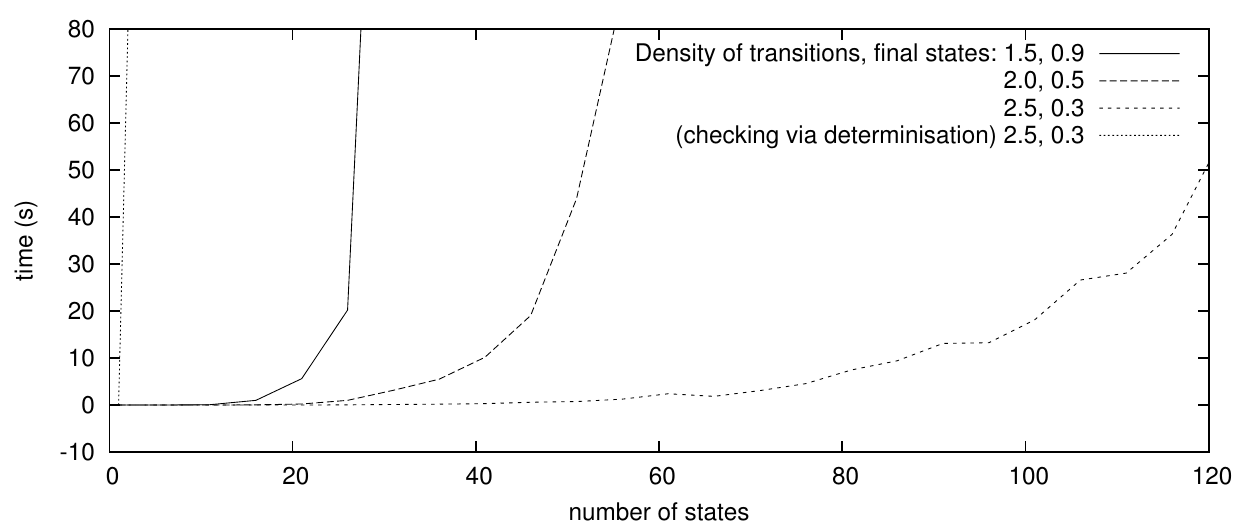}}}

  \mbox{\subfigure[Determinisation-based and antichain-based universality
    checking on TA from abstract regular tree model checking]
    {\label{FigUniARTMC}
    \includegraphics[width=85mm]{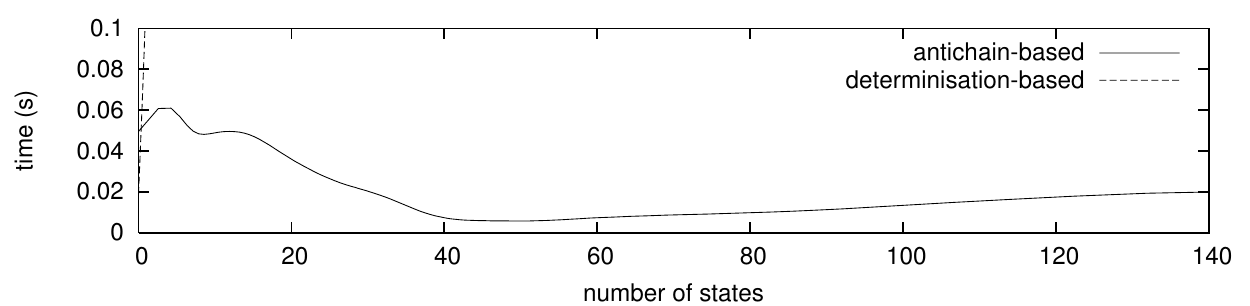}}}

  \caption{Experiments with universality checking on tree automata}
  \label{FigUniExperiments}
    
\end{center}
\end{figure}

\subsection{Experiments with  Antichain-based Inclusion Checking}

Below, in Fig.~\ref{FigInclExperiments1} and Fig.~\ref{FigInclExperiments2}, we
present the results that we have obtained from experimenting with pure antichain-based inclusion checking for tree automata.
We first ran our tests on pairs of randomly generated automata having 10 states
and different possible densities of transitions and final states. The probability
that $\mathcal{L}(\A) \subseteq \mathcal{L}(\B)$ holds for
randomly generated tree automata $\A$ and $\B$ (both having
the same densities of transitions and final states) is shown in
Fig.~\ref{FigInclProbability}. Fig.~\ref{FigInclOverview} then shows how the
antichain-based inclusion checking behaves on such automata. We see that its time
consumption is naturally growing for automata where the probability of whether
$\mathcal{L}(\A) \subseteq \mathcal{L}(\B)$ holds is
neither too low nor too high.

\begin{figure}[ht]
\begin{center}
\setlength{\subfigcapskip}{0mm}
  \mbox{

    \subfigure[Probability of $\mathcal{L}(\A) \subseteq
    \mathcal{L}(\B)$ for tree automata (TA) with 10 states and some
    density of transitions and final states]{\label{FigInclProbability}
    \includegraphics[width=58mm]{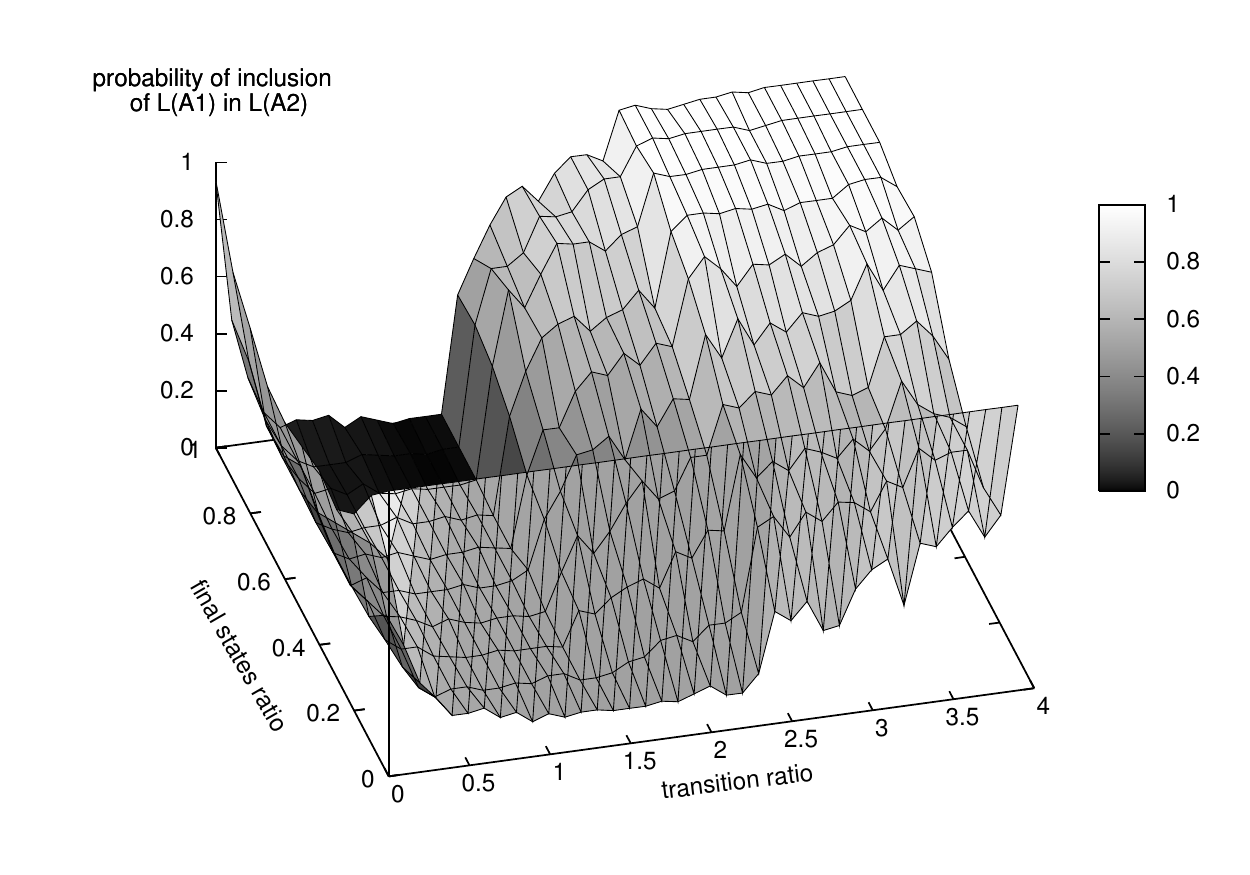}}

    \hspace*{-0.2mm}

    \subfigure[Average times of antichain-based inclusion checking on TA with
    some density of transitions and final states]{\label{FigInclOverview}
    \includegraphics[width=63mm]{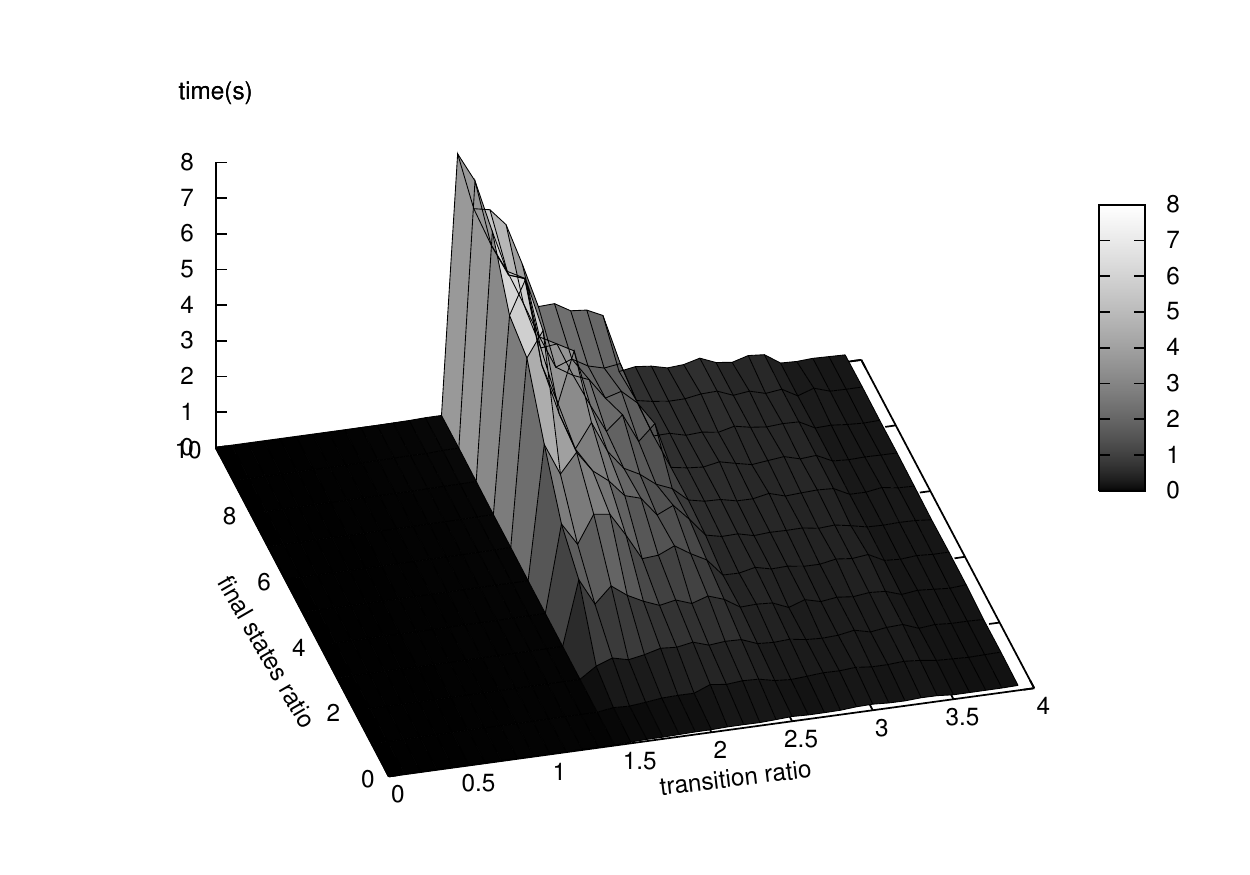}}}

  \mbox{

    \subfigure[Antichain-based inclusion checking on TA, $\A$ random,
    $\B$ with some density of transitions and final
    states]{\label{FigInclA1Random}
    \includegraphics[width=63mm]{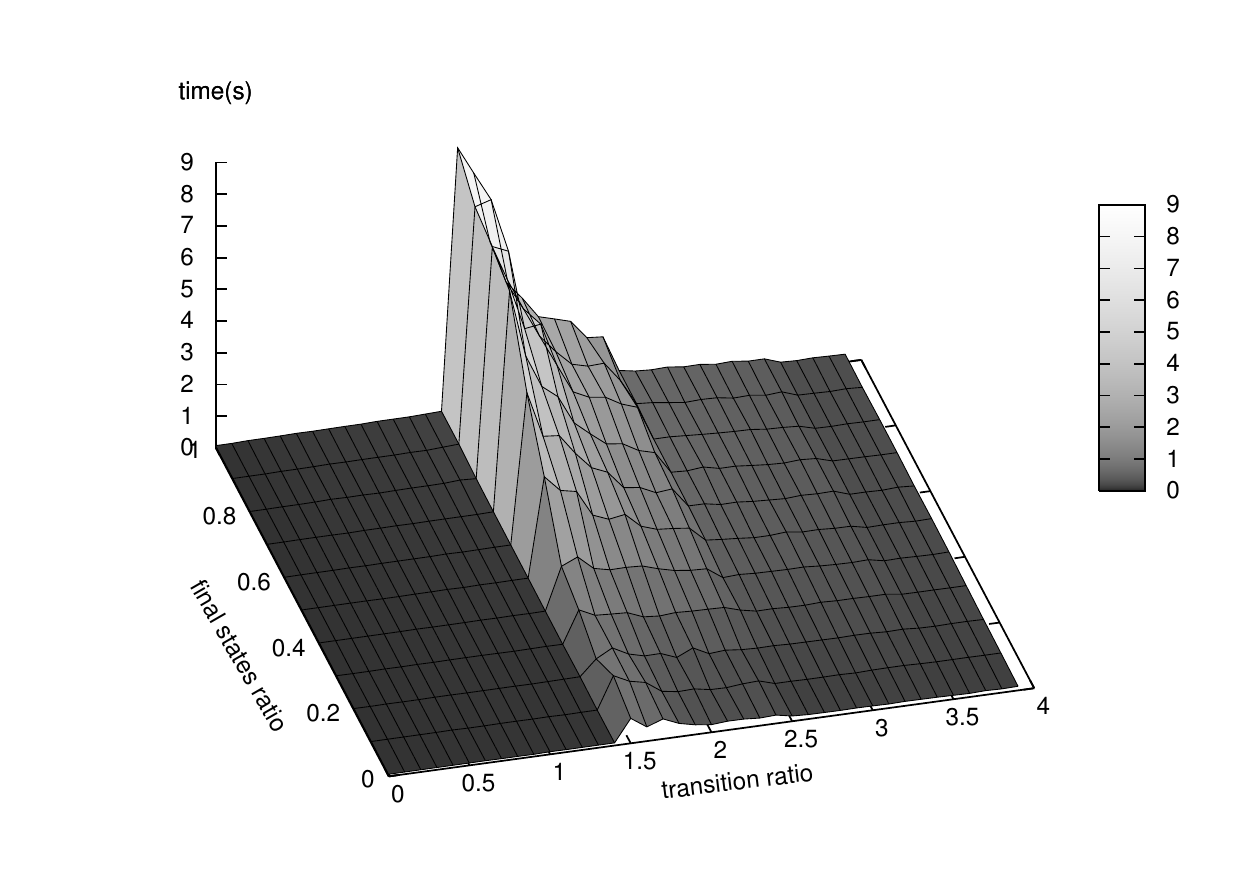}}

    \hspace*{-0.2mm}

    \subfigure[Antichain-based inclusion checking on TA, $\B$ random,
    $\A$ with some density of transitions and final
    states]{\label{FigInclA2Random}
    \includegraphics[width=63mm]{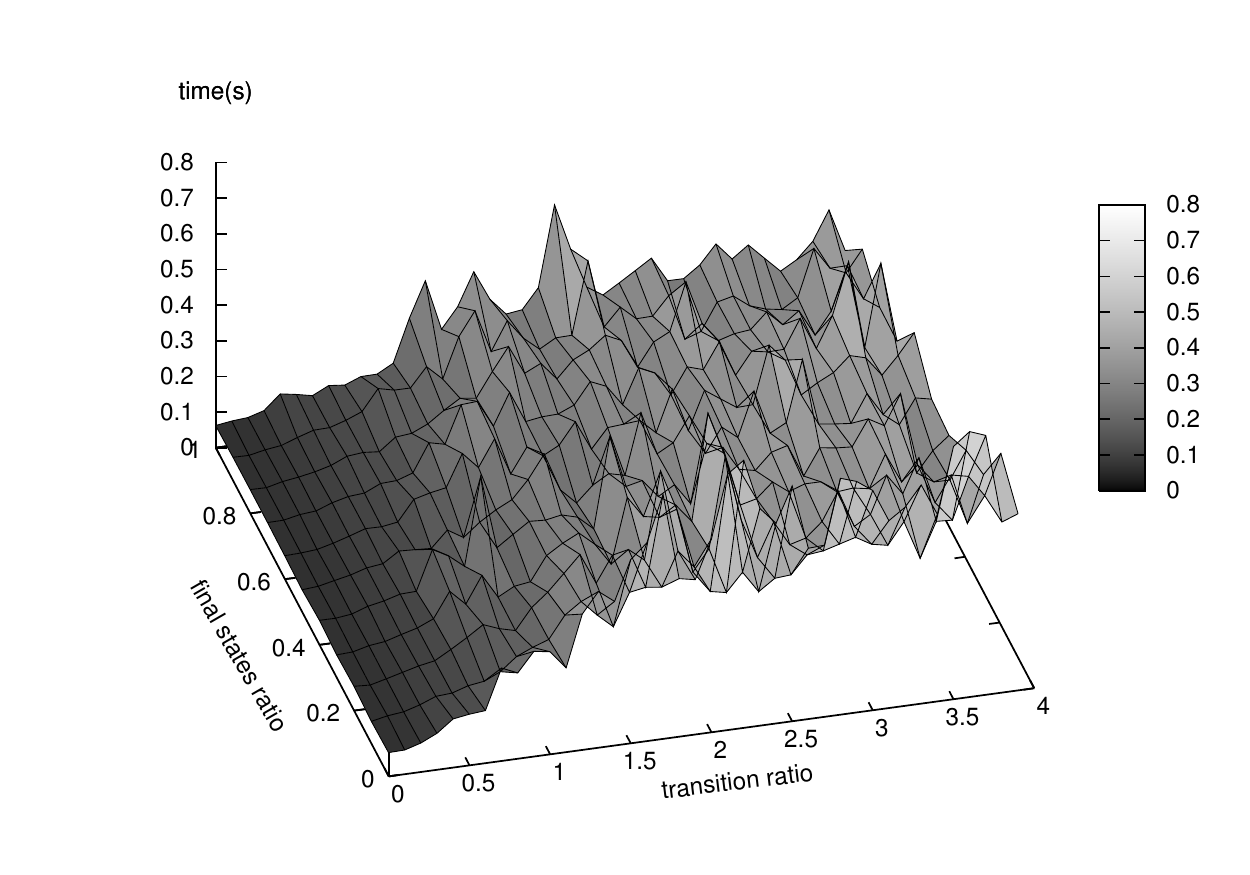}}}

    \caption{Experiments with inclusion checking on tree automata}
  \label{FigInclExperiments1}

\end{center}
\end{figure}

Fig.~\ref{FigInclA1Random} and Fig.~\ref{FigInclA2Random} show what happens if
either $\A$ or $\B$ is left completely random, and only
$\B$ or $\A$, respectively, follows a given density of
transitions and final states. The fact that the results in
Fig.~\ref{FigInclA1Random} follow Fig.~\ref{FigInclOverview}, whereas the time
consumption in Fig.~\ref{FigInclA2Random} is roughly implied by the size of
$\A$ (in terms of transitions), implies that the time consumption of
the antichain-based inclusion checking is---as expected---influenced much more by
the automaton $\B$.

Finally, in Fig.~\ref{FigInclSelected}, we show how the running times change for
some selected instances of the problem (in terms of some selected densities of
transitions and final states, including those for which the problem is the most
difficult) when the number of states of the automata starts growing. The figure
also shows the time needed when the inclusion checking is based on determinising
and complementing $\B$ and checking emptiness of the language
$\mathcal{L}(\A) \cap \overline{\mathcal{L}(\B)}$. We see
that the antichain-based approach really  behaves in a very significantly better
way. The same conclusion can then be drawn also from the results shown in
Fig.~\ref{FigInclARTMC} that we obtained on automata saved from experimenting
with abstract regular tree model checking applied for verifying various real-life
procedures manipulating trees (cf. Section~\ref{SecARTMCExprm}). In fact, the
antichain-based inclusion checking allowed us to implement an abstract regular
tree model checking framework entirely based on nondeterministic tree automata
which is significantly more efficient than the framework based on deterministic
automata.

\begin{figure}[ht]
\begin{center}
\setlength{\subfigcapskip}{0mm}
    
  \mbox{\subfigure[Determinisation-based and antichain-based inclusion checking
    on TA with selected densities of transitions and final states]
    {\label{FigInclSelected}
    \includegraphics[width=100mm]{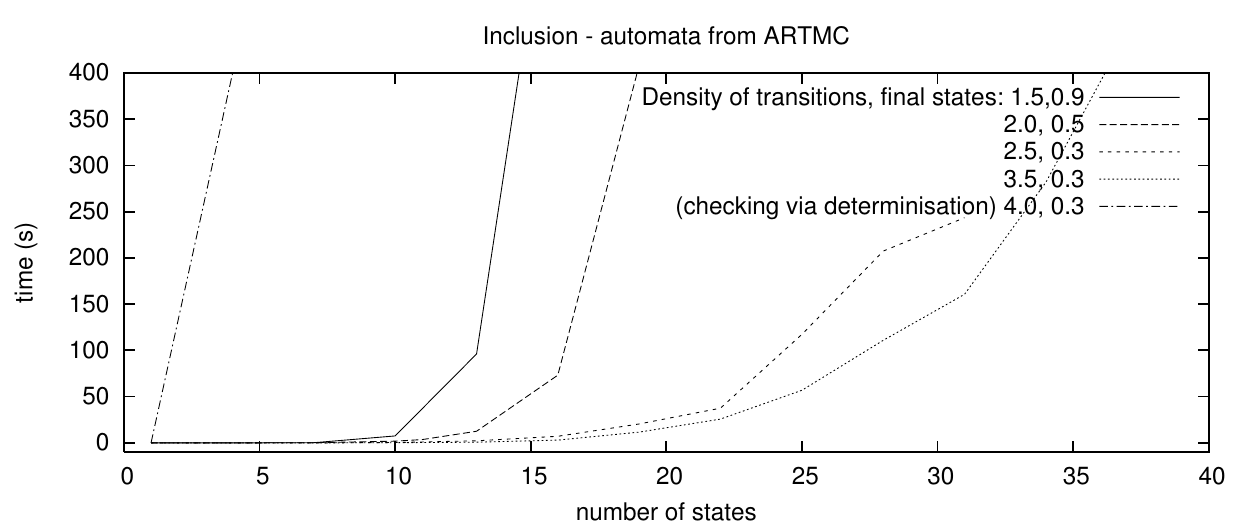}}}

  \mbox{\subfigure[Determinisation-based and antichain-based inclusion checking
    on TA from abstract regular tree model checking] {\label{FigInclARTMC}
    \includegraphics[width=100mm]{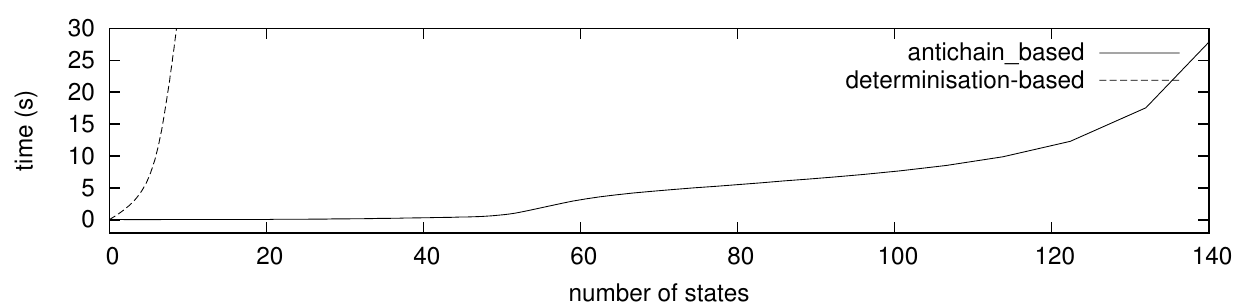}}}

  \caption{Further experiments with inclusion checking on tree
  automata}
  \label{FigInclExperiments2}
  
\end{center}
\end{figure}

\subsection{Experiments with Regular Tree Model Checking}\label{section:rtmc}

We now present our experiments with regular tree model checking that
illustrate practical applicability of the language inclusion testing algorithms
and the tree automata reduction algorithms from
Chapter~\ref{chapter:ta_reduction}. We will show how the two techniques
allow us to build the (abstract) regular tree model checking on nondeterministic tree automata
instead of on deterministic ones which greatly improves efficiency of the
method. 

\paragraph{Nondeterministic Abstract Regular Tree Model Checking.}

As is clear from the definition of $\hat{\tau}$ in Section~\ref{section:artmc}, ARTMC was
originally defined for and tested on \emph{minimal deterministic} tree automata
(DTA). However, the various experiments done showed that the determinisation step
is a significant bottleneck. To avoid it and to implement ARTMC using
nondeterministic tree automata (TA), we need the following operations over TA:
(1)~application of the transition relation $\tau$, (2)~union, (3)~abstraction and
its refinement, (4)~intersection with the set of bad configurations,
(5)~emptiness, and (6)~inclusion checking (needed for testing if the abstract
reachability computation has reached a fixpoint). 
Finally, (7)~a~method to reduce
the size of the computed TA is also desirable---$\hat{\tau}(\mathcal{A})$ is
then redefined to be the reduced version of the TA obtained from an application
of $\tau$ on an TA $\mathcal{A}$. We note that the method would in theory work without reduction methods too. However, often hundreds of the steps (1) to (6) are performed within a single verification run, and most of them
increases the size of automata\footnote{Some abstraction methods reduce the size of automata too, however, not sufficiently enough to outweigh the increase of size caused by the other steps.}. Therefore, good reduction techniques are in
fact crucial since the size of automata tends to explode which
reduces scalability of the method.

%
%

An implementation of Points (1), (2), (4), and (5) is easy. Moreover, concerning
Point (3), the abstraction mechanisms of \cite{bouajjani:abstractTree} can be lifted to work on
TA in a straightforward way while preserving their guarantees to be finitary,
overapproximating, and the ability to exclude spurious counterexamples.
Furthermore, Chapter~\ref{chapter:ta_reduction} gives efficient algorithms for
reducing TA based on computing suitable simulation equivalences on their states,
which covers Point (7). Hence, the last obstacle for implementing
nondeterministic ARTMC was Point (6), i.e., the need to efficiently check
inclusion on TA. We have solved this problem by Algorithm~\ref{algorithm:TAlanguageinclusion}, which allowed us to implement a nondeterministic ARTMC
framework in a prototype tool and test it on suitable examples. Below, we present
the first very encouraging results that we have achieved.
We note that we were so far considering only the pure antichains where the role of simulation within Algorithm~\ref{algorithm:TAlanguageinclusion} is played only by the identity relation%
\footnote{
We have not yet managed to incorporate simulation enhanced antichain  algorithms into the framework of ARTMC. We plan to use them in the further prototype tools that we mention in Section~\ref{sec:conclusion}.  We believe that the overall impact of the simulation subsumption technique will be positive, judging from the experience that we have gathered and that is presented in Section~\ref{sec:experiments}.
}.


\paragraph{Experiments with Nondeterministic ARTMC.}\label{SecARTMCExprm}

We have implemented the version of ARTMC framework based on nondeterministic tree automata using the Timbuk tree
automata library \cite{timbuk} and compared it with an ARTMC implementation based on the
same library, but using DTA. In particular, the deterministic ARTMC framework
uses determinisation and minimisation after computing the effect of each
forward or backward step to try to keep the automata as small as possible and
to allow for easy fixpoint checking: The fixpoint checking on DTA is not based
on inclusion, but identity checking on the obtained automata (due to the fact
that the computed sets are only growing and minimal DTA are canonical). For
TA, the tree automata reduction from Chapter~\ref{chapter:ta_reduction} that we use does
not yield canonical automata, and so the antichain-based inclusion checking is
really needed.

We have applied the framework to verify several procedures manipulating dynamic
tree-shaped data structures linked by pointers. The trees being manipulated are
encoded directly as the trees handled in ARTMC, each node is labelled by the data
stored in it and the pointer variables currently pointing to it. All program
statements are encoded as (possibly non-structure preserving) tree transducers.
The encoding is~fully~automated. The only allowed destructive pointer updates
(i.e., pointer manipulating statements changing the shape of the tree) are tree
rotations \cite{cormen:introduction} and addition of new~leaf~nodes.

We have in particular considered verification of the depth-first tree traversal
and the standard procedures for rebalancing red-black trees after insertion or
deletion of a leaf node \cite{cormen:introduction}. We have verified that the programs do
not manipulate undefined and null pointers in a faulty way. For the procedures on
red-black trees, we have also verified that their result is a red-black tree
(without taking into account the non-regular balancedness condition). In general,
the set of possible input trees for the verified procedures as well as the set of
correct output trees were given as tree automata. In the case of the procedure
for rebalancing red-black trees after an insertion, we have also used a generator
program preceding the tested procedure which generates random red-black trees and
a tester program which tests the output trees being correct. Here, the set of
input trees contained just an empty tree, and the verification was reduced to
checking that a predefined error location is unreachable. The size of the
programs ranges from 10 to about 100 lines of pure pointer manipulations.

The results of our experiments on an Intel Xeon processor at 2.7GHz with 16GB of
available memory (as in Section~\ref{SecUniExprm}) are summarised in
Table~\ref{TabARTMC}. The predicate abstraction proved to give much better
results (therefore we do not consider the finite-height abstraction here). The
abstraction was either applied after firing each statement of the program (``full
abstraction'') or just when reaching a loop point in the program (``restricted
abstraction''). The results we have obtained are very encouraging and show a
significant improvement in the efficiency of ARTMC based on nondeterministic tree
automata. Indeed, the ARTMC framework based on deterministic tree automata has
either been significantly slower in the experiments (up to $25$-times) or has
completely failed (a too long running time or a lack of memory)---the latter case
being quite frequent.

\newcommand{\failed}{$\times$}
\newcommand{\detc}{\makebox[13mm]{det.}}
\begin{table}[t]
\renewcommand{\tabcolsep}{5pt}
\begin{center}

  \caption{Running times (in sec.) of det. and nondet. ARTMC applied for
  verification of various tree manipulating programs ($\times$ denotes a too
  long run or a failure due to a lack of memory)}
  \label{TabARTMC}

\renewcommand{\tabcolsep}{1.5mm}
  \begin{tabular}{|r||c|c||c|c||c|c|}

    \hline &\multicolumn{2}{|c||}{\rule{0pt}{1.5em}DFT} &
    \multicolumn{2}{|c||}{\parbox{20mm}{\centering{RB-delete\\(null,undef)}}} &
    \multicolumn{2}{|c|}{{\parbox{20mm}{\centering{RB-insert\\(null,undef)}}}}\\

    \hline

    \hline

    & \detc & nondet. & \detc & nondet. & \detc & nondet. \\

    \hline

    full abstr.       & 5.2 & 2.7 & \failed & \failed & 33  & 15  \\

    \hline

    restricted abstr. & 40  & 3.5 & \failed & 60 & 145 & 5.4 \\

    \hline \hline

    &
    \multicolumn{2}{c||}{\rule{0pt}{1.5em}\parbox{30mm}
    {\centering{RB-delete\\(RB preservation)}}} & \multicolumn{2}{|c||}
    {\parbox{30mm}{\centering{RB-insert\\(RB preservation)}}} 
    & \multicolumn{2}{|c|}{\parbox{20mm}{\centering{RB-insert\\(gen.,
    test.)}}}\\

    \hline

    \hline

    & det. & nondet. & det. & nondet. & det. & nondet. \\

    \hline

    full abstr. & \failed & \failed & \failed & \failed & \failed &\failed \\

    \hline

    restricted abstr.&\failed  & 57 &\failed  & 89 &\failed  & 978 \\

    \hline

  \end{tabular}

\end{center}
\end{table}

\section{Experiments with Pure versus Simulation Enhanced Antichain Algorithms.}
\label{sec:experiments}

In this section, we describe the experimental result obtained in
\cite{abdulla:when} where we compare pure antichain algorithms for FA and TA
with simulation enhanced antichain algorithms.  
Recall that by pure antichain algorithms we mean algorithms published in
\cite{wulf:antichains} for FA and in \cite{bouajjani:antichain} for TA that
may be seen as special cases of Algorithms~\ref{algorithm:universality},
\ref{algorithm:languageinclusion}, \ref{algorithm:TAuniversality}, and
\ref{algorithm:TAlanguageinclusion} where the role of simulation relation is
played by the identity relation. Notice that in this case, only Optimisation 1
comes to play within  Algorithms \ref{algorithm:universality} and
\ref{algorithm:TAuniversality} for checking universality, and only Optimisation
1(a) applies within Algorithms \ref{algorithm:languageinclusion} and
\ref{algorithm:TAlanguageinclusion} for checking language inclusion. Since $\preceq$ is the identity relation, Checking
the relation $\preceq^{\forall\exists}$ on sets of states is then replaced be
checking subset inclusion.

We concentrated on experiments with inclusion checking,
since it is more common than universality checking in various symbolic
verification procedures, decision procedures, etc. We compared our approach,
parametrised by maximal simulation (or, for tree automata, maximal upward
simulation), with the previous pure antichain-based approach of
\cite{wulf:antichains,bouajjani:antichain}, and with classical
subset-construction-based approach.  We implemented all the above in OCaml. We
used the algorithm in~\cite{holik:optimizing} for computing maximal
simulations.  In order to make the figures easier to read, we often do not show
the results of the classical algorithm. The reason is that in all of the
experiments, the classical algorithm performed much worse than the other two
approaches that these experiments are primarily directed to compare.

We note that we have also done some preliminary experiments with random
automata generated according to the framework by Vardi and Tabakov in the same
way as in the previous section. Sadly, for this type of automata, the simulation
optimisation give almost no speedup. It seems that for the hard areas of the
space of settings of parameters of the generator, simulation is very sparse and
the speedup that it gives hardly compensates the time needed for computing the
simulation itself.  On the other hand, for the easy settings, pure antichain
algorithms finish too fast and the time needed for computing simulation
dominates. Therefore, we decided to perform more experiments with automata that
have more structure such as those from the sources described above and which
are also closer too real life applications than the random ones.  As we will
see, for these automata the simulation optimisations really help.

\subsection{Experiments on FA}\label{sec:experiments-FA}

\begin{figure}[t]
\begin{center}
\subfigure[Detailed results]{
    \includegraphics[width=10cm]{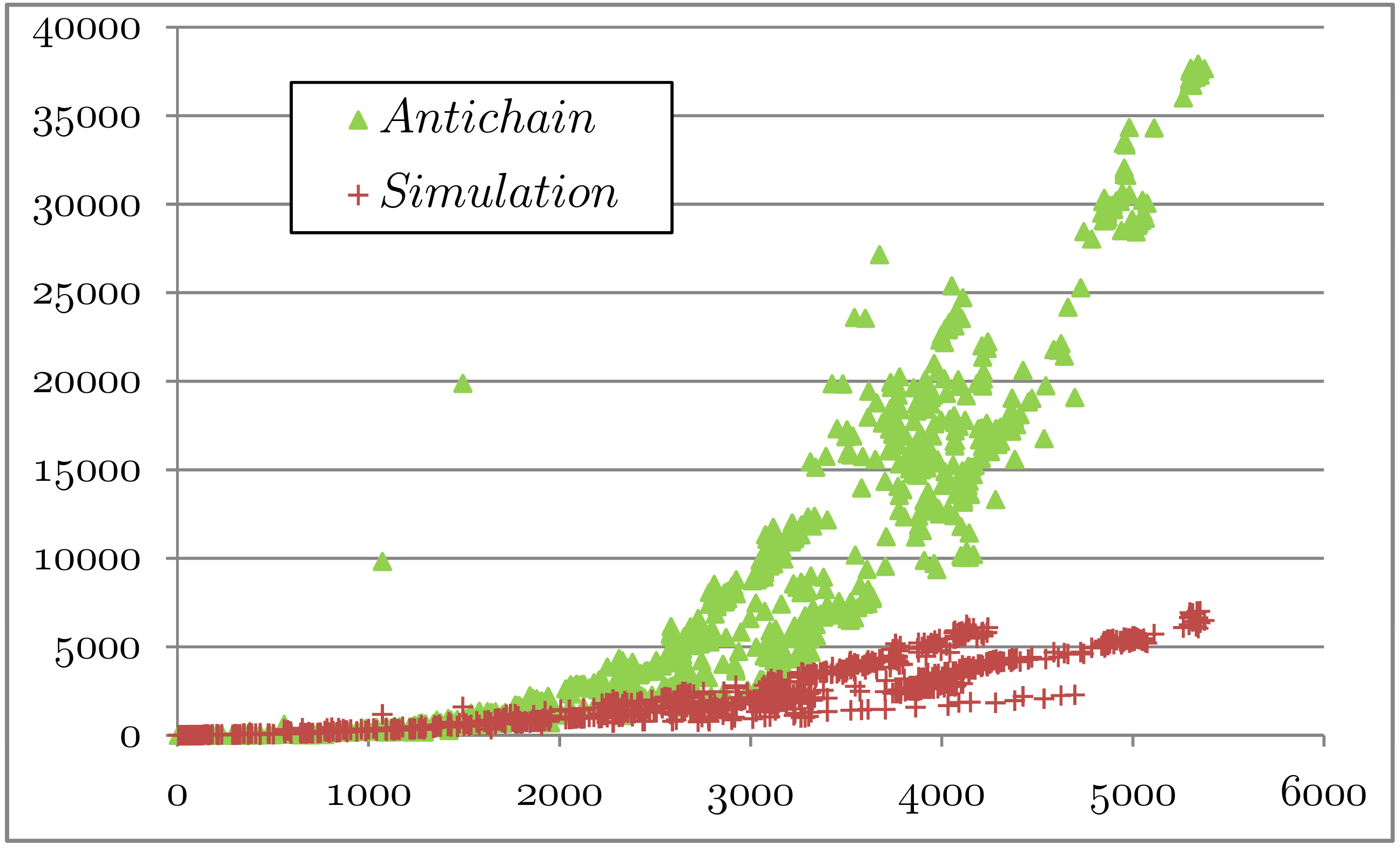}
}

\subfigure[Average execution time for different FA pair sizes (in seconds)]{
\centering
    \begin{tabular}{|lcl|D{.}{.}{3}|D{.}{.}{3}|}
      \hline
    Size & &  & \multicolumn{1}{c|}{Antichain} & \multicolumn{1}{c|}{Simulation}\\
      \hline
    0 & - & 1000 &  0.059	&0.099 \\
    1000 & - & 2000& 1.0	&0.7\\
    2000 & - & 3000 & 3.6	&1.69\\
    3000 & - & 4000 &11.2	&3.2\\
    4000 & - & 5000 &20.1	&4.79\\
    5000 & - &    &33.7	&6.3\\
      \hline
    \end{tabular}
}
  \caption{Language inclusion checking on FA generated from a regular model checker}
  \label{figure:lc_ARMCexp}
\end{center}
\end{figure}

For language inclusion checking of FA, we compared the simulation enhanced
approach that corresponds to Algorithm~\ref{algorithm:languageinclusion}
against the former pure antichain approach that corresponds to the same
algorithm but with the simulation relation being identity.  We tested the two
on examples generated from the intermediate steps of a tool for abstract
regular model checking~\cite{bouajjani:abstract}. In total, we have 1069 pairs
of FA generated from different verification tasks, which included verifying a
version of the bakery algorithm, a system with a parametrised number of
producers and consumers communicating through a double-ended queue, the bubble
sort algorithm, an algorithm that reverses a circular list, and a Petri net
model of the readers/writers protocol (cf.
\cite{bouajjani:abstract,bouajjani:verifying} for a detailed description of the
verification problems). In Fig.~\ref{figure:lc_ARMCexp}~(a), the horizontal
axis is the sum of the sizes of the pairs of automata\footnote{We measure the
size of the automata as the number of their states.} whose language inclusion
we check, and the vertical axis is the execution time (the time for computing
the maximal simulation is included).  Each point denotes a result from
inclusion testing for a pair of FA.  Fig.~\ref{figure:lc_ARMCexp}~(b) shows
the average results for different FA sizes. From the figure, one can see that
our approach has a much better performance than the antichain-based one. Also,
the difference between our approach and the antichain-based approach becomes
larger when the size of the FA pairs increases. If we compare the average
results on the smallest 1000 FA pairs, our approach is 60\% slower than the
the antichain-based approach. For the largest FA pairs (those with size larger
than 5000), our approach is 5.32 times faster than the the antichain-based
approach. We note that the time needed for computing simulation is always
included in the overall running time of the simulation enhanced algorithm.

\begin{figure}
\begin{center}
  \subfigure[Language inclusion does not always hold]
  {\label{figure:lc_REexp_nothold}
    \includegraphics[width=10cm]{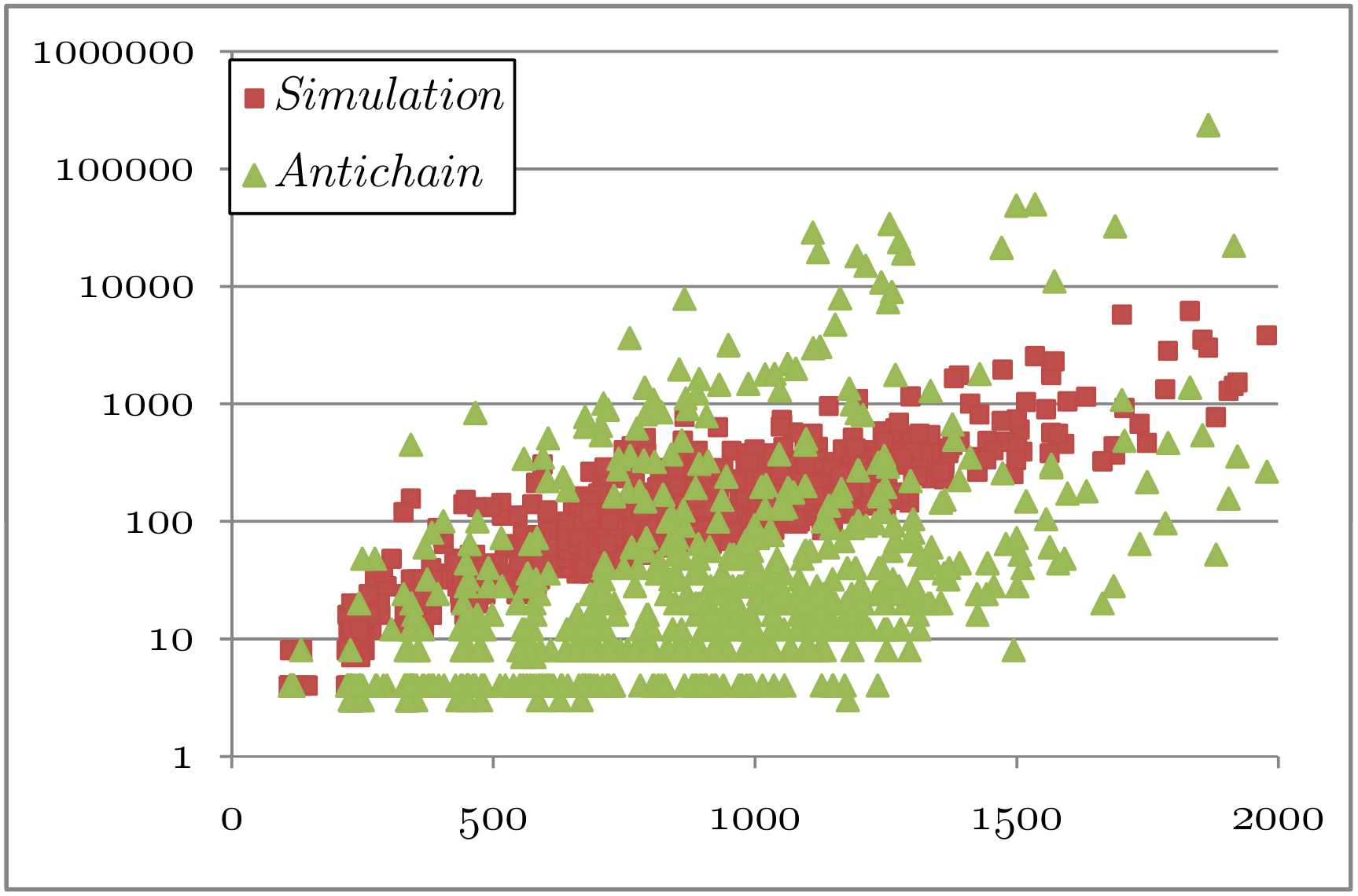}
  }
  \subfigure[Language inclusion always holds]
  {\label{figure:lc_REexp_hold}
    \includegraphics[width=10cm]{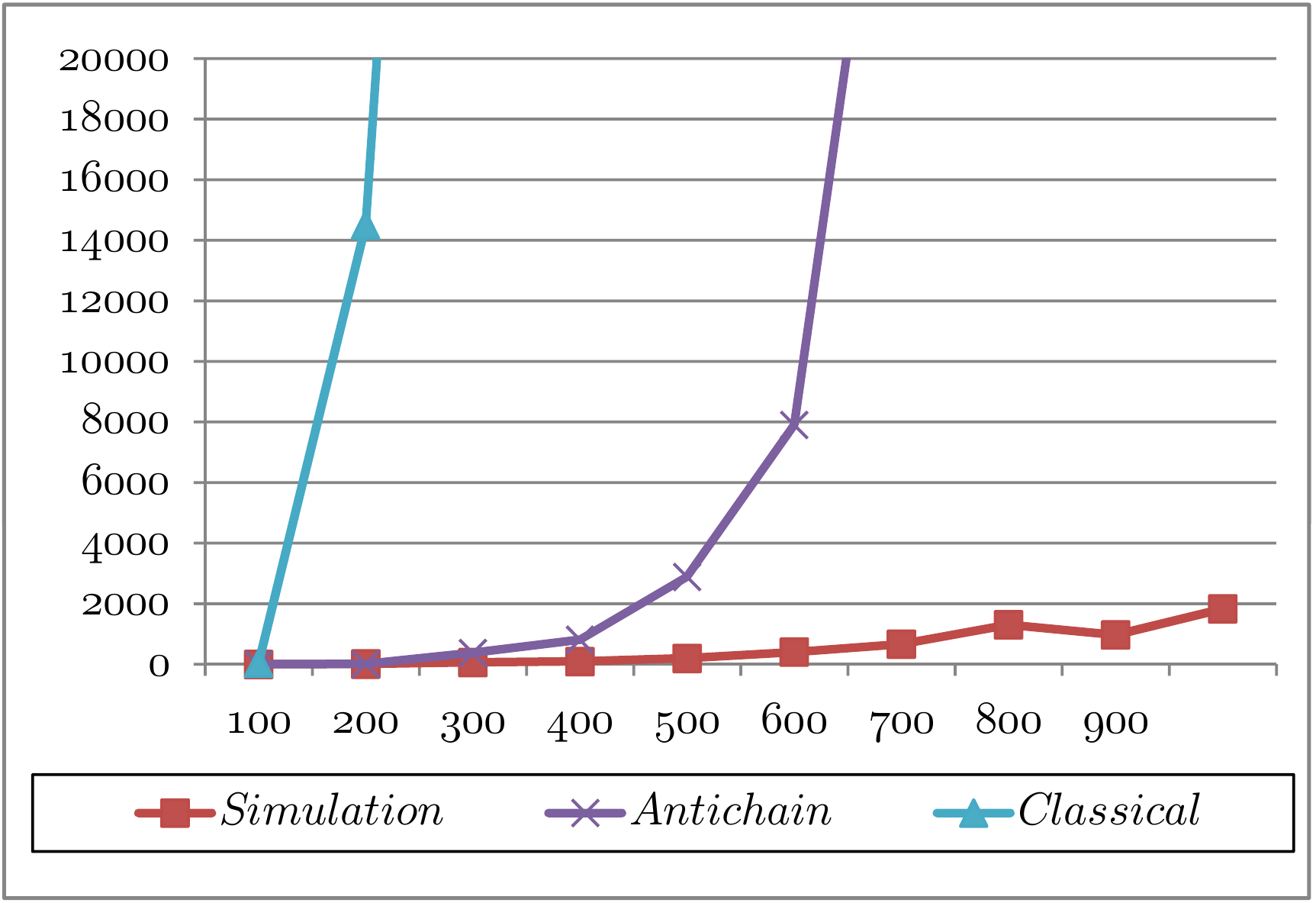}
  }
  \caption{Language inclusion checking on FA generated from regular expressions}
  \label{figure:lc_REexp}
\end{center}
\end{figure}

We also tested our approach using FA generated from random regular expressions.
We have two different tests: (1) language inclusion does not always hold and (2)
language inclusion always holds\footnote{To get a sufficient number of tests for
the second case, we generate two FA $\mathcal{A}$ and $\mathcal{B}$ from random
regular expressions, build their union automaton $\mathcal{C} = \mathcal{A} \cup
\mathcal{B}$, and test $\lang{\mathcal{A}} \subseteq \lang{\mathcal{C}}$.}. The
result of the first test is in Fig.~\ref{figure:lc_REexp_nothold}. In the
figure, the horizontal axis is the sum of the sizes of the pairs of automata
whose language inclusion we check, and the vertical axis is the execution time
(the time for computing the maximal simulation is included). From
Fig.~\ref{figure:lc_REexp_nothold}, we can see that the performance of our
approach is much more stable. It seldom produces extreme results. In all of the
cases we tested, it always terminates within 10 seconds. In contrast, the
antichain-based approach needs more than 100 seconds in the worst case. The
result of the second test is in Fig.~\ref{figure:lc_REexp_hold} where the
horizontal axis is the length of the regular expression and the vertical axis is
the average execution time of 30 cases in milliseconds. From
Fig.~\ref{figure:lc_REexp_hold}, we observe that our approach has a much better
performance than the antichain-based approach if the language inclusion holds.
When the length of the regular expression is 900, our approach is almost 20
times faster than the antichain-based approach.

\begin{figure}
\begin{center}
    \includegraphics[width=10cm]{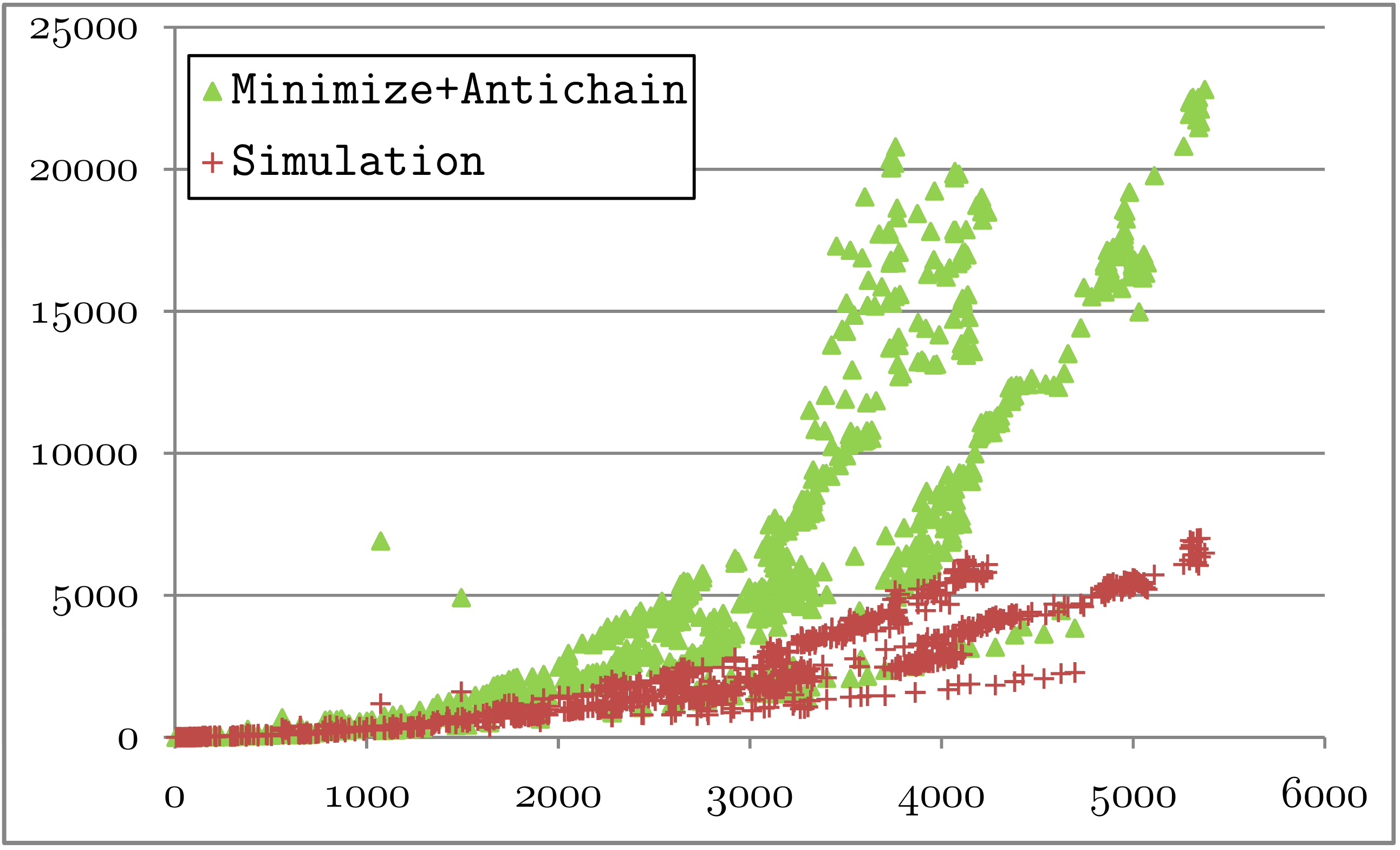}
    \caption{Compare the performance of our approach with minimise + antichain}
  \label{figure:lc_ARMCreduce}
\end{center}
\end{figure}

When the maximal simulation relation $\preceq$ is given, a natural way to
accelerate the language inclusion checking is to use $\preceq$ to minimise the
size of the two input automata by merging $\preceq$-equivalent states. In this
case, the simulation relation becomes sparser. A~question arises whether our
approach has still a better performance than the antichain-based approach in
this case. Therefore, we also evaluated our approach under this setting. Here
again, we used the FA pairs generated from abstract regular model
checking~\cite{bouajjani:abstract}. The results presented at Figure~\ref{figure:lc_ARMCreduce} show that although the antichain-based approach
gains some speed-up (compare with Figure~\ref{figure:lc_ARMCexp}) when combined with minimisation, it is still slower than
our approach. The main reason is that in many cases, simulation holds only in
one direction, but not in the other. Our approach can also utilise this type of
relation. In contrast, the minimisation algorithm merges only simulation
equivalent states.

\subsection{Experiments on TA}

For language inclusion checking of TA, we tested our approach on 86 tree
automata pairs generated from the intermediate steps of a regular tree model
checker from Section~\ref{SecARTMCExprm} while verifying the algorithm of rebalancing
red-black trees after insertion or deletion of a~leaf node. We were again
comparing simulation enhanced antichain approach that corresponds to Algorithm
\ref{algorithm:TAlanguageinclusion} with the pure antichain approach that corresponds to the same
algorithm but with the simulation relation being the identity.  The results
are given in Table~\ref{figure:lc_Tree}. Our approach has a~much better
performance when the size of a TA pair is large. For TA pairs of size smaller
than 200, our approach is on average 1.39 times faster than the antichain-based
approach. However, for those of size above 1000, our approach is on average 6.8
times faster than the antichain-based approach.

\begin{table}
\begin{center}
    \begin{tabular}{|rcr|D{.}{.}{2}|D{.}{.}{4}|c|r|}
      \hline
    \multirow{2}{*}{Size} & &  & \multicolumn{1}{c|}{Antichain} &
    \multicolumn{1}{c|}{Simulation} &
    \multirow{2}{*}{Diff.}&\multirow{2}{*}{\# of Pairs}\\
           & & & (sec.) & (sec.) & &\\
      \hline
    0 & -- & 200 &1.05&0.75&140\%&29 \\
    200 & -- & 400&11.7&4.7&246\%&15\\
    400 & -- & 600&65.2&19.9&328\%&14\\
    600 & -- & 800&3019.3&568.7&531\%&13\\
    800 & -- & 1000&4481.9&840.4&533\%&5\\
    1000 & -- & 1200&11761.7&1720.9&683\%&10\\
      \hline
    \end{tabular}\\
    \caption{Language inclusion checking on TA}
  \label{figure:lc_Tree}
\end{center}
\end{table}

\section{Conclusions and Future Work}\label{sec:conclusion}

We presented algorithms for finite word and tree automata universality and
language inclusion checking that combine the antichain principle from
\cite{wulf:antichains} with a use of simulation relations (forward simulation in the case of FA and upward simulation parametrised by identity in the case of TA).  
The algorithms have
been thoroughly tested both on randomly generated automata and on automata
obtained from various verification runs of the ARTMC framework. The new
algorithms are significantly more efficient than the pure antichain algorithms
from \cite{wulf:antichains} and \cite{bouajjani:antichain}.

In the case of TA, we also presented experimental results from our previous
work \cite{bouajjani:antichain} on pure antichain tree automata versions of the
algorithms from \cite{wulf:antichains} which preceded the work on their
versions improved with simulation presented here. We compare these algorithms with the
classical subset construction-based algorithm and we conclude that similarly as
shown in \cite{wulf:antichains} for FA, the antichain technique fundamentally
improves performance of universality and language inclusion checking over tree
automata.  Moreover, using the proposed pure antichain-based inclusion checking
algorithm together with our simulation based reduction methods from
Chapter~\ref{chapter:ta_reduction}, we have implemented a complete ARTMC
framework based on nondeterministic tree automata and tested it on verification
of several real-life pointer-intensive procedures. The results show a very
encouraging improvement in the capabilities of the framework.

We are considering several directions of future work. First, our simulation based
improvements of antichain algorithms is based on relatively simple and natural
principles and we believe that these techniques can be developed for
other classes of automata. We have already done the first attempt  in
\cite{abdulla:simulationsubsumption} where we have successfully combined the Ramsey based approach
to universality and inclusion checking for B\"uchi automata with simulations.  

Next, we have already proven first results showing that it is possible to
design downward tree automata antichain algorithms. These could be then
combined with downward tree automata simulation. We believe that in practice,
downward algorithms could outperform the upward ones. The upward algorithms
suffer from a need of exploring relatively high nondeterministic choice of an
upward tree automata run. One dimension of this nondeterminism could be
eliminated by a downward algorithm.  Moreover, downward simulation is cheaper
and often richer than upward simulation parametrised by identity, which could
be another advantage of downward algorithms.

Another interesting idea is to try to combine relations in the spirit of our
mediated preorder from Chapter~\ref{chapter:ta_reduction} with the antichain
methods. Mediated preorders are richer than simulations, but imply different
yet still interesting properties of runs of automata.

We would like to perform even more experiments,
including, e.g., experiments where our most recent techniques will be incorporated into the
entire framework of abstract regular (tree) model checking or into some
automata-based decision procedures.
A work on a BDD based tree automata library (in the style of MONA tree automata library \cite{klarlund:MONA}) 
that could make the recent tree automata techniques widely available even for more practical purposes has already started.  
We hope that this will
yield another significant improvement in the tree automata technology allowing
for a new generation of tools using tree automata. 
Finally, we are working on an ARTMC-based tool for verifying pointer manipulating programs that will also use all the
recent tree automata techniques.
We also expect that the tools will generate meaningful experimental data that will be helpful for further research on finite automata.

\chapter{Simulation-based Reduction of Alternating B\"uchi Automata}
\label{chapter:aba_reduction}

\renewcommand{\root}{\mathit{root}}
\newcommand{\leaf}{\mathit{leaf}}
\newcommand{\branch}{\mathit{branches}}
\renewcommand{\succ}[1]{\mathit{succ_{#1}}}
\renewcommand{\L}{\mathcal{L}}
\newcommand{\bs}[2]{\langle #1,#2 \rangle}
\newcommand{\xtrlow}[1]{\mathrel{\raisebox{0pt}[0pt][-2pt] {\ensuremath {\xtr {\raisebox{0pt}[0pt][2pt]{$\scriptstyle{#1}$}}}}} }
\newcommand{\mc}[2]{\multicolumn{#1}{|c|}{#2}}
\newcommand{\mr}[2]{\multirow{#1}{*}{#2}}

\newcommand{\ext}{\mathit{ext}}
\newcommand{\height}{\mathit{height}}
\newcommand{\mcov}{\preceq_{\mathsf{w}\text{-}\ext}}
\newcommand{\sw}{\mathsf{sw}_T}
\newcommand{\morestrong}[2]{\prec^{T}_{#1,#2}}
\newcommand{\moreacc}{\preceq_{\alpha^+\!\Rightarrow\alpha}}
\def\centerframe#1#2#3{%
\framebox{%
\parbox{0pt}{\rule{0pt}{#2}}\parbox{#1}{\makebox[#1]{#3}}}}

\newcommand{\fcov}{\preceq_\ext}
\newcommand{\fcovx}[1]{\preceq_{\ext_{#1}}}

\newcommand{\wma}{\preceq_{\alpha F}}
\newcommand{\moreaccm}{\moreacc^M}
\newcommand{\U}{\mathbb{U}^\infty}
\newcommand{\V}{\mathbb{V}^\infty}

\newcommand{\Bad}{\mathit{Bad}}
\newcommand{\Good}{\mathit{Good}}
\newcommand{\New}{\mathit{New}}

In this chapter, we present the results from our first attempt to adapt our
techniques beyond the scope of finite word/tree automata, which was first
published in \cite{abdulla:mediating}. Namely, we focus on simulation-based
reduction of alternating B\"uchi automata inspired by the technique
described in Chapter~\ref{chapter:ta_reduction}. 

Alternating B\"uchi automata (ABA) are succinct state-machine representations
of $\omega$-regular languages (regular sets of infinite sequences).  They are
widely used in the area of formal specification and verification of
non-terminating systems.  One of the most prominent examples of the use of ABA
is the complementation of nondeterministic B\"uchi
automata~\cite{kupferman:weak}. It is an essential step of the
automata-theoretic approach to model checking when the specification is given
as a positive B\"uchi automaton~\cite{vardi:automata2} and also learning based
model checking for liveness properties~\cite{farzen:extending}.  The other
important usage of ABA is as the intermediate data structure for translating a
linear temporal logic (LTL) specification to an automaton~\cite{gastin:fast}.

However, because of the compactness of ABA\footnote{ABA's are exponentially
more succinct than nondeterministic B\"uchi automata.}, the algorithms that work on
them are usually of high complexity. For example, both the complementation and
the LTL translation algorithms transform an intermediate ABA to an equivalent
NBA. The transformation is exponential in the size of the input ABA.  Hence,
one may prefer to reduce the size of the ABA (with some relatively cheaper
algorithm) before giving it to the exponential procedure.

In the study of Fritz and Wilke, simulation-based minimisation is proven as a
very effective tool for reducing the size of ABA~\cite{fritz:simulation}.
However, they considered only \emph{forward} simulation relations. Inspired by
our work on tree automata reduction methods, we introduce also a notion of
\emph{backward} simulation (parametrised by forward simulation) that can be
used for reducing the size of ABA as well. As will be explained in
Section~\ref{section:simulation}, similarly as for tree automata upward simulation,
quotienting wrt.  \emph{backward} simulation (i.e., simplifying the automaton by
collapsing backward simulation equivalent states) does not preserve the
language, however, backward simulation can be used for quotienting in
combination with forward simulation. In fact, we will arrive to an alternating
automata equivalent of the tree automata notion of \emph{mediated equivalence}
from Chapter~\ref{chapter:ta_reduction}.  

We evaluate the performance of minimising ABA with mediated equivalence is
evaluated on a large set of experiments. In the experiments, we apply different
simulation-based minimisation approaches to improve the complementation
algorithm of nondeterministic B\"uchi automata. The experimental results show
that the minimisation using mediated preorder significantly outperforms the
minimisation using forward simulation. To be more specific, on average,
mediated minimisation results in a 30\% better reduction in the number of
states and 50\% better reduction in the number of transitions than forward
minimisation on the intermediate ABA.  Moreover, in the complemented
nondeterministic B\"uchi automata, mediated minimisation results in a 100\%
better reduction in the number of states and 300\% better reduction in the
number of transitions than forward minimisation.

\section{Basic Definitions}\label{section:definition}

Given a finite set $X$, we use $X^*$ to denote the set of all finite words over
$X$ and $X^\omega$ for the set of all infinite words over $X$. The empty word
is denoted $\epsilon$ and $X^+ = X^*\setminus \{\epsilon\}$.  The concatenation
of a finite word $u\in X^*$ and a finite or infinite word $v\in X^*\cup
X^\omega$ is denoted by $uv$.  For a word $w\in X^*\cup X^\omega$, $|w|$ is
the length of $w$ ($|w|=\infty$ if $w\in X^\omega$), $w_i$ is the $i$th letter
of $w$ and $w^i$ the $i$th prefix of $w$ (the word $u$ with $w = uv$ and
$|u|=i$). $w^0=\epsilon$.  The concatenation of a finite word $u$ and a set
$S\subseteq X^*\cup X^\omega$ is defined as $uS = \{uv\mid v\in S\}$.

An \emph{alternating B\"uchi automaton} is a tuple $\A =
(\Sigma,Q,\iota,\delta,\alpha)$ where $\Sigma$ is a finite alphabet, $Q$ is a
finite set of states, $\iota\in Q$ is an initial state, $\alpha\subseteq Q$ is
a set of accepting states, and $\delta:Q\times \Sigma \rightarrow 2^{2^Q}$ is a
total transition function.  A \emph{transition} of $\A$ is of the form $p
\xtr{a} P$ where $P\in \delta(q,a)$.

A \emph{tree $T$ over $Q$} is a subset of $Q^+$ that contains all nonempty
prefixes of each one of its elements (i.e., $T\cup\{\epsilon\}$ is
prefix-closed).  Furthermore, we require that $T$ contains exactly one $r\in
Q$, the \emph{root of $T$}, denoted $\root(T)$.  We call the elements of $Q^+$
\emph{paths}. For a path $\pi q$, we use $\leaf(\pi q)$ to denote its last
element $q$.  Define the set $\branch(T)\subseteq Q^+\cup Q^\omega$ such that
$\pi\in \branch(T)$ iff $T$ contains all prefixes of $\pi$ and $\pi$ is not a
proper prefix of any path in $T$.  In other words, a \emph{branch of $T$} is
either a maximal path of $T$, or it is a word from $Q^\omega$ such that $T$
contains all its nonempty prefixes.  
We use $\succ T(\pi) = \{r\mid\pi r\in
T\}$ to denote the set of successors of a path $\pi$ in $T$, and $\height(T)$
to denote the length of the longest branch of $T$.
A tree $U$ over $Q$ is a
\emph{prefix of $T$} iff $U\subseteq T$ and for every $\pi\in U$, $\succ U(\pi)
= \succ T(\pi)$ or $\succ U(\pi) = \emptyset$.  The \emph{suffix of $T$}
defined by a path $\pi q$ is the tree $T(\pi q) = \{q \psi\mid \pi q \psi \in
T\}$.

Given a word $w\in\Sigma^\omega$, a tree $T$ over $Q$ is a \emph{run of $\A$ on
$w$}, if for every $\pi \in T$, $\leaf(\pi)\xtrlow{w_{|\pi|}} \succ T(\pi)$ is
a transition of $\A$.  Finite prefixes of $T$ are called \emph{partial runs on
$w$}.  A run $T$ of $\A$ over $w$ is \emph{accepting} iff every infinite branch
of $T$ contains infinitely many accepting states. A word $w$ is \emph{accepted}
by $\A$ from a state $q\in Q$ iff there exists an accepting run $T$ of $\A$
over $w$ with $\root(T) = q$. The \emph{language of a state $q\in Q$ in $\A$},
denoted $\L_\A(q)$, is the set of all words accepted by $\A$ from $q$. Then
$\L(\A) = \L_\A(\iota)$ is the \emph{language of $\A$}.  For simplicity of
presentation, we assume in the rest of the paper that $\delta$ never allows a
transition of the form $p \xtr{a} \emptyset$. This means that no run can
contain a finite branch. Any automaton can be easily transformed into one
without such transitions by adding a new accepting state $q$ with $\delta(q,a)
= \{\{q\}\}$ for every $a\in\Sigma$ and replacing every transition $p \xtr{a}
\emptyset$ by $p \xtr{a} \{q\}$.

We note that for technical reasons, we use a simpler definition of a tree and a
run of an alternating automaton than the usual one (e.g., \cite{kupferman:weak}
or Chapter~\ref{chapter:ta_reduction}). A tree is usually defined as a prefix
closed subset of $\nat^*$ and a run is then a map $r$ that assigns a state to
every element (node) of a tree. This definition allows existence of nodes with
more than one immediate successor labelled by the same state and successors of
a node are ordered. However, order as well as number of occurrences of a state
in the role of a successor of a parent state has no relevance for semantics of
an ABA.  From this point of view, it is more convenient to define runs simply
as unordered trees. 

\section{Simulation Relations}\label{section:simulation}

In this section, we give the definitions of forward and backward simulation over
ABA and discuss some of their properties.  The notion of backward simulation is
inspired by a similar tree automata notion studied 
 in Chapter~\ref{chapter:ta_reduction}---namely, the upward simulation
parametrised by a downward simulation (the connection between tree automata and
ABA follows from the fact that the runs of ABA are in fact trees).

For the rest of the section, we fix an ABA $\A =
(\Sigma,Q,\iota,\delta,\alpha)$. We define relations $\preceq_\alpha$ and
$\preceq_\iota$ on $Q$ s.t. $q\preceq_\alpha r$ iff $q\in \alpha\implies r\in
\alpha$ and $q\preceq_\iota r$ iff $q = \iota \implies r = \iota$. For a binary
relation $\preceq$ on a set $X$, the relation $\preceq^{\forall\exists}$ on
subsets of $X$ is defined as $Y\preceq^{\forall\exists} Z$ iff $\forall z\in
Z.\  \exists y\in Y.\ y\preceq z$, i.e., iff the upward closure of $Z$ wrt.
$\preceq$ is a subset of the upward closure of $Y$ wrt. $\preceq$.

\paragraph{Forward Simulation.}
A \emph{forward simulation} on $\A$ is a relation $ {\preceq_F} \subseteq Q
\times Q$ such that $p \preceq_F r$ implies that (i) $p\preceq_\alpha r$ and
(ii)~for all $p \xrightarrow{a} P$, there exists a $r
\xrightarrow{a} R$ such that $P\preceq_F^{\forall\exists} R$.

For the basic properties of forward simulation, we rely on the work
\cite{gurumurthy:complementing} by Gurumurthy et al. In particular,
(i) there exists a unique maximal forward simulation $\preceq_F$ 
on $\A$ called \emph{forward simulation preorder} which is reflexive and transitive, (ii) for any $q,r\in Q$ such that
$q\preceq_F r$, it holds that $\L_\A(q)\subseteq \L_\A(r)$, and (iii) quotienting
wrt.  $\preceq_F\cap\preceq_F^{-1}$ preserves the language of $\A$.

\paragraph{Backward Simulation. }
Let ${\preceq_F}$ be a forward simulation on $\A$.  A \emph{backward
simulation} on~$\A$ parametrised by $\preceq_F$ is a relation $ {\preceq_B}
\subseteq Q \times Q$ such that $p \preceq_B r$ implies that (i)~$p
\preceq_\iota r $, (ii)~$p \preceq_\alpha r $, and (iii) for all $q
\xtr{a} P\cup \{p\},p\not\in P$, there exists a $s \xtr{a} R\cup
\{r\},r\not\in R$ such that $q \preceq_B s$ and $P\preceq_F^{\forall\exists} R$.
The lemma below describes basic properties of backward simulation.
\begin{lemma}\label{lemma:backward-basic}
For any reflexive and transitive forward simulation $\preceq_F$ on $\A$, there
exists a~unique maximal backward simulation $\preceq_B$ on $\A$ parametrised by
$\preceq_F$ that is reflexive and transitive.
\end{lemma}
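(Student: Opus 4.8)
The plan is to follow the standard Knaster–Tarski style argument that one uses for establishing existence of a maximal simulation-like relation, adapted to the parametrised backward setting. First I would show that backward simulations parametrised by a fixed reflexive and transitive $\preceq_F$ are closed under arbitrary unions: if $\{\preceq_B^i\}_{i\in J}$ is a family of backward simulations parametrised by $\preceq_F$, then $\bigcup_{i\in J}\preceq_B^i$ is again a backward simulation parametrised by $\preceq_F$. This is immediate from the shape of the definition — conditions (i), (ii) are downward-stable properties of single pairs, and condition (iii) is an $\forall\exists$ condition whose witness for a pair $(p,r)\in{\preceq_B^i}$ can be taken from $\preceq_B^i$ itself (note we need the $P\preceq_F^{\forall\exists}R$ clause, which does not involve $\preceq_B$ at all, so no interaction between the members of the family is required there). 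Consequently the union of \emph{all} backward simulations parametrised by $\preceq_F$ is itself such a relation, and it is by construction the unique maximal one; call it $\preceq_B$. One also has to check the family is non-empty, which follows because the empty relation trivially satisfies (i)–(iii).

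Next I would establish reflexivity of $\preceq_B$. For this it suffices to check that the identity relation $\id$ on $Q$ is a backward simulation parametrised by $\preceq_F$: conditions (i) and (ii) hold trivially for $q\preceq_\iota q$ and $q\preceq_\alpha q$, and for (iii), given $q\xtr{a}P\cup\{p\}$ with $p\notin P$, we take the very same transition as witness with $s=q$, $r=p$, $R=P$; then $q\preceq_B q$ in the sense of $\id$, and $P\preceq_F^{\forall\exists}P$ holds because $\preceq_F$ is reflexive, so each $x\in P$ is witnessed by itself. Hence $\id$ is a backward simulation parametrised by $\preceq_F$, and since $\preceq_B$ is maximal, $\id\subseteq{\preceq_B}$, i.e.\ $\preceq_B$ is reflexive.

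Finally, transitivity. The clean way is to show that $\preceq_B\circ\preceq_B$ is again a backward simulation parametrised by $\preceq_F$; maximality then gives $\preceq_B\circ\preceq_B\subseteq{\preceq_B}$. So suppose $p\preceq_B p'$ and $p'\preceq_B r$. Conditions (i) and (ii) transfer by transitivity of $\preceq_\iota$ and $\preceq_\alpha$ (both are clearly transitive — if $q\in\alpha\Rightarrow q'\in\alpha$ and $q'\in\alpha\Rightarrow r\in\alpha$, then $q\in\alpha\Rightarrow r\in\alpha$, and similarly for $\preceq_\iota$). For (iii), take any transition $q\xtr{a}P\cup\{p\}$ with $p\notin P$. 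Applying the backward simulation condition for $p\preceq_B p'$ yields $s'$ with $q\preceq_B s'$ and a transition $s'\xtr{a}P'\cup\{p'\}$, $p'\notin P'$, with $P\preceq_F^{\forall\exists}P'$. Now apply the condition for $p'\preceq_B r$ to this transition: we obtain $s$ with $s'\preceq_B s$ and a transition $s\xtr{a}R\cup\{r\}$, $r\notin R$, with $P'\preceq_F^{\forall\exists}R$. Then $q\preceq_B s'\preceq_B s$, so $q\mathbin{(\preceq_B\circ\preceq_B)}s$, and transitivity of $\preceq_F^{\forall\exists}$ — which follows from transitivity of $\preceq_F$ — gives $P\preceq_F^{\forall\exists}R$. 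Thus $\preceq_B\circ\preceq_B$ satisfies all three conditions, and we conclude $\preceq_B$ is transitive.

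The only point requiring a little care — and the place I would expect a referee to look closely — is the bookkeeping in the transitivity step around the ``$p\notin P$'' side conditions: when we decompose a transition as $P\cup\{p\}$ with $p$ the distinguished element, we must make sure the distinguished element is threaded correctly through the two applications of condition (iii) (first tracking $p\rightsquigarrow p'$, then $p'\rightsquigarrow r$) and that the intermediate set $P'$ genuinely excludes $p'$ so that the second application is legitimate. This is exactly guaranteed by the ``$r\notin R$'' clause built into the definition, so the argument goes through, but it is the one spot where the parametrised/backward flavour differs from the textbook forward case and so deserves to be spelled out rather than waved away.
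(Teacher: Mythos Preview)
Your proof is correct and follows essentially the same route as the paper: closure under union to obtain a unique maximal relation, identity as a backward simulation to get reflexivity, and closure under composition/transitive closure to get transitivity. The paper phrases the last step via the transitive closure of $\preceq_B$ and chains $p^1\preceq_B\cdots\preceq_B p^m$, whereas you show $\preceq_B\circ\preceq_B\subseteq{\preceq_B}$ directly, but the content is identical; your remark about threading the ``$p\notin P$'' side condition is apt and handled correctly.
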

\begin{proof} 
The proof is an analogy of the proof of Lemma~\ref{lemma:upsim-basic}.

\emph{Union:} Given two backward
simulations $ \preceq_B^1$ and $ \preceq_B^2$ induced by $ \preceq_F$, we want to prove that ${\preceq_B} =
 {\preceq_B^1}\cup{\preceq_B^2}$ is also a backward simulation induced by $ \preceq_F$.  Let $p \preceq_B r$
for some $p,r\in Q$, then either $p\preceq_B^1r$ or $p\preceq_B^2r$.  Assume without loss of
generality that $p\preceq_B^1r$. Then, from the definition of backward simulation, whenever
$p' \ltr a P\cup\{p\},p\not\in P$, then there is a rule $r'\ltr a R\cup\{r\}, r\not\in R$, $p'\preceq_B^1 r'$, and $P\preceq_F^{\forall\exists} R$.  As ${\preceq_B^1}\subseteq {\preceq_B}$ gives $p'\preceq_B r'$, $\preceq_B$ fulfils the definition of
backward simulation induced by $\preceq_F$.

\emph{Reflexive closure:} It can be seen from the definition that the identity is
a backward simulation induced by $\preceq_F$ for any forward simulation $\preceq_F$. 
Therefore, from the closure under union, the union of the identity and any backward
simulation induced by $\preceq_F$ is a backward simulation induced by $\preceq_F$. 

\emph{Transitive closure:} Let $\preceq_B$ be a backward simulation induced by
$ \preceq_F$ and let $\preceq_B^T$ be its transitive closure.  Let
$p^1\preceq_B^Tp^m$ and ${r^1} \ltr a P^1\cup\{p^1\},p^1\not\in P^1$. Apparently, $p^1\preceq_\alpha p^m$ since $\preceq_\alpha$ is a transitive subset of $\preceq_B$. From
$p^1\preceq_B^Tp^m$, we have that there are states $p^1,\ldots,p^m$ such that
$p^1\preceq_Bp^2\preceq_B\cdots \preceq_Bp^m$.  Therefore, there are also rules
${r^2} \ltr a P^2\cup\{p^2\},\ldots,{r^m} \ltr a P^m\cup\{p^m\}$ with
$p^2\not\in P^2,\ldots,p^m\not\in P^m$, $r^1\preceq_B\cdots \preceq_Br^m$,  
and $P^1 \preceq_F^{\forall\exists} P^2 \preceq_F^{\forall\exists} \cdots \preceq_F^{\forall\exists} P^m$.
Thus, by definition of $\preceq_B^T$, we have $r^1\preceq_B^Tr^m$, and by 
transitivity of $\preceq_F^{\forall\exists}$, $P^1 \preceq_F^{\forall\exists} P^m$. Therefore, $\preceq_B^T$ fulfils the
definition of a backward simulation induced by $ \preceq_F$.  \end{proof}

By Lemma~\ref{lemma:backward-basic}, for a reflexive and transitive forward
simulation $\preceq_F$, there is a unique maximal upward simulation
parametrised by $\preceq_F$ and it is a preorder. We call it the \emph{backward
simulation preorder} on $\A$ parametrised by $\preceq_F$. Our backward
simulation is a close analogy of tree automata upward simulation.  Similarly as
upward simulation, backward simulation cannot be directly used for quotienting
(below we give an example of an automaton where quotienting using backward
simulation does not preserve language).  However, in Section
\ref{section:mediated}, we show that backward simulation can be combined with
forward simulation into a mediated equivalence (in the same way as tree
automata upward simulation can be combined with downward simulation) that can
be used for quotienting.

\begin{example}[backward simulation cannot be used for quotienting]
Consider the ABA $\A = (\{a,b\},\{s_0, s_1, s_2, s_3, s_4, s_5, s_6\},s_0,\delta,\{s_0, s_1, s_2, s_3, s_4, s_5, s_6\})$ where 
$$\begin{array}{llll}
s_0 \xrightarrow{a}\{s_4\},& s_1 \xrightarrow{b}\{s_2,s_5\},& s_2 \xrightarrow{b}\{s_2,s_3\}, & s_5 \xrightarrow{b}\{s_0\},\\
s_0 \xrightarrow{a}\{s_1\},& s_1 \xrightarrow{b}\{s_1,s_3\},& s_3 \xrightarrow{a}\{s_0\},     & s_6 \xrightarrow{a}\{s_0\}\\
s_0 \xrightarrow{b}\{s_0\},&                                & s_4 \xrightarrow{b}\{s_4, s_6\}, &
\end{array}
$$
are transitions of $\A$.
The maximal forward simulation relation $\preceq_F$ in $\A$ is
$$
\begin{array}{l}
\{
(s_0, s_0), (s_1, s_0), (s_1, s_1), (s_1, s_5), (s_2, s_0), (s_2, s_1), (s_2, s_2), (s_2, s_4),\\ 
(s_2, s_5), (s_3, s_3), (s_3, s_6), (s_4, s_0), (s_4, s_1), (s_4, s_2), (s_4, s_4), (s_4, s_5), \\
(s_5, s_0), (s_5, s_5), (s_6, s_3), (s_6, s_6)\}.
\end{array}
$$
The maximal backward simulation relation $\preceq_B$ parametrised with $\preceq_F$ is
$$
\begin{array}{l}
\{
(s_0, s_0), (s_1, s_1), (s_1, s_4), (s_2, s_2), (s_3, s_3), (s_4, s_1), (s_4, s_4), (s_5, s_2),\\ 
(s_5, s_3), (s_5, s_5), (s_5, s_6), (s_6, s_2), (s_6, s_3), (s_6, s_5), (s_6, s_6)\}.
\end{array}
$$
If we collapse states of $\A$ wrt. $\preceq_M$ (i.e., the two sets of states $\{s_1 ,s_4\}, \{s_5,s_6\}$ are collapsed), we will get the following alternating B\"uchi automaton $\A' = (\{a,b\}, \{s_0,s_1,s_2,s_3,s_4\}, s_0, \delta, \{s_0,s_1,s_2,s_3,s_4\})$ where
$$\begin{array}{llll}                                                 
                                                                 
s_0 \xrightarrow{a}\{s_1\},    & s_1 \xrightarrow{b}\{s_2,s_4\}, &  s_2 \xrightarrow{b}\{s_2,s_3\}, & s_4 \xrightarrow{a}\{s_0\},\\
s_0 \xrightarrow{b}\{s_0\},    & s_1 \xrightarrow{b}\{s_1,s_4\}, &  s_3 \xrightarrow{a}\{s_0\},     & s_4 \xrightarrow{b}\{s_0\}\\ 
                               & s_1 \xrightarrow{b}\{s_1, s_3\},&                                  & \\
            	               &                                 &                                  &
\end{array}$$
are transitions of $\A'$.
Note that $\A'$ accepts the word $ab^\omega$, but $\A$ does not.
\qed
\end{example}

\subsection{Runs and Simulations}
We now formulate connections between
simulations and runs of ABA that are fundamental for our further reasoning.  Let
$\preceq_F$ and $\preceq_B$ be forward and backward simulations on $\A$, which
are both reflexive and transitive.  For every $x\in\{B,F,\alpha\}$, we extend
the relation $\preceq_x$ to $Q^+\times Q^+$ such that for $\pi,\psi\in Q^+$,
$\pi\preceq_x\psi$ iff $|\pi| = |\psi|$ and for all $1\leq i\leq |\pi|$,
$\pi_i\preceq_x \psi_i$.  We say that $\psi$ forward simulates $\pi$, $\psi$
backward simulates $\pi$, or $\psi$ is more accepting than $\pi$ when
$\pi\preceq_F\psi$, $\pi\preceq_B\psi$, or $\pi\preceq_\alpha\psi$,
respectively.
This notation is further extended to trees.
For trees $T,U$ over $Q$ and for $x\in\{\alpha,F\}$, we write,
$T\preceq_x U$ if $\branch(T) \preceq_x^{\forall\exists} \branch(U)$. Similarly, we say that $U$
forward simulates $T$,
or $U$ is more accepting than $T$ when $T\preceq_F U$,
or $T\preceq_\alpha U$, respectively.
Note that $\preceq_x$ is reflexive and transitive
for all the variants of $x\in\{F,B,\alpha\}$ defined over states, paths, or
trees (this follows from the assumption that the original relations $\preceq_F$ and $\preceq_B$ on states are reflexive and transitive).
Moreover,
${\preceq_B}\subseteq{\preceq_\alpha}$,
${\preceq_B}\subseteq{\preceq_\iota}$, and
${\preceq_F}\subseteq{\preceq_\alpha}$.

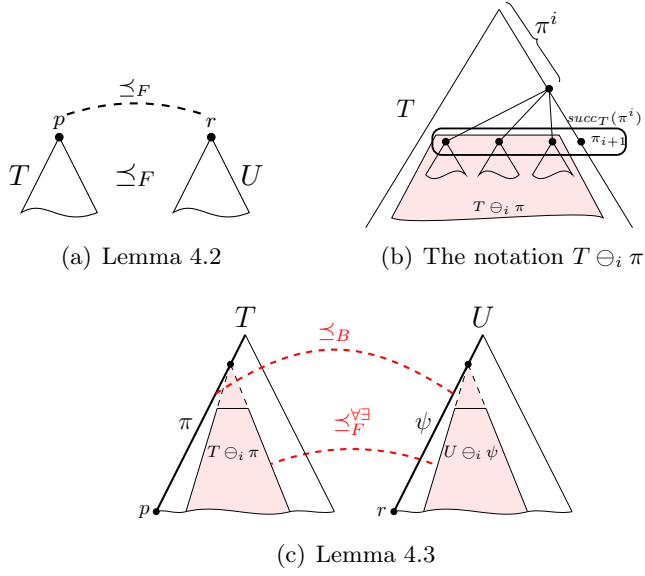
\begin{figure}[t]
\centering
  \subfigure[Lemma~\ref{forward:lemma}]
  {
    \begin{tikzpicture}
\draw[dashed,thick] (0.1,0.3) .. controls (0.7,0.5) and (1.3,0.5)  ..
node[midway,above=-1mm,yshift=0.5mm] {\scriptsize $\preceq_F$} (1.9,0.3);

\draw (0,0) -- node [left] {$T$} (-0.5,-1)
.. controls (-0.17,-0.8) and (0,-1.2) ..
 (0.5,-1)-- (0,0);
\draw[fill=black] (0,0) circle (0.5mm);
\node[yshift=-1mm] at (0,0.3) {\scriptsize $p$};

\node at (1,-0.5) {$\preceq_F$};

\draw[xshift=2cm]  (0,0) --  (-0.5,-1)
.. controls (-0.17,-0.8) and (0,-1.2) ..
 (0.5,-1)-- node [right] {$U$}(0,0);
\draw[xshift=2cm,fill=black] (0,0) circle (0.5mm);
\node[yshift=-1mm,xshift=2cm] at (0,0.3) {\scriptsize $r$};

\end{tikzpicture}
    \label{definition:forward}
  }
  \subfigure[The notation $T\ominus_i \pi$]
  {
    \resizebox{!}{3cm}{\makebox[6cm]{\begin{tikzpicture}

\draw (0,0) -- (-2,-3.3);
\draw (0,0) -- (2,-3.3);

\draw[fill=red!10] (-0.95,-1.9) -- (0.9,-1.9) --
(1.56,-3.15)
.. controls (1.26,-3.0) and (-1.31,-3.3) ..
(-1.61,-3.15) --  (-0.95,-1.9) ;

\draw[decorate,decoration=brace] (0.1,0.1) --  node[xshift=1mm,right,near start] {$\pi^i$} (0.9,-1.1);
\draw[xshift=-0.8cm,fill=red!10] (0,-2.0) -- (-0.3,-2.5) .. controls (-0.1,-2.7) and (0.1,-2.3)  .. (0.3,-2.5) -- (0,-2.0);
\draw[xshift=-0.8cm,fill=black] (0,-2.0) circle (0.5mm);

\draw[fill=red!10] (0,-2.0) -- (-0.3,-2.5) .. controls (-0.1,-2.3) and (0.1,-2.7)  .. (0.3,-2.5) -- (0,-2.0);
\draw[fill=black] (0,-2.0) circle (0.5mm);

\draw[xshift=0.8cm,fill=red!10] (0,-2.0) -- (-0.3,-2.5) .. controls (-0.1,-2.7) and (0.1,-2.3)  .. (0.3,-2.5) -- (0,-2.0);
\draw[xshift=0.8cm,fill=black] (0,-2.0) circle (0.5mm);

\draw[fill=black] (0.75,-1.2) circle (0.5mm);

\draw[fill=black] (1.23,-2.0) circle (0.5mm);

\draw[thick,rounded corners](-1.0 cm,-1.8) rectangle (1.9,-2.2);

\draw (0.75,-1.2) -- (0.8,-2.0);
\draw (0.75,-1.2) -- (0,-2.0);
\draw (0.75,-1.2) -- (-0.8,-2.0);

\node at (1.65,-2.0) {\tiny$\pi_{i+1}$};
\node[yshift=0.2mm,xshift=0.8mm] at (1.5,-1.65) {\tiny${\it succ}_T(\pi^i)$};

\node at (0,-3) {\tiny$T\ominus_i\pi$};

\node at (-1.4,-1.5) {$T$};
\end{tikzpicture}}}
    \label{definition:notation}
  }
  \subfigure[Lemma~\ref{backward:lemma}]
  {
    \resizebox{!}{3cm}{\makebox[6cm]{\begin{tikzpicture}

\draw[dashed,red,line width=1pt] (-0.2,-2.5) .. controls (1.05,-1.75) and (2.55,-1.75) ..
node[above=1mm]{$\preceq_F^{\forall\exists}$} (3.8,-2.5);

\draw[white,fill=red!10]  (-0.25,-0.5) -- (-0.475,-1.25) --
 (-0.475,-1.25) -- (-1,-3) --
(-1,-3) .. controls (-0.4,-2.85) and (0.2,-3.15) ..  (0.75,-3) --
(0.75,-3) --  (0.05,-1.25)  --
(0.05,-1.25) --  (-0.25,-0.5);

\node at (0,0.3) {\Large $T$};
\draw[line width=1pt] (0,0) -- (-1.5,-3);
\draw[fill=black] (-1.5,-3) circle (0.5mm);
\node[font=\small] at (-1.7,-3) {$p$};

\draw[fill=black] (-0.25,-0.5) circle (0.5mm);
\node at (-1,-1.5) {\large $\pi$};

\draw[dashed] (-0.25,-0.5) -- (-0.475,-1.25);
\draw (-0.475,-1.25) -- (-1,-3);

\draw[dashed] (-0.25,-0.5) -- (0.05,-1.25);
\draw (0.05,-1.25) -- (0.75,-3);

\draw[] (-0.475,-1.25) --  (0.05,-1.25);

\draw (0,0) -- (1.5,-3);

\draw (-1.5,-3) .. controls (-1.35,-2.95) and (-1.2,-3.05) ..  (-1,-3);
\draw (-1,-3) .. controls (-0.4,-2.85) and (0.2,-3.15) ..  (0.75,-3);
\draw (0.75,-3) .. controls (1.0,-2.95) and (1.25,-3.05) ..  (1.5,-3);

\node[font=\scriptsize] at (-0.2,-2) {$T\ominus_i\pi$};

\draw[xshift=4cm,white,fill=red!10]  (-0.25,-0.5) -- (-0.475,-1.25) --
 (-0.475,-1.25) -- (-1,-3) --
(-1,-3) .. controls (-0.4,-2.85) and (0.2,-3.15) ..  (0.75,-3) --
(0.75,-3) --  (0.05,-1.25)  --
(0.05,-1.25) --  (-0.25,-0.5);

\node[xshift=4cm] at (0,0.3) {\Large $U$};
\draw[xshift=4cm,line width=1pt] (0,0) -- (-1.5,-3);
\draw[xshift=4cm,fill=black] (-1.5,-3) circle (0.5mm);
\node[xshift=4cm,font=\small] at (-1.7,-3) {$r$};

\draw[xshift=4cm,fill=black] (-0.25,-0.5) circle (0.5mm);
\node[xshift=4cm] at (-1,-1.5) {\large $\psi$};

\draw[xshift=4cm,dashed] (-0.25,-0.5) -- (-0.475,-1.25);
\draw[xshift=4cm] (-0.475,-1.25) -- (-1,-3);

\draw[xshift=4cm,dashed] (-0.25,-0.5) -- (0.05,-1.25);
\draw[xshift=4cm] (0.05,-1.25) -- (0.75,-3);

\draw[xshift=4cm] (-0.475,-1.25) --  (0.05,-1.25);

\draw[xshift=4cm] (0,0) -- (1.5,-3);

\draw[xshift=4cm] (-1.5,-3) .. controls (-1.35,-2.95) and (-1.2,-3.05) ..  (-1,-3);
\draw[xshift=4cm] (-1,-3) .. controls (-0.4,-2.85) and (0.2,-3.15) ..  (0.75,-3);
\draw[xshift=4cm] (0.75,-3) .. controls (1.0,-2.95) and (1.25,-3.05) ..  (1.5,-3);

\node[xshift=4cm,font=\scriptsize] at (-0.2,-2) {$U\ominus_i\psi$};

\draw[dashed,red,line width=1pt] (-0.5,-1.0) .. controls (1,0) and (2.0,0) ..
node[above=0mm]{$\preceq_B$}
(3.5,-1.0);

\end{tikzpicture}}}
    \label{definition:backward}
  }
  \caption{Illustration of the lemmas}
\end{figure}

\begin{lemma}\label{forward:lemma}
For any $p,r\in Q$ with $p\preceq_F r$ and a
partial run $T$ of $\A$ on $w\in\Sigma^\omega$ with the root $p$, there is a
partial run $U$ of $\A$ on $w$ with the root $r$ such that $T\preceq_F U$.
\end{lemma}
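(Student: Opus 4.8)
The plan is to prove the statement by induction on the height $\height(T)$ of the partial run $T$, exploiting the definition of forward simulation at the root together with the induction hypothesis on the subtrees hanging below it. For the base case $\height(T)=1$, the tree $T$ is just $\{p\}$, so $\branch(T)=\{p\}$; I would take $U=\{r\}$, and then $p\preceq_F r$ gives $\branch(T)\preceq_F^{\forall\exists}\branch(U)$ at once, i.e. $T\preceq_F U$.

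For the inductive step, suppose $\height(T)\ge 2$, so the root $p$ has a nonempty successor set $P=\succ T(p)$ and, since $T$ is a partial run, $p\xtr{w_1}P$ is a transition of $\A$. From $p\preceq_F r$ and the definition of forward simulation I get a transition $r\xtr{w_1}R$ with $P\preceq_F^{\forall\exists}R$; note $R\neq\emptyset$ by the global no-empty-transition assumption. For each $q'\in R$ I pick some $q\in P$ with $q\preceq_F q'$ (possible precisely because of the $\forall\exists$ direction). The suffix $T(pq)$ is a partial run of $\A$ on the tail word $w'$ (where $w=w_1w'$) with root $q$ and $\height(T(pq))<\height(T)$, so by the induction hypothesis applied to $q\preceq_F q'$ there is a partial run $U_{q'}$ of $\A$ on $w'$ with root $q'$ and $T(pq)\preceq_F U_{q'}$. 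I then reassemble $U:=\{r\}\cup\bigcup_{q'\in R}\{r\sigma\mid\sigma\in U_{q'}\}$, which is a finite tree over $Q$ with $\root(U)=r$ and $\succ U(r)=R$; the transition condition holds at $r$ because $r\xtr{w_1}R$, and at every deeper node because the $U_{q'}$ are partial runs on $w'$ and the subtrees attach to distinct children $q'$ of $r$, so $U$ is a partial run of $\A$ on $w$ rooted at $r$.

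It remains to check $\branch(T)\preceq_F^{\forall\exists}\branch(U)$: given $\psi\in\branch(U)$, since $\succ U(r)=R\neq\emptyset$ it has the form $\psi=rq'\chi$ with $q'\in R$ and $q'\chi\in\branch(U_{q'})$; then $T(pq)\preceq_F U_{q'}$ supplies a branch $\tau\in\branch(T(pq))$ with $\tau\preceq_F q'\chi$, and prepending the root gives $p\tau\in\branch(T)$ with $p\tau\preceq_F rq'\chi=\psi$ (the lengths match and $p\preceq_F r$). The argument is otherwise routine; the points I expect to require the most care are keeping the $\forall\exists$ direction of $\preceq_F^{\forall\exists}$ straight (we match each successor of $r$ by \emph{some} successor of $p$, not conversely), the bookkeeping identifying a branch of the suffix $T(pq)$, prefixed by the root, with a branch of $T$, and the (small but essential) use of the no-empty-transition assumption to guarantee $R\neq\emptyset$ so that every branch of $U$ actually passes through a successor of $r$.
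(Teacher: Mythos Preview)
Your proof is correct and follows essentially the same approach as the paper: induction on $\height(T)$, using the forward-simulation condition at the root to obtain a matching transition $r\xtr{w_1}R$ with $P\preceq_F^{\forall\exists}R$, applying the induction hypothesis to each subtree, and reassembling. Your version is in fact a bit more explicit than the paper's, which omits the verification of $T\preceq_F U$ at the branch level and does not call out the use of the no-empty-transition assumption to get $R\neq\emptyset$; both of these are points you handle correctly.
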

\begin{proof} 
We
prove the lemma by induction on $\height(T)$. In the base case when $T =
\{p\}$, it is sufficient to take $U = \{r\}$.  Suppose now that the lemma holds
for every word $u$ and for every partial run $V$ of $\A$ on $u$ such that
$\height(V) < \height(T)$. From $p\preceq_F r$, there is a
transition $r\xtr {w_1} R$ of $\A$ where $ \succ T(p)
\preceq_F^{\forall\exists} R$.
Observe that $T = \{p\}\cup\bigcup_{p' \in \succ T(p)} pT(p')$ where for each
$p'\in \succ T(p)$, $T(p')$ is a partial run of $\A$ with the root $p'$ on the word
$v$ such that $w = w_1v$. Notice that $\height(T(p'))<\height(T)$.
The induction hypothesis now can be applied to every triple $p'\in
\succ T(p),r'\in R, T(p')$ with $p'\preceq_F r'$. It gives us a partial run $U_{r'}$ of $\A$ on $v$ with $\root(U_{r'}) = r'$ such that $T(p')\preceq_F U_{r'}$.  The run $U$ with the required properties is then
constructed by plugging the runs $U_{r'},r'\in R$, to $r$, i.e., $U =
\{r\}\cup\bigcup_{r'\in R} r U_{r'}$.   \end{proof}

We will need to inspect the connection between runs and backward simulation in a
relatively detailed way. For this, we introduce to following notation. Given a
tree $T$ over $Q$, $\pi\in T$, and $1\leq i\leq |\pi|$, the set $T\ominus_i
\pi$ is the union of branches of suffix trees $T(\pi^i q),q\in \succ T(\pi^i)$,
with the branches of the suffix tree $T(\pi^{i+1})$ excluded.  Formally, let
$Q^i = \succ T(\pi^i)\setminus\{\pi_{i+1}\}$ be the set of all successors of
$\pi^i$ in $T$ without the successor continuing in $\pi$.  Then $T\ominus_i \pi
= \bigcup_{q\in Q^i}\branch(T(\pi^i q))$ (notice that if $i = 0$, then
$T\ominus_i\pi = \emptyset$).

\begin{lemma}\label{backward:lemma}
For any $p,r\in Q$ with $p\preceq_B r$, a partial run $T$ of $\A$ on $w\in\Sigma^\omega$ and $\pi \in
\branch(T)$ with $\leaf(\pi) = p$, there is a partial run $U$ of $\A$ on $w$ and $\psi
\in \branch(U)$ with $\leaf(\psi) = r$ such that
$\pi\preceq_B\psi$, and for all $1\leq i \leq |\pi|$,
${T\ominus_i \pi} \preceq_F^{\forall\exists} {U\ominus_i \psi}$.
\end{lemma}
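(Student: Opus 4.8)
The plan is to prove the lemma by induction on $|\pi|$, which is finite because $\leaf(\pi)$ is defined, so $\pi$ is a maximal (hence finite) branch of $T$. In the base case $|\pi|=1$, the path $\pi$ is the single state $p$, which must then be the root of $T$, and maximality of the branch forces $T=\{p\}$. Taking $U=\{r\}$ and $\psi=r$ gives $\pi\preceq_B\psi$ directly from $p\preceq_B r$, and $T\ominus_1\pi=U\ominus_1\psi=\emptyset$, so the remaining condition holds vacuously.

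For the inductive step, write $\pi=\pi' p$ with $|\pi'|=n-1$, put $q:=\leaf(\pi')=\pi_{n-1}$ and $a:=w_{n-1}$. In $T$ we have a transition $q\xtr{a} P\cup\{p\}$ with $p\notin P$, where $P$ is the set of successors of $\pi'$ in $T$ other than $p$; the subtrees hanging off the spine below $\pi'$ are exactly the suffix trees $T(\pi' p'')$ for $p''\in P$, so $T\ominus_{n-1}\pi=\bigcup_{p''\in P}\branch(T(\pi' p''))$. From $p\preceq_B r$ and clause (iii) of the definition of backward simulation applied to $q\xtr{a} P\cup\{p\}$, I obtain a transition $s\xtr{a} R\cup\{r\}$ with $r\notin R$, $q\preceq_B s$, and $P\preceq_F^{\forall\exists}R$. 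I then form the partial run $T'$ obtained from $T$ by keeping the spine $\pi'$ and all off-spine subtrees at depths $1,\dots,n-2$ while deleting everything strictly below $q$; $T'$ is again a partial run on $w$ whose maximal branch is $\pi'$, and $T'\ominus_i\pi'=T\ominus_i\pi$ for every $i\le n-2$. Applying the induction hypothesis to $q\preceq_B s$, $T'$ and $\pi'$ yields a partial run $U'$ on $w$, a branch $\psi'\in\branch(U')$ with $\leaf(\psi')=s$, $\pi'\preceq_B\psi'$, and $T'\ominus_i\pi'\preceq_F^{\forall\exists}U'\ominus_i\psi'$ for all $1\le i\le n-1$.

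I would then build $U$ from $U'$ by grafting below $\psi'$: add an edge from $\psi'$ to a fresh leaf so that $\psi:=\psi' r$, justified by the transition $s\xtr{a} R\cup\{r\}$, and for each $\rho\in R$ attach a partial run $U_\rho$ rooted at $\rho$ on the suffix of $w$ starting at position $n$, obtained by choosing (via $P\preceq_F^{\forall\exists}R$) a witness $p''\in P$ with $p''\preceq_F\rho$ and applying Lemma~\ref{forward:lemma} to $T(\pi' p'')$, so that $T(\pi' p'')\preceq_F U_\rho$. Since $U'$ has nothing below $\psi'$, the grafting is clean, $U$ is a well-formed partial run on $w$ with $\psi\in\branch(U)$ and $\leaf(\psi)=r$, and $\pi\preceq_B\psi$ follows from $\pi'\preceq_B\psi'$ together with $p\preceq_B r$. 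The conditions $T\ominus_i\pi\preceq_F^{\forall\exists}U\ominus_i\psi$ are checked in three ranges: for $i\le n-2$ the off-spine subtrees are untouched, so $U\ominus_i\psi=U'\ominus_i\psi'$ and the condition is exactly what the induction hypothesis supplies (using $T\ominus_i\pi=T'\ominus_i\pi'$); for $i=n-1$ we have $U\ominus_{n-1}\psi=\bigcup_{\rho\in R}\branch(U_\rho)$ and $T\ominus_{n-1}\pi=\bigcup_{p''\in P}\branch(T(\pi' p''))$, and the required relation follows by composing $P\preceq_F^{\forall\exists}R$ with the per-subtree relations $T(\pi' p'')\preceq_F U_\rho$; for $i=n$ both sides are empty because $\pi$ and $\psi$ are maximal branches.

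The main obstacle I expect is the bookkeeping of the tree surgery: decomposing $T$ into its spine and off-spine subtrees, defining the truncation $T'$ so that the induction hypothesis applies at length $n-1$, and re-grafting the new transition together with the subtrees supplied by Lemma~\ref{forward:lemma} while verifying that every $\ominus_i$ equality or $\preceq_F^{\forall\exists}$ relation lines up. A particular point of care is the direction of $\preceq^{\forall\exists}$ — the witness always lives in the ``more forward'' (left-hand) set — which is precisely what makes the degenerate cases work out, e.g.\ $P=\emptyset$ forcing $R=\emptyset$ and hence $U\ominus_{n-1}\psi=\emptyset$.
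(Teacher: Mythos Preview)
Your proposal is correct and follows essentially the same approach as the paper's proof: induction on $|\pi|$, truncating $T$ below $\pi'$ to obtain $T'$, applying the induction hypothesis with $q\preceq_B s$ to get $U'$ and $\psi'$, then grafting the transition $s\xtr{a}R\cup\{r\}$ together with the subtrees supplied by Lemma~\ref{forward:lemma} to form $U$ and $\psi=\psi' r$. Your case split $i\le n-2$, $i=n-1$, $i=n$ matches the paper's verification, and your remark on the direction of $\preceq^{\forall\exists}$ (in particular that $P=\emptyset$ forces $R=\emptyset$) is exactly the detail that makes the boundary cases go through.
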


\begin{proof}
By induction on the length of $\pi$. In the base case, when $\pi = p$
and $T = \{p\}$, it is sufficient to take $U = \{r\}$ and $\psi = r$.
Suppose now that $\pi \neq p$ and that the lemma holds for every partial run
$T'$ of $\A$ on $w$, states $p',r'\in Q$ such that $p'\preceq_B r'$, and every
$\pi'\in \branch(T')$ with $\leaf(\pi') = p'$ and $|\pi'|<|\pi|$.

For the induction step, let $\pi = \pi' p$ and let $\succ T(\pi') = P\cup \{p\},
p\not\in P$. By the definition of $\preceq_B$, there is a transition
$s\xtrlow{w_{|\pi|}} R \cup\{r\},r\not\in R$ of $\A$ such that $\leaf(\pi')\preceq_B s$
and $P\preceq_F^{\forall\exists}R$.
Let $T' = T \setminus \{\pi\} \setminus  \bigcup_{p'\in P}\pi' T(\pi' p')$. 
Then $T'$ is a partial run of $\A$ on $w$ and $\pi' \in
\branch(T')$, $|\pi'|<|\pi|$, and therefore we can apply induction hypothesis
to $T'$, $\leaf(\pi')$, $s$, and $\pi'$.  This gives us a partial run $U'$ of $\A$ on $w$
with $\psi' \in \branch(U')$ such that $\leaf(\psi') = s$, $\pi'\preceq_B\psi'$ and
for each $1\leq j \leq |\pi'|$,
${T'\ominus_j \pi'} \preceq_F^{\forall\exists} {U'\ominus_j \psi'}$.
For every $p'\in\succ T(\pi')$, $T(\pi' p')$ is a
partial run of $\A$ with the root $p'$ on the suffix $v$ of $w$ such that $w =
uv,|u|= |\pi|-1$.  We can apply Lemma~\ref{forward:lemma} to the triples $r'\in
R, p'\in P, T(\pi' p')$ with $p'\preceq_F r'$. This gives us for each $r'\in R$ a
run $U_{r'}$ of $\A$ on $v$ with $\root(U_{r'})=r'$ such that there is some $p'\in P$ with $T(\pi' p')\preceq_F U_{r'}$.
Now we construct a run $U$ and a path $\psi$ with the required properties
by plugging $r$ and runs $U_{r'},r'\in R$ to the path $\psi'$ in $U'$,
i.e., $\psi = \psi' r$ and  $U = U'\cup\{\psi\}\cup\bigcup_{r'\in R} \psi'U_{r'}$.
(To see that $U$ really satisfies the required properties, observe the following: (i) As
$U\ominus_{|\pi'|} \psi = \bigcup_{r'\in R}\branch(U_{r'})$ and $T\ominus_{|\pi'|} \pi = \bigcup_{p'\in P}\branch(T(\pi'p'))$, and because for each $r'\in R$, there is $p'\in P$ with $T(\pi' p')\preceq_F U_{r'}$, we have that
$T\ominus_{|\pi'|} \pi\preceq^{\forall\exists}_F U\ominus_{|\pi'|} \psi$. (ii) For all $1\leq j < |\pi'|$,
${T\ominus_j \pi =  T' \ominus_j \pi'}\preceq_{F}^{\forall\exists}{{U'\ominus_j \psi'} = {U\ominus_j \psi}}$.).
\end{proof}

\section{Mediated Equivalence and Quotienting}

Here, we discuss the possibility of an indirect use of backward simulation for
simplifying ABA via quotienting. We will introduce an alternating
B\"uchi automata variant of the mediated preorder from
Chapter~\ref{chapter:ta_reduction} that is a combination of forward and
backward simulation suitable for quotienting.
\subsection{The Notion and Intuition of Mediated Equivalence}

We again use the concept of ``jumping runs'' based on the observation that
quotienting an automaton wrt. some equivalence allows a run that arrives to
some state to \emph{jump} to equivalent state and continue from there.
Alternatively, this can be viewed as \emph{extending} the source state of the
jump by the outgoing transitions of the target state\footnote{The first view is
better when explaining the intuition whereas the other is easier to be used in
proofs.}.  The equivalence must have the property that the language is not
increased even when the jumps (or, alternatively, transition extensions) are
allowed. 
It turns out that forward and backward simulation can be combined into a
suitable relation in the same way as downward and upward simulation in
Chapter~\ref{chapter:ta_reduction}. This is, we will define the mediated
preorder $\preceq_M$ as a suitable transitive fragment of $\preceq_F \circ
\preceq_B^{-1}$ and show that allowing jumping to mediated smaller states does
not affect the language. It follows that quotienting wrt. mediated equivalence
(the largest symmetric fragment of $\preceq_M$) preserves language too.

The intuition behind allowing a run to jump from a state $r$ to a state $q$
that are related by a mediated preorder is very similar to the one given in
Chapter~\ref{chapter:ta_reduction}. The relation $q \preceq_F \circ
\preceq_B^{-1} r$ guarantees the existence of the so called \emph{mediator},
which is a state $s$ such that $q \preceq_F s \preceq_B^{-1} r$
(see~Figure~\ref{mediated:basic:aba}). The state $s$ can be reached in the same way
and in the same context\footnote{If a state $s$ is a leaf of a partial run,
then by a \emph{context} of $s$ we mean all the other leaves of the partial
run.} as $r$, and, at the same time, the automaton can continue from $s$ in the
same way as from $q$. Hence, intuitively, the newly allowed run based on the
jump from $r$ to $q$ does not add anything to the language because it can
anyway be realised through $s$ without jumps.

Similarly as in the case of tree automata, jumping cannot be allowed between
all pairs of states from $\preceq_F \circ \preceq_B^{-1}$. We will have to
restrict ourselves only to its fragments $\preceq_M$ that are \emph{preorders}
and are also \emph{forward extensible}, which means that if $q_1 \preceq_M q_2
\preceq_F q_3$, then $q_1 \preceq_M q_3$.  

The reason for this is that we were so far taking into account only one
isolated jump, however, nothing prevents another jumps from occurring in the
context or below the marked occurrence of $r$. This is problematic since the
relations $q \preceq_F s \preceq_B^{-1} r$ are guaranteed only when no further
jumps are allowed. The forward extensibility is required to ensure the
mechanism to work with arbitrary many jumps.  We describe the potential
problems when $\preceq_M$ is not forward extensible (see
Figure~\ref{mediated:problems_appendix} for the illustration).

\label{section:mediated}
\begin{figure}[t]
  \centering
  \subfigure[The Mediator]
  {
    {\begin{tikzpicture}[scale=1.3]

\draw(0,0) -- (-0.5,-1) node [left,midway] {$V$}   -- (0.5,-1) -- (0,0);
\draw[yshift=-1cm] (0,0) -- (-0.5,-1)  -- (0.5,-1) -- (0,0);
\draw[thick,dotted,xshift=5mm,yshift=-1cm] (0,0) -- (0.5,-1);
\draw[thick,dotted,xshift=2mm,yshift=-1cm] (0,0) -- (0.5,-1);

\draw[thick,dotted,xshift=-5mm,yshift=-1cm](0,0) -- (-0.5,-1);
\draw[thick,dotted,xshift=-2mm,yshift=-1cm](0,0) -- (-0.5,-1);


\draw[fill=white,color=white] (0,-1) circle (1mm);
\node[] at (0,-1) { $s$};


\node[yshift=-1.3cm] at (0,-1.3) { $W$};

\draw[fill=white,color=white] (0,0) circle (1mm);
\node[] at (0,0) { $\iota$};

\draw[xshift=-2.0cm,yshift=-1cm](0,0) -- (-0.5,-1) node [left,midway] {$U$}   -- (0.5,-1) -- (0,0);
\draw[xshift=-2.0cm,yshift=-1cm,fill=white,color=white] (0,0) circle (1.1mm);
\node[xshift=-2.6cm,yshift=-1.3cm] at (0,0) { $q$};



\draw[xshift=-4.0cm](0,0) -- (-0.5,-1) node [left,midway] {$T$}   -- (0.5,-1) -- (0,0);


\draw[xshift=-4.0cm,fill=white,color=white] (0,-1) circle (1mm);
\node[xshift=-5.2cm] at (0,-1) { $r$};
\draw[xshift=-4.0cm,fill=white,color=white] (0,0) circle (1mm);
\node[xshift=-5.2cm] at (0,0) { $\iota$};

\draw[xshift=-5cm] (0,0) -- node [midway,left=-0.7mm] {$u$}(0,-1);
\draw[xshift=-5cm] (0,-1) -- node [midway,left=-0.7mm] {$v$}(0,-2);
\draw[xshift=-5cm] (0,-2) -- (0,-2.1);
\draw[xshift=-5cm,dashed] (0,-2.1) -- node [midway,left=-0.7mm] {$w$}(0,-2.7);

\draw[xshift=-5cm]  (-0.1,0) -- (0.1,0);
\draw[xshift=-5cm]  (-0.1,-1) -- (0.1,-1);
\draw[xshift=-5cm]  (-0.1,-2) -- (0.1,-2);

\end{tikzpicture}}
    \label{mediated:basic:aba}
  }
  \hspace{.2in}
  \subfigure[Potential Problems]
  {
     {
\begin{tikzpicture}[scale=1.3]

\draw(0,0) -- (-0.5,-1.0) node [left,midway] {$V$}   -- (0.5,-1.0) -- (0,0);
\draw[yshift=-1.0cm] (0,0) -- (-0.5,-1.0)  -- (0.5,-1.0) -- (0,0);
\draw[thick,dotted,xshift=5mm,yshift=-1.0cm] (0,0) -- (0.5,-1.0);
\draw[thick,dotted,xshift=2mm,yshift=-1.0cm] (0,0) -- (0.5,-1.0);

\draw[thick,dotted,xshift=-5mm,yshift=-1.0cm](0,0) -- (-0.5,-1.0);
\draw[thick,dotted,xshift=-2mm,yshift=-1.0cm](0,0) -- (-0.5,-1.0);


\draw (0,0) -- (0,-1.0); 
\draw[color=white,fill=white] (0,-0.6) circle (1.3mm);
\node at (0,-0.6) {$\psi$};

\draw[yshift=-1.0cm] (0,0) -- (0,-1.0); 
\draw[yshift=-1.0cm,color=white,fill=white] (0,-0.6) circle (1.3mm);
\node[yshift=-1.3cm] at (0,-0.6) {$\rho$};

\draw[fill=white,color=white] (0,-1.0) circle (1mm);
\node[] at (0,-1.0) { $s$};
\draw[fill=white,color=white] (0.35,-1.0) circle (1mm);
\node[] at (0.35,-1.0) { $x$};

\draw[yshift=-1.0cm,fill=white,color=white] (0,-1.0) circle (1mm);
\node[yshift=-1.3cm] at (0,-1.0) { $y$};

\draw[fill=white,color=white] (0,0) circle (1mm);
\node[] at (0,0) { $\iota$};
\node[yshift=-1.0cm] at (0,-1.6) { $W$};

\draw[xshift=-2.0cm,yshift=-1.0cm] (0,0) -- (0,-1.0); 
\draw[xshift=-2.0cm,yshift=-1.0cm,color=white,fill=white] (0,-0.6) circle (1.3mm);
\node[xshift=-2.6cm,yshift=-1.3cm] at (0,-0.6) {$\phi$};
\draw[xshift=-2.0cm,yshift=-1.0cm](0,0) -- (-0.5,-1.0) node [left,midway] {$U$}   -- (0.5,-1.0) -- (0,0);
\draw[xshift=-2.0cm,yshift=-1.0cm,fill=white,color=white] (0,0) circle (1.1mm);
\node[xshift=-2.6cm,yshift=-1.3cm] at (0,0) { $q$};

\draw[xshift=-2.0cm,yshift=-1.0cm,fill=white,color=white] (0,-1.0) circle (1mm);
\node[xshift=-2.6cm,yshift=-1.3cm] at (0,-1.0) { $r$};

\draw[xshift=-4.0cm](0,0) -- (-0.5,-1.0) node [left,midway] {$T$}   -- (0.5,-1.0) -- (0,0);


\draw[xshift=-4.0cm] (0,0) -- (-0.17,-1.0); 
\draw[xshift=-4.0cm,color=white,fill=white] (-0.1,-0.5) circle (1.2mm);
\node[xshift=-5.2cm] at (-0.1,-0.5) {$\pi$};

\draw[xshift=-4.0cm] (0,0) -- (0.16,-1.0); 
\draw[xshift=-4.0cm,color=white,fill=white] (0.15,-0.7) circle (1.2mm);
\node[xshift=-5.2cm] at (0.15,-0.65) {$\pi'$};

\draw[xshift=-4.0cm,fill=white,color=white] (-0.17,-1.0) circle (1mm);
\node[xshift=-5.2cm] at (-0.17,-1.0) { $r$};

\draw[xshift=-4.0cm,fill=white,color=white] (0,0) circle (1mm);
\node[xshift=-5.2cm] at (0,0) { $\iota$};

\draw[xshift=-4.0cm,fill=white,color=white] (0.16,-1.0) circle (1mm);
\node[xshift=-5.2cm] at (0.16,-1.0) { $r$};

\draw[xshift=-5cm] (0,0) -- node [midway,left=-0.7mm] {$u$}(0,-1);
\draw[xshift=-5cm] (0,-1) -- node [midway,left=-0.7mm] {$v$}(0,-2);
\draw[xshift=-5cm] (0,-2) -- (0,-2.1);
\draw[xshift=-5cm,dashed] (0,-2.1) -- node [midway,left=-0.7mm] {$w$}(0,-2.7);

\draw[xshift=-5cm]  (-0.1,0) -- (0.1,0);
\draw[xshift=-5cm]  (-0.1,-1.0) -- (0.1,-1.0);
\draw[xshift=-5cm]  (-0.1,-2) -- (0.1,-2);

\end{tikzpicture}}
     \label{mediated:problems_appendix}
  }
  \caption{Basic Intuition Behind Mediated Equivalence}
\end{figure}
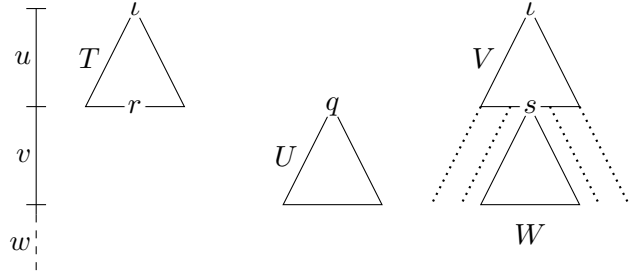
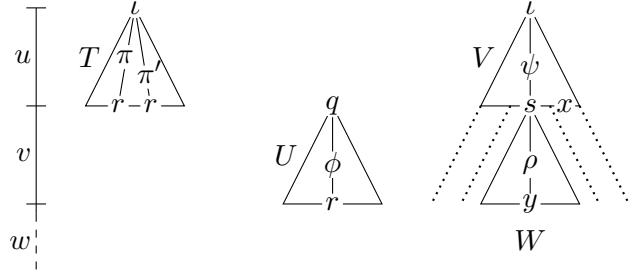

\emph{Problem (i):} The first problem will arise if there is a branch $\phi$ of
$U$ with $\leaf(\phi) = r$.  Here, apart from interconnecting $T$ and $U$, $r$
can use its new transitions also at the end of $\pi\phi$ and connect another
copy of $U$ to the end of $\pi\phi$. Suppose that all leaves of $T$ except $r$
accept $vvw$ and that all leaves of $U$ except $r$ accept $vw$. Then this
enables a new accepting run on the word $uvvw$. In this case, the existence of
the mediator $s$ is not a guarantee that some accepting run on $uvvw$ was
possible before adding transitions to $r$.

\emph{Problem (ii):} Another problem may arise if there are two (or more)
branches in $T$ ending by $r$. Here we use the two branches $\pi$ and $\pi'$ in
Figure~\ref{mediated:problems_appendix} as an example. To construct an
accepting run on $uvw$ from $T$, $r$ has to use the transitions of $q$ at the
end of $\pi$ as well as at the end of $\pi'$ to connect $U$ to $T$ in the both
places.  But partial run $V$ ``covers'' only one of the two occurrences of $r$.
There may be a leaf $x$ of $V$ different from $s$ for which $r$ is the only
leaf in $T$ with $r\preceq_F x$.  Therefore, $x$ needs not accept $vw$ as
there is no guaranteed relation between $q$ and $x$. In this case $V$ is not a
prefix of an accepting run on $uvw$ and $uvw$ need not be in $\L(\A)$. 

We will show how the two problems can be solved by requiring $\preceq_M$ to be a forward extensible preorder. 

In the case of Problem (i), if $y$ uses transitions of $q$ to accept $vw$, then
$W$ becomes a prefix of an accepting run on $vvw$ and thus $V$ becomes a prefix
of a new accepting run on $uvvw$.  We know that $r\preceq_F y$. Thus, by forwards extensibility, $q\preceq r \preceq_F y$ gives $q\preceq y$,
which implies that there is a mediator for $q$ and $y$.  Observe that $y$ used
transitions of $q$ just once.  Therefore, by an analogical argument by which we
derived that $\A$ accepts $uvw$ in the first case when $r$ used the new
transitions only once, we can here derive that there is an accepting run of
$\A$ on $uvvw$ which does not involve new transitions.

In the case of Problem (ii),  if $x$ uses the transitions of $q$ to accept
$vw$, $V$ becomes a prefix of a new accepting run on $uvw$.  We know that
$r\preceq_F x$ and thus by forward extensibility $q\preceq r \preceq_F x$ gives
$q\preceq x$, which means that there is a mediator for $q$ and $x$. Similarly
as in the previous case, since $x$ used the transitions of $q$ only once, we
can derive that there exists an accepting run of $\A$ on $uvw$ that does not
involve new transitions.

The argumentation from the two above paragraphs can be used inductively for a
run where $r$ uses transitions of $q$ arbitrarily many times.


\paragraph{Mediated Preorder and Equivalence.} We formally define mediated preorder
for ABA analogically as we have defined it in
Chapter~\ref{chapter:ta_reduction} for tree automata.  Consider a reflexive and
transitive forward simulation $\preceq_F$ on $\A$, and a reflexive and
transitive backward simulation $\preceq_B$ induced by $\preceq_F$. Recall the
relation combination operator $\oplus$ defined in
Chapter~\ref{chapter:ta_reduction}. We call the relation ${\preceq_M} =
{{\preceq_F}\oplus {\preceq_B^{-1}}}$ a \emph{mediated preorder induced by
$\preceq_F$ and $\preceq_B$} and ${\eq{M}} =
{{\preceq_M}\cap{\preceq_M^{-1}}}$ a \emph{mediated equivalence induced by
$\preceq_F$ and $\preceq_B$}. By Lemma~\ref{lemma:composition-unique},
$\preceq_M$ is a unique maximal preorder satisfying ${\preceq_F} \subseteq
{{\preceq_F}\oplus{\preceq_B^{-1}}}\subseteq{{\preceq_F}\circ{\preceq_B^{-1}}}$.

\paragraph{Ambiguity.} To make the mediated equivalence applicable, we must
pose one more requirement. Namely, we require that the transitions of the given
ABA are not \emph{$\preceq_F$-ambiguous}, meaning that no two states on the
right hand side of a transition are forward equivalent. Intuitively, allowing
such transitions goes against the spirit of the backward simulation. For a
mediator $p$ to backward simulate a state $r$ wrt. rules $\rho_1: p' \xtr{a}
P\cup \{p\},p\not\in P$, and $\rho_2: r' \xtr{a} R\cup \{r\},r\not\in R$, it
must be the case that each state $x$ in the context $P$ of $p$ within $\rho_1$
is less restrictive (i.e., forward bigger) than some state $y$ in the context
$R$ of $r$ within $\rho_2$. The state $r$ itself is not taken into account when
looking for $y$ because we aim at extending its behaviour by collapsing (and it
could then become less restrictive than the appropriate $x$). In the case of
$\preceq_F$-ambiguity, the spirit of this restriction is in a sense broken
since the forward behaviour of $r$ may still be taken into account when
checking that the context of $p$ is less restrictive than that of $r$. This is
because the behaviour of $r$ appears in $R$ as the behaviour of some other
forward equivalent state $r''$ too. Consequently, $r$ and $r''$ may back up
each other in a circular way when checking the restrictiveness of the contexts
within the construction of the backward simulation. Both of them can then seem
extensible, but once their behaviour gets extended, the restriction of their
context based on their own original behaviour is lost, which may then increase
the language (an example of such a scenario is given below). However, in
Section~\ref{section:algorithm}, we show that $\preceq_F$-ambiguity can be
efficiently removed.

\begin{example}[mediated minimization cannot be used on an ambiguous ABA]
Consider the following ABA $\A = (\{a,b\},\{s_0,s_1,s_2,s_3,s_4\},s_0,\delta,\{s_4\})$ where
$$\begin{array}{ll}
s_0 \xrightarrow{a}\{s_1,s_2,s_3\},& s_3 \xrightarrow{b}\{s_4\},\\
s_1 \xrightarrow{b}\{s_4\},        & s_3 \xrightarrow{a}\{s_1,s_2,s_3\},\\
s_2 \xrightarrow{b}\{s_4\},        & s_4 \xrightarrow{a}\{s_4\} 
\end{array}$$
are transitions of $\A$. The maximal forward simulation relation $\preceq_F$ in $\A$ is
$$\begin{array}{ll}
\{
(s_0, s_0), (s_0, s_3), (s_1, s_1), (s_1, s_2), (s_1, s_3),\\ (s_2, s_1), (s_2, s_2), (s_2, s_3), (s_3, s_3), (s_4, s_4)\}. 
\end{array}$$
From $s_1\equiv_F s_2$ and the transition $s_0 \xrightarrow{a}\{s_1,s_2,s_3\}$ we can find that $\A$ is $\preceq_F$-ambiguous. The maximal backward simulation relation $\preceq_B$ parametrised with $\preceq_F$ is
$$\begin{array}{ll}
\{
(s_0, s_0), (s_1, s_1), (s_1, s_2), (s_1, s_3), (s_2, s_1),\\ (s_2, s_2), (s_2, s_3), (s_3, s_1), (s_3, s_2), (s_3, s_3), (s_4, s_4)\} 
\end{array}$$
and the mediated preorder $\preceq_M$ is
$$\begin{array}{ll}
\{
(s_0, s_0), (s_0, s_1), (s_0, s_2), (s_0, s_3), (s_1, s_1), (s_1, s_2), (s_1, s_3),\\ (s_2, s_1), (s_2, s_2), (s_2, s_3), (s_3, s_1), (s_3, s_2), (s_3, s_3), (s_4, s_4)
\}.
\end{array}$$

If we collapse states wrt. $\preceq_M$ (i.e., merge the three states $s_1$, $s_2$, and $s_3$), we will get the following ABA $\A' = (\{a,b\},\{s_0,s_1,s_2\},s_0,\delta,s_2)$ where
$$
s_0 \xrightarrow{a}\{s_1\},
s_1 \xrightarrow{a}\{s_1\},
s_1 \xrightarrow{b}\{s_2\},
s_2 \xrightarrow{a}\{s_2\} 
$$
are transitions of $\A'$.
Note that $\A'$ accepts the word $aaba^\omega$, but $\A$ does not.
\qed
\end{example}

\subsection{Quotienting Automata According to Mediated Equivalence Preserves
Language}
In this section, we give a formal proof that under the assumption that $\A$ is
$\preceq_F$-unambiguous, quotienting with respect to mediated equivalence
preserves the language.  The proof roughly follows the pattern of the proof in
Chapter~\ref{chapter:ta_reduction} that quotienting tree automata according to
the mediated equivalence preserves language. However, the fact that we are
dealing with infinite tree runs with the B\"uchi accepting condition and that
two accepting runs on the same word need not be isomorphic makes the argument
significantly more complicated. 

\paragraph{Quotient Automata versus Extended Automata.}
As already mentioned, quotienting can be seen as a simpler operation of adding
transitions and accepting states which simplifies the
forthcoming reasoning. Let $\A = (\Sigma,Q,\iota,\delta,\alpha)$ be an ABA and let
$\equiv$ be an equivalence on $Q$ such that ${\equiv} =
{\preceq\cap\preceq^{-1}}$ for some preorder $\preceq$.  We will use
$\A/{\equiv}$ to denote the quotient of $\A$ wrt$.$ $\equiv$ that arises by merging
$\equiv$-equivalent states of $\A$, and $\A^+_\preceq$ will stand for the automaton \emph{extended}
according to $\preceq$, that is created as follows: for every two states $q,r$
of $\A$ with $q\preceq r$, (i) add all outgoing transitions of $q$ to $r$, (ii)
if $q\equiv r$ and $q$ is final, make $r$ final.

Formally, the automata $\A/{\equiv}$ and $\A^+_\preceq$ are defined as follows. Let $Q/{\equiv}$ denote
the partitioning of $Q$ w.r.t$.$ $\equiv$, and let $[q]$ denote the equivalence
class of $\equiv$ containing $q$. Then
$\A/{\equiv} = (\Sigma, Q/{\equiv},[\iota],\delta/{\equiv},\{[q]\mid q\in\alpha\})$ and
$\A^+_\preceq = (\Sigma,Q,\delta^+_\preceq,\iota,\alpha^+_\preceq)$ where
$\alpha^+_\preceq = \{p \mid \exists q\in\alpha.\ q\equiv p \}$ and,
for each $a\in\Sigma$, $q\in Q$,
$\delta/{\equiv}([q],a) = \bigcup_{p\in[q]}\{\{[p']\mid p'\in P\}\mid {P\in\delta(p,a)}\}$ and
$\delta^{+}_\preceq(q,a) = \bigcup_{p\in Q \wedge p\preceq q}
\delta(p,a)$.

The following lemma implies that if adding
transitions and accepting states according to $\preceq$ preserves the language,
then quotienting according to $\equiv$ preserves the language too.

\begin{lemma}\label{quotient=extended:lemma}
$\L(\A/{\equiv}) \subseteq \L(\A^+_\preceq)$.
\end{lemma}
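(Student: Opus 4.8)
The plan is to lift an accepting run of $\A/{\equiv}$ to an accepting run of $\A^+_\preceq$ by replacing every equivalence class occurring in the run by a suitably chosen concrete representative state, and then to argue that the B\"uchi acceptance condition is inherited along branches. Since the statement and this argument do not refer to the mediated preorder at all, it suffices to work with an arbitrary equivalence $\equiv$ of the form $\preceq\cap\preceq^{-1}$ for a preorder $\preceq$.

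First I would fix $w\in\L(\A/{\equiv})$ and an accepting run $T$ of $\A/{\equiv}$ on $w$, which by definition has $\root(T) = [\iota]$. I would then build a run $T'$ of $\A^+_\preceq$ on $w$ together with a map $h$ from the paths of $T'$ to the paths of $T$ that is length-preserving ($|h(\psi)| = |\psi|$), prefix-preserving ($h(\psi^i) = h(\psi)^i$), and class-projecting ($\leaf(h(\psi)) = [\leaf(\psi)]$). The construction is top-down: the root of $T'$ is $\iota$ with $h(\iota) = [\iota]$; and given $\psi\in T'$ with $\leaf(\psi) = r$ and $h(\psi) = \pi$ with $\leaf(\pi) = [r]$, if $\pi$ is a leaf of $T$ then $\psi$ is a leaf of $T'$, and otherwise the transition $[r]\xtr{w_{|\pi|}} \succ T(\pi)$ of $\A/{\equiv}$ is, by the definition of $\delta/{\equiv}$, witnessed by some $p\in[r]$ and some $P\in\delta(p,w_{|\pi|})$ with $\{[p']\mid p'\in P\} = \succ T(\pi)$. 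Because $p\equiv r$ implies $p\preceq r$, we get $P\in\delta^+_\preceq(r,w_{|\pi|})$; so I add to $T'$ the paths $\psi p'$ for all $p'\in P$ (this is a legal transition of $\A^+_\preceq$ at $\psi$) and set $h(\psi p') = \pi\,[p']$, which lies in $T$ since $[p']\in\succ T(\pi)$. By construction $T'$ is a tree over $Q$ with root $\iota$ in which every path $\psi$ carries a transition of $\A^+_\preceq$ labelled $w_{|\psi|}$, i.e.\ a run of $\A^+_\preceq$ on $w$.

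It remains to check that $T'$ is accepting. For an infinite branch $\psi\in\branch(T')$, the images $h(\psi^1)\sqsubseteq h(\psi^2)\sqsubseteq\cdots$ form an increasing chain (under the prefix order) of paths of $T$ whose limit $\pi$ has all its finite nonempty prefixes in $T$ and hence is an infinite branch of $T$. Since $T$ is accepting, $\pi$ visits an accepting state of $\A/{\equiv}$ infinitely often: there are infinitely many $i$ with $[\leaf(\psi^i)] = \leaf(h(\psi^i)) = [q]$ for some $q\in\alpha$, whence $\leaf(\psi^i)\equiv q$ and therefore $\leaf(\psi^i)\in\alpha^+_\preceq$ by the definition of $\alpha^+_\preceq$. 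Thus every infinite branch of $T'$ contains infinitely many accepting states of $\A^+_\preceq$, so $T'$ is an accepting run of $\A^+_\preceq$ on $w$ rooted at $\iota$, and $w\in\L(\A^+_\preceq)$.

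This lemma is essentially bookkeeping and I do not expect a genuine obstacle; the only mildly delicate points are (a) that one successor class $[p']$ of a node of $T$ may be realised by several distinct states of $P$, which is harmless since $T'$ simply keeps all of them as separate children that $h$ maps to the same child of $T$, and (b) making the branch correspondence induced by $h$ tight enough to transport the B\"uchi condition, which is exactly why $h$ is required to be length- and prefix-preserving. The real effort — showing that passing from $\A$ to $\A^+_\preceq$ (and hence, via this lemma, quotienting by the mediated equivalence) does not enlarge the language — is what the subsequent lemmas have to address.
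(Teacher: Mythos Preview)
Your proof is correct, but it takes a different route from the paper's. You give a direct run-lifting argument: fix an accepting run of $\A/{\equiv}$, construct an accepting run of $\A^+_\preceq$ node by node via a class-projecting, prefix-preserving map $h$, and transport the B\"uchi condition along $h$. The paper instead argues indirectly through an intermediate automaton $\A^+_\equiv$ (the extension according to the equivalence $\equiv$ rather than the preorder $\preceq$): it observes that in $\A^+_\equiv$ any two $\equiv$-equivalent states become forward-simulation equivalent (they share outgoing transitions and acceptance status), invokes the result of Gurumurthy et al.\ that quotienting by forward simulation preserves language to get $\L(\A/{\equiv}) = \L(\A^+_\equiv)$, and finishes by noting $\L(\A^+_\equiv)\subseteq\L(\A^+_\preceq)$ since $\A^+_\preceq$ only has more transitions and the same accepting states.

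Your approach is more elementary and self-contained, avoiding the external citation; the paper's is shorter and offloads the run-construction bookkeeping (exactly the part you spell out) onto the known forward-simulation quotienting result. Both are fine; yours makes the lemma independent of that reference, which is arguably cleaner in a thesis setting.
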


\begin{proof}
Let $\A^+_\equiv = (\Sigma,Q,\iota,\delta^+_\equiv,\alpha^+_\equiv)$ be the automaton extended according to $\equiv$.
Observe that states $q$ and $r$ with $q\equiv r$ are forward simulation equivalent in $\A^+_\equiv$.
($q$ and $r$ are in $\A^+_\equiv$ either both accepting or both nonaccepting, and for all $a\in\Sigma$, $\delta^+_\equiv(q,a) = \delta_\equiv^+(r,a)$).
Gurumurthy et al. in \cite{gurumurthy:complementing} prove that quotienting with respect to forward simulation preserves language.
Therefore, $\L(\A/{\equiv}) = \L(\A^+_\equiv)$. It is also easy to see that $\L(\A^+_\equiv)\subseteq\L(\A^+_\preceq)$, as $\A^+_\preceq$ has a richer transition function than $\A^+_\equiv$ and $\alpha^+_\preceq = \alpha^+_\equiv$. Thus, $\L(\A/{\equiv})=\L(\A^+_\equiv)\subseteq\L(\A^+_\preceq)$.
\end{proof}


We now give the proof that extending automata according to the
mediated preorder preserves the language. 
For the rest of the section, we fix an ABA $\A =
(\Sigma,Q,\iota,\delta,\alpha)$, a reflexive and transitive forward simulation
$\preceq_F$ on $\A$ such that $\A$ is $\preceq_F$-unambiguous, and a reflexive
and transitive backward simulation $\preceq_B$ on $\A$ parametrised by
$\preceq_F$. Let ${\preceq_M}$ be the mediated preorder induced by $\preceq_F$
and $\preceq_B$, and let $\A^+ = (\Sigma,Q,\iota,\delta^+,\alpha^+)$ be the automaton extended according to
$\preceq_M$ (we omit the subscript $\preceq_M$ for the ease of notation). Let ${\equiv_M} = {\preceq_M} \cap{\preceq_M^{-1}}$.

We want to prove that $\L(\A^+) = \L(\A)$. The nontrivial part is showing that
$\L(\A^+) \subseteq \L(\A)$---the converse is obvious. To prove $\L(\A^+) \subseteq \L(\A)$, we need to show that, for every accepting run of $\A^+$ on a word $w$, there is an accepting run of $\A$ on $w$. 
We first prove Lemma~\ref{missing:lemma}, which shows how partial runs of $\A$ with an increased power of their leaves (wrt. $\preceq_M$) can be built incrementally from other runs of $\A$, bridging the gap between $\A$ and $\A^+$. Then we prove Lemma~\ref{adding:lemma} saying that for every partial run on a word $w$ of $\A^+$, there is a partial run of $\A$ on $w$ that is more accepting (recall that partial runs are finite).
By carry this result over to infinite runs we get the proof that extending
automata according to $\preceq_M$, and thus also quotienting wrt. $\equiv_M$, preserves language.

\paragraph{Extension Function and Covering.}
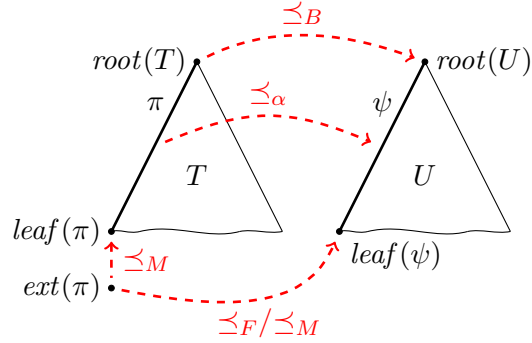
\begin{figure}
\centerline{
\begin{tikzpicture}[scale=0.75]

\node (roott) at (0,0) {};
\node (rootu) at (4,0) {};
\draw[fill=black] (roott) circle (0.5mm);
\draw[fill=black] (rootu) circle (0.5mm);
\node[left] at (roott) {$\root(T)$};
\node[right] at (rootu) {$\root(U)$};
\node at (0,-2) {$T$};
\node at (4,-2) {$U$};
\draw[line width=1pt] (0,0) -- (-1.5,-3) node [near start,left]{$\pi$} node (pi) [midway] {};
\draw[fill=black] (-1.5,-3) circle (0.5mm);
\node (p) at (-1.5,-3) {};
\node (q) at (-1.5,-4) {};
\node[left] at (q) {$\ext(\pi)$};
\draw[fill=black] (q) circle (0.5mm);
\node[left] at (p) {$\leaf(\pi)$};

\draw (0,0) -- (1.5,-3);

\draw (-1.5,-3) .. controls (-1.35,-2.95) and (-1.2,-3.05) ..  (-1,-3);
\draw (-1,-3) .. controls (-0.4,-2.85) and (0.2,-3.15) ..  (0.75,-3);
\draw (0.75,-3) .. controls (1.0,-2.95) and (1.25,-3.05) ..  (1.5,-3);

\draw[line width=1pt] (4,0) -- (2.5,-3)  node  [near start,left]{$\psi$} node (psi) [midway]{};
\draw[fill=black](2.5,-3) circle (0.5mm);
\node (r) at (2.5,-3) {};
\node[anchor=north west] at (r) {$\leaf(\psi)$};

\draw (4,0) -- (5.5,-3);

\draw (2.5,-3) .. controls (2.65,-2.95) and (2.8,-3.05) ..  (3,-3);
\draw (3,-3) .. controls (3.6,-2.85) and (4.2,-3.15) ..  (4.75,-3);
\draw (4.75,-3) .. controls (5.0,-2.95) and (5.25,-3.05) ..  (5.5,-3);

\draw[dashed,red,line width=1pt,->] (pi) .. controls (0.9,-0.8) and (1.6,-0.8) ..
node[above=0mm]{$\preceq_\alpha$}
(psi);

\draw[dashed,red,line width=1pt,->] (q) .. controls (1,-4.5) and (2,-4.2) ..
node[below]{${\preceq_F}{/}{\preceq_M}$}
(r);

\draw[dashed,red,line width=1pt,->] (roott) .. controls (1.2,0.6) and (2.4,0.6) ..
node[above]{$\preceq_B$}
(rootu);

\draw[dashed,red,line width=1pt,<-] (p) -- (q) node[midway,right]{$\preceq_M$};
\end{tikzpicture}}
\caption{$U$ strongly/weakly covers $T$ w.r.t$.$ $\ext$}
\label{figure:covering}
\end{figure}

Consider a partial run $T$ of $\A$ on a word $w$, we choose for each leaf $p$
of $T$ an $\preceq_M$-smaller state $p'$.  Suppose that we allow $p$ to make
one step using the transitions of $p'$ or to become accepting if $p'$ is
accepting and $p'\equiv_M p$. (Thus, we give the leaves of $T$ a part of the
power they would have in $\A^+$). We will show that there exists a partial run
$U$ of $\A$ on $w$ such that (1) it is more accepting than $T$, and (2) the
leaves of $U$ can mimic the next step of the leaves of $T$ even if the leaves
of $T$ use their extended power.

The above is formalised in Lemma~\ref{missing:lemma} using the following
notation. For a partial run $T$ of $\A$ on $w$, we define $\ext$ as an
\emph{extension function} that assigns to every branch $\pi$ of $T$ a state
$\ext(\pi)$ such that $\ext(\pi)\preceq_M \leaf(\pi)$.

Let $U$ be a partial run of $\A$ on $w$. For two branches $\pi \in \branch(T)$
and $\psi \in \branch(U)$, we say that $\psi$ \emph{strongly covers} $\pi$ wrt$.$
$\ext$, denoted $\pi\fcov\psi$, iff $\pi\preceq_\alpha\psi$ and
$\ext(\pi)\preceq_F\leaf(\psi)$. Similarly, we say that $\psi$ \emph{weakly
covers} $\pi$ wrt$.$ $\ext$, denoted $\pi\mcov\psi$, iff $\pi\preceq_\alpha\psi$
and $\ext(\pi)\preceq_M\leaf(\psi)$. We extend the concept of covering to
partial runs as follows. We write $T\fcov U$ ($U$ \emph{strongly covers} $T$
wrt$.$ $\ext$) iff $\branch(T)\fcov^{\forall\exists}\branch(U)$ and
$\root(T)\preceq_B\root(U)$. Likewise, we write $T\mcov U$ ($U$ \emph{weakly
covers} $T$  wrt$.$ $\ext$) iff $\branch(T)\mcov^{\forall\exists}\branch(U)$ and
$\root(T)\preceq_B\root(U)$. See Figure~\ref{figure:covering} for an illustration.
Note that we have ${\fcov} \subseteq {\mcov}$ for
branches as well for partial runs because
${\preceq_F}\subseteq{\preceq_M}$---the strong covering implies the weak one.

\begin{lemma}\label{missing:lemma}
For any partial run $T$ of $\A$ on a word $w$ with an extension function $\ext$,
there is a~partial run $U$ of $\A$ on $w$ with 
$T\fcov U$.
\end{lemma}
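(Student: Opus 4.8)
The plan is to prove Lemma~\ref{missing:lemma} by induction on $\height(T)$, mirroring the inductions already used for Lemma~\ref{forward:lemma} and Lemma~\ref{backward:lemma}.

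\emph{Base case.} If $\height(T)=1$ then $T=\{p\}$ with $p=\root(T)$, and the only branch is $p$ itself. Since $\ext(p)\preceq_M p$ and ${\preceq_M}\subseteq{\preceq_F}\circ{\preceq_B^{-1}}$ (Lemma~\ref{lemma:composition-unique}), there is a \emph{mediator} $m\in Q$ with $\ext(p)\preceq_F m$ and $p\preceq_B m$. I would take $U=\{m\}$: then $\root(T)=p\preceq_B m=\root(U)$, and the unique branch $m$ of $U$ is strongly covered by the branch $p$ of $T$, because $p\preceq_B m$ gives $p\preceq_\alpha m$ (recall ${\preceq_B}\subseteq{\preceq_\alpha}$) and $\ext(p)\preceq_F m=\leaf(m)$. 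Hence $T\fcov U$.

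\emph{Inductive step.} Let $p=\root(T)$, let $p\xtr{w_1}P$ with $P=\succ T(p)$ be the transition taken at the root, and for $p'\in P$ let $T_{p'}=T(p\,p')$, a partial run of $\A$ on the suffix $v$ (with $w=w_1v$) of height strictly smaller than $\height(T)$. The map $\ext_{p'}(\sigma):=\ext(p\sigma)$ is an extension function on $T_{p'}$ (since $\leaf(p\sigma)=\leaf(\sigma)$), so the induction hypothesis gives a partial run $U_{p'}$ on $v$ with $T_{p'}\fcov U_{p'}$; in particular $p'\preceq_B s_{p'}$ for $s_{p'}:=\root(U_{p'})$, and every branch of $U_{p'}$ is strongly covered by a branch of $T_{p'}$ w.r.t.\ $\ext_{p'}$. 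Call such a run \emph{clean for $T_{p'}$}; one checks easily that cleanness is preserved by forward simulation, because the extension condition for a clean run already holds with $\preceq_F$ rather than $\preceq_M$, so composing the covering with the forward simulator produced by Lemma~\ref{forward:lemma} and using transitivity of $\preceq_\alpha$ and $\preceq_F$ again yields a clean run. It then remains to \emph{combine} the $U_{p'}$ into one partial run $U$ of $\A$ on $w$ with $T\fcov U$. Its root must backward-simulate $p$, and since $\root(U_{p'})=s_{p'}\neq p'$ in general, the transition $p\xtr{w_1}P$ cannot be reused verbatim; instead one re-routes it through the backward simulations $p'\preceq_B s_{p'}$ one child at a time, applying the defining clause of backward simulation to the pair $(p',s_{p'})$ and to the transition $p\xtr{w_1}(P\setminus\{p'\})\cup\{p'\}$. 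Each such step produces a transition whose right-hand side contains $s_{p'}$ together with ``extra'' states that forward-simulate the remaining original children; under each extra state one attaches the forward simulator (Lemma~\ref{forward:lemma}) of the clean run of the child it dominates, which stays clean by the preservation property above. Assembling these pieces gives $U$, and checking $T\fcov U$ is then routine: $\root(T)\preceq_B\root(U)$ holds by construction, and any branch of $U$ lies inside some attached clean run, hence is strongly covered by a branch of the corresponding $T_{p''}$, which prefixed by $p$ is a branch of $T$ — the prefix being handled by $p\preceq_B\root(U)\subseteq\preceq_\alpha$ and transitivity of $\preceq_\alpha$.

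\emph{The main obstacle} is exactly making the combining step rigorous: re-routing the root transition keeps replacing original children by forward-bigger extras, and one has to argue that this can be organised as a terminating procedure yielding a \emph{strong} cover rather than merely the weak cover $\mcov$. This is where forward-extensibility of $\preceq_M$ is needed — to keep the auxiliary extension functions well-formed while they are transported along forward simulations — and, I expect, where the standing hypothesis that $\A$ is $\preceq_F$-unambiguous enters, controlling the extras so that the re-routing does not undo previously settled children. The remaining ingredients (the base case, preservation of cleanness, and the final verification) are short. Once Lemma~\ref{missing:lemma} is in place, it feeds into Lemma~\ref{adding:lemma} and, via a passage to the limit on infinite runs, into the proof that quotienting w.r.t.\ mediated equivalence preserves the language.
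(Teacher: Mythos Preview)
Your plan has a genuine gap in the combining step, and it is precisely the part you flag as ``the main obstacle'' that does not go through. The difficulty is this: after applying backward simulation to the pair $(p_i,s_{p_i})$ and the root transition $p\xtr{w_1}P$, you obtain $q\xtr{w_1}R\cup\{s_{p_i}\}$ where each $r\in R$ satisfies $p_j\preceq_F r$ for some $j\neq i$. You now want to attach under $r$ ``the forward simulator of the clean run $U_{p_j}$''. But $U_{p_j}$ is rooted at $s_{p_j}$, and to invoke Lemma~\ref{forward:lemma} you would need $s_{p_j}\preceq_F r$ --- which you do not have (you only have $p_j\preceq_F r$ and $p_j\preceq_B s_{p_j}$, and these compose in the wrong direction). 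If instead you attach a forward simulator of $T_{p_j}$ itself, then for any branch $\psi$ of this simulator you only get $\ext_{p_j}(\pi)\preceq_M\leaf(\pi)\preceq_F\leaf(\psi)$, hence $\ext_{p_j}(\pi)\preceq_M\leaf(\psi)$ by forward-extensibility --- weak covering, not strong. So your ``cleanness preservation'' claim applies to already-clean runs pushed forward, but there is no clean run available rooted at a state that $r$ forward-simulates. Iterating the re-routing does not help either: after one step the original $p_j$'s are no longer present in the transition, so backward simulation from $p_j\preceq_B s_{p_j}$ cannot be applied again.

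The paper does not use induction on $\height(T)$. Instead it starts from the trivial observation $T\mcov T$ and builds a chain $T=T_0,T_1,T_2,\ldots$ of partial runs that weakly cover $T$ and cover it ``more and more strongly'', where each $T_{i+1}$ is obtained from $T_i$ by choosing a $\preceq_F$-minimal strictly-weakly-covering branch $\pi$ of $T_i$ (this is where $\preceq_F$-unambiguity is used), taking a mediator for $\ext(\sigma)\preceq_M\leaf(\pi)$, and applying Lemma~\ref{backward:lemma} to replace $T_i$ globally. Termination of this chain is argued via a well-founded relation $\sqsubset$ on sequences of sets of paths (``tree decompositions'' $\bs{\tau}{\sw(T_i)}$), bounded by $\height(T)$; the chain must reach $\emptyset\ldots\emptyset$, which means $\sw(T_k)=\emptyset$, i.e.\ $T\fcov T_k$. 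This global replacement avoids exactly the mismatch your local combining step runs into: Lemma~\ref{backward:lemma} handles all the side branches at once via $\preceq_F^{\forall\exists}$, and the minimality choice of $\pi$ ensures genuine progress in the $\sqsubset$-ordering. Your induction on height would need an analogous progress measure to close the gap, at which point it essentially becomes the paper's argument.
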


Proving Lemma~\ref{missing:lemma} is the most intricate part of the proof of Theorem~\ref{adding:theorem}.
We now introduce the concepts used within the proof, 
prove auxiliary Lemma~\ref{progress:lemma}, and subsequently present the proof of Lemma~\ref{missing:lemma} itself. 

Observe that $\root(T)\preceq_B\root(T)$, and every branch of $T$ weakly covers
itself, which means that $T\mcov T$.  Within the proof of
Lemma~\ref{missing:lemma}, we will show how to reach $U$ by a chain of partial
runs derived from~$T$. The partial runs within the chain will all weakly cover
$T$. Runs further from $T$ will in some sense cover $T$ more strongly than the
runs closer to $T$ and the last partial run of the chain will cover $T$
strongly. In the following paragraph, we formulate what it means that a
partial run weakly covering $T$ covers $T$ more strongly than another partial
run. 

\paragraph{The Relation of Covering $T$ More Strongly.} To define the relation
of covering $T$ more strongly on partial runs that weakly cover $T$, we
concentrate on those branches of partial runs that cause that they do not cover
$T$ strongly. Let $V$ be a partial run of $\A$ on $w$ with $T\mcov V$.  We call
a branch $\psi\in\branch(V)$ \emph{strict weakly covering} if there is no
$\pi\in\branch(T)$ with $\pi\preceq_\ext\psi$ (there are only some
$\pi\in\branch(T)$ with $\pi\mcov\psi$). Let  $\sw(V)$ denote the tree which is
the subset of $V$ containing prefixes of strict weakly covering branches of $V$
wrt. $T$.  Note that $T\fcov V$ iff $V$ contains no strict weakly covering
branches, which is equivalent to $\sw(V)=\emptyset$. Given a partial run $W$ of
$\A$ on $w$, we will define which of $V$ and $W$ cover $T$ more strongly by
comparing $\sw(V)$ and $\sw(W)$. For this, we need the following definitions.

Given a finite tree $X$ over $Q$ and $\tau\in Q^+$, we define the \emph{tree
decomposition} of $X$ according to $\tau$ as the sequence of (finite) sets of
paths $\bs \tau X =
X\ominus_1\tau,X\ominus_2\tau,\ldots,X\ominus_{|\tau|}\tau$. We also let $\bs
\epsilon X = \branch(X)$ (it is a sequence of length 1). A substantial property
of tree decompositions is that under the condition that
$\tau\not\in\branch(X)$, $\bs \tau X = \emptyset\ldots\emptyset$ implies that
$X= \emptyset$. Notice that if $\tau\in\branch(X)$, $\bs \tau X =
\emptyset\ldots\emptyset$ does not imply $X= \emptyset$ as $\tau$ could be the
only branch of $X$. This is important as for a partial run $Y$ and $\tau'\in
Y$, if $\tau'\not\in\branch(Y)$, the implications $\bs {\tau'} {\sw(Y)} =
\emptyset \ldots \emptyset \implies \sw(Y) = \emptyset \implies T\fcov Y$
hold.  However, the first implication does not hold if $\tau'\in\branch(Y)$. 

Let $\tau_V\in V\cup\{\epsilon\}$ and $\tau_W\in W\cup\{\epsilon\}$ be such
that $\tau_V\not\in\branch(\sw(V))$ and $\tau_W\not\in\branch(\sw(W))$. We say
that \emph{$W$ covers $T$ more strongly} than $V$ wrt. $\tau_V$ and $\tau_W$,
denoted $V\morestrong{\tau_V}{\tau_W} W$, iff $\root(V)\preceq_B\root(W)$ and
$\bs {\tau_V} {\sw(V)} \sqsubset \bs {\tau_W} {\sw(W)}$ where $\sqsubset$ is a
binary relation on finite sequences of sets of paths defined as follows:

For two sets of paths $P$ and $P'$, we use $P \prec_F^{\forall\exists} P'$ to denote that
$P \preceq_F^{\forall\exists} P'$ but not $P' \preceq_F^{\forall\exists} P$.
In other words, the upward closure of $P'$ wrt. $\preceq_F$ is a proper subset of the upward closure of $P$ wrt. $\preceq_F$.
Then, for two finite sequences $S,S'\in (2^{Q^+})^+$ of sets of paths, $S\sqsubset S'$ iff
there is some
$k\in\nat, k\leq\min\{|S|,|S'|\}$,
such that
$S_k\prec_F^{\forall\exists} S'_k$ and for all $1\leq j < k$,
$S_j\preceq_F^{\forall\exists} S'_j$.  

Given $c\in\nat$, we say that a sequence $S$ of sets of paths is \emph{$c$-bounded} if
$|S|\leq c$ and also the length of every path in every $S_i,1\leq i\leq |S|$ is at most $c$.
Lemma~\ref{lemma:chainto0} below shows that every maximal increasing chain of
$c$-bounded sequences related by $\sqsubset$ eventually arrives to
$\emptyset\ldots\emptyset$.  This will allow us to show that every maximal
sequence of partial runs that cover $T$ more and more strongly must terminate
by a partial run that covers $T$ strongly. 
\begin{lemma}\label{lemma:chainto0}
Given a constant $c$, every maximal increasing chain of $c$-bounded sequences related by
$\sqsubset$ eventually terminates by
$\emptyset\ldots\emptyset$. 
\end{lemma}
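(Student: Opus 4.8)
The plan is to set up a well-founded order on $c$-bounded sequences of sets of paths so that $\sqsubset$ is contained in it, and then invoke well-foundedness to conclude that no infinite strictly increasing chain exists; since the chain is assumed maximal, it must stop, and the only sequences with no $\sqsubset$-successor are of the form $\emptyset\ldots\emptyset$. First I would observe that for a fixed constant $c$ there are only finitely many paths in $Q^+$ of length at most $c$, hence only finitely many sets of such paths, hence only finitely many $c$-bounded sequences (their length is bounded by $c$ as well). So the real content is that $\sqsubset$, restricted to this finite set, is irreflexive and that an increasing chain cannot cycle; equivalently, that $\sqsubset$ is a strict partial order on $c$-bounded sequences. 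Irreflexivity is immediate from the definition, since $S \sqsubset S$ would require $S_k \prec_F^{\forall\exists} S_k$ for some $k$, contradicting the reflexivity of $\preceq_F^{\forall\exists}$.

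The key step is transitivity, or more precisely acyclicity. I would argue as follows. For a set of paths $P$, let $\up_F(P)$ denote its upward closure with respect to $\preceq_F$ (on paths); then $P \preceq_F^{\forall\exists} P'$ is exactly $\up_F(P') \subseteq \up_F(P)$, and $P \prec_F^{\forall\exists} P'$ is the proper-inclusion version $\up_F(P') \subsetneq \up_F(P)$. Since all paths involved have length at most $c$ and use states from the finite set $Q$, each upward closure is a subset of the finite set $U_c$ of all paths of length at most $c$; upward closures are therefore elements of the finite lattice $2^{U_c}$ ordered by $\subseteq$. Now $S \sqsubset S'$ says: at the first coordinate $k$ where the upward closures differ, $\up_F((S')_k) \subsetneq \up_F(S_k)$, and before that they are $\supseteq$-comparable with $\up_F((S')_j)\subseteq\up_F(S_j)$. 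This is precisely a lexicographic comparison of the sequence of upward closures $(\up_F(S_1),\up_F(S_2),\ldots)$ ordered coordinatewise by reverse inclusion $\supseteq$ — except that sequences may have different lengths. I would handle the unequal-length issue exactly the way the definition already does (the witnessing $k$ only ranges up to $\min\{|S|,|S'|\}$) and note that for the purpose of acyclicity it suffices to compare the first $c$ coordinates, padding shorter sequences by $\emptyset$ (whose upward closure is $\emptyset$, the bottom of the reverse-inclusion order, so padding is harmless and matches $\bs \epsilon X$-type conventions). Then $\sqsubset$ is contained in the lexicographic product of $c$ copies of the strict order $\supsetneq$ on $2^{U_c}$; this is a strict order on a finite set, hence well-founded (and in fact a well-order once we also fix a total order, but well-foundedness is all we need).

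From well-foundedness, any $\sqsubset$-increasing chain is finite, so a maximal one has a last element $S^{\max}$ with no $\sqsubset$-successor among $c$-bounded sequences. The final step is to show $S^{\max} = \emptyset\ldots\emptyset$. I would prove the contrapositive: if some coordinate $S^{\max}_i$ is nonempty, then one can strictly decrease it — e.g. replace the set $S^{\max}_i$ by a proper subset obtained by deleting one $\preceq_F$-maximal path from it, which makes the upward closure strictly smaller at coordinate $i$ while leaving earlier coordinates untouched, yielding a $c$-bounded sequence $S'$ with $S^{\max} \sqsubset S'$ and contradicting maximality. (The root condition $\root(V)\preceq_B\root(W)$ attached to $\morestrong{\tau_V}{\tau_W}$ does not enter here since the present lemma is purely about the sequences; in the intended application that condition is maintained separately via reflexivity/transitivity of $\preceq_B$.) The main obstacle I anticipate is purely bookkeeping: making the "first differing coordinate" comparison in the definition of $\sqsubset$ line up cleanly with a genuine lexicographic order when the sequences have different lengths, and checking that truncating/padding to length $c$ neither creates nor destroys $\sqsubset$-relations among $c$-bounded sequences — everything else is a one-line appeal to well-foundedness of a lexicographic product over a finite poset.
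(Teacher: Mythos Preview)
Your plan is essentially the paper's proof: both arguments exploit that there are only finitely many $c$-bounded sequences, and then show that $\sqsubset$ is a strict partial order (irreflexive and transitive), hence has no infinite increasing chains on this finite domain. The paper proves transitivity directly by taking $l=\min\{k,k'\}$ for the two witnesses; your lexicographic embedding via $S\mapsto(\up_F(S_1),\ldots)$ into the lex product of copies of $(2^{U_c},\supsetneq)$ is a legitimate and slightly more conceptual repackaging of the same argument. Your observation that one may always replace the witness $k$ by the first coordinate where the upward closures differ is exactly what makes the embedding into a genuine lex order go through.

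There is one slip in your final step. Deleting a $\preceq_F$-\emph{maximal} path $\pi$ from $S^{\max}_i$ need not shrink the upward closure: if some $\pi'\in S^{\max}_i$ satisfies $\pi'\preceq_F\pi$, then $\pi$ remains in $\up_F(S^{\max}_i\setminus\{\pi\})$. You would need a $\preceq_F$-\emph{minimal} path, and even then you must be careful with $\preceq_F$-equivalent but distinct paths (since $\preceq_F$ is only a preorder on paths). The paper sidesteps this entirely with a one-line observation: for any $S\neq\emptyset\ldots\emptyset$, one has $S\sqsubset\emptyset\ldots\emptyset$, because $\emptyset\preceq_F^{\forall\exists}\emptyset$ and $\emptyset\prec_F^{\forall\exists}X$ for every nonempty $X$ (take $k$ to be the first nonempty coordinate). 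This immediately gives both that the maximal element of the chain is $\emptyset\ldots\emptyset$ and avoids any element-by-element deletion.
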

\begin{proof}
First, observe that for every sequence $S$ of sets of paths with $S \neq
\emptyset\ldots\emptyset$, it holds that $S \sqsubset
\emptyset\ldots\emptyset$.  This is easy to see since
$\emptyset\preceq_F^{\forall\exists}\emptyset$ and
$\emptyset\prec_F^{\forall\exists} X$ for any nonempty $X\in 2^{Q^+}$.
Therefore, to prove the lemma, it is sufficient to show that $\sqsubset$ does
not allow infinite increasing chains of $c$-bounded sequences. 

Let ${S}= S(1)\sqsubset S(2) \sqsubset S(3) \sqsubset \cdots$ be such a chain
of $c$-bounded sequences. We will show that $S$ must be finite. Observe that the domain of possible $c$-bounded $S(i)$s is
finite since there is only finitely many of paths with the length bounded by
$c$ ($Q$ is finite).  Therefore, if $S$ is an infinite chain, there has to be $i$ and $j$ with
$i<j$ such that $S(i) = S(j)$. We will argue that this is not possible by
showing that $\sqsubset$ is irreflexive and transitive, which means that it
does not allow loops (if there was a loop $X\sqsubset \cdots \sqsubset X$, then
by transitivity, $X\sqsubset X$ which contradicts irreflexifity). 

Irreflexivity of $\sqsubset$ may be shown as follows.  Let $S\sqsubset S$ for some $c$-bounded sequence $S$. By the
definition of $\sqsubset$, there is $k\in\nat$ such that
$S_i\preceq_F^{\forall\exists}S_i$ for all $i\in\nat$ smaller than $k$, and
$S_k\prec_F^{\forall\exists}S_k$. However, this is clearly not possible since since the upward closure of $S_k$ wrt. $\preceq_F$ would have to be a proper subset of itself. 

Transitivity of $\sqsubset$ can be shown as follows.  Let $S,S',S''$ be three
$c$-bounded sequences with $S\sqsubset S'\sqsubset S''$.
By the definition of $\sqsubset$, there is $k\in \nat$ such that 
$S_i\preceq_F^{\forall\exists}S'_i$ for all $i\in\nat$ smaller than $k$, and
$S_k\prec_F^{\forall\exists}S'_k$; and there is $k'\in \nat$ such that
$S'_i\preceq_F^{\forall\exists}S''_i$ for all $i\in\nat$ smaller than $k'$, and
$S'_{k'}\prec_F^{\forall\exists}S''_{k'}$.
Let $l = \min\{k,k'\}$. By transitivity of $\preceq_F^{\forall\exists}$, we
have that $S_i\preceq_F^{\forall\exists}S''_i$ for all $i\in\nat$ smaller than
$l$. Then, for the $l$th position, we have that
$S_l\prec_F^{\forall\exists}S'_l\prec_F^{\forall\exists}S''_l$ or
$S_l\preceq_F^{\forall\exists}S'_l\prec_F^{\forall\exists}S''_l$  or
$S_l\prec_F^{\forall\exists}S'_l\preceq_F^{\forall\exists}S''_l$. All these three possibilities give $S_l\prec_F^{\forall\exists}S''_l$, and thus $S\sqsubset S''$.

\end{proof}

The last ingredient we need for the proof of Lemma~\ref{missing:lemma} is to
show that for every maximal sequence of partial runs that cover $T$ more and
more strongly, the underlying $\sqsubset$-related sequence is also maximal.
Particularly, we need to show that for any partial run weakly (but not
strongly) covering $T$, we are always able to construct a partial run covering
$T$ more strongly. This is stated by the following lemma.

\begin{lemma}\label{progress:lemma}
Given a partial run $V$ of $\A$ on $w$ s.t. $T\mcov V$, $T\not\fcov V$, and $\tau_V\in V\cup\{\epsilon\}$ with $\tau_V\not\in\branch(\sw(V))$,
we can construct a partial run $W$  of $\A$ on $w$ with $T\mcov W$ and a~path $\tau_W\in W$ with $\tau_W\not\in\branch(\sw(W))$ such that
$V\morestrong{\tau_V}{\tau_W} W$.
\end{lemma}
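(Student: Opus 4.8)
The plan is to isolate a single strictly-weakly-covering branch of $V$, ``repair'' it by replacing the sub-derivation feeding into it via backward simulation (Lemma~\ref{backward:lemma}), and then verify that this strictly increases the decomposition measure while keeping $V$ weakly covered. Concretely, first let $m$ be the length of the longest prefix $\tau_V^m$ of $\tau_V$ that still lies in $\sw(V)$. Since $T\not\fcov V$ we have $\sw(V)\neq\emptyset$, so $\root(V)\in\sw(V)$ and $m\geq 1$; and since $\tau_V\notin\branch(\sw(V))$, the node $\tau_V^m$ is not a leaf of $\sw(V)$ and hence has a successor in $\sw(V)$ (necessarily different from $(\tau_V)_{m+1}$, by maximality of $m$). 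The suffixes of the branches of $\sw(V)$ that pass through $\tau_V^m$ are exactly the elements of $\sw(V)\ominus_m\tau_V$, so I would pick $\psi\in\branch(\sw(V))$ extending $\tau_V^m$ whose suffix $\sigma:=\psi_{m+1}\cdots\leaf(\psi)$ is $\preceq_F$-minimal in $\sw(V)\ominus_m\tau_V$ (note $|\psi|>m$). Being a branch of $\sw(V)$, $\psi$ is strictly weakly covered by some $\pi\in\branch(T)$: $\pi\preceq_\alpha\psi$, $\ext(\pi)\preceq_M\leaf(\psi)$, but $\ext(\pi)\not\preceq_F\leaf(\psi)$. From ${\preceq_M}\subseteq{\preceq_F}\circ{\preceq_B^{-1}}$ I obtain a mediator $s$ with $\ext(\pi)\preceq_F s$ and $\leaf(\psi)\preceq_B s$. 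Applying Lemma~\ref{backward:lemma} to $\leaf(\psi)\preceq_B s$, the run $V$, and the branch $\psi$ gives a partial run $W$ of $\A$ on $w$ and $\psi'\in\branch(W)$ with $\leaf(\psi')=s$, $\psi\preceq_B\psi'$, and $V\ominus_i\psi\preceq_F^{\forall\exists}W\ominus_i\psi'$ for all $i$. I then set $\tau_W:=\psi'$.

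Next I would check $T\mcov W$. The root part is $\root(T)\preceq_B\root(V)\preceq_B\root(W)$, the last step from $\psi\preceq_B\psi'$. For the branch part, take $\chi\in\branch(W)$. If $\chi=\psi'$, then $\pi\preceq_\alpha\psi'$ (via $\psi\preceq_B\psi'$ and ${\preceq_B}\subseteq{\preceq_\alpha}$) and $\ext(\pi)\preceq_F s=\leaf(\psi')$, so in fact $\pi\fcov\psi'$, in particular $\pi\mcov\psi'$; note this also shows $\psi'\notin\branch(\sw(W))$, so $\tau_W=\psi'\in W$ with $\tau_W\notin\branch(\sw(W))$ as required. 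If $\chi\neq\psi'$, then $\chi$ diverges from $\psi'$ at some level $i\geq 1$, with part past level $i$ being some $\delta\in W\ominus_i\psi'$; choose $\gamma\in V\ominus_i\psi$ with $\gamma\preceq_F\delta$, so $\psi^i\gamma\in\branch(V)$, and take $\pi_\gamma\in\branch(T)$ with $\pi_\gamma\mcov\psi^i\gamma$ (by $T\mcov V$). Then $\ext(\pi_\gamma)\preceq_M\leaf(\gamma)\preceq_F\leaf(\chi)$, hence $\ext(\pi_\gamma)\preceq_M\leaf(\chi)$ by forward extensibility of $\preceq_M$; and $\pi_\gamma\preceq_\alpha\psi^i\gamma\preceq_\alpha\chi$ (using $\psi^i\preceq_B(\psi')^i$, $\gamma\preceq_F\delta$, and ${\preceq_B},{\preceq_F}\subseteq{\preceq_\alpha}$). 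So $\pi_\gamma\mcov\chi$.

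The core is then $\bs{\tau_V}{\sw(V)}\sqsubset\bs{\psi'}{\sw(W)}$. The crucial observation is a correspondence between the side branches of $\sw(W)$ along $\psi'$ and those of $\sw(V)$ along $\psi$: if a branch of $\sw(W)$ diverges from $\psi'$ at level $i$ with suffix $\delta$, then any $\gamma\in V\ominus_i\psi$ with $\gamma\preceq_F\delta$ in fact lies in $\sw(V)\ominus_i\psi$ --- if $\psi^i\gamma$ were strongly covered by some branch of $T$, then by $\psi^i\preceq_\alpha(\psi')^i$ and $\gamma\preceq_F\delta$ that same branch would strongly cover the $\sw(W)$-branch, contradicting its strictness. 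Combined with $V\ominus_i\psi\preceq_F^{\forall\exists}W\ominus_i\psi'$, this yields $\sw(V)\ominus_i\psi\preceq_F^{\forall\exists}\sw(W)\ominus_i\psi'$ for all $i$, and since $\psi^j=\tau_V^j$ and $\psi_{j+1}=(\tau_V)_{j+1}$ for $j<m$, also $\sw(V)\ominus_j\tau_V\preceq_F^{\forall\exists}\sw(W)\ominus_j\psi'$ for $j<m$. At level $m$, $\sigma\in\sw(V)\ominus_m\tau_V$, and I claim $\sigma$ is $\preceq_F$-dominated by no element of $\sw(W)\ominus_m\psi'$: such an element would, via the correspondence, be $\succeq_F$ some $\gamma\in\sw(V)\ominus_m\psi\subseteq\sw(V)\ominus_m\tau_V$ diverging from $\psi$ at level $m$, whence $\gamma\preceq_F\sigma$ and, by minimality of $\sigma$, $\gamma\equiv_F\sigma$; then $\gamma_1$ and $\psi_{m+1}$ are two distinct $\preceq_F$-equivalent successors of $\tau_V^m$ in the run $V$, contradicting $\preceq_F$-unambiguity of $\A$. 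Hence $\sw(V)\ominus_m\tau_V\prec_F^{\forall\exists}\sw(W)\ominus_m\psi'$, which together with the weak inequalities at earlier levels and with $m\leq|\tau_V|$, $m<|\psi|=|\psi'|$, gives $\bs{\tau_V}{\sw(V)}\sqsubset\bs{\psi'}{\sw(W)}$ (at position $k=m$); with $\root(V)\preceq_B\root(W)$ this is exactly $V\morestrong{\tau_V}{\psi'}W$.

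The main obstacle I anticipate is this last part: the honest bookkeeping of which side-subtrees of $\sw(W)$ arose from which side-subtrees of $\sw(V)$, pinning down the exact level at which the decomposition strictly increases, and --- the one genuinely subtle point --- recognising that $\preceq_F$-unambiguity is precisely what prevents the removed branch $\sigma$ from reappearing as a $\preceq_F$-equivalent strictly-weakly-covering branch (which would spoil strictness); choosing $\psi$ with a $\preceq_F$-minimal suffix is what makes that unambiguity argument bite. Everything else --- the mediator extraction, the application of Lemma~\ref{backward:lemma}, and the verification of $T\mcov W$ --- should be routine, modulo the componentwise $\preceq_\alpha$/$\preceq_F$ reasoning on paths already set up earlier in the section.
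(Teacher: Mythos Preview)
Your proposal is correct and follows essentially the same route as the paper's proof: pick a $\preceq_F$-minimal strictly-weakly-covering side branch at some level along $\tau_V$, extract a mediator, apply Lemma~\ref{backward:lemma}, and then verify (i)~$T\mcov W$, (ii)~the new distinguished path is not in $\branch(\sw(W))$, and (iii)~the tree decomposition strictly increases; your variable names are permuted relative to the paper (your $\psi,\pi,\psi',\sigma$ play the roles of the paper's $\pi,\sigma,\psi,\pi'$), and you make the $\preceq_F$-unambiguity step in~(iii) more explicit, but the argument is the same. One small edge case to tidy up: your definition of $m$ as ``the length of the longest prefix of $\tau_V$ lying in $\sw(V)$'' and the conclusion $m\geq 1$ tacitly assume $\tau_V\neq\epsilon$; when $\tau_V=\epsilon$ you should instead work directly with $\bs{\epsilon}{\sw(V)}=\branch(\sw(V))$ at position~$1$ (the paper handles this by convention as well).
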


\begin{proof}
The proof 
relies on Lemma~\ref{backward:lemma}
and the definition of $\preceq_M$.  We first choose a suitable branch $\pi$ of
$\sw(V)$ as follows.  Let $1\leq k\leq |\tau_V|$ be some index such that
$\sw(V)\ominus_k\tau_V$ is nonempty. If $\tau_V = \epsilon$, then $k = 1$.
We choose some $\pi'\in \sw(V)\ominus_k\tau_V$ which is minimal wrt.
$\preceq_F$, meaning that there is no $\pi''\in \sw(V)\ominus_k\tau_V$ different
from $\pi'$ such that $\pi''\preceq_F\pi'$.  We put $\pi = \tau_V^k\pi'$.  We
note that this is the place where we use the $\preceq_F$-unambiguity
assumption.  If $\A$ was $\preceq_F$-ambiguous, there need not be a $k$ such
that $\sw(V)\ominus_k\tau_V$ contains a minimal element wrt. $\preceq_F$.

As $T\mcov V$, there is $\sigma\in\branch(T)$ with $\sigma\mcov\pi$.
From $\ext(\sigma)\preceq_M\leaf(\pi)$, there is a mediator $s$ with
$\ext(\sigma)\preceq_F s\succeq_B \leaf(\pi)$. We can apply Lemma~\ref{backward:lemma}
to $V$, $\pi$, $\leaf(\pi)$ and $s$, which give us a partial run $W$ and $\psi
\in \branch(W)$ with $\leaf(\psi) = s$ such that $\pi\preceq_B\psi$, and for
all $1\leq i \leq |\pi|$, ${V\ominus_i \pi} \preceq_F^{\forall\exists} {W\ominus_i
\psi}$.  
Let $\tau_W = \psi$.
The proof will be concluded by showing that (i) $T\mcov W$, (ii) $\tau_W\not\in\branch(\sw(W))$,
and (iii) $\bs {\tau_V} {\sw(V)}\sqsubset \bs {\tau_W} {\sw(W)}$, which implies $V \morestrong{\tau_V}{\tau_W}W$.

\emph{(i)} To show that $T\mcov W$, we proceed as follows.
Observe that for every $\phi\in \branch(W)\setminus\{\psi\}$ there is a branch
$\phi'\in \branch(V)\setminus\{\pi\}$ such that
$\leaf(\phi')\preceq_F\leaf(\phi)$ and $\phi'\preceq_\alpha\phi$.  This holds
because for all $1\leq i \leq |\pi|$, ${V\ominus_i \pi} \preceq_F^{\forall\exists}
{W\ominus_i \psi}$ and because $\pi\preceq_B\psi$
(To be more detailed, for every $\phi\in \branch(W)\setminus\{\psi\}$, $\phi = \psi^i\rho$ for some
$i$ and $\rho\in {W\ominus_i \psi}$. There must be $\rho'\in V\ominus_i
\pi$ with $\rho'\preceq_F\rho$. As $\pi\preceq_B\phi$, $\pi^i\preceq_B\phi^i$ which implies $\pi^i\preceq_\alpha\phi^i$. Similarly, $\rho'\preceq_F\rho$ implies $\rho'\preceq_\alpha\rho$ and also $\leaf(\rho') \preceq_F \leaf(\rho)$. Therefore, we can construct the branch
$\phi'=\pi^i\rho'\in\branch(V)\setminus\{\pi\}$ with
$\pi^i\rho'\preceq_\alpha\psi^i\rho = \phi$ and $\leaf(\pi^i\rho') \preceq_F \leaf(\psi^i\rho)$).
We also know that since $T\mcov V$,
$\branch(T)\mcov^{\forall\exists}\branch(V)$.
Thus, by the definition of $\mcov$,
we have that for every
$\phi\in\branch(W)\setminus\{\psi\}$, there are $\phi'\in\branch(V)$ and
$\phi''\in\branch(T)$ with $\phi''\preceq_\alpha\phi'\preceq_\alpha\phi$ and
$\ext(\phi'')\preceq_M\leaf(\phi')\preceq_F\leaf(\phi)$. This by
transitivity of $\alpha$ and the definition of $\preceq_M$ gives
$\phi''\preceq_\alpha\phi$ and $\ext(\phi'')\preceq_M\leaf(\phi)$, which means
$\phi''\mcov\phi$. 
To see that also $\psi$ is weakly covering, observe
that since $\sigma\mcov\pi$, we have
$\sigma\preceq_\alpha\pi\preceq_B\psi$ and $\ext(\sigma)\preceq_F s = \leaf(\psi)$,
which by ${\preceq_B}\subseteq{\preceq_\alpha}$ and transitivity of $\preceq_\alpha$
gives even $\sigma\fcov\psi$ (immediately implying $\sigma\mcov\psi$).
Finally, from $\root(T) \preceq_B \root(V)$ (implied by $T\mcov V$),
$\pi\preceq_B\psi$, and transitivity of $\preceq_B$,
$\root(T)\preceq_B\root(W)$. We have shown that $T\mcov W$.

\emph{(ii)} Showing that $\psi\not\in\branch(\sw(W))$ is easy.  In the above
paragraph we have just shown that $\sigma\fcov\psi$,
thus $\psi$ is not a strict weakly covering branch.

\emph{(iii)}
To show that $\bs {\tau_V} {\sw(V)} \sqsubset \bs {\psi} {\sw(W)}$, we will
argue that
(a) for all $1\leq i < k$, it holds that
${\sw(V)\ominus_i \tau_V} \preceq_F^{\forall\exists} {\sw(W)\ominus_i \psi}$
and that
(b)
${\sw(V)\ominus_k \tau_V} \prec_F^{\forall\exists} {\sw(W)\ominus_k \psi}$.
Notice first that for any partial run $X$ of $\A$ and $\tau\in X$ with $\tau\not\in\branch(\sw(X))$,
for all $1\leq j\leq |\tau|$, $\sw(X)\ominus_j \tau \subseteq X\ominus_j\tau$.
Recall that $\tau_V^k = \pi^k$, that ${\sw(V)\ominus_k \tau_V}$ is nonempty, and that
for all $1\leq i < |\pi|$,  ${V\ominus_i \pi} \preceq_F^{\forall\exists} {W\ominus_i \psi}$.

We first show that for all $1\leq i < |\pi|$,
${\sw(V)\ominus_i \pi}\preceq_F^{\forall\exists}{\sw(W)\ominus_i \psi}$.
For every $\phi\in \sw(W)\ominus_i \psi$, there is at least one
$\phi'\in {V\ominus_i \pi}$ with $\phi'\preceq_F\phi$
(because ${V\ominus_i \pi} \preceq_F^{\forall\exists} {W\ominus_i \psi}$ and
${\sw(W)\ominus_i \psi}\subseteq{W\ominus_i \psi}$).
We will show by contradiction that $\phi'\in \sw(V)\ominus_i \pi$ which will imply
${\sw(V)\ominus_i \pi}\preceq_F^{\forall\exists}{\sw(W)\ominus_i \psi}$.
Suppose that $\phi'\not\in \sw(V)\ominus_i \pi$.  Then the branch $\pi^i\phi'$ of $V$
is not strict weakly covering, and as $T\mcov
V$, we have that there is some $\phi''\in\branch(T)$ with
$\phi''\fcov\pi^i\phi'$.  As $\pi\preceq_B\psi$, we have that
$\pi^i\preceq_\alpha\psi^i$. As $\phi'\preceq_F\phi$, we have that
$\phi'\preceq_\alpha\phi$ and $\leaf(\phi')\preceq_F\leaf(\phi)$. This together
with $\phi''\fcov\pi^i\phi'$ gives that
$\phi''\preceq_\alpha\pi^i\phi'\preceq_\alpha\psi^i\phi$  and
$\ext(\phi'')\preceq_F\leaf(\pi^i\phi')\preceq_F\leaf(\psi^i\phi)$. By
transitivity of $\preceq_\alpha$ and $\preceq_F$ and by the definition of
$\fcov$, we obtain $\phi''\fcov\psi^i\phi$. This contradicts with
the fact that $\psi^i\phi$ is strict weakly covering
(as $\phi\in \sw(W)\ominus_i \psi$) and therefore it must be the case that $\phi'\in
\sw(V)\ominus_i \pi$.

\emph{(a)}
The fact that for all $1\leq i < k$,
${\sw(V)\ominus_i \tau_V} \preceq_F^{\forall\exists} {\sw(W)\ominus_i \psi}$
is implied by the result of the previous paragraph, because $\tau_V^k = \pi^k$
(thus ${\sw(V)\ominus_i \tau_V} = {\sw(V)\ominus_i \pi}$).

\emph{(b)}
It remains to show that ${\sw(V)\ominus_k \tau_V} \prec_F^{\forall\exists} {\sw(W)\ominus_k \psi}$.
By the definitions of $\ominus_k$, $\pi$ and $\tau_V$, it holds that
${\sw(V)\ominus_k \tau_V} \supset {\sw(V)\ominus_k \pi}$. (To see this, recall
that $\pi$ is strict weakly covering, but $\tau_V$ is not. Therefore,
${\sw(V)\ominus_k \pi}  = {\sw(V)\ominus_k \tau_V} \setminus
\branch(\sw(V)(\pi^{k+1}))$).
Since $\supset$ implies $\preceq_F^{\forall\exists}$, we have that 
${\sw(V)\ominus_k \tau_V} \preceq_F^{\forall\exists} {\sw(V)\ominus_k \pi}$. 
Moreover, since $\pi'\not\in {\sw(V)\ominus_k \pi}$ and $\pi'$ is a minimal element of ${\sw(V)\ominus_k \tau_V}$,
${\sw(V)\ominus_k \pi} \preceq_F^{\forall\exists} {\sw(V)\ominus_k \tau_V}$ cannot hold (there is no  $\pi''\in {\sw(V)\ominus_k \pi}$ with $\pi''\preceq_F\pi'$), and therefore we have ${\sw(V)\ominus_k \tau_V} \prec_F^{\forall\exists}{\sw(V)\ominus_k \pi}$.
Finally, ${\sw(V)\ominus_k \tau_V} \prec_F^{\forall\exists}{\sw(V)\ominus_k \pi}\preceq_F^{\forall\exists} {\sw(W)\ominus_k \psi}$ gives  ${\sw(V)\ominus_k \tau_V} \prec_F^{\forall\exists} {\sw(W)\ominus_k \psi}$.
This completes
the part (iii) of the proof and we can conclude that $V \morestrong{\tau_V}\psi W$.
\end{proof}

With Lemma~\ref{progress:lemma} in hand, we are finally ready to prove Lemma~\ref{missing:lemma}.

\begin{proof}[Proof of Lemma~\ref{missing:lemma}] If $T\fcov T$, we are done as
in the statement of the lemma, we can take $T$ to be $U$. So, suppose that
$T\npreceq_\ext T$. Observe that $\root(T)\preceq_B\root(T)$, and every branch
of $T$ weakly covers itself, which means that $T\mcov T$.  We construct a run
$U$ strongly covering $T$ as follows. Starting from $T$ and $\epsilon$, we can
construct a chain $T \morestrong{\epsilon}{\tau_1}T_1
\morestrong{\tau_1}{\tau_2}T_2\morestrong{\tau_2}{\tau_3}T_3\ldots$ of partial
runs that more and more strongly cover $T$ by successively applying
Lemma~\ref{progress:lemma} for each $i$, $\tau_i\in T_i$,
$\tau_i\not\in\branch(\sw(T_i))$, and $T\mcov T_i$.  Observe that by the
definition of stronger covering, we have that $\bs {\epsilon} {\sw(T)}
\sqsubset \bs {\tau_{1}} {\sw(T_1)}\sqsubset \bs {\tau_{2}}{\sw(T_2)}\sqsubset
\bs {\tau_{3}} {\sw(T_3)}\ldots$. 

Notice now that for each $i$, since $T\mcov T_i$,
$\height(T_i) \leq \height(T)$. Therefore, since length of $\tau_i$ is bounded by $\height(T)$, the
length of $\bs {\tau_{i}} {\sw(T_i)}$ is bounded by $\height(T)$ too. Since lengths of all paths in the sets within $\bs {\tau_{i}} {\sw(T_i)}$ are obviously bounded by $\height(T)$ as well, $\bs {\tau_{i}} {\sw(T_i)}$ is a $\height(T)$-bounded sequence.
%
%
Therefore, by Lemma~\ref{lemma:chainto0}, the
chain must eventually arrive to its last $T_k$ and $\tau_k$ with $\bs {\tau_{k}}
{\sw(T_k)}= \emptyset\ldots\emptyset$. As $\bs {\tau_{k}} {\sw(T_k)}=
\emptyset\ldots\emptyset$, $\sw(T_k)$ has to be empty, which implies that $T
\fcov T_k$. We can put $U = T_k$ and Lemma~\ref{missing:lemma} is proven.
\end{proof}

We use Lemma~\ref{missing:lemma} to prove Lemma~\ref{adding:lemma}. Informally,
it says that even despite the poorer transition relation and smaller set
of accepting states, $\A$ can answer to any partial run of $\A^+$ by a more
accepting partial run.
To express this formally, we need to define the following weaker version $\moreacc$ of the relation of being more accepting that takes into account $\alpha^+$ on the left and $\alpha$ on the right. This is, for states $q$ and $r$, $q\moreacc r$ iff $q\in\alpha^{+}\implies r\in\alpha$.
For two paths $\pi,\psi\in Q^+$, $\pi\moreacc\psi$ iff
$|\pi| = |\psi|$ and for all $1\leq i \leq |\pi|$,
$\pi_i\in\alpha^{+}\implies\psi_i \in\alpha$. Last, for finite trees $T$ and $U$ over
$Q$, we use $T\moreacc U$ to denote that  $\branch(T)\moreacc^{\forall\exists}\branch(U)$.

\begin{lemma}\label{adding:lemma}
For any partial run $T$ of $\A^+$ on $w\in\Sigma^\omega$, there exists a partial run
$U$ of $\A$ on $w$ such that $\root(T)\preceq_B \root(U)$ and $T\moreacc U$.
\end{lemma}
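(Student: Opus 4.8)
The plan is to establish a more informative statement than Lemma~\ref{adding:lemma} by induction on $\height(T)$, with Lemma~\ref{missing:lemma} carrying the load of realising one layer of $\preceq_M$-jumps inside $\A$. Concretely, I would prove: \emph{for every partial run $T$ of $\A^+$ on $w$ and every extension function $\ext$ on $T$, there is a partial run $U$ of $\A$ on $w$ with $\root(T)\preceq_B\root(U)$ together with a length-preserving map $f\colon\branch(U)\to\branch(T)$ such that, for every $\psi\in\branch(U)$, we have $f(\psi)\moreacc\psi$ and $\ext(f(\psi))\preceq_M\leaf(\psi)$.} The extra clause $\ext(f(\psi))\preceq_M\leaf(\psi)$ is exactly the hook that allows a further layer to be attached. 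Lemma~\ref{adding:lemma} then follows by instantiating $\ext$ with the identity extension function $\ext(\pi)=\leaf(\pi)$: this yields $\root(T)\preceq_B\root(U)$ and $f(\psi)\moreacc\psi$ for all $\psi$, hence $\branch(T)\moreacc^{\forall\exists}\branch(U)$, i.e.\ $T\moreacc U$.

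For the base case $T=\{p\}$: if $p\notin\alpha^+$ then $U=\{p\}$ works, since $p\preceq_B p$, $\ext(p)\preceq_M p$, and $p\moreacc p$ holds vacuously. If $p\in\alpha^+$, choose $q\in\alpha$ with $q\equiv_M p$; from $q\preceq_M p$ and $\preceq_M\subseteq\preceq_F\circ\preceq_B^{-1}$ there is a mediator $s$ with $q\preceq_F s$ (hence $s\in\alpha$, because $\preceq_F\subseteq\preceq_\alpha$) and $p\preceq_B s$, and since $\ext(p)\preceq_M p\preceq_M q\preceq_F s$, forward-extensibility of $\preceq_M$ gives $\ext(p)\preceq_M s$, so $U=\{s\}$ works. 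For the inductive step, let $T'$ be the prefix of $T$ consisting of the paths of length at most $\height(T)-1$ and define an extension function $\ext'$ on $T'$ which, on a leaf $\pi$ of $T'$ that is extended in $T$ by a transition of some $\bar q_\pi\preceq_M\leaf(\pi)$, returns $\bar q_\pi$, and otherwise returns $\ext(\pi)$. Applying the induction hypothesis to $(T',\ext')$ gives $U'$ and $f'$; then applying Lemma~\ref{missing:lemma} to $U'$ with the extension function $\psi\mapsto\ext'(f'(\psi))$ produces a run $U''$ whose leaves now \emph{forward}-simulate the states $\ext'(f'(\psi))$, with $\root(U')\preceq_B\root(U'')$ and the branch relation $\preceq_\alpha$ composing with $f'$ to re-establish $\moreacc$. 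Finally, at each leaf of $U''$ whose $T'$-image $\pi$ is extended in $T$, I fire the transition of $\bar q_\pi$ from that leaf (available by forward simulation, since $\bar q_\pi$ is $\preceq_F$-below the leaf), attach its right-hand side $R$, and rebuild $f$ using $P_\pi\preceq_F^{\forall\exists}R$; at a new leaf $r'$ of $U$ (with $p_j\preceq_F r'$ the witness from $P_\pi\preceq_F^{\forall\exists}R$), forward-extensibility then restores the invariant via $\ext(\pi p_j)\preceq_M p_j\preceq_F r'$.

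The main obstacle is getting the induction to close, i.e.\ picking the strengthened invariant so that all of $\root(T)\preceq_B\root(U)$, the $\preceq_M$-hook at the leaves, and the $\moreacc$-domination of branches survive peeling and re-attaching a layer. The delicate point is a direction mismatch: the hypothesis supplies subtree runs rooted \emph{backward} above the corresponding states of $T$, whereas attaching them beneath a freshly fired transition and chaining further jumps needs \emph{forward} information, and only $\preceq_M$ (hence a mediator) links the two; bridging this---converting a leaf that is $\preceq_M$-above the required jump target into one that is $\preceq_F$-above it without destroying the context---is precisely the work done by Lemma~\ref{missing:lemma} (with its $\sqsubset$-well-foundedness argument), and is the reason Lemmas~\ref{forward:lemma} and~\ref{backward:lemma} alone would not suffice. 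A further, persistent nuisance is the gap between $\alpha$ and $\alpha^+$ at the leaves of the partial runs (it is the reason the conclusion is phrased with the asymmetric $\moreacc$ rather than $\preceq_\alpha$): whenever a freshly created leaf corresponds to a $T$-state in $\alpha^+$, one must route that leaf through a mediator in $\alpha$---using $\equiv_M\subseteq\preceq_M$, forward-extensibility, and Lemma~\ref{backward:lemma} to push the change through the context---so that $\moreacc$ is recovered at that position. Once Lemma~\ref{adding:lemma} is in hand, lifting it to infinite runs is a routine K\"onig's lemma argument carried out afterwards.
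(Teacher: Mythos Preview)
Your plan is essentially the paper's proof. The paper also strengthens the statement by carrying a $\preceq_M$-hook at the leaves---its relation $\moreaccm$ is exactly your invariant with $\ext$ the identity---inducts on the structure of $T$, applies the hypothesis to $T$ with one leaf-cluster removed, invokes Lemma~\ref{missing:lemma} to upgrade the $\preceq_M$-hook at the leaves to $\preceq_F$, fires the borrowed transition in $\A$, and then repairs the $\alpha^+/\alpha$ mismatch at the freshly created leaves. Two cosmetic differences: you peel the entire bottom layer at once while the paper removes the children of a single node $\pi$ (both work), and you parametrise the invariant over an arbitrary $\ext$ (not strictly needed, since transitivity of $\preceq_M$ recovers the general case from $\ext=\mathrm{id}$, but it does no harm).

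One point to tighten: for the $\alpha^+/\alpha$ repair you cite Lemma~\ref{backward:lemma}, but a single application of that lemma fixes one leaf while replacing every other branch by a $\preceq_F$-larger one, and the mismatch can persist at those; iterating this to completion needs a termination argument, which is precisely what Lemma~\ref{missing:lemma} packages. The paper accordingly invokes Lemma~\ref{missing:lemma} a \emph{second} time after firing, with an extension function that at a problematic new leaf $r'$ (covering some $p_j\in\alpha^+\setminus\alpha$) routes to a $v\in\alpha$ with $v\equiv_M p_j$; this is a valid extension function by forward-extensibility ($v\preceq_M p_j\preceq_F r'$ gives $v\preceq_M r'$), strong covering then forces the final leaf into $\alpha$ (since $v\preceq_F$ it), and forward-extensibility again restores the $\preceq_M$-hook. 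With that amendment your sketch goes through.
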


\begin{proof}
By induction to the structure of $T$, using Lemma~\ref{missing:lemma} within
the induction step.  To make the induction argument pass, we will prove a
stronger variant of the lemma.  Particularly, we will replace the relation
$\moreacc$ within the statement of the lemma by its stronger variant
$\moreaccm$ which is defined as follows. Given paths $\pi$ and $\psi$,
$\pi\moreaccm\psi$ iff $\pi\moreacc\psi$ and $\leaf(\pi)\preceq_M\leaf(\psi)$.
For two partial runs $V$ and $W$, we use $V\moreaccm W$ to denote that
$\branch(V)\mathrel{(\moreaccm)^{\forall\exists}}\branch(W)$.  Apparently,
${\moreaccm}\subseteq{\moreacc}$ for paths as well as for partial runs.
\medskip

\noindent
\emph{A stronger variant of the lemma:} For any partial run $T$ of $\A^+$ on
$w\in\Sigma^\omega$, there exists a partial run $U$ of $\A$ on $w$ such that
$\root(T)\preceq_B \root(U)$ and $T\moreaccm U$.
\smallskip

It is obvious that the above statement implies the statement of the lemma.  We
will prove it by induction to the structure of $T$.  In the base case, $T =
\{q\}$ for some $q\in Q$.  If $q\not\in\alpha^+$, we can put $U = \{q\}$
($\preceq_M$ and $\preceq_B$ are reflexive).  If $q\in\alpha^+$, then by the
definition of $\alpha^+$, there is $p\in \alpha$ such that $p\equiv_M q$.  This
means that $q\preceq_M p$ and $p\preceq_M q$.  By the definition of
$\preceq_M$, there exists a mediator $s$ with $p\preceq_F s\succeq_B q$.  As
${\preceq_F}\subseteq{\preceq_\alpha}$, $s\in\alpha$.  Again by the definition
of $\preceq_M$, $q\preceq_M p\preceq_F s\succeq_B q$ gives us $q\preceq_M s
\succeq_B q $ and we can put $U = \{s\}$.

Suppose now that $T$ is not only a root and that the stronger variant of the
lemma holds for every partial run of $\A^+$ on $w$ that is a proper subset of
$T$.  We choose some $\pi\in T$ such that $\succ T(\pi)\neq\emptyset$ and for
every $p\in \succ T(\pi)$, $\succ T(\pi p) = \emptyset$. Notice that since $T$
is a finite tree, such $\pi$ always exists. Denote $P = \succ T(\pi)$ and $q =
\leaf(\pi)$.  Let $T' = T\setminus \{\pi p\mid p\in P\}$.  $T'$ is a partial
run of $\A^+$ on $w$ which is a proper subset of $T$, therefore we can apply
the induction hypothesis on it.  This gives us a partial run $V$ of $\A$ on $w$
such that $\root(T')\preceq_B \root(V)$ and $T'\moreaccm V$.

Let $\Bad_V\subseteq \branch(V)$ be the set such that $\psi \in \Bad_V$ iff
there is no $\phi \in \branch(T)$ such that $\phi\moreaccm\psi$, and let
$\Good_{V} = \branch(V)\setminus \Bad_V$.  Intuitively, $\Bad_V$ contains the
problematic branches because of which $T\moreaccm V$ does not hold. If $\Bad_V$
it is empty, then the relation holds and we can conclude the proof. We
continue assuming that $\Bad_V\neq\emptyset$.

By the definition of $\delta^+$ and because $q\xtr {w_{|\pi|}} P$ is a
transition of $\A^+$, there must be some $s\in Q, s\preceq_M q$ where $s\xtrlow
{w_{|\pi|}} P$ is a transition of $\delta$. We define an extension function
$\ext_V$  such that $\ext_V(\phi) = s$ for every $\phi\in \Bad_V$ and
$\ext_V(\psi) = \leaf(\psi)$ for every $\psi \in \Good_{V}$. To see that
$\ext_V$ conforms the definition of extension function, one has to show that for every branch $\phi\in\Bad_V$, $s\preceq_M\leaf(\phi)$. We know that
$T'\moreaccm V$ but not $T\moreaccm V$. Therefore, there is some branch $\phi'\in T'$ with $\phi'\moreaccm\phi$ such that $\phi'\not\in\branch(T)$ (if $\phi'$ was a branch of $T$, $\phi$ would not be in $\Bad_V$).
Notice that $\pi$ is
the only branch of $T'$ which is not a branch of $T$, which means that it must be the case that $\phi' = \pi$.
Therefore, since $s\preceq_Mq\preceq_M\leaf(\phi)$,
$s\preceq_M\leaf(\phi)$ holds.

By applying Lemma~\ref{missing:lemma} to $V$ and $\ext_V$, we get a
partial run $W$ of $\A$ on $w$ with $V\fcovx{V} W$.  Now, for each
$\psi\in\branch(W)$, there is $\phi\in\branch(V)$ with $\phi\fcovx{V}\psi$.  As
$T'\moreaccm V$, $\rho\moreaccm\phi$ for some $\rho\in\branch(T')$.  There are
two cases of how $\rho$ and $\psi$ may be related, depending on $\phi$:
\begin{enumerate}

\item
If $\phi\in \Good_V$, then $\ext(\phi) = \leaf(\phi)$.
In this case, by the definitions of $\moreaccm$ and $\fcovx{V}$, we have
$\rho\moreacc\phi\preceq_\alpha\psi$ and
$\leaf(\rho)\preceq_M\leaf(\phi)\preceq_F\leaf(\psi)$,
which gives
$\rho\moreacc\psi$ and
$\leaf(\rho)\preceq_M\leaf(\psi)$ (since $\preceq_M$ is forward extensible), meaning that $\rho\moreaccm\psi$.

\item
To analyse the case when $\phi\in \Bad_V$, recall that $\pi$ is the only branch
of $T'$ which is not a branch of $T$, and therefore $\pi$ is also the only
branch of $T'$ with $\pi \moreaccm\phi$.  Therefore, $\rho = \pi$. According to the
definition of $\ext_V$, $\ext_V(\phi) = s$. Since $\phi\fcovx{V} \psi$, we have
$\pi\moreaccm\phi\preceq_\alpha\psi$ which gives $\pi\moreacc\psi$. However,
since (contrary to the previous case 1.) $\ext_v(\phi)\neq\leaf(\phi)$, we cannot
guarantee any further relation between $\leaf(\phi)$ and $\leaf(\psi)$, and we
cannot derive that $\leaf(\pi)\preceq_M\leaf(\psi)$ and $\pi\moreaccm\psi$ need not hold. 
\end{enumerate}
We define the set $\Bad_W\subseteq\branch(W)$ such as $\psi\in \Bad_W$ iff there is no $\rho\in T$ with
$\rho\moreaccm\psi$ and we let $\Good_W = \branch(W)\setminus \Bad_V$.
This is, $\Bad_W$ contains the branches because of which $T \moreaccm W$ does not hold.
Note that if $\psi\in \Bad_V$, then all the $\phi\in\branch(V)$ with $\phi\fcovx{V}\psi$ are as in the case (2) above, i.e.,
$\pi$ is the only branch of $T'$ with $\pi \moreaccm\phi$.
By the definition of $\fcovx{V}$,  $s = \ext_V(\phi)\preceq_F\leaf(\psi)$.
Therefore, by the definition of $\preceq_F$ and since $s\xtrlow{w_{|\pi|}} P$, there must be some transition
$\leaf(\psi)\xtrlow {w_{|\pi|}} R_\psi$ of $\A$ where
$P\preceq_F^{\forall\exists} R_\psi$. We extend $W$ by firing these
transitions for every $\psi\in \Bad_W$, in which way we obtain a run $X = W \cup \{\psi
R_\psi\mid \psi \in \Bad_{W}\}$ of $\A$ on $w$.

Let us use $\New_X = \{\psi R_\psi\mid \psi \in \Bad_{W}\}$ to denote the
branches of $X$ that arose by firing the transitions.  Observe that
$\branch(X) = \Good_W \cup \New_X$. Recall that for all $\psi\in \Bad_W$,
$\pi\moreacc\psi$ and that for every $\psi \in \New_X$, there is some $p\in P$
such that $p\preceq_F \leaf(\psi)$.
We will define an extension function $\ext_X$ of $X$ as follows:
\begin{enumerate} \item If $\psi\in \Good_W$, $\ext_X(\psi) = \leaf(\psi)$.
\item If $\psi \in  \New_X$ and there is $p\in P$ with $p\preceq_F \leaf(\psi)$
and $p\moreacc \leaf(\psi)$, we let $\ext_X(\psi) = \leaf(\psi)$.  \item If
$\psi \in  \New_X$ and there is no $p\in P$ with $p\preceq_F \leaf(\psi)$ and
$p\moreacc \leaf(\psi)$, we proceed as follows.  By the definition of $\New_X$,
there is some $p'\in P$ such that $p'\preceq_F \leaf(\psi)$.  Since
${\preceq_F}\subseteq{\preceq_\alpha}$, $p'\preceq_F \leaf(\psi)$, and not
$p'\moreacc \leaf(\psi)$, it must be the case that $p'\not\in\alpha$,
$\leaf(\psi)\not\in\alpha$, and $p'\in\alpha^+$. This by the definition of
$\alpha^+$ means that there is some $v\in\alpha$ with $p'\equiv_M v$. We put
$\ext_X(\psi) = v$. 
\end{enumerate}

We apply Lemma~\ref{missing:lemma} to $X$ and $\ext_X$, which gives us a
partial run $U$ of $\A$ on $w$ with $X\fcovx{X} U$.  We will check that $U$
satisfies the statement of the stronger variant of the lemma. We will first
prove that $T\moreaccm U$.  For each $\tau\in\branch(U)$, there is
$\psi\in\branch(X)$ with $\psi\fcovx{X}\tau$.  We will derive that there is
some $\rho\in\branch(T)$ with $\rho\moreaccm\tau$. The argument depends on properties of $\psi$. Particularly, we have the following three cases. 
\begin{enumerate}
\item
If $\psi\in \Good_W$, then there is some $\rho\in T$ with $\rho\moreaccm\psi$.
Recall that $\ext_X(\psi) = \leaf(\psi)$ in this case.  Thus, by the
definitions of $\moreaccm$ and $\fcovx{X}$, we have
$\rho\moreacc\psi\preceq_\alpha\tau$ and
$\leaf(\rho)\preceq_M\leaf(\psi)\preceq_F\leaf(\tau)$, which gives
$\rho\moreacc\tau$ and $\leaf(\rho)\preceq_M\leaf(\tau)$, i.e.,
$\rho\moreaccm\tau$.
\item
If $\psi\in \New_X$ and there is some $p\in P$ with $p\preceq_F\leaf(\psi)$ and
$p\moreacc\leaf(\psi)$, then by the definition of $\ext_X$, $\ext_X(\psi) =
\leaf(\psi)$. Recall that as $\psi^{|\psi|-1}\in \Bad_W$,
$\pi\moreacc\psi^{|\psi|-1}$.  Therefore, also $\pi p\moreacc\psi$.  By the
definition of $\fcovx{X}$, we have that $\psi\preceq_\alpha\tau$ and
$\leaf(\psi)\preceq_F\leaf(\tau)$.  Finally, $\pi
p\moreacc\psi\preceq_\alpha\tau$ and
$p\preceq_F\leaf(\psi)\preceq_F\leaf(\tau)$ together imply that $\pi p
\moreaccm\tau$.
\item
If $\psi\in \New_X$ and there is no $p\in P$ with $p\preceq_F\leaf(\psi)$ and
$p\moreacc\leaf(\psi)$, then by the definition of $\ext_X$, there are $p'\in P$
with $p'\preceq_F \leaf(\psi)$ and $v\in\alpha$ with $v\equiv_M p'$ such that
$\ext_X(\psi) = v$.  By $\psi\fcovx{X}\tau$, we have $\psi\preceq_\alpha\tau$
and $v\preceq_F\leaf(\tau)$. Thus, since $\preceq_M$ is forward extensible,
$p'\equiv_M v \preceq_F \leaf(\tau)$ gives $p'\preceq_M\leaf(\psi)$.  As
${\preceq_F}\subseteq{\preceq_\alpha}$, we have that $\leaf(\tau)\in\alpha$ and
thus $p'\moreacc\leaf(\tau)$. As $\psi^{|\psi|-1}\in \Bad_W$, we have that
$\pi\moreacc\psi^{|\psi|-1}$. Together with $\psi\preceq_\alpha\tau$, this
gives $\pi p'\moreacc\tau$.  Therefore, $\pi p'\moreaccm\tau$.
\end{enumerate}

Since the above three cases cover all possible variants of $\psi$ and thus all branches of $U$, we have proven that
$T\moreaccm U$. Finally, it is easy to show that $\root(T)\preceq_B\root(U)$
since $\preceq_B$ is transitive and we know that
$\root(T)=\root(T')\preceq_B\root(V)\preceq_B\root(W)=\root(X)\preceq_B\root(U)$.
We have verified that the constructed partial run $U$ satisfies the statement
of the stronger variant of the lemma, which concludes the proof.
\end{proof}

With Lemma~\ref{adding:lemma} in hand, we can prove that for each accepting run
of $\A^+$ on a word $w$, there is an accepting run of $\A$ on $w$. This
requires to carry Lemma~\ref{adding:lemma} from finite partial runs to full
infinite runs.

\begin{lemma}\label{k:lemma}
A run $T$ of $\A$ with $\root(T) = \iota$ is accepting if and only if for every
$\pi\in T$, there exists a constant $k_{\pi}\in\nat$ such that every $\psi$
with $\pi \psi\in T$ and $|\psi|\geq k$ contains an accepting state.
\end{lemma}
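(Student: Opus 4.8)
The plan is to prove the two implications separately: the ``if'' direction is immediate, and the ``only if'' direction is a standard compactness argument via K\"onig's Lemma. Throughout I use two facts about runs of $\A$ recorded in the preliminaries of this chapter. First, since $\delta$ admits no transition with empty right-hand side, a run contains no finite branch, so $T$ is accepting exactly when \emph{every} branch of $T$ carries infinitely many accepting states. Second, a run is finitely branching: for every $\pi\in T$ the set $\succ T(\pi)$ is the right-hand side of a transition of $\A$ and hence a subset of the finite set $Q$.

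For the ``if'' direction I would assume the right-hand condition and fix an arbitrary branch $\beta$ of $T$. Every finite nonempty prefix $\pi$ of $\beta$ lies in $T$, so there is a constant $k_\pi$ as in the statement; taking $\psi$ to be the length-$k_\pi$ block of $\beta$ immediately following $\pi$, the word $\pi\psi$ is again a finite prefix of $\beta$ and hence lies in $T$, and $|\psi|=k_\pi$, so $\psi$ contains an accepting state. This places an accepting state on $\beta$ strictly beyond $\pi$. Since the prefixes $\pi$ of $\beta$ have unbounded length, $\beta$ carries infinitely many accepting states, and as $\beta$ was arbitrary, $T$ is accepting.

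For the ``only if'' direction I would argue by contradiction. Suppose $T$ is accepting but, for some $\pi\in T$, no constant $k_\pi$ works: for every $k\in\nat$ there is a $\psi$ with $\pi\psi\in T$, $|\psi|\ge k$, and $\psi$ containing no accepting state. Let $S$ be the set of all words $\leaf(\pi)\psi$ such that $\pi\psi\in T$ and $\psi$ contains no accepting state (with $\psi$ allowed to be $\epsilon$). Then $S$ is a tree over $Q$ with root $\leaf(\pi)$: it is prefix-closed, since a prefix $\leaf(\pi)\psi'$ of an element $\leaf(\pi)\psi$ of $S$ satisfies $\pi\psi'\in T$ (as $T$ is prefix-closed) and $\psi'$ still avoids the accepting states; and it is finitely branching, being a subtree of the finitely branching suffix tree $T(\pi)$. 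By the failure of $k_\pi$, $S$ has nodes of arbitrarily large depth, hence is infinite, so K\"onig's Lemma yields an infinite branch $\leaf(\pi)\psi$ of $S$ with $\psi\in Q^\omega$ containing no accepting state. Then $\pi\psi$ is an infinite branch of $T$, because every nonempty prefix of $\pi\psi$ is either a prefix of $\pi$ or of the form $\pi\psi'$ for a finite prefix $\leaf(\pi)\psi'$ of $\leaf(\pi)\psi$ in $S$, and all of these belong to $T$. But $\pi\psi$ contains at most $|\pi|$ accepting states, all of them inside $\pi$, contradicting that $T$ is accepting. Hence for every $\pi\in T$ such a $k_\pi$ exists, which proves the lemma.

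The only mildly delicate step, and thus the one deserving care, is the ``only if'' direction, where the finite bound $k_\pi$ must be extracted from the global B\"uchi condition. The crucial points that make the K\"onig argument legitimate are that the $\alpha$-avoiding descendants of $\pi$ genuinely form a tree in the sense used here (prefix-closed, with a single root state) and that runs of $\A$ are finitely branching, which is where finiteness of $Q$ enters; everything else is routine bookkeeping with prefixes of paths.
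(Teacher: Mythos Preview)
Your proof is correct and follows essentially the same approach as the paper: the ``if'' direction iterates the $k_\pi$ bounds along a branch, and the ``only if'' direction is a K\"onig-style argument producing an infinite $\alpha$-avoiding branch below a bad $\pi$. The only cosmetic difference is that you invoke K\"onig's Lemma as a black box on the tree $S$ of $\alpha$-avoiding extensions, whereas the paper inlines the construction, maintaining at each step the invariant that no $k$ works for the current node (which is equivalent to the subtree of $S$ below that node being infinite).
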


\begin{proof}
\emph{(if)}
For every $\pi\in\branch(T)$, there is an infinite sequence of $k_0,k_1\ldots$
such that:
\begin{itemize}
\item
$k_0 = 0$ and
\item
for all $i\in\nat$, $k_i = k_{i-1} + k_{\pi^{n}}$ where $n = k_{i-1}+1$.
\end{itemize}
For all $i\in\nat$, every segment of $\pi$ between $k_{i-1}+1$ and $k_{i}$ contains an accepting
state, therefore $\pi$ contains infinitely many accepting states.

\emph{(only if)}
By contradiction. Suppose that there is $\pi \in T$ for which there is no $k_\pi$.
We will show that in this case, there must be $\psi\in Q^\omega$ such that
$\pi\psi\in\branch(T)$ and $\psi$ does not contain an accepting state (which
contradicts the assumption that $T$ is accepting).

We will give a procedure which returns $\psi^i$ for each $i\in\nat$ 
(based on the knowledge of $\psi^{i-1}$). For each $i\in\nat^0$, we will keep the invariant that for
$\pi\psi^i$, $k_{\pi\psi^i}$ does not exists and that $\psi^i$ does not contain an accepting state.
Since $\psi^0 = \epsilon$, the invariant holds for $i=0$.

Let the invariant hold for $i-1,i\in\nat$, and suppose that we have already
constructed $\psi^{i-1}$.  Denote $P$ the subset of $\succ T(\pi\psi^{i-1})$ containing nonaccepting states. $P$ must be nonempty, because if all the states from $\succ T(\pi\psi^{i-1})$ were accepting,  $k_{\pi\psi^{i-1}}$ would equal 1, violating the invariant for $i-1$. Then, there must be a state $q\in P$ such that $k_{\pi\psi^{i-1}q}$ does not exist, since otherwise we could put $k_{\pi\psi^{i-1}} = \max\{k_{\pi\psi^{i-1}p} \mid p\in P\} + 1$, which would also violate the invariant for $i-1$.
We choose $q$ as the continuation and put $\psi^{i} = \psi^{i-1} q$. Observe that this choice 
satisfied the invariant for $i$. 

We have shown that for every
$i\in\nat$, we can construct the $i$th
prefix $\psi^i$ of $\psi$ that does not contain an accepting state.  Therefore, the whole infinite path $\psi$ does not contain an accepting state, and the branch $\pi\psi$ of $T$ does not contain infinitely many accepting states. This contradicts the assumption that
$T$ is accepting.
\end{proof}

\begin{lemma}\label{infinite:lemma}
For every accepting run $T$ of $\A^+$ a word $w\in\Sigma^\omega$, there exists an accepting run $U$ of $\A$ on $w$.
\end{lemma}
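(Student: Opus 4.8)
The plan is to deduce Lemma~\ref{infinite:lemma} from Lemma~\ref{adding:lemma} and Lemma~\ref{k:lemma} together with K\"onig's lemma. Given an accepting run $T$ of $\A^+$ on $w$, I first observe that $T$ is the union of an increasing chain of its finite prefixes $T^{(0)}\subseteq T^{(1)}\subseteq\cdots$, where $T^{(n)}$ is obtained by cutting $T$ at depth $n$ (formally, $T^{(n)} = \{\pi\in T\mid |\pi|\leq n\}$). Each $T^{(n)}$ is a partial run of $\A^+$ on $w$, so by Lemma~\ref{adding:lemma} there is a partial run $U^{(n)}$ of $\A$ on $w$ with $\root(T^{(n)})\preceq_B\root(U^{(n)})$ and $T^{(n)}\moreacc U^{(n)}$; in particular each $U^{(n)}$ has its root in the $\preceq_B$-upward closure of $\iota$ — but since $\preceq_B$ refines $\preceq_\iota$, this forces $\root(U^{(n)}) = \iota$. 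The obstacle is that the $U^{(n)}$ need not be nested, so they do not directly assemble into a single infinite run; this is exactly where a compactness argument is needed.

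The standard device is to apply K\"onig's lemma to the tree of all finite partial runs of $\A$ on $w$ that are rooted at $\iota$ and are ``more accepting'' witnesses in the required sense. Concretely, I would consider the infinite collection $\{U^{(n)}\}_{n\in\nat}$ and extract from it, by a K\"onig-style argument, a single infinite run $U$ of $\A$ on $w$ such that for every $\pi\in U$ there are infinitely many $n$ for which $U^{(n)}$ agrees with $U$ up to the depth of $\pi$. The cleanest way to organise this: since $\A$ is finitely branching (each transition $p\xtr a P$ has $P$ finite, and $\delta(p,a)$ is a finite set), the set of partial runs of $\A$ on $w$ rooted at $\iota$ of height $\leq n$ is finite for each $n$; build a tree whose nodes at level $n$ are those partial runs that occur as a depth-$n$ truncation of $U^{(m)}$ for infinitely many $m$, ordered by the truncation relation. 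This tree is infinite (each level is nonempty because there are infinitely many $U^{(m)}$ but only finitely many truncations at each level) and finitely branching, so by K\"onig's lemma it has an infinite branch, whose limit is the desired infinite run $U$ of $\A$ on $w$ with $\root(U) = \iota$.

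It remains to argue that $U$ is accepting, and here Lemma~\ref{k:lemma} does the work. Fix $\pi\in U$. By construction there are infinitely many $n$ with $U^{(n)}$ truncating to $U$ above $\pi$; choose such an $n$ large enough that all branches of $U$ through $\pi$ of length up to any prescribed bound already appear in $U^{(n)}$. Since $T$ is an accepting run of $\A^+$, Lemma~\ref{k:lemma} (applied to $\A^+$) gives a constant $k_\rho$ for each $\rho\in T$ bounding the gap between consecutive accepting states along $T$ below $\rho$; I would transfer this bound to $U$ via the relation $T^{(n)}\moreacc U^{(n)}$, which guarantees that for every branch $\psi$ of $U^{(n)}$ there is a branch of $T^{(n)}$ that is $\moreacc$-below it, so that whenever the $T$-branch has an accepting state of $\A^+$ in a window, the corresponding $U$-branch has an accepting state of $\A$ in the matching window. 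This yields a constant $k_\pi$ as required by the ``if'' direction of Lemma~\ref{k:lemma} for $U$, and hence $U$ is accepting. The main difficulty I anticipate is the bookkeeping in the K\"onig argument — making precise that the truncations stabilise branch by branch and that the accepting-state bounds from $\A^+$ survive the passage through Lemma~\ref{adding:lemma} uniformly enough to invoke Lemma~\ref{k:lemma}; the rest is routine. Combining Lemma~\ref{infinite:lemma} with its trivial converse gives $\L(\A^+) = \L(\A)$, and then Lemma~\ref{quotient=extended:lemma} yields $\L(\A/{\equiv_M}) = \L(\A)$, establishing that quotienting a $\preceq_F$-unambiguous ABA according to mediated equivalence preserves the language.
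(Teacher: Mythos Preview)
Your proposal is correct and follows essentially the same approach as the paper's proof: apply Lemma~\ref{adding:lemma} to the depth-$n$ truncations of $T$, extract a limit run $U$ via a K\"onig/pigeonhole argument on the finitely many possible partial runs at each depth, and verify acceptance of $U$ by transferring the bounds $k_{\pi'}$ from Lemma~\ref{k:lemma} through the relation $\moreacc$. The paper makes the compactness step explicit (constructing $U(k)$ inductively so that $U(k)$ occurs as the depth-$k$ truncation of infinitely many $U_m$) and is slightly more precise in the acceptance step (taking $k=\max\{k_{\pi'}\mid \pi'\in\branch(T(n))\}$ and using that $U(n{+}k)$ coincides with some $U_m(n{+}k)$, whence $T(n{+}k)\moreacc U(n{+}k)$), but this is exactly the bookkeeping you flagged as the main difficulty.
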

\begin{proof}
For a tree  $X$ over $Q$, let $X(i) = \{\pi\in X\mid |\pi|\leq i\}$ be the
$i$th prefix of $X$
($X(0) = \emptyset$).
From Lemma~\ref{adding:lemma}, for each $i\in\nat$, there is a partial run
$U_i$ of $\A$ on $w$ such that $T(i)\moreacc U_i$ and $\root(T(i))\preceq_B\root(U_i)$.
As ${\preceq_B}\subseteq{\preceq_\iota}$, $\root(U_i) = \iota$.
Note that for all $\pi\in \branch(U_i)$, $|\pi|$ equals $i$, because only paths of the same length can be related by $\moreacc$.
Denote $\U = \{U_1,U_2,\ldots\}$.  $\U$ is an infinite set that for each $k\in\nat$ contains a partial run $U_k$ of $\A$ with all the branches of the length $k$. We will use $\U$ to construct the
infinite accepting run $U$.

Observe that for any infinite set $\V$ of partial runs of $\A$ and for
any $i\in\nat$, there has to be at least one partial run $W$ of $\A$ such that
for infinitely many $V\in \V$,  $W = V(i)$.
The reason is that for any $i\in\nat$, there is obviously only finitely many of
possible partial runs of the height $i$ that $\A$ can generate.

We prove the existence of $U$ by giving a procedure, which for every
$k\in\nat$ gives the $k$th prefix $U(k)$ of $U$.
\begin{itemize}
\item
Let $\U_0 = \U$ and let $U(0) = \emptyset$.
\item
For every $k\in\nat$, $U(k)$ is derived from $U(k-1)$ as follows.
Let $\U_k\subseteq\U$ be defined as
the set such that for all $i\in \nat$, $U_i\in \U_k$ iff $U(k-1) = U_i(k-1)$.
In other words, $\U_k$ is the subset of $\U$ of the partial runs with the $i$th prefix equal to $U(k-1)$.
Then, $U(k) = U_n(k)$ for some $n\geq k$ such that $U_n\in \U_k$ and there is
infinitely many $m\in \nat$ such that $U_m\in \U_k$ and $U_n(k) = U_m(k)$.
I other words, $U(k)$ is a tree that appears as the $k$th prefix of infinitely many partial runs in $\U_k$.
\end{itemize}
To see that this construction is well defined, observe that:
\begin{itemize}
\item $\U_0$ is infinite, and
\item for all $k\in\nat$, if $\U_{k-1}$ is infinite, then $U(k-1)$ is defined and
$\U_k$ is infinite.
\end{itemize}
Thus, $U(k)$ is well defined for every $k\in \nat$ and $U$ is a run of $\A$.

It remains prove that $U$ is accepting.
We will show that for every $\pi\in U$, there is $k_{\pi}\in\nat$ such that
every $\psi$ with $\pi \psi\in T$ and $|\psi|\geq k$ contains an accepting
state. By Lemma~\ref{k:lemma}, it will follow that $U$ is accepting.

Let us choose arbitrary $\pi\in U$. Let $n=|\pi|$. By Lemma~\ref{k:lemma}, for every $\pi'\in \branch(T_n)$,
there is $k_{\pi'}\in\nat$ such that
every $\psi'$ with $\pi' \psi'\in T$ and $|\psi'|\geq k_{\pi'}$ contains an accepting
state. Let $k = \max \{k_{\pi'}\mid \pi'\in \branch(T(n))\}$. By the construction of $U$, $T(n+k)\moreacc U(n+k)$.
This implies that
for every $\pi''\in\branch(U(n))$,
every $\psi''$ with $\pi'' \psi''\in T$ and $|\psi''|\geq k$ contains an accepting
state.  As $\pi$ in $\branch(U(n))$, we can put $k_\pi = k$ and we are done.
\end{proof}

\begin{theorem}\label{adding:theorem}
$\L(\A^+) = \L(\A)$.
\end{theorem}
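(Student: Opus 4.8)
The plan is to observe that essentially all of the work has already been done, so that the theorem follows by combining Lemma~\ref{infinite:lemma} with the (trivial) reverse inclusion. The two directions are handled separately.

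First I would dispatch $\L(\A) \subseteq \L(\A^+)$, which is immediate from the construction of $\A^+$. Since $\preceq_M$ is reflexive, $\delta(q,a) \subseteq \delta^+(q,a)$ for every $q \in Q$ and $a \in \Sigma$ (recall $\delta^+(q,a) = \bigcup_{p \preceq_M q}\delta(p,a)$), so every transition of $\A$ is a transition of $\A^+$ and hence every run of $\A$ on a word $w$ is also a run of $\A^+$ on $w$, with the same root. Moreover $\alpha \subseteq \alpha^+$, and the B\"uchi acceptance condition is monotone in the set of accepting states: a branch with infinitely many states in $\alpha$ a fortiori has infinitely many states in $\alpha^+$. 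Thus an accepting run of $\A$ on $w$ rooted at $\iota$ is an accepting run of $\A^+$ on $w$ rooted at $\iota$, so $w \in \L(\A)$ implies $w \in \L(\A^+)$.

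Next I would prove $\L(\A^+) \subseteq \L(\A)$. Let $w \in \L(\A^+) = \L_{\A^+}(\iota)$, so there is an accepting run $T$ of $\A^+$ on $w$ with $\root(T) = \iota$. Applying Lemma~\ref{infinite:lemma} to $T$ yields an accepting run $U$ of $\A$ on $w$; inspecting the construction in that proof, each approximant $U_i$ satisfies $\iota = \root(T) = \root(T(i)) \preceq_B \root(U_i)$, and since ${\preceq_B} \subseteq {\preceq_\iota}$ this forces $\root(U_i) = \iota$, hence $\root(U) = \iota$. Therefore $U$ witnesses $w \in \L_\A(\iota) = \L(\A)$. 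Combining the two inclusions gives $\L(\A^+) = \L(\A)$.

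I do not expect a genuine obstacle at the level of this theorem: the whole difficulty has been isolated into the chain Lemma~\ref{missing:lemma} $\to$ Lemma~\ref{adding:lemma} $\to$ Lemma~\ref{infinite:lemma} (the first being the most intricate, converting a mediated step into a concrete run of $\A$ via a K\"onig-style/\,$\sqsubset$-termination argument; the second lifting this through the structure of a finite partial run; the third passing to infinite runs through another limit argument). The only point here that needs a little care is ensuring the run produced by Lemma~\ref{infinite:lemma} is rooted at $\iota$ so that it certifies membership in $\L(\A)$ rather than merely $\L_\A(q)$ for some $q$; this is exactly where ${\preceq_B} \subseteq {\preceq_\iota}$ enters. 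Finally, I would note as an immediate corollary, via Lemma~\ref{quotient=extended:lemma} applied with $\preceq = \preceq_M$ and $\equiv = \equiv_M$, that $\L(\A/{\equiv_M}) \subseteq \L(\A^+_{\preceq_M}) = \L(\A)$ and (trivially) $\L(\A) \subseteq \L(\A/{\equiv_M})$, so quotienting a $\preceq_F$-unambiguous ABA by mediated equivalence preserves its language.
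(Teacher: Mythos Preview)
Your proposal is correct and follows essentially the same approach as the paper: the paper dispatches $\L(\A)\subseteq\L(\A^+)$ as obvious from the richer transitions and accepting states, and derives $\L(\A^+)\subseteq\L(\A)$ directly from Lemma~\ref{infinite:lemma}. Your added care about the root being $\iota$ (via ${\preceq_B}\subseteq{\preceq_\iota}$) and the concluding remark about the corollary are well taken and match what the paper does, just made more explicit.
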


\begin{proof}
The inclusion $\L(\A)\subseteq \L(\A^+)$ is obvious as $\L(\A^+)$ has riches both
transition function and the set of accepting states.
The inclusion $\L(\A^+)\subseteq \L(\A)$ follows immediately from Lemma
\ref{infinite:lemma}.
\end{proof}

\begin{corollary}\label{quotienting:corollary}
Quotienting with mediated equivalence preserves the language.
\end{corollary}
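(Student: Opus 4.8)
The plan is to obtain Corollary~\ref{quotienting:corollary} as an essentially immediate consequence of Theorem~\ref{adding:theorem} together with Lemma~\ref{quotient=extended:lemma}: all of the technical content has already been discharged in the chain of auxiliary results (Lemma~\ref{missing:lemma}, Lemma~\ref{adding:lemma}, Lemma~\ref{infinite:lemma}) that feeds Theorem~\ref{adding:theorem}, so the remaining work is only to assemble the pieces. First I would fix, as throughout this section, a $\preceq_F$-unambiguous ABA $\A$, a reflexive and transitive forward simulation $\preceq_F$ on $\A$, and a reflexive and transitive backward simulation $\preceq_B$ parametrised by $\preceq_F$; I would then let $\preceq_M = {\preceq_F}\oplus{\preceq_B^{-1}}$ be the induced mediated preorder and $\eq{M} = {\preceq_M}\cap{\preceq_M^{-1}}$ the induced mediated equivalence, noting that with this choice the extended automaton of the section is exactly $\A^+ = \A^+_{\preceq_M}$.

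Next I would establish the two inclusions. For $\L(\A/{\eq{M}}) \subseteq \L(\A)$, I would instantiate Lemma~\ref{quotient=extended:lemma} with $\preceq = \preceq_M$ and $\equiv = \eq{M}$, which gives $\L(\A/{\eq{M}}) \subseteq \L(\A^+_{\preceq_M}) = \L(\A^+)$, and then apply Theorem~\ref{adding:theorem} to conclude $\L(\A^+) = \L(\A)$, hence $\L(\A/{\eq{M}}) \subseteq \L(\A)$. For the converse inclusion $\L(\A) \subseteq \L(\A/{\eq{M}})$, I would reuse the identity $\L(\A/{\eq{M}}) = \L(\A^+_{\eq{M}})$ already observed inside the proof of Lemma~\ref{quotient=extended:lemma} (quotienting with respect to a forward-simulation equivalence preserves the language): since $\A^+_{\eq{M}}$ has a richer transition function and a larger set of accepting states than $\A$, every accepting run of $\A$ is also an accepting run of $\A^+_{\eq{M}}$, so $\L(\A) \subseteq \L(\A^+_{\eq{M}}) = \L(\A/{\eq{M}})$. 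Combining the two inclusions yields $\L(\A/{\eq{M}}) = \L(\A)$, which is the claim.

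There is no genuine obstacle remaining at this stage; the only points that need a moment of care are bookkeeping ones. One is hypothesis management: Theorem~\ref{adding:theorem} holds only for $\preceq_F$-unambiguous automata, so the corollary inherits that standing assumption, and for an arbitrary input ABA one must first apply the language-preserving removal of $\preceq_F$-ambiguity from Section~\ref{section:algorithm} before quotienting. The other is that Lemma~\ref{quotient=extended:lemma} is stated for an equivalence of the form $\preceq \cap \preceq^{-1}$ with $\preceq$ a \emph{preorder}, so I would explicitly invoke Lemma~\ref{lemma:composition-unique} (defined in Chapter~\ref{chapter:ta_reduction}) to certify that $\preceq_M = {\preceq_F}\oplus{\preceq_B^{-1}}$ is indeed a preorder, so that $\eq{M}$ legitimately plays the role of $\equiv$ in that lemma.
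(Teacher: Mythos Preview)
Your proposal is correct and matches the paper's intended argument: the corollary is stated without an explicit proof, as an immediate consequence of Theorem~\ref{adding:theorem} combined with Lemma~\ref{quotient=extended:lemma}, and the text preceding that lemma already signals that ``if adding transitions and accepting states according to $\preceq$ preserves the language, then quotienting according to $\equiv$ preserves the language too.'' Your handling of the reverse inclusion via the equality $\L(\A/{\eq{M}}) = \L(\A^+_{\eq{M}})$ established inside the proof of Lemma~\ref{quotient=extended:lemma}, together with the trivial $\L(\A)\subseteq\L(\A^+_{\eq{M}})$, is exactly the right way to close the argument, and your bookkeeping remarks on $\preceq_F$-unambiguity and on $\preceq_M$ being a preorder are accurate and consistent with the standing assumptions of the section.
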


\section{Computing the Relations}\label{section:algorithm}
In this section, we describe algorithms for computing ABA forward and backward simulation, and mediated preorder.
For forward simulation, we use an
algorithm from \cite{fritz:state}, for backward simulation, we present an
algorithm based on a translation to an LTS simulation problem similar to the one
from Chapter~\ref{chapter:ta_reduction} for computing upward TA simulation. Mediated preorder is then computed by
the algorithm presented in Chapter~\ref{chapter:ta_reduction}. For the
mediated preorder to be useful for quotienting, we also need to remove
ambiguity before we start computing the backward simulation. This can
be done by a simple procedure presented in this section too.  
For the rest of the section, we fix an ABA $\A = (\Sigma,Q,\iota,\delta,\alpha)$. 

\paragraph{Forward Simulation.}

The algorithm for computing maximal forward simulation $\preceq_F$ on $\A$ can
be found in Fritz and Wilke's work~\cite{fritz:state} (it is called direct
simulation in their paper). They reduce the problem of computing maximal
forward simulation to a simulation game. Although Fritz and Wilke use a
slightly different definition of ABA, it is easy to translate $\A$ to an ABA
under their definition with $\O(n+m)$ states and $\O(nm)$ transitions and then
use their algorithm to compute $\preceq_F$. The time complexity of the above
procedure is $\O(nm^2)$.

\paragraph{Removing Ambiguity.}

As we have argued in Section~\ref{section:mediated}, $\A$ needs to be
$\preceq_F$-unambiguous for mediated minimisation. Here, we describe how to
modify $\A$ to make it $\preceq_F$-unambiguous. The modification does not
change the language of $\A$ and also the forward simulation relation
$\preceq_F$, therefore we do not need to recompute forward simulation again for
the modified automaton.

The procedure for removing ambiguity is simple. For every transition $p\xtr{a}
P$ with $P=\{p_1,\ldots, p_k\}$ and for each $i\in \{1,\ldots,k\}$, we check if
there exists some $i<j\leq k$ such that $p_j \preceq_F p_i$. If there is one,
remove $p_i$ from $P$. The time complexity of this procedure is obviously in
$\O(n^2m)$. 

We note that an alternative way is quotienting the automaton w.r.t.
forward simulation equivalence.

\subsection{Computing Backward Simulation}

Our algorithm for computing backward simulation is inspired by the algorithms
for computing tree automata simulations---we translate the problem of computing
maximal backward simulation on $\A$ to a problem of computing maximal
simulation on a labelled transition system. 

The reduction is very similar to the
reduction of the problem of computing tree automata backward simulation from
Chapter~\ref{chapter:ta_reduction}.  We first define the notion of an \emph{environment}, which is
a tuple of the form $(p, a, P \setminus \{p'\})$ obtained by removing a state
$p' \in P$ from the transition $p \xrightarrow{a} P$ of $\A$. Intuitively, an
environment records the neighbours of the removed state $p'$ in the transition
$p \xrightarrow{a} P$. We denote the set of all environments of $\A$ by
$\env(\A)$. Formally, we define the LTS $\A^\odot =(\Sigma,Q^\odot,\Delta^\odot)$
as follows:
\begin{itemize}
\item $Q^\odot = \{q^\odot \mid q\in\ Q\} \cup \{(p, a, P)^\odot \mid (p, a, P) \in \env(\A)\}$.
\item $\Delta^\odot = \{(p, a, P \setminus \{p'\})^\odot \xrightarrow{a} p^\odot, {p'}^\odot \xrightarrow{a}(p, a, P \setminus \{p'\})^\odot \mid P \in \delta(p,a), p'\in P\}$.
\end{itemize}

\begin{figure}[t]
\begin{center}
\includegraphics[height=25mm]{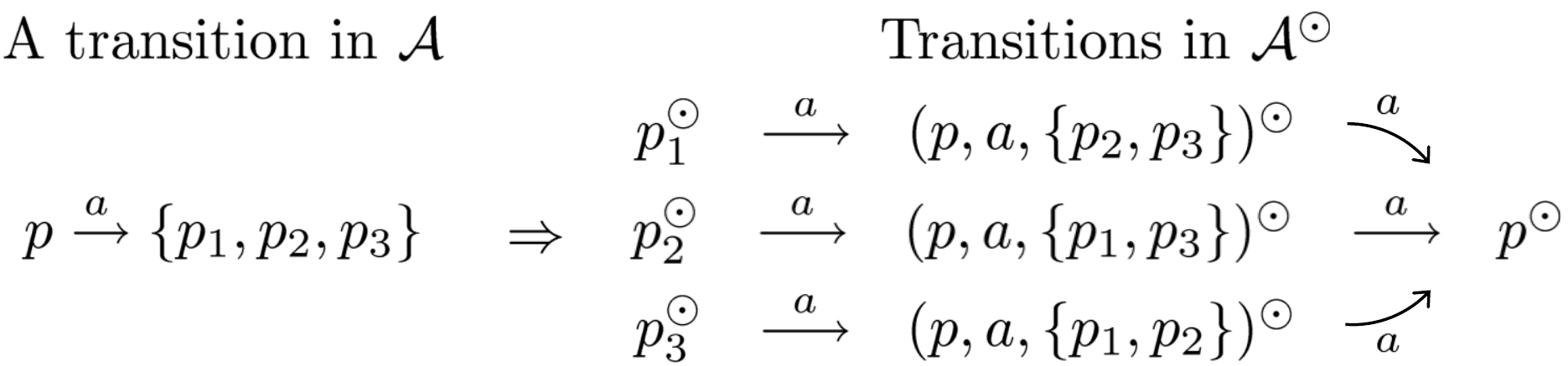}
\caption{An example of the reduction from an ABA transition to LTS transitions\label{fig:reduction}}
\end{center}
\end{figure}

An example of the reduction is given in Figure~\ref{fig:reduction}.  The goal
of this reduction is to obtain a simulation relation on $\A^\odot$ with the
following property: $p^\odot$ is simulated by $q^\odot$ in $\A^\odot$ iff $p
\preceq_B q$ in $\A$. However, the maximal simulation on $\A^\odot$ is not
sufficient to achieve this goal. Some essential conditions for backward
simulation (e.g., $p \preceq_B q \Longrightarrow p \preceq_\alpha q$) are
missing in $A^\odot$. This can be fixed by defining a proper initial preorder
$I$.

Formally, we let $I=\{(q^\odot_1,q^\odot_2)\mid {q_1 \preceq_\iota q_2}
\wedge {q_1 \preceq_\alpha q_2}\} \cup \{((p, a, P)^\odot, (r, a, R)^\odot)\mid
P\preceq_F^{\forall\exists}R \}$. Observe that $I$ is a preorder. Recall that
according to the definition of the backward simulation, $p \preceq_B r$ implies
that (1) $p \preceq_\iota r $, (2) $p \preceq_\alpha r $, and (3) for all
transitions $q \xtr{a} P\cup \{p\},p\not\in P$, there exists a transition $s
\xtr{a} R\cup \{r\},r\not\in R$ such that $q \preceq_B s$ and
$P\preceq_F^{\forall\exists}R$. The set $\{(q^\odot_1,q^\odot_2)\mid {q_1
\preceq_\iota q_2} \wedge {q_1 \preceq_\alpha q_2}\}$ encodes the conditions
(1) and (2) required by the backward simulation, while the set $\{((p, a,
P)^\odot, (r, a, R)^\odot)\mid P\preceq_F^{\forall\exists}R \}$ encodes the
condition (3).  A simulation relation ${\preceq^I}$ can be computed using the
aforementioned procedure with LTS $\A^\odot$ and the {\it initial} preorder
$I$. The following theorem shows the correctness of our approach to computing
backward simulation.

\begin{theorem} \label{theorem:backward_reduce}
For all $q,r\in Q$, we have $q \preceq_B r$ iff $q^\odot \preceq^I r^\odot$.
\end{theorem}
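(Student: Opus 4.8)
The plan is to prove the biconditional by establishing the two implications separately, both times carefully tracking how the definition of backward simulation on $\A$ corresponds, step by step, to the definition of simulation on the LTS $\A^\odot$ with the initial preorder $I$. The essential observation is that a backward simulation move in $\A$ — starting from a transition $q \xtr{a} P\cup\{p\}$ with $p\notin P$ and answering it by $s \xtr{a} R\cup\{r\}$ with $r\notin R$ — is encoded in $\A^\odot$ as a two-step path: ${p}^\odot \xrightarrow{a} (q,a,P)^\odot \xrightarrow{a} {q}^\odot$ on the left, which must be matched by ${r}^\odot \xrightarrow{a} (s,a,R)^\odot \xrightarrow{a} {s}^\odot$ on the right. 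The intermediate environment states carry exactly the side condition $P \preceq_F^{\forall\exists} R$ via the initial preorder $I$, and the requirement that the two transitions "share" the removed state is handled by the fact that $(q,a,P)^\odot$ has an outgoing edge back to $q^\odot$.

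First I would prove the direction $q \preceq_B r \implies q^\odot \preceq^I r^\odot$. I would take $\preceq_B$ to be the maximal backward simulation on $\A$ parametrised by $\preceq_F$ (which exists and is a preorder by Lemma~\ref{lemma:backward-basic}), and define a relation $S$ on $Q^\odot$ by: $S$ relates $q_1^\odot$ to $q_2^\odot$ whenever $q_1 \preceq_B q_2$, and relates $(p,a,P)^\odot$ to $(r,a,R)^\odot$ whenever $P \preceq_F^{\forall\exists} R$ and, for the transition they come from, the removed states $p,r$ satisfy $p \preceq_B r$. (Some care is needed in pinning down exactly which pairs of environment states to include so that the transition back to the "center" states is matched; I would include $((p,a,P\setminus\{p'\})^\odot, (r,a,R\setminus\{r'\})^\odot)$ when $p\preceq_B r$ and $(P\setminus\{p'\}) \preceq_F^{\forall\exists} (R\setminus\{r'\})$.) Then I check that $S \subseteq I$ — this is where the definition of $I$ is used: center-state pairs land in the $\preceq_\iota \cap \preceq_\alpha$ part of $I$ because $\preceq_B \subseteq \preceq_\iota$ and $\preceq_B \subseteq \preceq_\alpha$, and environment pairs land in the $\preceq_F^{\forall\exists}$ part by construction. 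Next I verify that $S$ is a simulation on $\A^\odot$: a move out of $q_1^\odot$ along $a$ goes to some environment $(q_1,a,P\setminus\{p'\})^\odot$ where $p' \preceq_B$-related data gives, by the backward simulation property for $q_1 \preceq_B q_2$ applied to the $\A$-transition $p' \xtr a P$ — hmm, more precisely applied appropriately — a matching environment on the $q_2$ side; and a move out of an environment state goes to the center state, which is matched because the environment pair was chosen to record the $\preceq_B$-relation of the centers. By maximality of $\preceq^I$, $S \subseteq {\preceq^I}$, giving the implication.

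For the converse, $q^\odot \preceq^I r^\odot \implies q \preceq_B r$, I would define $T$ on $Q$ by $p \mathrel{T} r$ iff $p^\odot \preceq^I r^\odot$, and show $T$ is a backward simulation on $\A$ parametrised by $\preceq_F$. Conditions (i) and (ii) ($p \preceq_\iota r$ and $p \preceq_\alpha r$) follow because $\preceq^I \subseteq I$ and the center-state part of $I$ encodes exactly these. For condition (iii): given $q \xtr a P\cup\{p\}$ with $p \notin P$, the LTS has ${p}^\odot \xrightarrow{a} (q,a,P)^\odot$; since $p^\odot \preceq^I$-related to $r^\odot$ — wait, I need the right starting point — since whenever $p \mathrel{T} r'$ for the appropriate states, the simulation gives a matching edge ${r'}^\odot \xrightarrow{a} (s,a,R)^\odot$ with $(q,a,P)^\odot \preceq^I (s,a,R)^\odot$, and then from $(q,a,P)^\odot \xrightarrow{a} q^\odot$ we get $(s,a,R)^\odot \xrightarrow{a} s^\odot$ with $q^\odot \preceq^I s^\odot$, i.e. $q \mathrel T s$; and $(q,a,P)^\odot \preceq^I (s,a,R)^\odot$ together with $\preceq^I \subseteq I$ yields $P \preceq_F^{\forall\exists} R$. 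The only subtlety is that the environment states are defined as $(p, a, P\setminus\{p'\})$, so one must match the bookkeeping of which state was removed; I expect this index-chasing — making sure the "removed state" in the answering transition is genuinely $r$ and $r \notin R$, and that the simulation on $\A^\odot$ really does furnish a transition of the right shape rather than a spurious one — to be the main obstacle, but it is the same style of argument as the tree-automata upward simulation reduction in Chapter~\ref{chapter:ta_reduction}, so I would largely follow that pattern. Once both inclusions are shown, $q \preceq_B r \iff q^\odot \preceq^I r^\odot$ follows immediately.
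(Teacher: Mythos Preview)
Your approach is precisely the paper's: for one direction exhibit a simulation $S$ on $\A^\odot$ built from $\preceq_B$ on state nodes and from $\preceq_F^{\forall\exists}$ together with $\preceq_B$ on environment nodes, and for the other pull $\preceq^I$ back to a relation on $Q$ and verify the backward-simulation clauses. The one bookkeeping slip in your sketch is that a transition out of $p^\odot$ lands in an environment $(p',a,P)^\odot$ where $p'$ is the \emph{source} of the ABA rule $p'\xrightarrow{a}P\cup\{p\}$ with $p\notin P$ (not $p$ itself in the first component); with that corrected, the relation on environment nodes is simply $(p,a,P)^\odot \mathrel{S} (r,a,R)^\odot \iff P \preceq_F^{\forall\exists} R \wedge p\preceq_B r$, and both simulation checks go through directly without further index-chasing.
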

\begin{proof}
{\it (if)}
We define $\preceq$ to be a binary relation on $Q$ such that $p \preceq r$ iff
$p^\odot \preceq^I r^\odot$.  We show that $\preceq$ is a backward simulation
on $Q$ which immediately implies the result.

Suppose that $p \preceq r$ and $p' \xrightarrow{a} \{p\} \cup P$ where $p
\not\in P$ is a transition of $\A$.  Since $p \preceq r$, we know that $p^\odot
\preceq^I r^\odot$; and since $p' \xrightarrow{a} \{p\} \cup P$ is a transition
of $\A$, we know by definition of $\A^\odot$ that $p^\odot\xrightarrow{a}(p',
a, P)^\odot$ and $(p', a, P)^\odot \xrightarrow{a} p'^\odot$ are transitions in
$\A^\odot$.
Since $\preceq^I$ is a simulation, we can find two transitions
$r^\odot\xrightarrow{a}(r', a, R)^\odot$ and $(r', a, R)^\odot \xrightarrow{a}
r'^\odot$ in $\A^\odot$ with $(p', a, P)^\odot \preceq^I (r', a, R)^\odot$ and
$p'^\odot \preceq^I r'^\odot$.
From $p'^\odot \preceq^I r'^\odot$, $(p', a, P)^\odot \preceq^I (r', a,
R)^\odot$, and the definition of the initial preorder $I$, we have $p' \preceq
r'$ and $P \preceq^{\forall\exists}_F R$. It follows that $\preceq$ is in fact
a backward simulation parametrised by $\preceq_F$.

{\it (only if)}
Define $\preceq_\odot$ as a binary relation on $Q^\odot$ such that $p^\odot
\preceq_\odot r^\odot$ iff $p \preceq_B r$ and $(p, a, P)^\odot \preceq_\odot
(r, a, R)^\odot$ iff $P \preceq^{\forall\exists}_F R$ and $p \preceq_B r$. By
definition, $\preceq_\odot \subseteq I$.  We show that $\preceq_\odot$ is a
simulation on $Q^\odot$ which immediately implies the result. In the proof, we
consider two sorts of states in $\A^\odot$; namely those corresponding to
states and those corresponding to ``environments''.

Suppose that $p^\odot \preceq_\odot r^\odot$ and the transition
$p^\odot\xrightarrow{a}(p', a, P)^\odot$ is in $\A^\odot$. Since $p^\odot
\preceq_\odot r^\odot$, we know that $p \preceq_B r$. From the transition
$p^\odot\xrightarrow{a}(p', a, P)^\odot$ and by definition of $\A^\odot$, $p'
\xrightarrow{a} P\cup\{p\}$ is a transition in $\A$. Since $p \preceq_B r$,
there exists a transition $r' \xrightarrow{a} R\cup\{r\}$ in $\A$ such that $p'
\preceq_B r'$ and $P \preceq^{\forall\exists}_F R$. It follows that there
exists a transition $r^\odot\xrightarrow{a}(r', a, R)^\odot$ in $\A^\odot$ such
that $(p', a, P)^\odot \preceq_\odot (r', a, R)^\odot$.

Suppose that $(p, a, P)^\odot \preceq_\odot (r, a, R)^\odot$ and the transition
$(p, a, P)^\odot \xrightarrow{a}p^\odot$ is in $\A^\odot$. Since $(p, a,
P)^\odot \preceq_\odot (r, a, R)^\odot$, we know that $P
\preceq^{\forall\exists}_F R$ and $p \preceq_B r$. By definition of $\A^\odot$,
the transition $(r, a, R)^\odot \xrightarrow{a}r^\odot$ is in $\A^\odot$. Since
$p \preceq_B r$, we have $p^\odot \preceq_\odot r^\odot$. Together we have
there exists a transition $(r, a, R)^\odot \xrightarrow{a}r^\odot$ in
$\A^\odot$ such that $p^\odot \preceq_\odot r^\odot$. It follows that
$\preceq_\odot$ is a simulation on $Q^\odot$.
\end{proof}

\subsection{Complexity of Computing Backward Simulation} The complexity comes
from three parts of the procedure: (1) compiling $\A$ into its corresponding
LTS $\A^\odot$, (2) computing the initial preorder $I$, and (3) running
Algorithm~\ref{algorithm:LRT} from Chapter~\ref{chapter:LTS_simulation} for
computing the LTS simulation relation. 
Let $n$ and $m$ be the
number of states and transitions in $\A$, respectively. 
The LTS $\A^\odot$ has at most $nm$+$n$
states and $2nm$ transitions. It follows that Part (3) has both time complexity and space complexity $\O(|\Sigma|n^2m^2)$. As we will
show, among the three parts, Part (3) has the highest time and space complexity
and therefore computing backward simulation also has time and space complexity
$\O(|\Sigma|n^2m^2)$.  Under our
definition of ABA, every state has at least one outgoing transition for each
symbol in $\Sigma$. It follows that $m \geq |\Sigma|n$. Therefore, we can also
say that the procedure for computing maximal backward simulation has time
and space complexity $\O(nm^3)$.

\paragraph{Initial Preorder for Computing Backward Simulation.}
Recall that the preorder $I$ is the union of two
components: $\{(q^\odot_1,q^\odot_2)\mid {q_1 \preceq_\iota q_2} \wedge {q_1
\preceq_\alpha q_2}\}$ and $\{((p, a, P)^\odot, (r, a, R)^\odot)\mid \forall
{r_j \in R} \exists {p_i\in P}: p_i\preceq_F r_j \}$. It is trivial that the
first set can be computed by an algorithm with time complexity $\O(n^2)$.
However, a na\"ive algorithm (pairwise comparison of all different
environments in $\env(\A)$) for computing the second set has time complexity
$\O(n^4m^2)$. Here, we will describe a more efficient algorithm, which allows
the computation of $I$ in time $\O(n^2m^2)$ and space $\O(n)$.

The main idea of the algorithm is the following. For each pair of transitions
of $\A$, it computes all the pairs of environments that arise from them (by
deleting a right-hand side state) and are to be added to $I$ at once, reusing a
lot of information that a na\"ive algorithm would compute repeatedly for each
pair of environments. For a fixed pair of transitions, this procedure has time
complexity $\O(n^2)$ and space complexity $\O(n)$. Because $\A$ has at most
$m^2$ different pairs of transitions and the $\O(n)$ memory needed for the data
structures for one pair of transitions can then be reused for the other pairs, the second component of
$I$ can be this way computed in time $\O(n^2m^2)$ and space $\O(n)$.

We now explain how to efficiently compute all pairs of environments that arise
from a given pair of transitions and that are related by $I$. Let us fix transitions
$p\xrightarrow{a}P$ and $r\xrightarrow{a}R$. 
We will maintain a function $\beta: R\rightarrow
\{T,F\} \cup P$ such that:
\[
\beta(r') =
\left\{
  \begin{array}{ll}
    T & \mbox{ if at least two states in $P$ are forward smaller than $r'$}.\\
    F & \mbox{ if no state in $P$ is forward smaller than $r'$}.\\
    p' & \mbox{ if $p'$ is the only state in $P$ such that } p' \preceq_F r'.\\
\end{array}
\right.
\]
The function $\beta$ can be computed by lines 1-4 of
Algorithm~\ref{algorithm:genPairs} in time
$\O(n^2)$ and space $\O(n)$.%
Let us consider a pair of states $((p, a, P\setminus\{p'\})^\odot, (r, a,
R\setminus\{r'\})^\odot)$ in $\A^\odot$. This pair can be added to $I$ if and
only if the following two conditions hold:
\begin{enumerate}
  \item \label{c1}$\forall\hat{r} \in (R\setminus\{r'\}).\beta(\hat{r}) \neq F$.
  \item \label{c2}$\forall\hat{r} \in (R\setminus\{r'\}).\beta(\hat{r}) \neq p'$.
\end{enumerate}
The algorithm first pre-processes $p\xrightarrow{a}P$ and $r\xrightarrow{a}R$,
computing certain information that will allow us to check the two conditions in
constant time for every pair of environments arising from the two transitions.

The pre-processing needed for efficient checking of Condition~(\ref{c1}) is the following. We define $\hat{r}
\in R$ as the {\tt KeyState} if $\hat{r}$ is the only one state in $R$ such
that $\beta(\hat{r}) = F$. Given a function $\beta$, the {\tt KeyState}
can be found efficiently (with time complexity $\O(n)$ and space complexity
$\O(1)$) by scanning through $R$ and
\begin{itemize}
  \item if there exist two states $r_1,r_2\in R$ such that
$\beta(r_1)=\beta(r_2)=F$, the algorithm terminates immediately because it
follows that none of the pairs of environments generated from the given pair of
transitions satisfies the requirement of $I$;
  \item if there exists only one state such that $\beta$ maps it to $F$, let it
be the {\tt KeyState}.
\end{itemize}
Then we have Condition~(\ref{c1}) is satisfied if (1) there is no {\tt KeyState}
or (2) $r'$ is the {\tt KeyState}.

For efficient checking of Condition~(\ref{c2}), we maintain a
function $\gamma: P\rightarrow \{T,F\} \cup R$ such that
\[
\gamma(p') =
\left\{
  \begin{array}{ll}
    F       & \text{if } \beta^{-1}(p') = \emptyset \\
    r'      & \text{if } \beta^{-1}(p') = \{r'\}\\
    T       & \text{otherwise.}
\end{array}
\right.
\]
The function $\gamma$ can be found in time $\O(n^2)$ and space $\O(n)$
by scanning once through $\beta$ for each element of $P$.  With the function
$\gamma$, Condition~(\ref{c2}) can easily be verified by checking if
$\gamma(p')\in \{F, r'\}$, which means that for all the states $\hat{r}$ in
$R\setminus \{r'\}$, there is some state $\hat p$  different from $p'$ such that $\hat p \preceq_F
\hat{r}$.

\begin{algorithm}[!h]
    \KwIn{Two transitions $p\xrightarrow{a}P$ and $r\xrightarrow{a}R$ in $\A$.}
    \caption{\textit{Add Pairs of States to I}}
    \label{algorithm:genPairs}
    \tcc{Computing function $\beta$}
    \lForAll{$r'\in R$}{
        $\beta(r'):=F$\;
    }
    \ForAll{$p'\in P, r'\in R$}{
        \If{$p' \preceq_F r'$}{
            \lIf{$\beta(r')=F$}{
                $\beta(r'):=p'$\;
            }
            \lElse{
                $\beta(r'):=T$\;
            }
        }
    }

    \tcc{Preprocessing for Condition (1) (computing {\tt KeyState})}
    \lForAll{$r'\in R$}{
        \If{$\beta(r') = F$}{
            \lIf{there is no {\tt KeyState}}{Let $r'$ be the {\tt KeyState}\;}
            \lElse{Terminate the algorithm\;}
        }
    }

    \tcc{Preprocessing for Condition (2) (computing function $\gamma$)}
    \lForAll{$p'\in P$}{
        $\gamma(p'):=F$\;
    }
    \lForAll{$r'\in R$}{
        \If{$\beta(r') \notin \{T, F\}$}{
            \lIf{$\gamma(\beta(r'))=F$}{
                $\gamma(\beta(r')):=r'$\;
            }
            \lElse{
                $\gamma(\beta(r')):=T$\;
            }
        }
    }
    \tcc{main loop}
    \ForAll{$p'\in P, r'\in R$}{
        \If{there is no {\tt KeyState} or $r'$ is the {\tt KeyState}}{
            \lIf{$\gamma(p')\in \{F, r'\}$}{add $((p, a, P\setminus\{p'\})^\odot, (r, a, R\setminus\{r'\})^\odot)$ to $I$}
        }
    }

\end{algorithm}
In Algorithm~\ref{algorithm:genPairs}, we first find out the {\tt KeyState} if
there is one and compute the function $\gamma$ from $\beta$. Then in the main
loop, for each pair of states $((p, a, P\setminus\{p'\})^\odot, (r, a,
R\setminus\{r'\})^\odot)$, we check if it belongs to $I$ by verifying the
Conditions~(\ref{c1}) and (\ref{c2}). Since it is easy to see that
Algorithm~\ref{algorithm:genPairs} has time complexity
$\O(n^2)$ and space complexity $\O(n)$ (not taking into account the space needed for $I$ itself), we can conclude that the initial preorder $I$ can be computed in
time $\O(n^2m^2)$ and space $\O(m^2)$ (encoding of $I$). This leads to the following
theorem that summarises complexity of computing backward simulation.

\begin{theorem} \label{theorem:backward_complexity}
Maximal backward simulation parametrised by a given transitive and reflexive
forward simulation can be computed with both time and space complexity $\O(|\Sigma|n^2m^2)
\subseteq\O(nm^3)$.
\end{theorem}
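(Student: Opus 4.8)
The plan is to bound the three stages of the backward-simulation computation separately and then combine them, noting that Part~(3) dominates. First I would recall the construction of the LTS $\A^\odot = (\Sigma, Q^\odot, \Delta^\odot)$ and observe its size: for every transition $p \xtr{a} P$ of $\A$ we produce $|P|$ environments, so $|Q^\odot| \leq nm + n$, and for each environment we add two LTS transitions, giving $|\Delta^\odot| \leq 2nm$. Since the partition $P_\Sim$ of $Q^\odot$ has size at most $|Q^\odot| \leq nm + n$, plugging these into Theorem~\ref{LTSSimCompl} (the complexity of Algorithm~\ref{algorithm:LRT}) yields time $\O(|\Sigma| |P_\Sim| |Q^\odot| + |P_\Sim| |\Delta^\odot|) \subseteq \O(|\Sigma| (nm)^2 + (nm)(nm)) = \O(|\Sigma| n^2 m^2)$ and space $\O(|\Sigma| |P_\Sim| |Q^\odot|) \subseteq \O(|\Sigma| n^2 m^2)$. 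So Part~(3) has both time and space complexity $\O(|\Sigma| n^2 m^2)$.

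Next I would handle Part~(2), the computation of the initial preorder $I$. The first component $\{(q_1^\odot, q_2^\odot) \mid q_1 \preceq_\iota q_2 \wedge q_1 \preceq_\alpha q_2\}$ is trivially computable in $\O(n^2)$ time by pairwise checks. For the second component $\{((p,a,P)^\odot,(r,a,R)^\odot) \mid P \preceq_F^{\forall\exists} R\}$ I would invoke Algorithm~\ref{algorithm:genPairs}: for each of the at most $m^2$ ordered pairs of transitions $p \xtr{a} P$, $r \xtr{a} R$ it computes all admissible environment pairs in $\O(n^2)$ time and $\O(n)$ working space (reused across transition pairs), using the functions $\beta$ and $\gamma$ and the notion of {\tt KeyState} to check Conditions~(1) and~(2) in constant time per environment pair. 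This gives the second component in time $\O(n^2 m^2)$ and working space $\O(n)$, with the encoding of $I$ itself taking $\O(m^2)$ space. Part~(1), compiling $\A$ into $\A^\odot$, is a straightforward linear pass over the transitions of $\A$, costing $\O(nm)$ time and $\O(nm)$ space to store the result, which is absorbed.

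Then I would combine: the dominating term among Parts~(1)--(3) is $\O(|\Sigma| n^2 m^2)$ from Part~(3), both in time and in space, and since by Theorem~\ref{theorem:backward_reduce} the relation $\preceq^I$ computed on $\A^\odot$ exactly gives $\preceq_B$ on $\A$ (restricted to the $q^\odot$ states), we conclude that maximal backward simulation parametrised by $\preceq_F$ is computable in time and space $\O(|\Sigma| n^2 m^2)$. Finally I would justify the claimed containment $\O(|\Sigma| n^2 m^2) \subseteq \O(nm^3)$: under the paper's definition of ABA, every state has at least one outgoing transition per symbol, so $m \geq |\Sigma| n$, i.e.\ $|\Sigma| \leq m/n$, whence $|\Sigma| n^2 m^2 \leq (m/n) n^2 m^2 = n m^3$.

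The main obstacle is not any single hard argument but the bookkeeping in Part~(2): one must verify that Algorithm~\ref{algorithm:genPairs} genuinely produces \emph{exactly} the admissible environment pairs, i.e.\ that Conditions~(1) and~(2) stated via $\beta$, $\gamma$, and {\tt KeyState} are equivalent to $P \setminus \{p'\} \preceq_F^{\forall\exists} R \setminus \{r'\}$, and that the amortised $\O(n)$ space truly suffices when the structures are reset between transition pairs. The LTS complexity substitution in Part~(3) and the $m \geq |\Sigma| n$ inequality are routine; the correctness-of-$\beta/\gamma$ check is where care is needed, though it is largely a matter of unwinding the definitions rather than a deep argument.
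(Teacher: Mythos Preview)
Your proposal is correct and follows essentially the same approach as the paper: the paper's argument is precisely the three-part decomposition you give, with the same size bounds on $\A^\odot$, the same invocation of the LTS simulation complexity for Part~(3), the same $\O(n^2m^2)$ analysis of the initial preorder via Algorithm~\ref{algorithm:genPairs}, and the same use of $m \geq |\Sigma| n$ to obtain the containment in $\O(nm^3)$. The bookkeeping concern you flag about the correctness of Conditions~(1) and~(2) is exactly what the paper addresses in the text describing $\beta$, $\gamma$, and {\tt KeyState}.
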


\newpage
\section{Experimental Results}\label{section:experiments}

In this section, we evaluate the performance of ABA mediated minimisation by applying it to accelerate the algorithm proposed by Vardi and Kupferman~\cite{kupferman:weak} for complementing nondeterministic B\"uchi automata (NBA). In this algorithm, ABA's are used as an intermediate notion for the complementation. To be more specific, the complementation algorithm has two steps: (1)~it translates an NBA to an ABA that recognises its complement language, and (2) it translates the ABA back to an equivalent NBA. The second step is an exponential procedure (exponential in the size of the ABA), hence reducing the size of the ABA before the second step usually pays off.

The experimentation is carried out as follows. Three sets of 100 random NBA's (of $|\Sigma|=$ 2,4, and 8, respectively) are generated by the GOAL tool \cite{tsay:goal} and then used as inputs of the complementation experiments.
We compare results of experiments performed according to the following different options: (1)~\textbf{Original: }keep the ABA as it is, (2) \textbf{Mediated: } minimising the ABA with mediated equivalence, and (3) \textbf{Forward:} minimising the ABA with forward equivalence.

For each input NBA, we first translate it to an ABA that recognises its complement language. The ABA is (1)~processed according to one of the options described above and then (2) translated back to an equivalent NBA using an exponential procedure~\footnote{For the option ``Original'', we also use the optimisation suggested in~\cite{kupferman:weak} that only takes a consistent subset.}. The results are given in Table~\ref{tab:performance} and Table~\ref{tab:compare}. Table~\ref{tab:performance} is an overall comparison between the three different options and Table~\ref{tab:compare} is a more detailed comparison between \textbf{Mediated} and \textbf{Forward} minimisation.

\begin{table}[h]
\begin{center}
\caption{Combining minimisation with complementation.}\label{tab:performance}
\begin{tabular}{|c|c|c|c|c|c|c|c|}
  \hline
                    & \mr{2}{$|\Sigma|$} & \mc{2}{NBA}                 & \mc{2}{Complemented-NBA} & \mr{2}{Time (ms)} & Timeout\\
\cline{3-6}                    &                    & St. & Tr.                  & St. & Tr.       &                   & (10 min)\\
\hline
Original        	&  \mr{3}{2}         &\mr{3}{2.5}  &\mr{3}{3.3}	  & 13.9& 52.75	    &5500.9             &0 	 \\
Mediated	        &                    &             &              & 6.68& 34.02	    &524.7	            &0       \\
Forward		        &                    &             &              & 9.45& 55.25	    &5443.7	            &1       \\
\hline
\hline	
Original        	&  \mr{3}{4}         &\mr{3}{3.3}  &\mr{3}{6.0}	  & 46.4  & 348.5	&9298.6             &6 	 \\
Mediated	        &                    &             &              & 20.42 & 235.5	&1985.4	            &6       \\
Forward		        &                    &             &              & 26.88 & 325.6	&1900.6	            &7       \\
\hline
\hline
Original        	&  \mr{3}{8}         &\mr{3}{4.7}  &\mr{3}{11.9}  & 127.1.3  & 1723.4	&33429.4            &24 	 \\
Mediated	        &                    &             &              & 57.63 & 1738.3	&12930.6            &21       \\
Forward		        &                    &             &              & 81.23 & 2349.2	&22734.2	        &24       \\
\hline
\end{tabular}
\end{center}
\end{table}

In Table~\ref{tab:performance}, the columns ``NBA'' and ``Complemented-NBA'' are the average statistical data of the input NBA and the complemented NBA. The column ``Time(ms)'' is the average execution time in milliseconds. ``Timeout'' is the number of cases that cannot finish within the timeout period (10 min). Note that in the table, the cases that cannot finish within the timeout period are excluded from the average number. From this table, we can see that minimisation by mediated equivalence can effectively speed up the complementation and also reduce the size of the complemented NBA's.

\begin{table}[h]
\begin{center}
\caption{Comparison: \emph{Mediated} vs. \emph{Forward}}\label{tab:compare}
\begin{tabular}{|l|c|c|c|c|c|}
  \hline

    & \mr{2}{$|\Sigma|$} & \mc{2}{Minimised-ABA} & \mc{2}{Complemented-NBA}\\
\cline{3-6}
    &                    & St. & Tr.& St. & Tr.\\
\hline
Average   &2 &33.54\% & 51.62\% & 63.3\% & 235.56\%\\
\cline{2-6}
Difference & 4 &36.24\% & 51.44\% & 89.9\% & 298.99\%\\
\cline{2-6}
& 8 &27.94\% & 40.88\% & 152.3\% & 412.7\%\\
\hline
\end{tabular}
\end{center}
\end{table}

In Table~\ref{tab:compare}, we compare the performance between \textbf{Mediated} and \textbf{Forward} minimisation in detail. The columns ``Minimised-ABA'' and ``Com\-ple\-men\-ted-NBA'' are the average difference in the sizes of the ABA after minimisation and the complemented BA. From the table, we observe that mediated minimisation results in a much better reduction than forward minimisation.

\section{Conclusion and Future Work}\label{section:discussion}

We have introduced a novel notion of alternating automata backward simulation.
Inspired by our previous work on tree automata simulation reduction, we
combined forward and backward simulation to form a coarser relation called
mediated preorder and showed that quotienting wrt. mediated equivalence
preserves the language of ABA. Moreover, we developed an efficient algorithm
for computing backward simulation and mediated equivalence. Experimental
results show that the mediated reduction of ABA significantly outperforms the
reduction based on forward simulation.

In the future, we would like to extend our experiments to other
applications such as LTL to NBA translation. Furthermore, we would like to
extend the mediated equivalence by building it on top of even coarser forward
simulation relations, e.g., \emph{delayed} or \emph{fair} forward simulation
relations~\cite{fritz:simulation}. Also, we would like to study the possibility
of using mediated preorder to remove redundant transitions (similar to the
approaches described in~\cite{somenzi:efficient}). We believe that the
extensions described above can significantly improve the performance of
mediated reduction.

\chapter{Conclusions and Future Directions}
\label{chapter:conclusions}
Each of the main chapters contains detailed conclusions
concerning the specific topic. Here, we summarise once more the main points
and discuss possible further research directions.

\enlargethispage{2mm}

\section{A Summary of the Contributions}
The main focus of this thesis was on developing efficient methods for handling
nondeterministic tree automata. We have studied simulation based methods for
size reduction of tree automata and methods for universality and language
inclusion testing. We have found efficient algorithms for computing tree
automata simulations that are based on translating problems of computing tree
automata simulations to problems of computing common simulation over LTS. For
this, we developed an efficient LTS simulation algorithm which is an extension
of the fastest Kripke structure simulation algorithm.  The same TA to LTS
translations as for the TA simulations can be used also for computing tree
automata bisimulations.  Thus, all tree automata (bi)simulations can be computed
in a uniform and elegant way, with possibility of using the most efficient LTS
simulation and bisimulation algorithms. We have discovered a new type of
relations that we call mediated equivalences that can be used for quotienting
tree automata as well as for word automata. Mediated equivalence arises from a combination of upward and
downward simulation, it includes downward simulation and thus gives a better
reduction, as we confirm also experimentally. Since the combination principle
allows also combining simulations with bisimulations, we have obtained a scale
of TA mediated equivalences that offer a fine choice between reduction power
and computational cost.

To solve language inclusion problem for tree automata, we have adapted the so
called antichain universality and inclusion checking method for FA
\cite{doyen:antichain}.  According to our experiments, this optimisation of the
classical subset construction method leads to a major speed-up of the TA
language inclusion and universality tests.  We then improve the antichain
method for both FA and TA by interconnecting it with the simulation based
methods. This again significantly improves efficiency of the algorithms.

We have shown practical applicability of the above TA reduction and inclusion
testing methods by applying them in the framework of abstract regular tree
model checking.  These algorithms allowed us to build a version of ARTMC method
purely on nondeterministic tree automata, avoiding determinisation completely.
According to our experiments, this greatly improved efficiency and scalability
of the ARTMC method.

Since our tree automata reduction methods are based on quite simple and general
principles, applying them for other types of automata comes into
consideration.  We have done this for alternating B\"uchi automata, for which
we have introduced a notion of backward simulation and defined the mediated
equivalence analogically as in the case of tree automata. As shown by our
experiments, mediated equivalence gives very good reduction even in the case of
ABA. 

\section{Further Directions}
There is a number of interesting directions of further work.  We have already
started to work on an algorithm for computing simulation on Kripke structures and
LTS that would match the best time complexity of the algorithm
\cite{ranzato:new} and also the best space complexity of the algorithm
\cite{gentiliny:fromBisimulation}.  We are considering extensions of our
simulation reduction methods to other types of automata, such as hedge
automata, weighted tree automata, or nested word automata. Also the mediation
principle itself can be further elaborated.  We already have some preliminary
results suggesting that it is possible to define a hierarchy of coarser and
coarser relations similar to the mediated equivalence (and suitable for
quotienting automata), where a mediated relation of level $i$ is used to induce
a mediated relation of level $i+1$.  The finite automata minimisation/reduction
is an interesting problem itself and we are thinking about reduction techniques
based on other principles than simulation quotienting.  For instance, an
efficient reduction heuristic based on the theory of universal automaton
\cite{arnold:note,polak:minimalization,kameda:state,carez:minimalization} could
possibly be designed. 

Further, we are still working on the tree automata language inclusion problem.
We are developing a universality and language inclusion checking algorithm for
tree automata that proceeds downwards (wrt. tree automata transition relation)
and makes use of downward simulation, in contrary to the upward algorithm from
Chapter~\ref{chapter:fa_ta_inclusion} that exploits only upward simulation.
Similarly as our reduction techniques, our language inclusion and universality
antichain/simulation techniques can be extended for other types of automata.
We have shown this in \cite{abdulla:simulationsubsumption} for the B\"uchi automata language inclusion problem and we are
continuing the work on this topic. Further, 
we do not restrict ourselves to simulation based techniques. One could think
for instance about using some abstraction techniques as in
\cite{ganty:fixpoint}, and
it may also be interesting to look for inspiration at the areas of decision
procedures of logics or solving other hard problems such as QBF. 

Our work on alternating B\"uchi automata simulation reduction can be continued
in the way of looking at more advanced handling of B\"uchi acceptance condition.
More specifically, we would like to study possibilities of constructing a
mediated equivalence from delayed or fair simulation \cite{fritz:simulation},
which could lead to even better reductions. 

Last, we are working towards applying our methods in practice. We are
developing an efficient BDD based library that would provide procedures for
handling nondeterministic tree automata (in the style of \cite{klarlund:MONA}).
This work includes also a development of BDD versions of our algorithms, which is
itself an interesting problem.  We are also working on an ARTMC based method for
verification of pointer manipulating programs that will make use of our TA
reduction and language inclusion checking techniques. 

\section{Publications Related to this Thesis} 
The algorithm for computing simulations over labelled transition systems
appeared in \cite{abdulla:computing}.  The tree automata reduction methods and
algorithms for computing simulations and bisimulations were published in
\cite{abdulla:computing,abdulla:composed,abdulla:uniform}. The generalisation
of the antichain universality and language inclusion method for TA appeared in
\cite{bouajjani:antichain, abdulla:when}. The combination of the antichain and
simulation methods was published in \cite{abdulla:mediating}.  Finally, the
results on ABA simulation reduction are from
\cite{abdulla:simulationsubsumption}.


The following publications are also to a large degree outcomes of work on this
thesis. The work \cite{holik:optimizing} presents optimisations of the
algorithm for computing simulations on LTS from
Chapter~\ref{chapter:LTS_simulation}. In \cite{holik:counterexample}, we fix
some problems in counterexample guided refinement loop for complex systems that
were discovered within the work on the ARTMC tool presented in
Section~\ref{section:rtmc}. The work \cite{abdulla:simulationsubsumption}
presents an application of our simulation based subsumption principle in
B\"uchi automata inclusion testing.

Full versions of the above mentioned papers were published as the technical reports
\cite{abdulla:computing:tr,bouajjani:antichain:tr,abdulla:uniform:tr,abdulla:when:tr,abdulla:composed:tr,abdulla:simulationsubsumption:tr,abdulla:mediating:tr,holik:optimizing:tr}. The works \cite{abdulla:composed} and \cite{abdulla:uniform} first appeared as \cite{abdulla:composed:conf} and \cite{abdulla:uniform:conf}.


\bibliographystyle{alpha}
\bibliography{literature}

\newcommand{\etalchar}[1]{$^{#1}$}
\begin{thebibliography}{AHKV08b}

\bibitem[ABH{\etalchar{+}}07]{abdulla:computing:tr}
Parosh~Aziz Abdulla, Ahmed Bouajjani, Luk{\'a}{\v s} Hol{\'i}k, Lisa Kaati, and
  Tom{\' a}{\v s} Vojnar.
\newblock {Computing Simulations over Tree Automata: Efficient Techniques for
  Reducing Tree Automata}.
\newblock Technical Report FIT-TR-2007-01, FIT BUT, Brno, Czech Republic, 2007.

\bibitem[ABH{\etalchar{+}}08a]{abdulla:composed:tr}
Parosh~Aziz Abdulla, Ahmed Bouajjani, Luk{\'a}{\v s} Hol{\'i}k, Lisa Kaati, and
  Tom{\'a}{\v s} Vojnar.
\newblock {Composed Bisimulation for Tree Automata}.
\newblock Technical Report FIT-TR-2008-04, FIT BUT, Brno, Czech Republic, 2008.

\bibitem[ABH{\etalchar{+}}08b]{abdulla:composed:conf}
Parosh~Aziz Abdulla, Ahmed Bouajjani, Luk{\'a}{\v s} Hol{\'i}k, Lisa Kaati, and
  Tom{\'a}{\v s} Vojnar.
\newblock {Composed Bisimulation for Tree Automata}.
\newblock In {\em CIAA'08}, volume 5148 of {\em LNCS}. Springer, 2008.

\bibitem[ABH{\etalchar{+}}08c]{abdulla:computing}
Parosh~Aziz Abdulla, Ahmed Bouajjani, Luk{\'a\v s} Hol\'{\i}k, Lisa Kaati, and
  Tom{\'a\v s} Vojnar.
\newblock {Computing Simulations over Tree Automata: Efficient Techniques for
  Reducing Tree Automata}.
\newblock In {\em TACAS'08}, volume 4963 of {\em LNCS}, pages 93--108.
  Springer, 2008.

\bibitem[ABH{\etalchar{+}}09]{abdulla:composed}
Parosh~Aziz Abdulla, Ahmed Bouajjani, Luk{\'a\v s} Hol\'{\i}k, Lisa Kaati, and
  Tom{\'a\v s} Vojnar.
\newblock {Composed Bisimulation for Tree Automata}.
\newblock {\em Int. J. Found. Comput. Sci.}, 20(4):685--700, 2009.

\bibitem[ACC{\etalchar{+}}10a]{abdulla:simulationsubsumption}
Parosh~Aziz Abdulla, Yu-Fang Chen, Lorenzo Clemente, Luk{\'a\v s} Hol\'{\i}k,
  Chih-Duo Hong, Richard Mayr, and Tom{\'a\v s} Vojnar.
\newblock {Simulation Subsumption in Ramsey-Based B{\"u}chi Automata
  Universality and Inclusion Testing}.
\newblock In {\em CAV'10}, volume 6174 of {\em LNCS}, pages 132--147. Springer,
  2010.

\bibitem[ACC{\etalchar{+}}10b]{abdulla:simulationsubsumption:tr}
Parosh~Aziz Abdulla, Yu-Fang Chen, Lorenzo Clemente, Luk{\'a}{\v s} Hol{\'i}k,
  Chih-Duo~Hong Hong, Richard Mayr, and Tom{\'a}{\v s} Vojnar.
\newblock {Simulation Subsumption in Ramsey-based B{\"u}chi Automata
  Universality and Inclusion Testing}.
\newblock Technical Report FIT-TR-2010-02, FIT BUT, Brno, Czech Republic, 2010.

\bibitem[ACH{\etalchar{+}}10a]{abdulla:when}
Parosh~Aziz Abdulla, Yu-Fang Chen, Luk{\'a\v s} Hol{\'\i}k, Richard Mayr, and
  Tom{\'a\v s} Vojnar.
\newblock {When Simulation Meets Antichains (on Checking Language Inclusion of
  NFAs)}.
\newblock In {\em TACAS'10}, volume 6015 of {\em LNCS}, pages 158--174.
  Springer, 2010.

\bibitem[ACH{\etalchar{+}}10b]{abdulla:when:tr}
Parosh~Aziz Abdulla, Yu-Fang Chen, Luk{\'a\v s} Hol{\'\i}k, Richard Mayr, and
  Tom{\'a\v s} Vojnar.
\newblock {When Simulation Meets Antichains (on Checking Language Inclusion of
  NFAs)}.
\newblock Technical Report FIT-TR-2010-01, FIT BUT, Brno, Czech Republic, 2010.

\bibitem[ACHV09a]{abdulla:mediating}
Parosh~Aziz Abdulla, Yu-Fang Chen, Luk{\'a\v s} Hol\'{\i}k, and Tom{\'a\v s}
  Vojnar.
\newblock {Mediating for Reduction (on Minimizing Alternating B{\"u}chi
  Automata)}.
\newblock In {\em FSTTCS'09}, volume~4 of {\em LIPIcs}, pages 1--12. Schloss
  Dagstuhl - Leibniz-Zentrum fuer Informatik, 2009.

\bibitem[ACHV09b]{abdulla:mediating:tr}
Parosh~Aziz Abdulla, Yu-Fang Chen, Luk{\'a\v s} Hol\'{\i}k, and Tom{\'a\v s}
  Vojnar.
\newblock {Mediating for Reduction (On Minimizing Alternating B{\"u}chi
  Automata)}.
\newblock Technical Report FIT-TR-2009-02, FIT BUT, Brno, Czech Republic, 2009.

\bibitem[ADN92]{arnold:note}
Andr{\'e} Arnold, Anne Dicky, and Maurice Nivat.
\newblock {A note about minimal non-deterministic automata}.
\newblock {\em Bulletin of the EATCS}, 47:166--169, 1992.

\bibitem[AHK07]{abdulla:bisimulation}
Parosh~Aziz Abdulla, Johanna H{\"o}gberg, and Lisa Kaati.
\newblock {Bisimulation Minimization of Tree Automata}.
\newblock {\em Int. J. Found. Comput. Sci.}, 18(4):699--713, 2007.

\bibitem[AHKV08a]{abdulla:uniform:tr}
Parosh~Aziz Abdulla, Luk{\'a}{\v s} Hol{\'i}k, Lisa Kaati, and Tom{\'a}{\v s}
  Vojnar.
\newblock {A~Uniform (Bi-)Simulation-Based Framework for Reducing Tree
  Automata}.
\newblock Technical Report FIT-TR-2008-05, FIT BUT, Brno, Czech Republic, 2008.

\bibitem[AHKV08b]{abdulla:uniform:conf}
Parosh~Aziz Abdulla, Luk{\'a}{\v s} Hol{\'i}k, Lisa Kaati, and Tom{\'a}{\v s}
  Vojnar.
\newblock {A~Uniform (Bi-)Simulation-Based Framework for Reducing Tree
  Automata}.
\newblock In {\em MEMICS'08}, 2008.

\bibitem[AHKV09]{abdulla:uniform}
Parosh~Aziz Abdulla, Luk{\'a\v s} Hol\'{\i}k, Lisa Kaati, and Tom{\'a\v s}
  Vojnar.
\newblock {A Uniform (Bi-)Simulation-Based Framework for Reducing Tree
  Automata}.
\newblock {\em Electr. Notes Theor. Comput. Sci.}, 251:27--48, 2009.

\bibitem[AJMd02]{abdulla:regular}
Parosh~Aziz Abdulla, Bengt Jonsson, Pritha Mahata, and Julien d'Orso.
\newblock {Regular Tree Model Checking}.
\newblock In {\em CAV'02}, volume 2404 of {\em LNCS}, pages 555--568. Springer,
  2002.

\bibitem[ALdR05]{abdulla:simulation}
Parosh~Aziz Abdulla, Axel Legay, Julien d'Orso, and Ahmed Rezine.
\newblock {Simulation-Based Iteration of Tree Transducers}.
\newblock In {\em TACAS}, volume 3440 of {\em LNCS}, pages 30--44. Springer,
  2005.

\bibitem[ALdR06]{abdulla:tree}
Parosh~Aziz Abdulla, Axel Legay, Julien d'Orso, and Ahmed Rezine.
\newblock {Tree Regular Model Checking: A Simulation-Based Approach}.
\newblock {\em J. Log. Algebr. Program.}, 69(1-2):93--121, 2006.

\bibitem[BHH{\etalchar{+}}08a]{bouajjani:antichain:tr}
Ahmed Bouajjani, Peter Habermehl, Luk{\'a}{\v s} Hol{\'i}k, Tayisir Touili, and
  Tom{\' a}{\v s} Vojnar.
\newblock {Antichain-based Universality and Inclusion Testing over
  Nondeterministic Finite Tree Automata}.
\newblock Technical Report FIT-TR-2008-01, FIT BUT, Brno, Czech Republic, 2008.

\bibitem[BHH{\etalchar{+}}08b]{bouajjani:antichain}
Ahmed Bouajjani, Peter Habermehl, Luk{\'a}{\v s} Hol\'{\i}k, Tayssir Touili,
  and Tom{\'a}{\v s} Vojnar.
\newblock {Antichain-Based Universality and Inclusion Testing over
  Nondeterministic Finite Tree Automata}.
\newblock In {\em CIAA'08}, volume 5148 of {\em LNCS}, pages 57--67. Springer,
  2008.

\bibitem[BHMV05]{bouajjani:verifying}
Ahmed Bouajjani, Peter Habermehl, Pierre Moro, and Tom{\'a\v s} Vojnar.
\newblock {Verifying Programs with Dynamic 1-Selector-Linked Structures in
  Regular Model Checking}.
\newblock In {\em TACAS'05}, volume 3440 of {\em LNCS}, pages 13--29. Springer,
  2005.

\bibitem[BHRV06a]{bouajjani:abstractTree}
Ahmed Bouajjani, Peter Habermehl, Adam Rogalewicz, and Tom{\'a\v s} Vojnar.
\newblock {Abstract Regular Tree Model Checking}.
\newblock {\em Electr. Notes Theor. Comput. Sci.}, 149(1):37--48, 2006.

\bibitem[BHRV06b]{bouajjani:abstractComplex}
Ahmed Bouajjani, Peter Habermehl, Adam Rogalewicz, and Tom{\'a\v s} Vojnar.
\newblock {Abstract Regular Tree Model Checking of Complex Dynamic Data
  Structures}.
\newblock In {\em SAS'06}, pages 52--70, 2006.

\bibitem[BHV04]{bouajjani:abstract}
Ahmed Bouajjani, Peter Habermehl, and Tom{\'a\v s} Vojnar.
\newblock {Abstract Regular Model Checking}.
\newblock In {\em CAV'04}, volume 3114 of {\em LNCS}, pages 372--386. Springer,
  2004.

\bibitem[Brz62]{brzozowski:canonical}
Janusz~A. Brzozowski.
\newblock {Canonical Regular Expressions and Minimal State Graphs for Definite
  Events}.
\newblock In {\em {Mathematical Theory of Automata}}, volume~12 of {\em {MRI
  Symposia Series}}, pages 529--561, {Polytechnic Institute of Brooklyn, NY},
  1962. {Polytechnic Press}.

\bibitem[BT02]{bouajjani:extrapolating}
Ahmed Bouajjani and Tayssir Touili.
\newblock {Extrapolating Tree Transformations}.
\newblock In {\em CAV'02}, volume 2404 of {\em LNCS}, pages 539--554. Springer,
  2002.

\bibitem[Buc08]{buchholz:bisimulation}
Peter Buchholz.
\newblock {Bisimulation relations for weighted automata}.
\newblock {\em Theor. Comput. Sci.}, 393(1-3):109--123, 2008.

\bibitem[Car70]{carez:minimalization}
Christian Carrez.
\newblock On the minimalization of non-deterministic automaton.
\newblock {\em {Laboratoire de {Calcul de la Facult{\'e} des Sciences de
  l'Universit{\'e} de Lille}}}, 1970.

\bibitem[CDG{\etalchar{+}}07]{tata97}
H.~Comon, M.~Dauchet, R.~Gilleron, C.~L\"oding, F.~Jacquemard, D.~Lugiez,
  S.~Tison, and M.~Tommasi.
\newblock Tree automata techniques and applications.
\newblock Available on: \url{http://www.grappa.univ-lille3.fr/tata}, 2007.
\newblock release October, 12th 2007.

\bibitem[CLR89]{cormen:introduction}
Thomas~H. Cormen, Charles~E. Leiserson, and Ronald~L. Rivest.
\newblock {\em {Introduction to Algorithms}}.
\newblock MIT Press, 1989.

\bibitem[CRT09]{crafa:saving}
Silvia Crafa, Francesco Ranzato, and Francesco Tapparo.
\newblock {Saving Space in a Time Efficient Simulation Algorithm}.
\newblock In {\em ACSD'09}, pages 60--69. IEEE, 2009.

\bibitem[DGG93]{dams:generation}
Dennis Dams, Orna Grumberg, and Rob Gerth.
\newblock {Generation of Reduced Models for Checking Fragments of CTL}.
\newblock In {\em CAV'93}, volume 697 of {\em LNCS}, pages 479--490. Springer,
  1993.

\bibitem[DHWT91]{dill:checking}
David~L. Dill, Alan~J. Hu, and Howard Wong-Toi.
\newblock {Checking for Language Inclusion Using Simulation Preorders}.
\newblock In {\em CAV'91}, volume 575 of {\em LNCS}, pages 255--265. Springer,
  1991.

\bibitem[DR10]{doyen:antichain}
Laurent Doyen and Jean-Fran\c{c}ois Raskin.
\newblock {Antichain Algorithms for Finite Automata}.
\newblock In {\em TACAS'10}, volume 6015 of {\em LNCS}, pages 2--22. Springer,
  2010.

\bibitem[FCC{\etalchar{+}}08]{farzen:extending}
Azadeh Farzan, Yu-Fang Chen, Edmund~M. Clarke, Yih-Kuen Tsay, and Bow-Yaw Wang.
\newblock {Extending Automated Compositional Verification to the Full Class of
  Omega-Regular Languages}.
\newblock In {\em TACAS'08}, volume 4963 of {\em LNCS}, pages 2--17. Springer,
  2008.

\bibitem[FV09]{fogarty:buchi}
Seth Fogarty and Moshe~Y. Vardi.
\newblock {B{\"u}chi Complementation and Size-Change Termination}.
\newblock In {\em TACAS'09}, volume 5505 of {\em LNCS}, pages 16--30. Springer,
  2009.

\bibitem[FW02]{fritz:state}
Carsten Fritz and Thomas Wilke.
\newblock {State Space Reductions for Alternating B\"{u}chi Automata}.
\newblock In {\em FSTTCS'02}, pages 157--168, London, UK, 2002. Springer.

\bibitem[FW05]{fritz:simulation}
Carsten Fritz and Thomas Wilke.
\newblock {Simulation relations for alternating B{\"u}chi automata}.
\newblock {\em Theor. Comput. Sci.}, 338(1-3):275--314, 2005.

\bibitem[GKSV03]{gurumurthy:complementing}
Sankar Gurumurthy, Orna Kupferman, Fabio Somenzi, and Moshe~Y. Vardi.
\newblock {On Complementing Nondeterministic B{\"u}chi Automata}.
\newblock In {\em CHARME'03}, volume 2860 of {\em LNCS}, pages 96--110.
  Springer, 2003.

\bibitem[GL94]{grumberg:model}
Orna Grumberg and David~E. Long.
\newblock {Model Checking and Modular Verification}.
\newblock {\em ACM Trans. Program. Lang. Syst.}, 16(3):843--871, 1994.

\bibitem[GMR09]{ganty:fixpoint}
Pierre Ganty, Nicolas Maquet, and Jean-Fran\c{c}ois Raskin.
\newblock {Fixpoint Guided Abstraction Refinement for Alternating Automata}.
\newblock In {\em CIAA'09}, volume 5642 of {\em LNCS}, pages 155--164.
  Springer, 2009.

\bibitem[GO01]{gastin:fast}
Paul Gastin and Denis Oddoux.
\newblock {Fast LTL to B{\"u}chi Automata Translation}.
\newblock In {\em CAV'01}, volume 2102 of {\em LNCS}, pages 53--65. Springer,
  2001.

\bibitem[GPP03]{gentiliny:fromBisimulation}
Raffaella Gentilini, Carla Piazza, and Alberto Policriti.
\newblock {From Bisimulation to Simulation: Coarsest Partition Problems}.
\newblock {\em J. Autom. Reasoning}, 31(1):73--103, 2003.

\bibitem[GVT03]{timbuk}
Thomas Genet, Val{\'e}rie Viet, and Triem Tong.
\newblock {Timbuk: A Tree Automata Library}.
\newblock {\url{http://www.irisa.fr/lande/genet/timbuk}}, 2003.

\bibitem[HHK95]{henzinger:computing}
Monika~Rauch Henzinger, Thomas~A. Henzinger, and Peter~W. Kopke.
\newblock {Computing Simulations on Finite and Infinite Graphs}.
\newblock In {\em FOCS'95}, pages 453--462, Washington, DC, USA, 1995. IEEE.

\bibitem[HMM07a]{hogberg:backward}
Johanna H{\"o}gberg, Andreas Maletti, and Jonathan May.
\newblock {Backward and Forward Bisimulation Minimisation of Tree Automata}.
\newblock In {\em CIAA'07}, volume 4783 of {\em LNCS}, pages 109--121.
  Springer, 2007.

\bibitem[HMM07b]{hogberg:bisimulation}
Johanna H{\"o}gberg, Andreas Maletti, and Jonathan May.
\newblock {Bisimulation Minimisation for Weighted Tree Automata}.
\newblock In {\em DLT'08}, volume 4588 of {\em LNCS}, pages 229--241. Springer,
  2007.

\bibitem[Hop71]{hopcroft:nlogn}
John~E. Hopcroft.
\newblock {An {$n \log n$} algorithm for minimizing states in a finite
  automaton}.
\newblock Technical report, Stanford University, Stanford, CA, USA, 1971.

\bibitem[HR07]{holik:counterexample}
Luk{\'a\v s} Hol\'ik and Adam Rogalewicz.
\newblock {Counterexample Analysis in Abstract Regular Tree Model Checking of
  Complex Dynamic Data Structures}.
\newblock In {\em MEMICS'07}, pages 59--66, 2007.

\bibitem[H{\v S}09a]{holik:optimizing}
Luk{\'a}{\v s} Hol{\' i}k and Ji{\v r}{\'i} {\v S}im{\'a}{\v c}ek.
\newblock {Optimizing an LTS-Simulation Algorithm}.
\newblock In {\em MEMICS'09}, pages 93--101. Faculty of Informatics MU, 2009.
\newblock An extended version accepted at Computing and Informatics.

\bibitem[H{\v S}09b]{holik:optimizing:tr}
Luk{\'a\v s} Hol{\' i}k and Ji{\v r\' i} {\v S}im{\'a\v c}ek.
\newblock {Optimizing an LTS-Simulation Algorithm}.
\newblock Technical Report FIT-TR-2009-03, FIT BUT, Brno, Czech Republic, 2009.

\bibitem[KM01]{klarlund:MONA}
Nils Klarlund and Anders M{\o}ller.
\newblock {MONA Version 1.4 User Manual}, 2001.
\newblock BRICS, Department of Computer Science, University of Aarhus, Denmark.

\bibitem[KV01]{kupferman:weak}
Orna Kupferman and Moshe~Y. Vardi.
\newblock {Weak alternating automata are not that weak}.
\newblock {\em ACM Trans. Comput. Log.}, 2(3):408--429, 2001.

\bibitem[KW70]{kameda:state}
T.~Kameda and P.~Weiner.
\newblock {On the State Minimization of Nondeterministic Finite Automata}.
\newblock {\em IEEE Trans. Comput.}, 19(7):617--627, 1970.

\bibitem[M{\o}l04]{moller:dkbrics}
Anders M{\o}ller.
\newblock {\url{http://www.brics.dk/automaton}}, 2004.

\bibitem[MS72]{meyer:equivalence}
Albert~R. Meyer and Larry~J. Stockmeyer.
\newblock {The Equivalence Problem for Regular Expressions with Squaring
  Requires Exponential Space}.
\newblock In {\em FOCS'72}, pages 125--129. IEEE, 1972.

\bibitem[Pol05]{polak:minimalization}
Libor Pol{\'a}k.
\newblock {Minimalizations of {NFA} Using the Universal Automaton}.
\newblock {\em Int. J. Found. Comput. Sci}, 16(5):999--1010, 2005.

\bibitem[PT87]{paige:three}
Robert Paige and Robert~Endre Tarjan.
\newblock {Three Partition Refinement Algorithms}.
\newblock {\em SIAM J. Comput.}, 16(6):973--989, 1987.

\bibitem[RT07]{ranzato:new}
Francesco Ranzato and Francesco Tapparo.
\newblock A new efficient simulation equivalence algorithm.
\newblock In {\em LICS'07}, pages 171--180. IEEE, 2007.

\bibitem[SB00]{somenzi:efficient}
Fabio Somenzi and Roderick Bloem.
\newblock {Efficient B{\"u}chi Automata from LTL Formulae}.
\newblock In {\em CAV'00}, volume 1855 of {\em LNCS}, pages 248--263. Springer,
  2000.

\bibitem[Sha01]{Shahar:01:ToolsTechVerParmSys:01}
E.~Shahar.
\newblock {\em {Tools and Techniques for Verifying Parameterized Systems}}.
\newblock PhD thesis, Faculty of Mathematics and Computer Science, The Weizmann
  Inst. of Science, Rehovot, Israel, 2001.

\bibitem[SJ05]{sawa:behavioural}
Zdenek Sawa and Petr Jancar.
\newblock {Behavioural Equivalences on Finite-State Systems are PTIME-hard}.
\newblock {\em Computers and Artificial Intelligence}, 24(5), 2005.

\bibitem[SVW85]{sistla:complementation}
A.~Prasad Sistla, Moshe~Y. Vardi, and Pierre Wolper.
\newblock {The Complementation Problem for B{\"u}chi Automata with Applications
  to Temporal Logic (Extended Abstract)}.
\newblock In {\em ICALP'85}, volume 194 of {\em LNCS}, pages 465--474.
  Springer, 1985.

\bibitem[TCT{\etalchar{+}}07]{tsay:goal}
Yih-Kuen Tsay, Yu-Fang Chen, Ming-Hsien Tsai, Kang-Nien Wu, and Wen-Chin Chan.
\newblock {GOAL: A Graphical Tool for Manipulating B{\"u}chi Automata and
  Temporal Formulae}.
\newblock In {\em TACAS'07}, volume 4424 of {\em LNCS}, pages 466--471.
  Springer, 2007.

\bibitem[TV05]{tabakov:experimental}
Deian Tabakov and Moshe~Y. Vardi.
\newblock {Experimental Evaluation of Classical Automata Constructions}.
\newblock In {\em LPAR'05}, volume 3835 of {\em LNCS}, pages 396--411.
  Springer, 2005.

\bibitem[Val09]{valmari:bisimilarity}
Antti Valmari.
\newblock {Bisimilarity Minimization in {$\mathcal{O}(m \log n)$} Time}.
\newblock In {\em Petri Nets}, volume 5606 of {\em LNCS}, pages 123--142.
  Springer, 2009.

\bibitem[Var07]{vardi:automata2}
Moshe~Y. Vardi.
\newblock {Automata-Theoretic Model Checking Revisited}.
\newblock In {\em VMCAI'07}, volume 4349 of {\em LNCS}, pages 137--150.
  Springer, 2007.

\bibitem[WDHR06]{wulf:antichains}
Martin~De Wulf, Laurent Doyen, Thomas~A. Henzinger, and Jean-Fran\c{c}ois
  Raskin.
\newblock {Antichains: A New Algorithm for Checking Universality of Finite
  Automata}.
\newblock In {\em CAV'06}, volume 4144 of {\em LNCS}, pages 17--30. Springer,
  2006.

\end{thebibliography}

\end{document}